\documentclass[11pt,reqno]{amsart}
\pdfoutput=1 
\usepackage{geometry}
\usepackage[margin=10mm]{caption}
\usepackage{comment}
\usepackage[utf8]{inputenc}
\usepackage[english]{babel}
\usepackage{enumitem}
\usepackage{amssymb, amsmath, amsthm, amsfonts, eucal, mathalpha, mathrsfs, yfonts, lmodern, slashed, microtype}
\usepackage[sorting=none, backend=biber, natbib=true, style=numeric-comp]{biblatex}
\usepackage{array, booktabs}
\usepackage{graphicx, subcaption, tikz, tikz-cd}
\usetikzlibrary{angles, quotes, calc, shapes.geometric}
\usepackage{csquotes, hyperref}
\hypersetup{colorlinks = false , citecolor=orange, linkcolor=orange, linkbordercolor = orange, citebordercolor = orange, urlbordercolor = orange}
\usepackage{tikz}
\usepackage{tikz-3dplot}
\usepackage{tikz-cd}
\usepackage{xcolor}
\usepackage{adjustbox}

\definecolor{citecolor}{rgb}{0.0, 0.45, 0.45}   
\definecolor{emphcolor}{rgb}{0.8, 0.0, 0.0}      
\definecolor{metacolor}{rgb}{0.58, 0.0, 0.83}     

\theoremstyle{definition}
\newtheorem{theorem}{Theorem}[section]  
\newtheorem{definition}[theorem]{Definition}
\newtheorem{example}[theorem]{Example}
\newtheorem{prop}[theorem]{Proposition}
\newtheorem{lemma}[theorem]{Lemma}
\newtheorem{corollary}[theorem]{Corollary}

\newtheorem{remark}[theorem]{Remark}

\newtheorem*{theoremA}{Theorem A}
\newtheorem*{theoremB}{Theorem B}
\newtheorem*{theoremC}{Theorem C}

\newtheorem*{corollaryD}{Corollary D}

\newtheorem*{theoremE}{Theorem E}

\newcommand{\beq}{\begin{equation}}
\newcommand{\eeq}{\end{equation}}
\newcommand{\bqa} {\begin{eqnarray}}
\newcommand{\eqa} {\end{eqnarray}}
\numberwithin{equation}{section}
\def\[#1\]{%
  \begin{equation}\begin{gathered}#1\end{gathered}\end{equation}%
}

\newcommand{\p}{\partial}
\newcommand{\ups}{\upsilon}
\newcommand{\eps}{\varepsilon}

\renewcommand\Re{\operatorname{Re}}
\renewcommand\Im{\operatorname{Im}}
\def \l {\left(}
\def \r {\right)}
\def \lal {\langle}
\def \ral {\rangle}

\DeclareMathOperator{\Aut}{Aut}
\DeclareMathOperator{\Ad}{Ad}

\DeclareMathOperator{\End}{End}

\DeclareMathOperator{\Hom}{Hom}
\DeclareMathOperator{\Id}{Id}

\DeclareMathOperator{\sign}{sign}
\DeclareMathOperator{\supp}{supp}

\DeclareMathOperator{\Tr}{Tr}

\makeatletter
\newcommand{\colim@}[2]{%
  \vtop{\m@th\ialign{##\cr
    \hfil$#1\operator@font colim$\hfil\cr
    \noalign{\nointerlineskip\kern1.5\ex@}#2\cr
    \noalign{\nointerlineskip\kern-\ex@}\cr}}%
}
\newcommand{\colim}{%
  \mathop{\mathpalette\colim@{\rightarrowfill@\textstyle}}\nmlimits@
}
\makeatother

\newcommand{\CA}{{\mathcal A}}
\newcommand{\CB}{{\mathcal B}}
\newcommand{\CC}{{\mathcal C}}

\newcommand{\CF}{{\mathcal F}}

\newcommand{\CO}{{\mathcal O}}

\newcommand{\NN}{{\mathbb N}}
\newcommand{\NNb}{{\overline{\mathbb N}}}
\newcommand{\NNo}{{\mathbb N}_0}
\newcommand{\ZZ}{{\mathbb Z}}
\newcommand{\RR}{{\mathbb R}}
\newcommand{\CCC}{{\mathbb C}}

\newcommand{\bog}{{\beta}}
\newcommand{\cstar}[1]{{\mathcal{#1}}}
\newcommand{\der}[1]{{\mathsf #1}}
\newcommand{\fparity}{{\Theta}}
\newcommand{\hD}{{-i \p_x}}
\newcommand{\hilb}[1]{{\mathcal #1}}
\newcommand{\interior}[1]{{\kern0pt#1}^{\mathrm{o}}}
\newcommand{\ql}[1]{{\mathcal{#1}}}

\newcommand{\sMat}{{\mathsf{sMat}}}
\newcommand{\Mat}{{\mathsf{Mat}}}
\newcommand{\opMat}{{(\mathsf{s})\mathsf{Mat}}}
\newcommand{\Amaj}{{\ql{A}_{\text{Maj}}}}
\newcommand{\Amajd}{{\ql{A}^{\text{dis.}}_{\text{Maj}}}}
\newcommand{\AEe}{{\ql{A}_{E_8}}}
\newcommand{\AEed}{{\ql{A}^{\text{dis.}}_{E_8}}}

\newcommand{\QLAcat}{{{\mathbf{QLA}}}}
\newcommand{\sQLAcat}{{{\mathbf{sQLA}}}}
\newcommand{\opQLAcat}{{{\mathbf{(s)QLA}}}}
\newcommand{\Azcat}{{{\mathbf{Az}}}}
\newcommand{\sAzcat}{{{\mathbf{sAz}}}}
\newcommand{\opAzcat}{{{\mathbf{(s)Az}}}}

\newcommand{\opMatCatD}{{{\mathbf{(s)Mat}}}}
\newcommand{\DHRcat}{{\mathbf{DHR}}}
\newcommand{\Hilbcat}{{\mathbf{Hilb}}}

\newcommand{\Brgr}{{\text{Br}}}
\newcommand{\sBrgr}{{\text{sBr}}}
\newcommand{\opBrgr}{{(\text{s})\text{Br}}}
\newcommand{\QCA}{{\text{QCA}}}
\newcommand{\sQCA}{{\text{sQCA}}}
\newcommand{\opQCA}{{(\text{s})\text{QCA}}}
\newcommand{\IP}{{\text{IP}}}
\newcommand{\sIP}{{\text{sIP}}}
\newcommand{\opIP}{{(\text{s})\text{IP}}}

\newcommand{\Kit}{{\mathbf{Kit}}}
\newcommand{\sKit}{{\mathbf{s}\mathbf{Kit}}}

\newcommand{\Autbs}{{\text{Aut}_{\text{bs}}}}
\newcommand{\Hombs}{{\text{Hom}_{\text{bs}}}}
\newcommand{\Isobs}{{\text{Iso}_{\text{bs}}}}

\newcommand{\opComm}{{\mathsf{(s)Comm}}}

\newcommand{\CAR}{{\mathfrak{A}_{\text{CAR}}}}

\newcommand{\QFS}[1]{{\varphi_{#1}}}

\newcommand{\Bumpfns}{{C^{\infty}_c (\RR)}}

\makeatletter 
\def\l@subsection{\@tocline{2}{0pt}{2pc}{6pc}{}} \makeatother

\title{Quantum cellular automata and invertible phases of matter}

\author{Corey Jones}
\address{
Department of Mathematics, North Carolina State University, Raleigh, NC 27695, USA}

\author{Nikita Sopenko}
\address{School of Natural Sciences, Institute for Advanced Study, 1 Einstein Drive, Princeton, NJ 08540, USA}

\author{Ryan Thorngren}
\address{Mani L. Bhaumik Institute for Theoretical Physics, Department of Physics and Astronomy, University of California, Los Angeles, CA 90095, USA}

\date{\today}

\begin{document}

\begin{abstract}
We introduce and study (fermionic and bosonic) invertible quasi-local algebras over uniformly locally finite metric spaces $X$ with infinite-dimensional local von Neumann algebras. We show that the group of Brauer equivalence classes of such algebras is isomorphic to both the group of phases of invertible states and the group of stable equivalence classes of quantum cellular automata over $X\times \mathbb{Z}$. Using $K$-theory of the symmetric monoidal category of invertible quasi-local algebras and bounded spread isomorphisms, we propose a definition of an $\Omega$-spectrum of invertible phases as conjectured by Kitaev. We then show that the $c=\frac{1}{2}$ chiral Majorana fermion net and the $(E_{8})_{1}$ conformal net provide Brauer non-trivial invertible quasi-local algebras, thus providing explicit constructions of non-trivial invertible states and quantum cellular automata on $\ZZ^{2}$. In addition, we show that the time-slice nets of rational diagonal conformal field theories admit lattice degrees of freedom, which implies the discretization of any holomorphic conformal net is invertible.
\end{abstract}

\maketitle

\tableofcontents

\section{Introduction}

Phases of matter at zero-temperature, also known as topological phases of matter, lie at the intersection of condensed matter, quantum information, and quantum field theory. From a mathematical point of view, they can be modeled by a suitable class of states on algebras of observables of quantum lattice systems in infinite volume. In this approach, states correspond to systems in the same phase if they are related by a locally generated automorphism, possibly after adding ancillary unentangled degrees of freedom. Establishing a proper framework for this story and rigorously deriving classifications conjectured by physicists has become a major research program \cite{Kitaev_2006,LevinWen2005StringNet,HastingsWen2005QuasiAdiabatic, BravyiHastingsMichalakis2010Stability,BravyiHastings2011ShortProof,BachmannMichalakisNachtergaeleSims2012Automorphic,MichalakisZwolak2013Stability,MoonOgata2020BulkEquivalence,Naaijkens2011Localized,Naaijkens2012HaagDuality,Naaijkens2013KosakiLongo,FiedlerNaaijkens2015HaagDuality,ChaNaaijkensNachtergaele2018GroundStates,ChaNaaijkensNachtergaele2020StabilityCharges,KatoNaaijkens2020EntropicInvariant,SchuchCiracPerezGarcia2010PEPS,Haah2016Invariant,SahinogluEtAl2021MPO,BolsVadnerkar2025Classification,BolsHamdanNaaijkensVadnerkar2026Category,BolsKjaer2025LevinWenI,BolsKjaer2026LevinWenII,JonesNaaijkensPenneysWallick2025LTO,NaaijkensPenneysWallick2026LTOHaagDuality,OgataPerezGarciaRuizDeAlarcon2025HaagDuality,LiuZhao2025ReflectionPositiveTopologicalOrder,bachmann2025stackingtrivialityinvertiblephases}


The simplest class of topological phases to characterize mathematically is the class of \textit{invertible phases} \cite{Kitaev2013ClassificationSRE}.
These are phases which have an inverse under a natural monoidal operation induced by a ``stacking" (or composition) of quantum systems. Although states in an invertible phase do not admit non-trivial topological excitations, their non-triviality is expected to manifest itself in the form of anomalous edge modes in the presence of a boundary. While it is conjectured that these phases are related to bordism groups of smooth manifolds \cite{kapustin2014symmetry, kapustin2015fermionic} and invertible topological quantum field theories (TQFTs) \cite{FreedHopkins2021ReflectionPositivity}, the precise relationship remains unclear.

Invertible phases of states over an $n$-dimensional lattice $\ZZ^n$ naturally form an abelian group under stacking, which we denote $\IP(\mathbb{Z}^n)$. It was conjectured by A. Kitaev that these groups can be organized into a canonical $\Omega$-spectrum whose $n$-th space can be thought of as the classifying space of $n$-dimensional systems in an invertible phase. As a consequence, the homotopy groups $\pi_k$ of the $n$-th space coincide with $\IP(\mathbb{Z}^{n-k})$, as codimension $k$ invertible defects of an $n$-dimensional system correspond to $(n-k)$-dimensional invertible phases. An analytic model for this spectrum has been recently proposed in \cite{kubota2025stable}. However, there is a notable lack of non-trivial examples and invariants distinguishing different phases.

Another closely related problem which has recently attracted significant attention \cite{Watrous1995OneDimensionalQCA,SchumacherWerner2004ReversibleQCA,PerezDelgadoCheung2007LocalUnitaryQCA,ArrighiNesmeWerner2011Localizability,Arrighi2019OverviewQCA,Farrelly2020,freedman2020classification,FreedmanHaahHastings2022GroupStructure,haah2023nontrivial,Haah2023InvertibleSubalgebras} is the problem of classification of quantum cellular automata (QCA), which are strict locality preserving automorphisms of the algebra of observable of a lattice system in infinite-volume. Originally introduced as a model for discrete-time dynamics of a quantum system, they can be thought of as generalizations of finite-depth quantum circuits, and can be used to produce non-trivial entangled states from a trivial product state. Stable equivalence classes of QCA modulo circuits also form an abelian group, $\QCA(\mathbb{Z}^{n})$ \cite{FreedmanHaahHastings2022GroupStructure}.
Applying a QCA to a product state always yields an invertible state, and applying an equivalent QCA produces an invertible state in the same phase, thus furnishing a homomorphism $\QCA(\mathbb{Z}^{n})\rightarrow \IP(\mathbb{Z}^{n})$.
Furthermore, several recent results define an $\Omega$-spectrum for QCA algebraically \cite{LongElse2025,tu2026anomalies,czajka2025anomalies,ji2026quantum,ludewig2026quantum}.
However, the homomorphism $\QCA(\mathbb{Z}^{n})\rightarrow \IP(\mathbb{Z}^{n})$ is not injective, and it is expected to not be surjective in general. This strongly motivates studying the relationship between invertible states and QCA, and determining whether there could be an algebraic construction of the Kitaev spectrum.

Lattice models of invertible phases and equivalence classes of quantum cellular automata discussed in the literature almost universally use finite-dimensional local Hilbert spaces. From a physical perspective, local finite-dimensionality is merely a useful approximation that makes certain problems more tractable. Most realistic theories of quantum many-body systems, including quantum field theories, have infinitely many degrees of freedom per unit volume. This motivates us to consider a more general class of lattice systems modeling topological phases of matter.

In this paper, we introduce and study lattice models of invertible phases and quantum cellular automata which use infinite-dimensional local Hilbert spaces. We show that, in contrast to the setting with finite-dimensional Hilbert spaces, the natural map $\QCA(\mathbb{Z}^{n})\rightarrow \IP(\mathbb{Z}^{n})$ induced by applying a QCA to an unentangled state, \textit{is an isomorphism of abelian groups}.

Our proof of this result factors through an isomorphism with a third abelian group, the Brauer group $\Brgr(\mathbb{Z}^{n-1})$ of quasi-local algebras. Elements of this group are stable equivalence classes of invertible quasi-local algebras, i.e. quasi-local algebras $\ql{A}$ such there exists another quasi-local algebra $\tilde{\ql{A}}$ and a bounded spread isomorphism $\ql{A}\otimes \tilde{\ql{A}} \cong \Mat(\mathbb{Z}^{n-1})$, where $\Mat(\mathbb{Z}^{n-1})$ is the algebra of the infinite tensor product of bounded operators on a separable Hilbert space $B(\hilb{H})$. The connection between QCA and invertible algebras was established in the locally finite-dimensional case by Haah \cite{Haah2023InvertibleSubalgebras}, and our proof of the isomorphism $\Brgr(\mathbb{Z}^{n-1})\cong \QCA(\mathbb{Z}^{n})$ is a straightforward generalization of Haah's arguments. Our main contribution in this direction is establishing an isomorphism $\IP(\mathbb{Z}^{n})\cong \Brgr(\mathbb{Z}^{n-1})$ using a new concept of the boundary quasi-local algebra associated to an invertible state.
This can be seen as a mathematical incarnation of the bulk-boundary correspondence, relating the anomalous boundary modes, described by the boundary algebra, to the phase of the bulk invertible state.
Interestingly, even if the invertible state is defined on a locally finite-dimensional quasi-local algebra, the boundary algebra is always locally infinite dimensional, which demonstrates the necessity of working with locally infinite lattice Hilbert spaces.

We demonstrate our results in the general setting where degrees of freedom live at the points of a uniformly locally finite metric space $X$. An emerging perspective is that many properties concerning quantum phases of matter should only depend on the coarse equivalence class of the underlying metric space, and uniformly locally finite metric spaces provide a flexible enough playground as they are coarsely equivalent to many continuous spaces of interest (e.g., bounded geometry manifolds). Our main results compare the group $\opQCA(X\times \mathbb{Z})$ of (fermionic) QCA over $X\times \mathbb{Z}$ (Definition \ref{def:QCA}), the (super) Brauer group $\opBrgr(X)$ of invertible (super) quasi-local algebras over $X$ (Definition \ref{def:Brgrp} and Definition \ref{def:sBrgrp}), the group $\opIP(X\times \mathbb{Z})$ of invertible (fermionic) phases over $X\times \mathbb{Z}$ (Definition \ref{def:InvPh}). Our notation is intended to indicate that we are treating the super/fermionic cases in parallel with the bosonic case.

\begin{theoremA}\label{thm:A}
Let $X$ be a uniformly locally finite metric space. Then there are isomorphisms of groups $\opQCA(X\times \mathbb{Z})\cong \opBrgr(X)\cong \opIP(X\times \mathbb{Z}) $.
\end{theoremA}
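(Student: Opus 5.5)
The plan is to prove the two isomorphisms separately, both factoring through $\opBrgr(X)$, and to treat the bosonic and fermionic cases in parallel. For the fermionic case every algebra carries a $\ZZ_2$-grading, tensor products are graded, and $\sMat$ replaces $\Mat$.

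\emph{Step 1: $\opBrgr(X)\cong\opQCA(X\times\ZZ)$, following Haah \cite{Haah2023InvertibleSubalgebras}.}
Given a QCA $\alpha$ on $\opMat(X\times\ZZ)$, write $\ZZ = \ZZ_{<0}\sqcup\ZZ_{\ge 0}$. Take the ``half-image'' $\ql{A}_\alpha$ to be a boundary slice: the commutant, inside $\alpha(\opMat(X\times\ZZ_{\geq 0}))$, of the algebra supported on $X\times[R,\infty)$ for $R$ larger than the spread. This slab algebra, coarsened along the bounded $\ZZ$-direction, is a quasi-local algebra over $X$. Its complement inside a thick slab of $\opMat(X\times[-R,R])$ supplies the inverse. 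Hence $\ql{A}_\alpha$ is invertible, and its class $[\ql{A}_\alpha]$ is independent of the choices.

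Next I check the following.
\begin{itemize}
\item The assignment is multiplicative under composition and stacking.
\item Circuits and shifts built from $\opMat$ yield Brauer-trivial classes.
\end{itemize}
So we get a homomorphism $\opQCA(X\times\ZZ)\to\opBrgr(X)$.

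For the inverse map, given $\ql{A}$ with $\ql{A}\otimes\tilde{\ql{A}}\cong\opMat(X)$, I build the ``shift QCA'' on
\[
\bigotimes_{n\in\ZZ}(\ql{A}\otimes\tilde{\ql{A}})_n \cong \opMat(X\times\ZZ).
\]
It moves each copy of $\ql{A}$ up by one step in $\ZZ$ and fixes $\tilde{\ql{A}}$. Injectivity follows Haah's argument: if $[\ql{A}_\alpha]$ is trivial, one trivializes the slab and writes $\alpha$, stably, as a circuit times a trivial shift. The infinite-dimensional local algebras make this easier, since $B(\hilb{H})\otimes B(\hilb{H})\cong B(\hilb{H})$ absorbs ancillas.

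\emph{Step 2: $\opIP(X\times\ZZ)\cong\opBrgr(X)$.}
Applying a QCA to a product state gives a map $\opQCA\to\opIP$. I must show it is an isomorphism, or equivalently construct a map $\opIP\to\opBrgr$ inverse to the composite. For an invertible state $\omega$ on $\opMat(X\times\ZZ)$, let $\omega_{\ge0}$ be its restriction to the half-space. Define the boundary algebra $\ql{B}_\omega$ as follows. In the GNS representation of $\omega_{\ge 0}$, which is a type I factor representation after stabilization, take the operators that are almost localized near $X\times\{0\}$ modulo the bulk. Concretely, $\ql{B}_\omega$ is the commutant, inside $B(\hilb{H}_{\omega_{\ge0}})$, of the image of the algebra localized in $X\times[R,\infty)$, quasi-localized along $X$.

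Invertibility of $\omega$ should show three things.
\begin{itemize}
\item $\ql{B}_\omega$ is a quasi-local algebra over $X$.
\item $\ql{B}_\omega$ is invertible, with inverse $\ql{B}_{\tilde\omega}$, where $\tilde\omega$ is the inverse state. This uses a locally generated automorphism that disentangles $\omega\otimes\tilde\omega$ to a product state, which restricts near the boundary to a bounded-spread isomorphism $\ql{B}_\omega\otimes\ql{B}_{\tilde\omega}\cong\opMat(X)$.
\item The class $[\ql{B}_\omega]$ is stable under locally generated automorphisms, since these restrict near the cut up to almost-local errors, and under stacking with product states.
\end{itemize}
For $\omega=\alpha^*\omega_0$, the boundary algebra recovers $\ql{A}_\alpha$ up to Brauer equivalence. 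This gives $\opQCA\to\opIP\to\opBrgr$ equal to the isomorphism of Step 1, so $\opIP\to\opBrgr$ is surjective. For injectivity, if $[\ql{B}_\omega]$ is trivial I want to show $\omega$ is trivial. The idea is to use the trivialization of the boundary algebra to cut $\omega$ along a family of parallel hyperplanes $X\times\{kL\}$, and then glue to see that $\omega$ is, up to a locally generated automorphism, a product of states on slabs, each of which can be disentangled.

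\emph{Main obstacle.}
I expect the hardest part to be Step 2, specifically proving that the boundary algebra is genuinely quasi-local with bounded spread and that locally generated automorphisms act on it by bounded-spread isomorphisms. Locally generated automorphisms have only almost-local (superpolynomially decaying) tails, while Brauer equivalence demands strict bounded spread. My first approach would be to use the infinite-dimensionality of the local algebras to absorb the tails: a von Neumann algebraic approximation argument, via Haag-duality-type statements for invertible states and $B(\hilb{H})\cong B(\hilb{H})\otimes B(\hilb{H})$, to replace almost-local isomorphisms by strictly local ones after stabilization.
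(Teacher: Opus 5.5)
There is a genuine gap, and it sits in Step 2; Step 1 follows the paper's route apart from the point made in the second paragraph.

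\textbf{The boundary algebra of a state is not correctly defined.} The GNS representation of $\omega_{\geq 0}$ is in general not type~$\rm{I}$, even after stabilization. For infinite $X$, take the trivial-phase circuit state that has one entangled pair across the cut for each $x\in X$. It already gives an infinite tensor product of non-pure states on type~$\rm{I}$ factors, which is not type~$\rm{I}$ and is typically type~$\rm{III}$. A second problem is the commutant you take. The commutant of $\pi_{\omega_{\geq0}}(\opMat(X\times[R,\infty)))$ in $B(\hilb{H}_{\omega_{\geq0}})$ contains $N':=\pi_{\omega_{\geq0}}(\opMat(X\times\ZZ_{\geq0}))'$. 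Because $\omega_{\geq0}$ is a mixed factor state, $N'$ is a non-trivial factor, hence non-abelian. It commutes with every half-strip, so it survives any localization over $U\subset X$ done by taking commutants. The resulting ``local algebras'' over disjoint $U,V$ therefore do not supercommute, and you get no quasi-local algebra. The paper simply uses $U\mapsto\pi_\omega(\opMat(U\times\ZZ_{\geq0}))''$ in the GNS representation of the full pure state. These are hyperfinite superfactors (Lemma \ref{lma:RestrictionIsASuperfactor}) but usually not type~$\rm{I}$. Separately, the obstacle you single out does not arise. Invertible states and phases (Section \ref{sec:InvertibleStates}, Definition \ref{def:InvPh}) are defined with finite-depth quantum circuits of strictly bounded spread, so there are no almost-local tails to absorb.

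\textbf{The missing idea is the half-space Eilenberg swindle.} In both steps you need that everything supported in a half-space $X\times\ZZ_{\geq0}$ is trivial (Propositions \ref{prop:QCAs_on_a_half-space_are_QC} and \ref{prop:XxZg0istrivial}). This is where infinite-dimensional local algebras actually enter.

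In Step 1, trivializing the slab only writes $\alpha$ as a composite of half-space automorphisms and a slab-supported one, i.e.\ it produces a blend. A slab-supported automorphism is a QCA over $X$ and can be non-trivial in $\opQCA(X)$. It becomes a circuit over $X\times\ZZ$ only by the swindle, which is exactly Proposition \ref{prop:circuitandblendequiv}.

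In Step 2 the swindle is needed three times:
\begin{itemize}
\item \emph{Invertibility.} A global disentangler of $\omega\otimes\tilde\omega$ has gates straddling the cut, so it does not ``restrict near the boundary.'' Splitting it as $\alpha_-\alpha_+$ moves the cut, and you have no tool for comparing different cuts. The paper avoids this by folding (Proposition \ref{prop:SREtoinvBAlg}). Product correlations across a slab, together with Proposition \ref{prop:SuperalgebraSplitProperty}, identify the algebra outside the slab with $\ql{B}_L\otimes\ql{B}_R$. The folded state is unentangled on a half-space. The swindle then gives a disentangling circuit with all gates in the complementary half-space, and such a circuit does induce a bounded-spread isomorphism of boundary algebras.
\item \emph{Injectivity.} Your ``cut into slabs and glue'' has no mechanism showing that $\omega$ is in the same phase as the glued state. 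The paper builds an interpolating state that agrees with $\omega$ on $X\times\ZZ_{\geq0}$ and with the coupled-wires state on $X\times\ZZ_{<0}$. It then applies Lemma \ref{lma:stateonahalfplane}: invertible states that agree on a half-space are in the same phase.
\item That lemma is itself proved from the swindle.
\end{itemize}
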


The proof of this result is a combination of two theorems, establishing the isomorphism  $\opQCA(X\times \mathbb{Z})\cong \opBrgr(X)$ (Theorem \ref{theorembrauerblendiso}) and the isomorphism $\opBrgr(X)\cong \opIP(X\times \mathbb{Z})$ (Theorem \ref{thm:BAlgState}) separately. The isomorphisms are constructed through compositions of basic maps, represented schematically in Figure \ref{fig:triangle}.

\begin{figure}[htbp]
\begin{tikzpicture}[>=latex, thick]
  \node[align=center] (QCA) at (-5,0) {Quantum Cellular Automata \\ $\opQCA(X\times \mathbb{Z})$};
  \node[align=center] (Br)  at (0,3)    {Invertible Quasi-local Algebras \\ $\opBrgr(X)$};
  \node[align=center] (IP)  at (5,0)  {Invertible Phases \\ $\opIP(X\times \mathbb{Z})$};

  \draw[->]
    (QCA)
    to[bend left=20]
    node[midway, above, yshift=0.1cm,xshift=-1.3cm]
      {\small Boundary Algebra}
    node[midway, above, yshift=-0.3cm,xshift=-1.3cm]
      {\small (Definition \ref{def:bdryalgQCA})}
    (Br);

    \draw[<-]
    (QCA)
    to[bend right=20]
    node[midway, below, yshift=0.1cm,xshift=1.4cm]
      {\small Generalized Shift}
    node[midway, below, yshift=-0.3cm,xshift=1.4cm]
    {\small (Definition \ref{def:ingeneralizedshift})}
    (Br);

  \draw[<-]
    (Br)
    to[bend left=20]
    node[midway, above, yshift=0.1cm, xshift=1.3cm]
      {\small Boundary Algebra}
    node[midway, above, yshift=-0.3cm, xshift=1.3cm]
      {\small (Definition \ref{def:BALGfromSTATE})}
    (IP);

    \draw[->]
    (QCA)
    to[bend right=20]
    node[midway, below, yshift=-0.1cm,xshift=0]
      {\small Apply to Product State}
    (IP);
\end{tikzpicture}
\caption{Schematic of maps in proof of Theorem A}
\label{fig:triangle}
\end{figure}

Our next result leverages the recent construction of $\Omega$-spectra for QCA \cite{ji2026quantum, ludewig2026quantum, czajka2025anomalies}, to build an $\Omega$-spectrum in the locally infinite-dimensional setting. Following these proposals, we consider the symmetric monoidal category $\Azcat(X)$ of invertible quasi-local algebra, and take its $K$-theory to obtain $\Omega$-spectrum. The novelty of our $\Omega$-spectrum compared to these previous results is that our isomorphism $\QCA(\mathbb{Z}^{n})\cong \IP(\mathbb{Z}^{n})$ implies that we can interpret this as a spectrum \textit{of invertible phases} (after a minor modification of the highest homotopy group) conjectured by Kitaev \cite{Kitaev2013ClassificationSRE}. This provides an algebraic alternative to a recent analytic proposal \cite{kubota2025stable}. The results of Section \ref{sec:KitaevSpectrum} are summarized in the following theorem.

\begin{theoremB}\label{thm:B}
Let $\Azcat(\ZZ^{d-1})$ (resp. $\sAzcat(\ZZ^{d-1})$) be a symmetric monoidal category of bosonic (resp. fermionic) invertible quasi-local algebras over $\ZZ^{d-1}$ with morphisms being bounded spread isomorphisms. Then its $K$-theory groups are $K_n(\opAzcat(\ZZ^{d-1})) = \opIP(\ZZ^{d-n})$ for $0 \leq n \leq d$, $K_{d+1}(\opAzcat(\ZZ^{d-1})) = \RR/\ZZ$ and $K_n(\opAzcat(\ZZ^{d-1})) = 0$ for $n>d+1$.
\end{theoremB}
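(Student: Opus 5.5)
The plan is to compute the $K$-groups of $\opAzcat(\ZZ^{d-1})$ by combining three inputs: a delooping equivalence relating $\opAzcat(X\times\ZZ)$ to the based loops of the $K$-theory of $\opAzcat(X)$, the identification $\pi_0=\opBrgr(X)$, and Theorem A.

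\textbf{Step 1: $\pi_0$ and $\pi_1$.} By construction, $K_0(\opAzcat(X))$ is the group completion of isomorphism classes under $\otimes$. Since every object is invertible up to tensoring with $\opMat(X)$, I will show that $\opMat(X)\otimes\opMat(X)\cong\opMat(X)$ by a bounded spread isomorphism, using $B(\hilb H)\otimes B(\hilb H)\cong B(\hilb H)$ sitewise. This makes $\opMat(X)$ a unit-like absorbing object, so stable isomorphism classes coincide with Brauer classes and $K_0=\opBrgr(X)\cong\opIP(X\times\ZZ)$. The group $K_1$ is the abelianization of $\pi_0$ of the stabilized automorphism group $\Autbs$. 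Modulo the identity component, which is generated by locally generated or circuit-like automorphisms, this is the group of QCA on $X$ modulo circuits, stabilized. I would identify it with $\opQCA(X)$, and then, by Theorem A applied to $X=\ZZ^{d-2}$, with $\opBrgr(\ZZ^{d-2})\cong\opIP(\ZZ^{d-1})$.

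\textbf{Step 2: the delooping, which is the main obstacle.} I need a map of spectra $K(\opAzcat(X))\to\Omega K(\opAzcat(X\times\ZZ))$ that is an equivalence in positive degrees. Following the QCA spectrum papers, I would send an invertible algebra $\ql A$ over $X$ to the pair of generalized shifts (Definition \ref{def:ingeneralizedshift}) on $X\times\ZZ$. This gives a loop in the classifying space, because each shift is an automorphism of a trivial object $\opMat(X\times\ZZ)$. Theorem \ref{theorembrauerblendiso} gives the bijection on $\pi_0$. For higher homotopy I would use a blending, or Eilenberg swindle, argument: a bounded spread automorphism over $X\times\ZZ$ can be split along a cut into a half-line piece and a boundary algebra. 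This produces a fibration sequence whose middle term, the half-line category, has contractible $K$-theory by the swindle. The hard part is checking that the swindle is valid with infinite-dimensional local algebras and bounded spread morphisms, and that boundary algebras are well defined up to contractible choice.

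\textbf{Step 3: the top degrees.} Iterating the delooping gives $K_n(\opAzcat(\ZZ^{d-1}))\cong K_0(\opAzcat(\ZZ^{d-1-n}))=\opIP(\ZZ^{d-n})$ for $n\le d-1$. At the bottom one reaches $K_{d-1}=\opBrgr(\text{pt})$, and $K_d=\pi_0\Autbs(\opMat(\text{pt}))$. Over a point, $\opMat(\text{pt})$ is $B(\hilb H)$, or its super version with parity, and the Brauer group is trivial (resp. $\ZZ/2$). This matches $\opIP(\ZZ^{1})$ and $\opIP(\ZZ^0)$ once $\IP(\text{pt})$ is read as the fermion-parity data. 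The automorphism group is $PU(\hilb H)$ (resp. its graded version). Using Kuiper's theorem, $U(\hilb H)$ is contractible, so $PU(\hilb H)\simeq K(\ZZ,2)$ and its identity-component quotient is trivial. The remaining $K_{d+1}$ is the part controlled by $\pi_1$ of $PU$, computed with the discrete/norm topology used for morphisms in $\Azcat$. It arises as the $U(1)$ of phases modulo the loops of unitaries, giving $\RR/\ZZ$. Above that, the discrete group $U(1)$ has no further homotopy in the relevant model, so $K_n=0$ for $n>d+1$. I expect this bookkeeping of which topology to use at the top to be delicate, but secondary to Step 2.
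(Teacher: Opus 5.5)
There is a genuine gap, and the decisive one is in Step 3. Morphisms of $\opAzcat(X)$ are bounded spread isomorphisms with no topology. By Remark \ref{rmk:WeCanUseQuillenPlus}, $K(\opAzcat(\mathrm{pt}))\simeq\opBrgr(\mathrm{pt})\times B\Gamma^{+}$, where $\Gamma$ is the \emph{discrete} colimit of the groups $\Aut(B(\hilb{H}^{\otimes n}))$, and Kuiper's theorem says nothing about $B\Gamma^{+}$. If you really computed with the norm topology, $PU(\hilb{H})\simeq K(\ZZ,2)$ would give $K_{d+1}=0$ and $K_{d+2}=\ZZ$, contradicting the statement. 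The $\RR/\ZZ$ is an artifact of the discrete topology; this is why the paper afterwards takes a cofiber with $\Sigma^{d+1}H\RR$ to recover Kitaev's expected values. Your claim that nothing survives above degree $d+1$ is also unsupported, since plus constructions of discrete groups typically carry a lot of higher homotopy.

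The missing idea is the de la Harpe--McDuff theorem that $U(\hilb{H})^{\delta}$ is acyclic. It follows that the colimit $\tilde\Gamma$ of discrete unitary groups is acyclic. It is a central extension of $\Gamma$ by $\RR/\ZZ$, and $\Gamma$ is perfect by Brown--Pearcy. Hence the classifying map $B\Gamma\to K(\RR/\ZZ,2)$ is a homology isomorphism, it factors through $B\Gamma^{+}$, and Whitehead gives $B\Gamma^{+}\simeq K(\RR/\ZZ,2)$. This yields $K_d=0$, $K_{d+1}=\RR/\ZZ$ and $K_n=0$ for $n>d+1$ at once. In the fermionic case, use in addition the degree map $\Gamma\to\ZZ/2$, whose kernel is $[\Gamma,\Gamma]$; this gives $B\Gamma^{+}\simeq B(\ZZ/2)\times K(\RR/\ZZ,2)$ and $K_d=\ZZ/2$.

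The same confusion affects Step 1: with discrete morphisms there is no identity component. $K_1$ is the abelianization of the discrete group of stable QCAs. Identifying it with $\opQCA(X)$ requires showing the commutator subgroup is exactly the stabilized circuits. Commutators are circuits by the swap argument of Lemma \ref{lma:QCACompositionInBrGr}. Conversely, circuits are products of commutators because each (even) gate is a commutator of (even) unitaries by Brown--Pearcy.

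Step 2, which you flag yourself, is only a program, and the generalized-shift assignment does not by itself define a map of spaces. It depends on a choice of $\varphi:\ql{B}_L\otimes\ql{B}_R\to\opMat(X)$, it would need coherent homotopies for all morphisms of $\opAzcat(X)$, and matching $\pi_0$ via Theorem \ref{theorembrauerblendiso} says nothing about higher homotopy. The paper constructs no such map. It uses subcategories $C_0,C_-,C_+,C$ of $\opAzcat(X\times\ZZ)$ whose objects are an invertible algebra over a strip tensored with product algebras on the complementary half-spaces. These product factors are required to be one-dimensional on both sides, on the right only, on the left only, or on neither side. It forms Thomason's double mapping cylinder $P$ of $C_-\leftarrow C_0\to C_+$, whose $K$-theory is the homotopy pushout, and shows $T:P\to C$ is a $K$-equivalence by Quillen's Theorem A.

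The substantive input you are missing is contractibility of the comma categories $\ql{C}\downarrow T$. This rests on the tensor-factor Lemmas \ref{lma:Lud228} and \ref{lma:Lud230}. A bounded spread isomorphism $\ql{C}_-\otimes\ql{C}_0\otimes\ql{C}_+\to\ql{A}_-\otimes\ql{A}_0\otimes\ql{A}_+$ forces $\ql{A}_\pm\cong\ql{C}_\pm\otimes\ql{B}_\pm$ with $\ql{B}_\pm$ supported in a strip. So every object at a given filtration stage receives a morphism from a standard decomposition of $\ql{C}$, unique by faithfulness of $T$. Together with $K(C_\pm)\simeq *$ (half-space swindle plus cofinality) and $C_0\simeq\opAzcat(X)$, this gives $K(\opAzcat(X))\simeq\Omega K(\opAzcat(X\times\ZZ))$. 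That is what your ``fibration sequence'' has to be, and the real work is in the comma-category argument, not in the swindle.
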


One might be concerned that by allowing infinite-dimensional local Hilbert spaces, we may trivialize the classification of QCAs and invertible phases. However, the one dimensional Majorana translation which creates the Kitaev wire from a trivial state, remains a non-trivial QCA. Moreover, using the general theory developed above, we can construct non-trivial invertible states in $n$ spatial dimensions by finding Brauer non-trivial quasi-local algebras in $n-1$ dimensions.
To this end, we turn our attention to the quasi-local algebras associated to chiral conformal nets \cite{FredenhagenJoerss1996ConformalHaagKastlerNets,GuidoLongoWiesbrock1998Extensions,KawahigashiLongoMuger2001MultiIntervalSubfactors}. Conformal nets are an extremely rich analytical framework for describing chiral conformal field theories. At first glance, the quasi-local algebras associated to chiral conformal nets appear fundamentally different from the sort of algebras we typically consider on the lattice, since the local von Neumann algebras are assigned to continuous intervals, and are generically type $\rm{III}_{1}$ factors. However, there is a general discretization procedure (see Remark \ref{rem:discretizations}) that allows to think of them as subalgebras of a tensor product algebra $\Mat(\ZZ)$ of infinite type $\rm{I}$ factors, rendering the quasi-local algebras of conformal nets into our framework. 

There is a natural expectation that holomorphic conformal nets (namely those with trivial superselection category) should be invertible, since they describe conformal field theories (CFTs) expected to live at the chiral boundaries of invertible phases. The simplest non-trivial fermionic example of a chiral conformal net is the chiral $c=\frac{1}{2}$ Majorana fermion theory, built from the quasi-free ground state of left translations on the CAR algebra associated to the Hilbert space $L^{2}(\mathbb{R})$  (see Section \ref{subsec:ChiralMajorana}). The simplest non-trivial bosonic is $\ql{A}_{E_{8}}$, associated to the chiral $(E_{8})_{1}$ theory. We prove the following.

\begin{theoremC}
Let $\mathcal{A}^{\text{dis.}}_{\text{Maj}}$ and $\ql{A}^{\text{dis.}}_{E_{8}}$ be the discretizations of the quasi-local algebras associated to the chiral $c=\frac{1}{2}$ Majorana fermion and the $(E_{8})_{1}$ conformal net, respectively. Then $\mathcal{A}^{\text{dis.}}_{\text{Maj}}$ and $\ql{A}^{\text{dis.}}_{E_{8}}$ are invertible, and define non-trivial classes in $\sBrgr(\mathbb{Z})$ and $\Brgr(\mathbb{Z})$, respectively.
\end{theoremC}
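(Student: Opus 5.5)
The plan has two parts: first show invertibility by exhibiting an explicit inverse, then show non-triviality with an invariant that sees the chiral central charge (or a related index).

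\emph{Invertibility.} For a holomorphic (or Majorana) chiral net $\ql{A}$ on $\RR$, the candidate inverse is the conjugate (opposite chirality) net $\bar{\ql{A}}$. For the Majorana fermion this is the CAR algebra of right-movers; for $(E_8)_1$ it is the complex conjugate net. I would proceed as follows.

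\begin{enumerate}
\item Use the result stated in the abstract: time-slice nets of rational diagonal CFTs admit lattice degrees of freedom. The diagonal $(E_8)_1\otimes\overline{(E_8)_1}$ theory is the Longo--Rehren / full-center construction; since the net is holomorphic, its time-slice net is just $\ql{A}\otimes\bar{\ql{A}}$ on $\RR$.
\item Conclude that the discretization of $\ql{A}\otimes\bar{\ql{A}}$ is bounded-spread isomorphic to $\Mat(\ZZ)$, with local type III factors replaced by type I factors via split property and Haag duality.
\item For the Majorana case, argue directly. $\ql{A}^{\text{dis.}}_{\text{Maj}}\otimes\bar{\ql{A}}^{\text{dis.}}_{\text{Maj}}$ is the quasi-free CAR algebra over $L^2(\RR)\otimes\CCC^2$ with the non-chiral (massless Dirac/Majorana pair) state. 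The local algebras of a non-chiral free field on intervals admit a split-inclusion tower giving graded type I factors, which yields the isomorphism with $\sMat(\ZZ)$.
\end{enumerate}

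\emph{Non-triviality.} I would use Theorem A: a class in $\opBrgr(\ZZ)$ is trivial iff the associated QCA (the generalized shift) on $\ZZ^2$, equivalently the associated invertible state, is trivial. Rather than argue on $\ZZ^2$ directly, I would construct an invariant of Brauer classes in one dimension. The natural choice is a ``modular'' or central-charge-type invariant: for an invertible $\ql{A}$ with inverse $\tilde{\ql{A}}$ and isomorphism $\ql{A}\otimes\tilde{\ql{A}}\cong\opMat(\ZZ)$, restrict to a half-line. This produces an inclusion of the half-line algebra of $\ql{A}$ into a type I factor, and one extracts a number from the resulting endomorphism or flow structure.

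Concretely, I would try a Connes--Haagerup / modular-flow argument. The chiral net has half-line algebras on which translations act as a modular-inclusion, so the vacuum modular group of a half-line implements dilations. If $\ql{A}$ were Brauer trivial, the half-line algebra of $\ql{A}$ (for the Majorana case, combined appropriately with ancillas) would be a tensor factor of a type I factor up to bounded spread. The chirality then gives a nonzero ``flux'' (in the sense of the GNVW index generalized to infinite dimensions) of the one-sided translation that cannot be cancelled. For the Majorana case a simpler route is available: reduce to the known non-triviality of the Kitaev chain or $p+ip$ invariant via the $\ZZ_2$-graded structure. The odd Majorana count shows up as the absence of a graded type I factorization of half-line algebras ($\sqrt{2}$-type index).

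\emph{Main obstacle.} The hard step is non-triviality, specifically producing an invariant of bounded-spread isomorphism classes in the locally infinite-dimensional setting, where GNVW-type dimension counting fails. I would first try to define the invariant as the value $e^{2\pi i c/24}$, via the $K_{d+1}=\RR/\ZZ$ piece of Theorem B or a modular-conjugation phase of a half-line. I would then show that it is multiplicative under $\otimes$, trivial on $\opMat(\ZZ)$, and equal to the chiral central charge phase for conformal nets. For $(E_8)_1$ this phase $e^{2\pi i\cdot 8/24}$ is nontrivial, and for the Majorana net $c=1/2$ it is nontrivial as well.
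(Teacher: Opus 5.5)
Your invertibility argument for $\AEed$ is the paper's alternative proof: the net is holomorphic, so the Longo--Rehren time-slice net is just $\ql{A}\otimes\overline{\ql{A}}$ and Theorem E applies. The paper's primary proof instead bosonizes sixteen Majoranas.

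For $\Amajd$, the step ``a split-inclusion tower of interval algebras yields $\sMat(\ZZ)$'' does not work as stated. Split inclusions of bounded intervals hold equally for the chiral net, which is Brauer non-trivial. What is needed is the fermionic analogue of Lemma \ref{lma:trivialtest}: a representation in which the \emph{half-line} algebras have the split property, together with (twisted) weak Haag duality. That is where non-chirality must enter, for example by folding left- and right-movers so that half-lines become arcs with distinct endpoints. The paper instead constructs an explicit finite-propagation Bogolyubov transformation $V=v(\hD)^{-1}$, using the Paley--Wiener Lemma \ref{lma:gfunction}. It rotates $(1+\sign(\hD)\otimes Z)/2$ into a projection equal to the constant $(1+Y)/2$ up to $\CO(|k|^{-\infty})$, so local restrictions are quasi-equivalent to a product state.

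The genuine gap is non-triviality. You list properties an invariant should have, but none of your candidates is an invariant of $\opBrgr(\ZZ)$:
\begin{itemize}
\item A GNVW-type flux carries no information here, since the shift is a two-layer circuit and $\QCA(\ZZ)=0$.
\item The $\sqrt{2}$/Kitaev-chain index, as you describe it, detects $\sBrgr(\mathrm{pt})=\ZZ/2$, which is one dimension too low. ``Reducing to the known non-triviality of $p+ip$'' is circular, because the whole question is whether infinite-dimensional ancillas trivialize it.
\item $K_{d+1}(\opAzcat(\ZZ^{d-1}))=\RR/\ZZ$ is a homotopy group of the $K$-theory space, the same for every object. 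It gives no homomorphism out of $\pi_0=\opBrgr(\ZZ)$.
\item A modular-conjugation phase depends on a chosen state. Its invariance under bounded-spread isomorphisms is therefore the entire content, and you do not supply it.
\end{itemize}

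The ingredient you are missing is the paper's \emph{anomaly index}. Take a stable invertible bosonic $\ql{B}$ over $\ZZ$ with a finite group $G$ acting by bounded-spread automorphisms $\rho^{(g)}$. Because $\QCA(\ZZ)=0$, each $\rho^{(g)}$ can be truncated to a twist automorphism $\sigma^{(g)}$ that is the identity far left and $\rho^{(g)}$ far right, unique up to local inner automorphisms. Write $\sigma^{(g)}\sigma^{(h)}=\Ad_{w^{(g,h)}}\sigma^{(gh)}$. The associativity defect of the $w^{(g,h)}$ is a 3-cocycle whose class in $H^3(G,\RR/\ZZ)$ is invariant under $G$-equivariant bounded-spread isomorphisms and vanishes for on-site actions.

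Now take $G=S_N$ permuting $N$ copies of $\AEed$. Brauer triviality would give an $S_N$-equivariant isomorphism $(\AEed\otimes\Mat(\ZZ))^{\otimes N}\cong\Mat(\ZZ)^{\otimes N}$ with on-site permutations, forcing the class to be trivial. For discretized holomorphic nets, however, this class is M\"uger's permutation anomaly. By the cited results of Johnson-Freyd, Bischoff and Evans--Gannon it is non-trivial when $c\not\equiv 0 \bmod 3$, and here $c=8$.

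Non-triviality of $\Amajd$ then needs no fermionic invariant at all. It follows from $[\AEed]=16[\Amajd]$ in $\sBrgr(\ZZ)$ (bosonization) together with the passage from $\sMat(\ZZ)$ to $\Mat(\ZZ)$.
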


The invertibility of these quasi-local algebras is demonstrated in Section \ref{sec:invertiblefreefermions}, while the non-triviality is proved in \ref{sec:cmInvariant}. Combined with our structural results above, Theorem C implies there are non-trivial (fermionic and bosonic) QCA and invertible (fermionic and bosonic) states when we allow our local Hilbert spaces to be infinite-dimensional, which sharply contrasts with results in the literature when Hilbert spaces are restricted to be finite-dimensional (\cite{freedman2020classification, HaahKwanLong}).

\begin{corollaryD}
The groups $\opBrgr(\mathbb{Z})\cong \opIP(\mathbb{Z}^{2})\cong \opQCA(\mathbb{Z}^{2})\ne 0$.
\end{corollaryD}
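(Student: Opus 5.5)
Corollary D follows by specializing Theorem A to $X=\ZZ$ and combining it with Theorem C. We use no new analytic input.

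First, $\ZZ$ with its standard metric is a uniformly locally finite metric space, since every ball of radius $r$ contains at most $2r+1$ points. Hence Theorem A applies with $X=\ZZ$. It gives group isomorphisms $\opQCA(\ZZ\times\ZZ)\cong\opBrgr(\ZZ)\cong\opIP(\ZZ\times\ZZ)$ in both the bosonic and the fermionic cases. Then we identify $\ZZ\times\ZZ$ with $\ZZ^{2}$, taking the $\ell^\infty$ (or $\ell^1$) metric on the product. This choice does not matter: any two of the natural product metrics are bi-Lipschitz equivalent via the identity map, so bounded spread, quasi-locality and locally generated automorphisms are the same notions under either metric. Only this identification needs a remark. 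One has to make sure that the product metric on $X\times\ZZ$ used in Theorem A (and in Definitions \ref{def:QCA} and \ref{def:InvPh}) is equivalent to the metric used for $\ZZ^2$, which is immediate since all such metrics are bi-Lipschitz equivalent. This yields $\opBrgr(\ZZ)\cong\opIP(\ZZ^{2})\cong\opQCA(\ZZ^{2})$.

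For non-vanishing, Theorem C provides invertible quasi-local algebras $\Amajd$ and $\AEed$ over $\ZZ$ whose classes in $\sBrgr(\ZZ)$ and $\Brgr(\ZZ)$, respectively, are non-trivial. Hence $\sBrgr(\ZZ)\neq 0$ and $\Brgr(\ZZ)\neq 0$. By the isomorphisms of the previous paragraph, the same holds for $\opIP(\ZZ^2)$ and $\opQCA(\ZZ^2)$ in each case.

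If one wants explicit representatives, we would trace the classes through the maps in Figure \ref{fig:triangle}. The generalized shift (Definition \ref{def:ingeneralizedshift}) attached to $\AEed$ or $\Amajd$ gives a non-trivial QCA on $\ZZ^2$. Applying this QCA to a product state gives a non-trivial invertible state. This step is not needed for the corollary itself.
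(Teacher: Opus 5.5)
Your proof is correct and is the paper's argument: specialize Theorem A to $X=\ZZ$ to get the isomorphisms, then use Theorem C (non-triviality of $\Amajd$ in $\sBrgr(\ZZ)$ and of $\AEed$ in $\Brgr(\ZZ)$) for non-vanishing. Your metric remark is harmless but unnecessary, since the paper's convention already equips $\ZZ\times\ZZ$ with the Euclidean product metric, which is exactly the metric it uses on $\ZZ^2$.
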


We can actually show a more general result (in the bosonic case) which may be of independent interest. For any rational conformal net with quasi-local algebra $\ql{A}$, there is an associated quasi-local algebra $\ql{B}$ called the Longo-Rehren extension $\ql{A}\otimes \ql{A}^{op}\subseteq \ql{B}$ \cite{LongoRehren1995NetsOfSubfactors}. The Longo-Rehren extension can be physically interpreted as the time slice observables of the full diagonal $(1+1)$-dimensional CFT associated with the chiral theory $\ql{A}$. We prove the following theorem, which shows that lattice degrees of freedom can always be constructed on the time slice net locally. The following theorem is a summary of Theorem \ref{thm:latticedegreesoffreedom}.

\begin{theoremE}
Let $\ql{A}$ be the quasi-local algebra over $\mathbb{R}$ obtained via the restriction of a rational (bosonic) conformal net over $S^{1}$. If $\ql{A}\otimes \ql{A}^{op}\subseteq \ql{B}$ is the Longo-Rehren extension, then the discretization $\ql{B}^{\text{dis.}}$ is bounded spread isomorphic to $\text{Mat}(\mathbb{Z})$. In particular, the discretization of any holomorphic conformal net is invertible.
\end{theoremE}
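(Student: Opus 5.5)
The idea is to realize $\ql{B}^{\text{dis.}}$ as the algebra generated, interval by interval, by local type $\mathrm{I}$ factors. These factors should be the relative commutants of a standard split inclusion for the two-sided net.

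\textbf{Setup.} I will work with the two-dimensional net $\ql{B}$ restricted to the time-zero line $\RR$, so $\ql{B}(I)\supseteq \ql{A}(I)\otimes\ql{A}(I)^{op}$ for intervals $I$. Rationality (complete rationality in the sense of Kawahigashi--Longo--M\"uger) gives the split property and finite $\mu$-index. The Longo--Rehren construction then gives the following: on the vacuum Hilbert space of $\ql{B}$, the net $I\mapsto \ql{B}(I)$ satisfies \emph{Haag duality for all finite unions of intervals}, i.e. strong additivity together with trivial superselection. Indeed, Longo--Rehren/M\"uger show the two-dimensional diagonal net has trivial representation theory, and in particular $\mu_{\ql{B}}=1$ for its time-zero restriction.

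\textbf{Construction of lattice degrees of freedom.} Fix the discretization grid $\{[n,n+1]\}_{n\in\ZZ}$ and a margin $0<\varepsilon<1/2$. Using the split property, I will choose intermediate type $\mathrm{I}$ factors $\ql{B}((n,n+1))\subseteq N_n\subseteq \ql{B}((n-\varepsilon,n+1+\varepsilon))$. These overlap, so I cannot take them directly. Instead I would use a standard telescoping: define type $\mathrm{I}$ factors $F_{(-\infty,n]}$ localized near half-lines, by splitting between $\ql{B}((-\infty,n-\varepsilon))$ and $\ql{B}((-\infty,n+\varepsilon))$, and set $M_n = F_{(-\infty,n+1]}\cap F_{(-\infty,n]}'$.

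Making this coherent requires Haag duality for the two-interval complement. This is exactly where $\mu=1$ is used: the relative commutant of $\ql{B}(I_1\cup I_2)$ equals $\ql{B}$ of the complementary intervals, with no extra ``charge transporter'' elements. Consequently each $M_n$ is a type $\mathrm{I}$ factor localized in $(n-\varepsilon,n+1+\varepsilon)$, and $\ql{B}((n+\varepsilon,n+1-\varepsilon))\subseteq M_n$. Mutual commutation of the $M_n$ follows from nesting. The $M_n$ also generate everything: any local algebra of the discretization sits inside a finite product $M_n\bar\otimes\cdots\bar\otimes M_{m}$, because $F_{(-\infty,m]}\cap F_{(-\infty,n]}'$ is a type $\mathrm{I}$ factor containing the corresponding $\ql{B}$-algebra. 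By the split property each $M_n$ is an infinite type $\mathrm{I}$ factor, hence isomorphic to $B(\hilb{H})$, so we get a bounded spread isomorphism $\bigotimes_n M_n\cong \Mat(\ZZ)$. It remains to check that this identification is bounded spread in both directions with respect to the discretization's local algebras. This holds with spread $1$ by the localization bounds above.

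\textbf{Main obstacle.} The main obstacle is the half-line step: producing nested type $\mathrm{I}$ factors $F_{(-\infty,n]}$ whose successive relative commutants are factors that still generate the local algebras. On a half-line the split property only gives intermediate factors for intervals. I would first try to avoid half-lines by building the $M_n$ inductively on finite windows, then extending via an exhaustion argument. The two-interval Haag duality from $\mu_{\ql{B}}=1$ ensures that $N_{[a,b]}\cap N_{[a,b-1]}'$ behaves correctly.

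\textbf{Final claim.} For a holomorphic net $\ql{A}$, the Longo--Rehren extension is just $\ql{A}\otimes\ql{A}^{op}$. The result therefore gives $\ql{A}^{\text{dis.}}\otimes(\ql{A}^{op})^{\text{dis.}}\cong\Mat(\ZZ)$ with bounded spread, which is invertibility.
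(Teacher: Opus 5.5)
There is a genuine gap, and it sits exactly at the step you flag as the main obstacle. The only properties of $\ql{B}$ you actually use are the split property for bounded intervals and Haag duality for (finite unions of) intervals, i.e.\ $\mu=1$. Both hold verbatim for a single holomorphic chiral net, e.g.\ $(E_8)_1$ restricted to $\RR$ in its own vacuum representation, where two-interval duality is KLM's $\mu_{\ql{A}}=1$.

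Consequently, any argument built only from these inputs would also show that the discretized $(E_8)_1$ net is bounded spread isomorphic to $\Mat(\ZZ)$. That contradicts Theorem~\ref{thm:NontrivialityOfHolomorphicCFTs}, since $c_-=8\not\equiv 0\pmod{3}$. In particular, your finite-window/exhaustion workaround cannot succeed. Making the left ends of the window factors coherent as $a\to-\infty$ amounts to splitting the half-line inclusions $\ql{B}((-\infty,n))''\subseteq\ql{B}((-\infty,n+R))''$. For a chiral net these correspond to arcs of $S^1$ sharing the endpoint $1$, where the chiral split property says nothing. In fact splitness must fail there: the chiral vacuum representation satisfies two-sided Haag duality, so Lemma~\ref{lma:trivialtest} together with the non-triviality result rules it out.

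Lemma~\ref{lma:trivialtest} shows that split-at-infinity plus weak two-sided Haag duality is \emph{equivalent} to the conclusion. Your telescoping construction ($M_n=F_{(-\infty,n+1]}\cap F_{(-\infty,n]}'$) is essentially the converse direction of that lemma. So once half-line type $\mathrm{I}$ factors are granted, the rest of your plan is fine. But the half-line step is the whole content of the theorem, and it is exactly where the non-chiral Longo--Rehren structure must be used.

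The missing idea is the paper's folding trick. Using locally implemented diffeomorphisms $\Phi_n$, one builds a ``square root'' isomorphism $\gamma:\ql{A}_0\to\ql{A}^+$ that places the chiral net on the upper semicircle $I^+$. The map $a\mapsto J_{I^+}\gamma(a)^*J_{I^+}$ then places $\ql{A}_0^{op}$ on the lower semicircle. As a result, $\ql{A}_0\otimes\ql{A}_0^{op}$ acts on the chiral vacuum space, and the half-line $(-\infty,a)$ is sent to the arc $I_{a,-}$ through $-1$, which stays away from $1$.

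Strong additivity gives $\pi_0(\ql{B}_{(-\infty,a)})''=\ql{A}_{I_{a,-}}$. Since $I_{a,-}\subset I_{a+1,-}$ now have distinct endpoints, the \emph{chiral} split property yields type $\mathrm{I}$ factors between consecutive half-line algebras. For the duality, KLM's description of the Longo--Rehren extension, $\pi_0(\ql{B}_{(a,b)})=(\ql{A}_{I_{a,-}}\vee\ql{A}_{I_{b,+}})'$, combined with chiral Haag duality gives $(\pi_0(\ql{B}_{(-\infty,a)})\vee\pi_0(\ql{B}_{(b,\infty)}))'=\pi_0(\ql{B}_{(a,b)})$. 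This is what cuts your telescoped factors back down to bounded regions, and Lemma~\ref{lma:trivialtest} then finishes. Your final deduction for holomorphic nets is correct as stated.
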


From the observed relation between orbifold anomalies and chiral central charge \cite{johnson2019moonshine,Bischoff2020AnomaliesCyclicPermutationOrbifolds, sopenko2025reflection}, we can conclude that for holomorphic conformal nets with central charge not divisible by 3, $\ql{A}^{\text{dis.}}$ defines a non-trivial element of the Brauer group. However, we will show in future work that non-zero central charge implies non-triviality. We conjecture that the group $\IP(\mathbb{Z}^{2})\cong\Brgr(\mathbb{Z})$ is isomorphic to $\mathbb{Z}$, and that it is generated by the $E_{8}$ state constructed above. This agrees with proposed classifications for invertible phases in the literature \cite{Kitaev2013ClassificationSRE}.

Our results have connections with (and shed some light on) related problems, which we discuss in the following remarks.

\begin{remark}
For spin systems, the classification of QCAs has been obtained in dimension one \cite{Gross_2012} and two \cite{freedman2020classification}. Various QCAs have been constructed in dimension three \cite{haah2023nontrivial} and higher \cite{chen2021exactly, fidkowski2025quantum} which were conjectured to be non-trivial. Both established and conjectured classifications are very different from the classification of invertible states in the same dimension, and as we mentioned above the relationship between them is obscure. Some non-trivial QCAs (such as a shift in 1d), when acting on an invertible state, do not change its phase, while some invertible states (such as a ground state of a 2d $p+ip$ superconductor) can not be produced by any QCA.

In contrast, after stabilization with infinite-dimensional on-site Hilbert spaces both problems turned out to be exactly the same. As we show, any invertible state over $\ZZ^d$ can be produced by a QCA from an unentangled state and every QCA that preserves an unentangled state is in a trivial equivalence class. That trivializes some QCAs which were non-trivial in the setting of finite-dimensional Hilbert spaces, such as shift QCA, leading to a simpler classification and providing new non-trivial QCAs in dimension two. We leave it as an open question whether three-dimensional QCAs, which are conjectured to be non-trivial in the finite-dimensional case, become trivial after stabilization\footnote{In fact, the expectation that there are no non-trivial invertible phases in three dimensions (without imposing symmetry) suggests that all such QCAs should trivialize.
}
\end{remark}

\begin{remark}
In this paper, we consider invertible states with zero correlations between observables separated by the correlation length. In other words, we assume that an invertible state together with its inverse can be prepared from an unentangled product state by a quantum circuit. While ground states of quantum lattice systems in general have gradually decaying correlations even if the Hamiltonian is gapped and has finite-range interactions, we expect that any invertible phase admits a representative with finite-range correlations if we use infinite-dimensional ancillas. We support this claim by showing that chiral two-dimensional phases have such representatives, which is believed to be not the case in the setting with finite-dimensional local Hilbert spaces. 

We also note that any invertible state in our setting admits a gapped Hamiltonian. Indeed, as we show, any such state can be produced by a QCA from an entangled state, and the desired Hamiltonian can be produced from an on-site Hamiltonian for an unentangled state using this QCA.
\end{remark}

\begin{remark}
One of the expected pieces of data characterizing two-dimensional topological order is the chiral central charge $c_-$ \cite{Kitaev_2006}. It is believed that two-dimensional topological phases with $c_- \neq 0$ can not be realized by spin systems with Hamiltonians $H = \sum_{x \in \ZZ^d} h_x$ being a sum of commuting terms $[h_x, h_y] = 0$ with $h_x$ being localized near $x \in \ZZ^d$. A primary reason for this conjecture is the physical expectation that such phases have a universal non-vanishing energy current at finite temperature near the edge (in the presence of a boundary), and a way to express this current through the bulk quantity \cite{Kitaev_2006} that vanishes identically for Hamiltonians as above. In particular, this conjecture implies that the ground states of systems in an invertible phase with non-trivial $c_-$ can not be prepared by quantum cellular automata.

The equivalence of the classification of quantum cellular automata and invertible states for lattice systems with infinite-dimensional on-site Hilbert spaces implies that, in contrast to spin systems, there is a way to realize phases with non-zero $c_-$ via Hamiltonians which are are sums of commuting local terms. Indeed, to get such a Hamiltonian we can just apply the corresponding QCA to a Hamiltonian of an unentangled state of this form (even if each local term is an unbounded operator).\footnote{In contrast to lattice systems with finite-dimensional on-site Hilbert spaces, Hamiltonians of the form $H = \sum_{x \in \ZZ^d} h_x$ with bounded $h_x$ (e.g., projectors) are not physically relevant as they have infinite free energy density at finite temperature.} We note, however, that if one wants to construct a Hamiltonian for a system with a boundary, one would have to introduce Hamiltonian terms near the edge that would allow for a non-trivial energy current.

\end{remark}

\begin{remark}
The results of our paper also have implications for quantum field theory (QFT). One of the expected pieces of data of any genuine relativistic QFT in $(d+1)$-dimensions is the net of von Neumann algebras associated with open bounded subspaces of the Minkowski space $\RR^{1,d}$ or the corresponding quasi-local algebra over the spatial slice $\RR^d$. A natural question is how to classify such algebras up to some notion of equivalence.

While in general it is not true that the quasi-local algebra $\ql{A}$ of a QFT can be represented as a tensor product of the algebra $\ql{A}_U$ of a given region $U$ and the algebra of its complement $U^c$, it is true that one can always find a slightly bigger region\footnote{in fact, arbitrarily close to $U$} $V$ such that there is a tensor product factorization of the global algebra into factors which contain $\ql{A}_U$ and $\ql{A}_{V^c}$. This fundamental result is known as {\it split property} and has been shown to be a consequence of normal thermodynamic properties which guarantee the existence of a KMS state at any finite temperature \cite{buchholz1986causal, buchholz1987universal}. 

One can wonder if it is possible to perform this tensor product factorization at different locations simultaneously. More precisely, given a quasi-local algebra $\ql{A}$ over $\RR^d$, one can ask if there is an isomorphism between the algebra $\ql{A}$ and an infinite tensor product algebra $\bigotimes_{j \in \ZZ^d} B(\hilb{H}_j)$ of algebras of bounded operators over the lattice $\ZZ^d \subset \RR^d$, such that the image of $\ql{A}_U$ is contained in $\bigotimes_{j \in U^{+r}} B(\hilb{H}_j)$ for an $r$-neighborhood $U^{+r}$ of $U$. In other words, if it is possible to identify the quasi-local algebra of a QFT with a tensor product algebra while preserving locality. We call it {\it extended split property}.

In this paper, we show that it is indeed the case for quasi-local algebras over $\RR$ associated with relativistic quantum field theories describing non-chiral free fermions in $\RR^{1,1}$ (Section \ref{sec:invertiblefreefermions}) and rational diagonal (1+1)-dimensional conformal field theories (Section \ref{sec:bosonicholomorphic}) with vanishing gravitational anomaly. While these concrete results are limited to (1+1)-dimensions, we expect that they can be extended to higher dimensions. Moreover, we conjecture that the quasi-local algebra of {\it any} genuine relativistic QFT is an {\it invertible quasi-local algebra}, and that it has an extended split property if and only if the gravitational anomaly vanishes. It provides a criterion for whether a given quantum field theory can be described as a lattice system.

\end{remark}

\subsection{Notations and conventions.} 

Here we record some notations and conventions used throughout the paper:

\begin{itemize}
 
 \item By Hilbert spaces we always mean complex Hilbert spaces unless specified otherwise. We denote the inner product on a Hilbert space $\hilb{H}$ by $\lal\,\cdot\,, \,\cdot\, \ral: \hilb{H} \times \hilb{H} \to \CCC$ and assume it is anti-linear in the first argument, and linear in the second. We denote the von Neumann algebra of bounded operators by $B(\hilb{H})$ and the group of unitary elements by $U(\hilb{H}) \subset B(\hilb{H})$.
    
\item We set $\NNo = \{0,1,2,...\}$, $\NN = \{1,2,3,...\}$, $\NNb = \{1,2,3,...\} \cup \{\infty\}$.

\item For a metric space $X$, we denote the distance between $x,y \in X$ by $|x-y| \in \RR_{\geq 0}$. For a subset $U \subset X$, we let $U^{+r}$ be the subset of all points $x \in X$ such that $|x-u| \leq r$ for some $u \in U$. We let $\interior{U}$ be the {\it interior} of $U$ that consists of all the points $y \in U$ such that $U$ contains an open ball with the center at $y$ (or, equivalently, the largest open subset contained within $U$).

\item A metric space is called {\it proper} if all bounded subsets are compact. It is called uniformly locally finite if for any $r > 0$ there exists $N_r > 0$ the number of points in any ball of radius $r$ is less than $N_r$. We say that a metric space $X$ has {\it bounded geometry} if it admits a uniformly locally finite subset $Y \subset X$ which is coarsely dense, i.e. there exists $\delta > 0$ such that for any $x \in X$ there exists $y \in Y$ satisfying $|x-y| < \delta$.

\item When we regard $\RR^d$ and $\ZZ^d$ as metric spaces, we assume they are equipped with Euclidean metric. More generally, for any metric spaces $X$ and $Y$, we consider the Euclidean product metric $|(x,y)-(x',y')|=\sqrt{|x-x'|^{2}+|y-y'|^{2}}$. We typically only care about metrics up to coarse equivalence, so this convention is inessential.

\item A {\it $*$-algebra} is an algebra $\cstar{A}$ over $\CCC$ equipped with a map $a \to a^*$, $a \in \cstar{A}$ which is involutive $(a^*)^* = a$, anti-linear $(\lambda a + \mu b)^* = \bar{\lambda} a^* + \bar{\mu} b^*$, $\lambda, \mu \in \CCC$, and is an anti-homomorphism $(ab)^* = b^* a^*$.

\item A {\it state} on a unital $*$-algebra $\ql{A}$ is a linear functional $\omega:\cstar{A} \to \CCC$ that is positive $\omega(a^* a) \geq 0$, $a \in \cstar{A}$ and normalized $\omega(1) = 1$. For a $*$-algebra $\ql{A}$ and a state $\omega$ on $\ql{A}$, we denote the associated Gelfand–Naimark–Segal (GNS) representation by $(\pi_{\omega}, \hilb{H}_{\omega}, \Omega_{\omega})$, where $\pi_{\omega}: \ql{A} \to B(\hilb{H}_{\omega})$, $\Omega_{\omega} \in \hilb{H}_{\omega}$, $\omega(a) = \lal \Omega_{\omega}, \pi_{\omega}(a) \Omega_{\omega}\ral$.

\item A $*$-algebra $\cstar{A}$ is a {\it $C^*$-algebra} if it admits a complete norm with $\|a^* a\| = \|a\|^2$ for any $a \in \cstar{A}$. Such a norm is unique if exists and can be defined algebraically by 
$$\|a\| = \sup |\{\lambda \in \CCC: (a^* a - \lambda \cdot 1)\text{ is not invertible}\}|, \,\,\, a \in \cstar{A}.$$ 
A $*$-algebra is a {\it von Neumann algebra} if it is a $C^*$-algebra which has a predual, i.e. there exists a Banach space\footnote{which can be interpreted as the space of physically realizable states on the algebra of observables} whose dual Banach space is isometrically isomorphic to the Banach space of the $C^*$-algebra. It comes equipped with a natural ultraweak topology induced by the predual. We say that maps between von Neumann algebras are {\it normal} if they are continuous is this topology. We mostly consider separable von Neumann algebras, i.e. von Neumann algebras with separable predual. Standard references for $C^*$-algebras and von Neumann algebras are \cite{Takesaki2002I,Takesaki2003II,Takesaki2003III}, and, for a more physics oriented approach, \cite{bratteli2012operator}.

\end{itemize}

\subsection{Acknowledgments} C.J. would like to thank Jeongwan Haah and David Long for useful comments. N.S. would like to thank Meng Cheng, Anton Kapustin and Sahand Seifnashri for discussion. C.J. is supported by NSF Grant DMS-2247202. N.S. is supported by NSF Grant PHY-2514611 and the Ambrose Monell Foundation. R.T. is supported by the Mani L. Bhaumik Presidential Term Chair. This work arose from discussions at the Stanford Q-Farm workshop ``Quantum Cellular Automata" in March, 2026. Chat GPT Sol 5.6, Claude Fable 5 and Opus 5 were used for editorial purposes in the preparation of this manuscript.

\section{Superalgebras}

By {\it vector superspaces}, we mean $\ZZ/2$-graded vector spaces $V = V^{(0)} \oplus V^{(1)}$. By $V \otimes W$ for two vector superspaces $V$, $W$, we always mean the $\ZZ/2$-graded tensor product. Similarly, by {\it Hilbert superspaces}, we mean $\ZZ/2$-graded Hilbert spaces $\hilb{H} = \hilb{H}^{(0)} \oplus \hilb{H}^{(1)}$, and use $\otimes$ to denote the $\ZZ/2$-graded tensor product. The elements of vector superspaces which have a definite grading are called {\it homogeneous}. The {\it superdimension} of a vector space $V$ is the pair $\text{sdim}(V) := n|m$, where $n = \dim V^{(0)}$, $m = \dim V^{(1)}$.

By {\it superalgebras} we mean $\ZZ/2$-graded associative algebras $\cstar{A} = \cstar{A}^{(0)} \oplus \cstar{A}^{(1)}$ with the grading given by an involutive automorphism which we call {\it fermionic parity} and denote $\fparity$. The elements $a \in \cstar{A}^{(0)}$ satisfy $\fparity(a) = a$ and are called {\it even}, while the elements $a \in \cstar{A}^{(1)}$ satisfy $\fparity(a) = -a$ and are called {\it odd}. If an element is even or odd then it is called {\it homogeneous}. For a homogeneous element $a \in \cstar{A}$, we let $|a| \in \{0,1\}$ be the grading: $\fparity(a) = (-1)^{|a|} a$. For any element $a \in \cstar{A}$, we let $(a+\fparity(a))/2$ and $(a - \fparity(a))/2$ be its {\it even} and {\it odd part}. We say that two elements $a,b \in \cstar{A}$ {\it supercommute} if $a b = b_0 a + b_1 \fparity(a)$, where $b_0$ and $b_1$ are even and odd parts of $b$, respectively. 

A {\it supercommutant} of a subset $\cstar{S} \subset \cstar{A}$ of a superalgebra $\cstar{A}$ is the set of all elements of $\cstar{A}$ which supercommute with elements from this subset. We denote the supercommutant of $\cstar{S} \subset \cstar{A}$ by $\cstar{S}' \subset \cstar{A}$. A {\it supercenter} $Z(\cstar{A})$ is defined to the supercommutant of the whole algebra $\cstar{A}$ in itself. The usual center of $\cstar{A}$ is denoted $Z_0(\cstar{A})$. We use the following notation for the supercommutator: $[a,b]_{\pm} := [a,b_0] + a b_1 - b_1 \Theta(a)$. We call an automorphism of $\cstar{A}$ {\it graded} if it commutes with $\fparity$. A linear functional on a superalgebra is called {\it graded} if it is $\fparity$-invariant.

In this text, we will be considering {\it $*$-superalgebras}, i.e. superalgebras $\ql{A}$ equipped with the structure of a $*$-algebra satisfying $\fparity(a^*) = \fparity(a)^*$. In particular, we have notions of {\it $C^*$-superalgebra} and {\it von Neumann superalgebra} which are $*$-superalgebras satisfying the property of being a $C^*$-algebra and a von Neumann algebra, respectively.

\begin{example}{(\textbf{Bounded operators on Hilbert superspaces})} Let $\hilb{H}$ be a Hilbert superspace. The von Neumann algebra $B(\hilb{H})$ is a von Neumman superalgebra. The grading is conveniently defined by $\fparity:=\text{Ad}_{\Gamma}$, where $\Gamma$ is the parity unitary $\Gamma:\hilb{H}\rightarrow \hilb{H}$ given by $\Gamma(\xi_{0}+\xi_{1})=\xi_{0}-\xi_{1}$, $\xi_{0}\in \hilb{H}^{(0)}$, $\xi_{1}\in \hilb{H}^{(1)}$.
\end{example}

The representations of $C^*$-superalgebras are graded $*$-homomorphisms to $B(\hilb{H})$ for a Hilbert superspace $\hilb{H}$. Every von Neumann superalgebra is isomorphic to a strongly closed supersubalgebra of $B(\hilb{H})$ for some Hilbert superspace $\hilb{H}$. Furthermore, von Neumann superalgebras satisfy the super version of the bicommutant theorem, i.e. the von Neumann algebra $\cstar{A} \subset B(\hilb{H})$ coincides with its double supercommutant $\cstar{A}''$ (see, for example, \cite[Appendix A.2]{Bols2021}). Hence, given a representation $\pi:\cstar{A} \to B(\hilb{H})$ of a $C^*$-superalgebra, there is a natural way to produce a von Neumann superalgebra by taking the double supercommutant $\pi(\cstar{A})''$. In particular, given a graded state $\omega$ on $\cstar{A}$ we have an associated von Neumann superalgebra $\pi_{\omega}(\cstar{A})''$.

By a tensor product of von Neumann superalgebras $\cstar{A}, \cstar{B}$ we always mean the super version of the spatial tensor product $\cstar{A} \overline{\otimes} \cstar{B}$ obtained by representing $\cstar{A}$, $\cstar{B}$ on Hilbert superspaces $\hilb{H}$, $\hilb{K}$ and completing the $\ZZ/2$-graded algebraic tensor product $\cstar{A} \hat{\otimes}_{\mathbb{C}} \cstar{B}$ represented on $\hilb{H} \otimes \hilb{K}$ via the double supercommutant. We use $\otimes$ to denote this $\ZZ/2$-graded spatial tensor product $\cstar{A} \otimes \cstar{B}$ instead of $\overline{\otimes}$, \textit{which differs from the standard conventions in the literature.} For two normal maps between von Neumann superalgebras $\phi:\cstar{A}\rightarrow \tilde{\cstar{A}}$ and $\psi: \cstar{B}\rightarrow \tilde{\cstar{B}}$ (not necessarily preserving $\ZZ/2$-grading), their tensor product $\phi\otimes \psi: \cstar{A}\otimes \cstar{B}\rightarrow \tilde{\cstar{A}} \otimes \tilde{\cstar{B}}$ is defined via
$$(\phi\otimes \psi)(a\otimes b):=\phi(a_{0})\otimes \psi(b)+\phi(a_1)\otimes \fparity(\psi(\fparity(b))),$$
which is also a normal map.

Note that for a von Neumann superalgebra $\cstar{A}$, the odd part of $Z(\cstar{A})$ is trivial, because if there was an odd element $x \in Z(\cstar{A})$, then $x x^* + x^* x = 0$, and since both summands are positive, $x = 0$. Hence, we have $Z(\cstar{A}) = \cstar{A}^{(0)} \cap Z_0(\cstar{A})$. We say that a von Neumann superalgebra $\cstar{A}$ is a {\it superfactor} if its supercenter $Z(\cstar{A})$ is trivial.

\begin{prop}    \label{prop:SuperfactorCharacterization}
Let $\cstar{A}$ be a superfactor. Then either $\cstar{A}$ is a factor, or $\cstar{A} \cong \cstar{M} \otimes \text{Cl}_1$ for a factor $\cstar{M}$ with a trivial grading, where $\text{Cl}_1$ is the Clifford superalgebra generated by $1$ and an odd element $u$, $u^2 = 1$.
\end{prop}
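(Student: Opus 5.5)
The plan is a case analysis on the ordinary center $Z_0(\cstar{A})$. We already know $Z(\cstar{A}) = \cstar{A}^{(0)} \cap Z_0(\cstar{A})$. The parity $\fparity$ restricts to an involutive automorphism of the abelian von Neumann algebra $Z_0(\cstar{A})$, and its fixed points are exactly $Z(\cstar{A}) = \CCC$.

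\emph{Step 1: $Z_0(\cstar{A})$ is either $\CCC$ or two-dimensional.} Let $z \in Z_0(\cstar{A})$ be a nonzero odd element. Then $z^*z$ and $z^2$ are even and central, hence scalars. Since $z$ is central, it is normal, and $z^*z = \lambda > 0$ (nonzero because $\|z^*z\| = \|z\|^2$).

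Rescale so that $z^*z = 1$; then $z$ is unitary. Also $z^2 = \mu$ with $|\mu| = 1$. Replacing $z$ by $\mu^{-1/2} z$, we get a central odd self-adjoint unitary $u$ with $u^2 = 1$.

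Any other odd central element $w$ gives $wu^{*}$ even and central, hence a scalar, so $w \in \CCC u$. Thus $Z_0(\cstar{A}) = \CCC \oplus \CCC u$. If there is no odd central element, $Z_0(\cstar{A}) = Z(\cstar{A}) = \CCC$ and $\cstar{A}$ is a factor.

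\emph{Step 2: splitting off $\text{Cl}_1$ in the second case.} Set $\cstar{M} := \cstar{A}^{(0)}$ with trivial grading, and define
\[
\Phi: \cstar{M} \otimes \text{Cl}_1 \to \cstar{A}, \qquad m \otimes 1 \mapsto m, \quad 1 \otimes u \mapsto u.
\]
Because $\cstar{M}$ is purely even, the graded tensor product is just the ordinary one, and $u$ commutes with $\cstar{M}$ since it is central. So $\Phi$ is a graded $*$-homomorphism.

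Surjectivity: any odd $a$ can be written $a = (au)u$ with $au \in \cstar{A}^{(0)}$, using $u^2 = 1$.

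Injectivity: if $m_0 + m_1 u = 0$ with $m_i$ even, then taking even and odd parts gives $m_0 = 0$ and $m_1 u = 0$. Multiplying by $u$ gives $m_1 = 0$.

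The algebraic isomorphism is automatically a normal $*$-isomorphism of von Neumann algebras. This follows because $\cstar{M} \otimes \text{Cl}_1 \cong \cstar{M} \oplus \cstar{M}$ as a finite direct sum, and $\cstar{A}^{(0)}$ is ultraweakly closed, being the fixed points of a normal automorphism.

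\emph{Step 3: $\cstar{M}$ is a factor.} This is the one point needing care. Let $c \in Z_0(\cstar{A}^{(0)})$. Then $c$ commutes with $\cstar{A}^{(0)}$ and with $u$ (which is central). Since $\cstar{A} = \cstar{A}^{(0)} + \cstar{A}^{(0)} u$, $c$ commutes with all of $\cstar{A}$. So $c \in Z_0(\cstar{A}) \cap \cstar{A}^{(0)} = Z(\cstar{A}) = \CCC$, and $\cstar{M}$ is a factor.

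The main obstacle is Step 1: showing that an odd central element can be normalized to a self-adjoint unitary with $u^2 = 1$. The key facts are that even central elements are scalars and that central elements are normal.
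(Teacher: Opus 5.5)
Your proof is correct and takes essentially the same route as the paper. Both arguments use the fact that even central elements are scalars to normalize an odd central element into an odd central self-adjoint unitary $u$ with $u^2=1$, and then split $\cstar{A} = \cstar{A}^{(0)} \oplus u\,\cstar{A}^{(0)}$. The paper first passes to the self-adjoint combination $e^{i\theta}x+e^{-i\theta}x^*$ and normalizes $y^2$, whereas you normalize $z^*z$ and then fix the phase of $z^2$; this difference is cosmetic. Your Step 3, showing that $\cstar{A}^{(0)}$ is a factor, fills in a check that the paper only asserts.
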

\begin{proof}   
Suppose there exists a non-zero odd element $x \in \cstar{A}^{(1)} \cap Z_0(\cstar{A})$. Since $(e^{i \theta} x+ e^{-i \theta}x^*)$ is non-zero for some $\theta$, there exists a non-zero self-adjoint odd element $y \in \cstar{A}^{(1)} \cap Z_0(\cstar{A})$. Since the even part of $Z_0(\cstar{A})$ is trivial, $y^2 = \lambda$ for some positive real number $\lambda$. It follows that there exists an odd self-adjoint unitary $u \in \cstar{A}^{(1)} \cap Z_0(\cstar{A})$, $u^2 = 1$. In this case, $\cstar{A} = \cstar{A}^{(0)} \oplus u \cstar{A}^{(0)}$ and therefore $\cstar{A} \cong \cstar{M} \otimes \text{Cl}_1$ for a factor $\cstar{M} \cong \cstar{A}^{(0)}$.
\end{proof}

\begin{prop}    \label{prop:SuperalgebraSplitProperty}
Let $\cstar{A}$, $\cstar{B} \subset B(\hilb{H})$ be supercommuting superfactors acting on a separable Hilbert superspace $\hilb{H}$, such that the von Neumann superalgebra $\cstar{A}\vee \cstar{B}$ is a superfactor. Suppose there exists a homogeneous vector $\xi\in \hilb{H}$ such that the associated vector state $\phi_{\xi}$ satisfies $\phi_{\xi}(ab)= \phi_{\xi}(a) \phi_{\xi}(b)$ for any $a \in \cstar{A}$, $b \in \cstar{B}$. Then there exists an isomorphism of von Neumann superalgebras $\cstar{A} \vee \cstar{B} \cong \cstar{A} \otimes \cstar{B}$. 
\end{prop}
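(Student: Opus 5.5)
The plan is to realize $\cstar{A}\vee\cstar{B}$ spatially as a graded tensor product, by restricting to the cyclic subspace $\hilb{K}:=\overline{(\cstar{A}\vee\cstar{B})\xi}$ and identifying it with $\hilb{H}_{\cstar{A}}\otimes\hilb{H}_{\cstar{B}}$. Here $\hilb{H}_{\cstar{A}}=\overline{\cstar{A}\xi}$ and $\hilb{H}_{\cstar{B}}=\overline{\cstar{B}\xi}$ are the GNS spaces of $\phi_\xi|_{\cstar{A}}$ and $\phi_\xi|_{\cstar{B}}$. Since $\xi$ is homogeneous, the state $\phi_\xi$ is graded. All three cyclic subspaces are graded, with grading induced by $\Gamma$.

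First, I define $W:\hilb{H}_{\cstar{A}}\otimes\hilb{H}_{\cstar{B}}\to\hilb{K}$ on algebraic tensors by $W(a\xi\otimes b\xi):=ab\xi$. I then check that it is isometric:
$$\langle ab\xi,a'b'\xi\rangle=\phi_\xi(b^*a^*a'b')=\pm\phi_\xi(a^*a'\,b^*b').$$
The sign comes from supercommuting $b^*$ past $a^*a'$, and the right-hand side factorizes as $\pm\phi_\xi(a^*a')\phi_\xi(b^*b')$. The gradedness of $\phi_\xi$ kills the mixed-parity terms, and the surviving sign is exactly the Koszul sign in the graded inner product on the super tensor product. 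So $W$ extends to a graded unitary onto $\hilb{K}$, and it intertwines $a\mapsto a\otimes 1$ and $b\mapsto 1\otimes b$ with the graded tensor action. Consequently the map $a\otimes b\mapsto ab$ extends to a normal isomorphism from $\pi_{\cstar{A}}(\cstar{A})\otimes\pi_{\cstar{B}}(\cstar{B})$ onto $(\cstar{A}\vee\cstar{B})|_{\hilb{K}}$, where $\pi_{\cstar{A}},\pi_{\cstar{B}}$ are the restrictions to $\hilb{H}_{\cstar{A}},\hilb{H}_{\cstar{B}}$.

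Second, I remove the cutdowns using the superfactor hypotheses. The projection $p$ onto $\hilb{K}$ is even and lies in $(\cstar{A}\vee\cstar{B})'$. Reduction $x\mapsto xp$ is a normal surjective $*$-homomorphism $\cstar{A}\vee\cstar{B}\to(\cstar{A}\vee\cstar{B})p$, and its kernel is $(\cstar{A}\vee\cstar{B})(1-z)$, where $z$ is the central support of $p$. The central support is even because $p$ is even. Hence $z\in Z(\cstar{A}\vee\cstar{B})$, which is trivial, so $z=1$ and the reduction is injective. The same argument applied to $\cstar{A}$ with $p_{\cstar{A}}$, the projection onto $\hilb{H}_{\cstar{A}}$ (which lies in $\cstar{A}'$ and is even), shows $\pi_{\cstar{A}}$ is an isomorphism, and likewise for $\cstar{B}$. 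Composing these identifications gives $\cstar{A}\vee\cstar{B}\cong\cstar{A}\otimes\cstar{B}$.

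The main obstacle is the central-support step in the super setting. One must check that the central support of an even projection in the commutant is an \emph{even} element of the ordinary center, so that it lands in the supercenter and the superfactor hypothesis, rather than the factor hypothesis, applies. This also clarifies where Proposition \ref{prop:SuperfactorCharacterization} is relevant: in the $\cstar{M}\otimes\text{Cl}_1$ case the ordinary center contains the odd element $u$, and only gradedness prevents a wrong kernel. The sign bookkeeping in the isometry computation is routine but must be done using the paper's convention for $\phi\otimes\psi$.
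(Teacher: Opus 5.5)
Your proposal is correct and follows essentially the same route as the paper: you use the product-state condition to build the unitary $ab\xi\mapsto a\xi\otimes b\xi$, identifying $(\cstar{A}\vee\cstar{B})|_{\hilb{K}}$ with the graded tensor product, and then you use that a graded normal $*$-homomorphism out of a superfactor is injective. Your even-central-support argument is exactly that last step, and you also make explicit the faithfulness of the cyclic representations of $\cstar{A}$ and $\cstar{B}$, which the paper leaves implicit. One small correction: if $\xi$ is odd and you grade $\overline{\cstar{A}\xi}$ and $\overline{\cstar{B}\xi}$ by $\Gamma$, the intertwining actually gives $a\otimes b\mapsto a\fparity(b)$ rather than $ab$. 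This is harmless, since you can compose with $\Id\otimes\fparity$ or regrade those subspaces so that $\xi$ counts as even, as the paper's use of GNS vectors effectively does.
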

\begin{proof}
Consider the Hilbert subspace $\hilb{V}=\overline{\cstar{A}\cstar{B}\xi}\subseteq \hilb{H}$, and let $(\pi_{\cstar{A}},\hilb{H}_{\cstar{A}})$, $(\pi_{\cstar{B}},\hilb{H}_{\cstar{B}})$ be the GNS representations of $\cstar{A}$, $\cstar{B}$, respectively, associated with the state $\phi_{\xi}$. The hypothesis implies that the map $ab\xi\rightarrow a\Omega_{\phi_{\xi}|_{\ql{A}}} \otimes b\Omega_{\phi_{\xi}|_{\ql{B}}} \in \hilb{H}_{\phi_{\xi}|_{\ql{A}}}\otimes \hilb{H}_{\phi_{\xi}|_{\ql{B}}}$ extends to a unitary intertwiner of Hilbert space representations of $*$-algebra $\cstar{A}\otimes_{\mathbb{C}} \cstar{B}$. Thus $\cstar{A}\vee\cstar{B}|_{\hilb{V}}\cong \cstar{A}\otimes \cstar{B}$. The restriction map $\cstar{A}\vee\cstar{B}\rightarrow \cstar{A}\vee\cstar{B}|_{\hilb{V}}$ is a surjective, graded normal $*$-homomorphism. But since $\cstar{A}\vee\cstar{B}$ is a superfactor, any graded normal $*$-homomorphism is injective, and thus the restriction gives the desired isomorphism. 
\end{proof}

We will use the following generalization of the result from \cite{ge1996tensor}:

\begin{lemma}\label{lma:gekadison}
(Theorem A of \cite{ge1996tensor}) Let $\cstar{A}$, $\cstar{B}$ be von Neumann superalgebras, and $\cstar{C}$ be the von Neumann supersubalgebra of $\cstar{A} \otimes \cstar{B}$ satisfying
\[\cstar{A} \otimes 1 \subset \cstar{C} \subset \cstar{A} \otimes \cstar{B}.\]
Suppose $\cstar{A}$ is a type $\rm{I}$ factor. Then    
\[\cstar{C} = \cstar{A} \otimes (\cstar{C} \cap (1 \otimes \cstar{B})).\]
\end{lemma}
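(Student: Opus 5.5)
The plan is to reduce to the ungraded theorem of Ge and Kadison by untwisting the $\ZZ/2$-graded tensor product with a Klein (Jordan--Wigner) transformation. This works because the grading on a type $\rm{I}$ factor is inner.

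\emph{Step 1: the grading on $\cstar{A}$ is inner.} Since $\cstar{A}$ is a type $\rm{I}$ factor, the involutive automorphism $\fparity_{\cstar{A}}$ is inner, so $\fparity_{\cstar{A}}=\Ad_{V}$ for some unitary $V\in\cstar{A}$. From $\fparity_{\cstar{A}}^2=\Id$ we get that $V^2$ is central, hence a scalar. After rescaling, $\Gamma:=V$ is a self-adjoint unitary. It is even, since $\Ad_\Gamma(\Gamma)=\Gamma$. (If $\cstar{A}$ is trivially graded, take $\Gamma=1$.)

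\emph{Step 2: the Klein map.} Let $\cstar{A}\overline{\otimes}\cstar{B}$ denote the ungraded spatial tensor product. Define
\[
\phi:\cstar{A}\otimes\cstar{B}\to\cstar{A}\overline{\otimes}\cstar{B},\qquad \phi(a\otimes b)=a\Gamma^{|b|}\otimes b
\]
for homogeneous $b$, extended linearly.
\begin{itemize}
\item Multiplicativity: $a_1\Gamma^{|b_1|}a_2\Gamma^{|b_2|}=(-1)^{|b_1||a_2|}a_1a_2\Gamma^{|b_1|+|b_2|}$. This is exactly the graded sign, and $\Gamma^2=1$ gives $\Gamma^{|b_1|+|b_2|}=\Gamma^{|b_1b_2|}$.
\item The $*$-operation is checked the same way.
\item Spatial implementation: on $\hilb{H}\otimes\hilb{K}$, $\phi$ is implemented by the unitary that acts as $\Gamma^{|\eta|}\otimes 1$ on $\xi\otimes\eta$ for homogeneous $\eta$. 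Hence $\phi$ is a normal $*$-isomorphism of von Neumann algebras.
\end{itemize}
It satisfies $\phi(a\otimes 1)=a\otimes 1$ and $\phi(1\otimes b)=\Gamma^{|b|}\otimes b$. It intertwines the parity with $\Ad_{\Gamma\otimes\Gamma_{\cstar{B}}}$, where $\Gamma_{\cstar{B}}$ is the parity unitary on $\hilb{K}$ (grading unitaries on the Hilbert superspaces suffice here).

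\emph{Step 3: apply the ungraded theorem.} Set $\tilde{\cstar{C}}=\phi(\cstar{C})$. Then $\cstar{A}\overline{\otimes}1\subset\tilde{\cstar{C}}\subset\cstar{A}\overline{\otimes}\cstar{B}$, so the original Ge--Kadison theorem gives
\[
\tilde{\cstar{C}}=\cstar{A}\overline{\otimes}\bigl(\tilde{\cstar{C}}\cap(1\overline{\otimes}\cstar{B})\bigr).
\]
Now transport back. Since $\Gamma\otimes1\in\cstar{A}\otimes 1\subset\cstar{C}$, and $\cstar{C}$ is parity invariant, for homogeneous $b$ we have
\[
1\otimes b\in\cstar{C}\iff \Gamma^{|b|}\otimes b\in\tilde{\cstar{C}}\iff 1\otimes b\in\tilde{\cstar{C}}.
\]
The same argument applies to the homogeneous parts of a general $1\otimes b$. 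Indeed, $\tilde{\cstar{C}}$ is invariant under $\Ad_{\Gamma\otimes\Gamma_{\cstar{B}}}$, and $\Ad_{\Gamma\otimes 1}$ is implemented inside $\tilde{\cstar{C}}$, so $\tilde{\cstar{C}}$ is invariant under $\Ad_{1\otimes\Gamma_{\cstar{B}}}$, which splits $1\otimes b$ into homogeneous parts. Therefore $\cstar{D}:=\{b: 1\otimes b\in\cstar{C}\}$ coincides with $\{b:1\otimes b\in\tilde{\cstar{C}}\}$, and it is a graded von Neumann subalgebra of $\cstar{B}$.

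\emph{Step 4: conclude.} Apply $\phi^{-1}$ to $\tilde{\cstar{C}}=\cstar{A}\overline{\otimes}\cstar{D}$. The algebra generated by $\cstar{A}\otimes 1$ and the elements $\Gamma^{|d|}\otimes d$ with $d\in\cstar{D}$ homogeneous equals the algebra generated by $\cstar{A}\otimes1$ and $1\otimes\cstar{D}$, because $\Gamma\otimes 1$ is available. Hence $\cstar{C}=\cstar{A}\otimes\cstar{D}$, which is the statement, since $1\otimes\cstar{D}=\cstar{C}\cap(1\otimes\cstar{B})$.

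\emph{Main obstacle.} The only delicate points are Step 1 (that the grading is implemented by a self-adjoint unitary in $\cstar{A}$ itself) and checking that $\phi$ is a normal isomorphism with the stated compatibility with parity. Everything else is bookkeeping of homogeneous components.
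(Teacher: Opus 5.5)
This is essentially the paper's proof: a Klein transform built from the inner grading unitary $\Gamma\in\cstar{A}$, the ungraded Ge--Kadison theorem applied to $\phi(\cstar{C})$, and a pull-back. Your Step 3 actually supplies what the paper only asserts, namely that $\{b : 1\otimes_0 b\in\phi(\cstar{C})\}$ is graded and equals $\{b : 1\otimes b\in\cstar{C}\}$. You also justify $\Gamma^2=1$, which the paper takes for granted.

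There is one slip in Step 2, concerning the implementing unitary.
\begin{itemize}
\item Under the usual Koszul rule $(a\otimes b)(\xi\otimes\eta)=(-1)^{|b||\xi|}a\xi\otimes b\eta$, the unitary $\xi\otimes\eta\mapsto\Gamma^{|\eta|}\xi\otimes\eta$ does not implement $\phi$. For example, it conjugates an odd $a\otimes 1$ to $a\otimes_0\Gamma_{\hilb{K}}$ rather than to $a\otimes_0 1$. Your formula is only right for the opposite sign convention, in which $a\otimes 1$ acts as $a\otimes_0\Gamma_{\hilb{K}}^{|a|}$.
\item With the Koszul rule, use $\xi\otimes\eta\mapsto(\Gamma_{\hilb{H}}\Gamma)^{|\eta|}\xi\otimes\eta$ instead. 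Here $\Gamma_{\hilb{H}}$ is the grading operator of $\hilb{H}$, and $\Gamma_{\hilb{H}}\Gamma$ is a self-adjoint unitary in $\cstar{A}'$ that commutes with $\Gamma_{\hilb{H}}$.
\end{itemize}
With this fix, $\phi$ is a normal $*$-isomorphism as you claim, and the rest of the argument goes through unchanged.
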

\begin{proof}
Since $\cstar{A}$ is a type $\rm{I}$ factor, there exists a unitary element $\Gamma \in \cstar{A}$ such that $\Theta(a) = \Gamma a \Gamma$ for any $a \in \cstar{A}$, and $\Gamma^2 = 1$. The {\it Klein transform} is a bijective map
\[f:\CA \otimes \CB \to \CA \otimes_0 \CB,\]
to the ungraded spatial tensor product $\CA \otimes_0 \CB$ defined via
\[f(a \otimes 1) = a \otimes_0 1, \\
f(1 \otimes b) = \Gamma^{|b|} \otimes_0 b\]
for a homogeneous $b$. Since $\cstar{A}$ is a factor as an ungraded von Neumann algebra, by \cite[Theorem A]{ge1996tensor}, we have $f(\cstar{C}) = \cstar{A} \otimes_0 \cstar{S}$ for $\cstar{S} = f(\cstar{C}) \cap \l 1 \otimes_0 \cstar{B} \r$ which inherits a $\ZZ/2$-grading from $\cstar{B}$. Thus, every operator in $f(\cstar{C})$ can be expressed as
\[\sum_i a_i \otimes_0 s_i = \sum_i (a_i \Gamma^{|s_i|} \otimes_0 1) (\Gamma^{|s_i|} \otimes_0 s_i).\]
for homogeneous $\{s_i\}$. Now applying $f^{-1}$, we find that every operator in $\CC$ can be expressed as
\[
\sum_i a_i \Gamma^{|s_i|} \otimes s_i,
\]
implying $\cstar{C} = \cstar{A} \otimes \cstar{S}$.
\end{proof}

To close this section, we record a version of the slice map property for von Neumann algebras \cite{Tomiyama1976Fubini} adapted to the super setting. We say that a linear functional $\varphi$ on a superalgebra $\ql{A} = \ql{A}^{(0)} \oplus \cstar{A}^{(1)}$ is \textit{homogeneous} if there is a $d \in \{0,1\}$ such that $\varphi\circ\Theta=(-1)^{d}\varphi$. We denote this $d$ by $|\varphi|$ and call it {\it degree} of $\varphi$. If $|\varphi|=0$, $\varphi$ is supported on $\ql{A}^{(0)}$, and if $|\varphi|=1$, $\varphi$ is supported on $\ql{A}^{(1)}$. Any linear functional $\varphi$ has a canonical decomposition 

$$\varphi:=\varphi_{0}+\varphi_{1},$$

\medskip

\noindent where $\varphi_{0}=\frac{1}{2}(\varphi+\varphi\circ \fparity)$, and $\varphi_{1}=\frac{1}{2}(\varphi-\varphi\circ \fparity)$ are homogeneous of degree $0$ and $1$ respectively. Thus, the set of homogeneous linear functionals separate points in $\ql{A}$, i.e. for any two elements $a,b \in \cstar{A}$, there exists a homogeneous linear functional $\varphi$ such that $\varphi(a) \neq \varphi(b)$.

If $\cstar{A}$ and $\cstar{B}$ are von Neumann superalgebras, then any normal linear functional $\varphi$ on $\cstar{B}$ defines the \textit{right slice map} $R_{\varphi}: \cstar{A}\otimes \cstar{B}\rightarrow \cstar{A}$, given by

$$R_{\varphi}:=\text{Id}_{A}\otimes \varphi,$$

\noindent satisfying

$$R_{\varphi}(a \otimes b) = \varphi_{0}(b)a+\varphi_{1}(b)\fparity(a).$$

\noindent The key feature of slice maps is that for any $x \in \cstar{A} \otimes \cstar{B}, a \in \ql{A}$, we have 

\begin{equation}\label{eqn:supercommutator}
[R_{\varphi}(x), a]_{\pm} = R_{\varphi}\l [x, a \otimes 1]_{\pm} \r. 
\end{equation}

\medskip

This is elementary to verify using parity cases for simple tensors, and extends to arbitrary $x$ by continuity. It is also clear that the collection of $R_{\psi}$, where $\psi$ varies over normal functionals on $\cstar{B}$, separates points. In other words, for $x \in \cstar{A}\otimes \cstar{B}$, $x = 0$ if and only if $R_{\psi}(x)=0$ for all $\psi \in \cstar{B}_{*}$. We have the following proposition, which allows us to detect subalgebras of a tensor factor via slice maps.

\begin{prop}\label{prop:superslice}
Let $\cstar{A}$ and $\cstar{B}$ be von Neumann superalgebras, and $\cstar{C}\subseteq \cstar{A}$ be a von Neumann supersubalgebra. Then $\cstar{C}\otimes \cstar{B}=\{ x \in \cstar{A}\otimes \cstar{B}\ :\ R_{\varphi}(x)\in \cstar{C}\ \text{for all}\ \varphi\in \cstar{B}_{*}\}$. 
\end{prop}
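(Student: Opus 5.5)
The strategy is to reduce the statement to a bicommutant argument, using the supercommutator identity \eqref{eqn:supercommutator} for slice maps together with a graded version of the commutation theorem for tensor products.

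\textbf{Easy inclusion.} Represent $\cstar{A}\subseteq B(\hilb{H})$ and $\cstar{B}\subseteq B(\hilb{K})$ on Hilbert superspaces, so that $\cstar{C}\otimes\cstar{B}\subseteq\cstar{A}\otimes\cstar{B}\subseteq B(\hilb{H}\otimes\hilb{K})$. For a simple tensor $c\otimes b$ with $c\in\cstar{C}$ we have
$$R_\varphi(c\otimes b)=\varphi_0(b)c+\varphi_1(b)\fparity(c)\in\cstar{C},$$
because $\cstar{C}$ is $\fparity$-invariant. The map $R_\varphi$ is normal and $\cstar{C}$ is ultraweakly closed. By density of the algebraic tensor product, $R_\varphi(\cstar{C}\otimes\cstar{B})\subseteq\cstar{C}$ for every $\varphi\in\cstar{B}_*$. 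This gives ``$\subseteq$''.

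\textbf{Reverse inclusion.} Let $x\in\cstar{A}\otimes\cstar{B}$ satisfy $R_\varphi(x)\in\cstar{C}$ for all $\varphi$.

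1. Regard $x$ as an element of $B(\hilb{H})\otimes\cstar{B}$ and let $R_\varphi$ denote the slice map on this larger algebra. Identity \eqref{eqn:supercommutator} still holds there, and the slice maps still separate points. Take any $c'\in\cstar{C}'$, where $'$ is the supercommutant in $B(\hilb{H})$. Then
$$R_\varphi\bigl([x,c'\otimes1]_\pm\bigr)=[R_\varphi(x),c']_\pm=0\quad\text{for all }\varphi\in\cstar{B}_*,$$
so $[x,c'\otimes1]_\pm=0$.

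2. Since $x\in\cstar{A}\otimes\cstar{B}$, it automatically supercommutes with $1\otimes b'$ for every $b'\in\cstar{B}'\subseteq B(\hilb{K})$.

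3. Hence $x$ lies in the supercommutant of $(\cstar{C}'\otimes1)\cup(1\otimes\cstar{B}')$, which equals $(\cstar{C}'\otimes\cstar{B}')'$.

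4. It remains to apply the graded commutation theorem $(\cstar{C}'\otimes\cstar{B}')'=\cstar{C}''\otimes\cstar{B}''$. Combined with the super bicommutant theorem, this equals $\cstar{C}\otimes\cstar{B}$, and we are done.

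\textbf{Main obstacle: the graded commutation theorem.} I would derive it from Tomita's ungraded commutation theorem $(\cstar{M}\otimes_0\cstar{N})'=\cstar{M}'\otimes_0\cstar{N}'$ by a Klein transform, as in the proof of Lemma \ref{lma:gekadison}.

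Let $\Gamma_{\hilb{H}}$ and $\Gamma_{\hilb{K}}$ be the parity unitaries. Realize the graded tensor product on $\hilb{H}\otimes\hilb{K}$ via
$$a\otimes1\mapsto a\otimes_0\Gamma_{\hilb{K}}^{|a|},\qquad 1\otimes b\mapsto1\otimes_0 b.$$
Then split each of the two algebras into its even part and its odd part. A supercommutant can be rewritten as an ordinary commutant after multiplying the odd parts by the total grading $\Gamma_{\hilb{H}}\otimes\Gamma_{\hilb{K}}$, the standard trick relating $\cstar{S}'$ to the ordinary commutant twisted by $\Gamma$. Tomita's theorem then applies to the resulting ungraded algebras.

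The bookkeeping of parities in this reduction is where I expect to spend the most care. An alternative would be to cite the graded commutation theorem directly from the literature on von Neumann superalgebras, for example \cite[Appendix A.2]{Bols2021}.
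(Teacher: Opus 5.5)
Your proposal is correct and follows essentially the same route as the paper. You use the slice-map identity \eqref{eqn:supercommutator} to get $[x,c'\otimes 1]_{\pm}=0$ for all $c'\in\cstar{C}'$, and the paper's closing identification $(\cstar{C}'\otimes 1)'\cap(\cstar{A}\otimes\cstar{B})=\cstar{C}\otimes\cstar{B}$ is exactly your steps 2--4, i.e.\ the graded commutation theorem, which the paper also takes from the literature rather than proving. If you derive that theorem yourself, decompose with respect to the two separate gradings $\Ad_{\Gamma_{\hilb{H}}\otimes 1}$ and $\Ad_{1\otimes\Gamma_{\hilb{K}}}$ before applying Tomita's theorem, not just the total grading: without an inner grading, the Klein transform does not turn a graded tensor product into an ungraded one (compare $\mathrm{Cl}_1\otimes\mathrm{Cl}_1$ with $\mathrm{Cl}_1\otimes_0\mathrm{Cl}_1$).
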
 

\begin{proof}
Clearly $\cstar{C}\otimes \cstar{B}\subseteq \{ x \in \cstar{A}\otimes \cstar{B}\ :\ R_{\varphi}(x)\in \cstar{C}\ \text{for all}\ \varphi\in \cstar{B}_{*}\}$. For the reverse inclusion, suppose $\hilb{H}$ and $\hilb{K}$ are Hilbert superspaces for representations of $\ql{A}$ and $\ql{B}$, respectively, so that we have an inclusion of superalgebras $\ql{A}\otimes \ql{B}\subseteq B(\hilb{H}\otimes \hilb{K})$. Since normal functionals on $\cstar{B}$ always have normal extensions to $B(\hilb{K})$, we have

$$\{ x \in \cstar{A}\otimes \cstar{B}\ :\ R_{\varphi}(x)\in \cstar{C}\ \text{for all}\ \varphi\in \cstar{B}_{*}\}=\{ x \in \cstar{A}\otimes \cstar{B}\ :\ R_{\psi}(x)\in \cstar{C}\ \text{for all}\ \psi\in B(\hilb{K})_{*}\},$$

\medskip

\noindent where here $R_{\psi}$ acts on $\cstar{A} \otimes \cstar{B}$ via the restriction of $R_{\psi}:B(\hilb{H})\otimes B(\hilb{K})\rightarrow B(\hilb{H})$.

Let $\mathcal{C}'$ be the supercommutant of $\mathcal{C}$ in $B(\hilb{H})$. Then for any $x \in \{ x \in \cstar{A}\otimes \cstar{B}\ :\ R_{\widetilde{\varphi}}(x)\in \cstar{C}\ \text{for all}\ \widetilde{\varphi}\in \cstar{B}(\hilb{K})_{*}\}$, $c\in \ql{C}'$, and $\psi\in B(\hilb{K})_{*}$, we have 

$$0=[R_{\psi}(x),c]_{\pm}=R_{\psi}([x, c\otimes 1]_{\pm}).$$

Then the above implies (for example, by \cite[Theorem 4.4]{CrismaleRossiZurlo2023}) $[x, c \otimes 1]_{\pm}=0$, so $x \in (\ql{C}' \otimes 1)' \cap (\ql{A}\otimes \ql{B})=\ql{C}\otimes \ql{B}$, where $(\ql{C}' \otimes 1)'$ denotes the supercommutant of $\ql{C}' \otimes 1$ in $B(\hilb{H} \otimes \hilb{K})$. This concludes the proof.
\end{proof}

\section{Quasi-local algebras}  \label{sec:QuasiLocalAlgebras}

Quasi-local algebras have been originally introduced by Haag and Kastler \cite{HaagKastler1964} as a way to encode locality in the operator algebraic approach to quantum field theory. Since then, they have found extensive applications to quantum statistical mechanics \cite{bratteli2012operator} as they provide a natural framework to study systems directly in the thermodynamic limit. There are varying definitions for quasi-local algebras depending on the context, and we record here the definition that is most natural for our purposes.

\begin{definition}\label{def-quasi-local-algebra}
Let $X$ be a metric space, and let $\CMcal{B}(X)$ be the collection of bounded subsets of $X$. A {\it quasi-local algebra} over $X$ is a unital $C^*$-superalgebra $\ql{A}$ with a net $\{\ql{A}_U\}_{U \in \CMcal{B}(X)}$ of separable von Neumann superalgebras with a common identity such that
\begin{itemize}
    \item if $U \subseteq V$, then $\ql{A}_U \subseteq \ql{A}_V$;
    \item $\ql{A}$ is the norm completion of the union $\bigcup_{U \in \CMcal{B}(X)} \ql{A}_U$;
    \item for $U \cap V = \emptyset$, the algebras $\ql{A}_U$ and $\ql{A}_V$ supercommute.
\end{itemize}
The $*$-automorphism implementing the grading is denoted $\fparity$ and is called {\it fermionic parity}. When $\fparity$ is trivial and we can disregard the grading, we call quasi-local algebras {\it bosonic}. When we want to emphasize that there is no restriction on $\fparity$, we call quasi-local algebras {\it fermionic}.
\end{definition}

\begin{remark}
Equivalently, a quasi-local algebra over a metric space $X$ is a functor $F:\mathbf{B}(X) \to \mathbf{sVN}_{\text{inj}}$ from the category $\mathbf{B}(X)$ of bounded subsets of $X$ with morphisms being inclusions to the category of von Neumann superalgebras $\mathbf{sVN}_{\text{inj}}$ with morphisms being normal graded injective $*$-homomorphisms satisfying the following locality condition:
\begin{itemize}
    \item for $U \cap V = \emptyset$, $F(U)$ and $F(V)$ supercommute inside $F(U \cup V)$.
\end{itemize}
\end{remark}

\begin{remark}
In this paper, we are mainly interested in $X$ being a uniformly locally finite metric space (such as $\ZZ^d$ or its subset for quantum lattice systems), i.e., metric spaces whose balls of radius $R$ have finitely many points bounded by some function that depends on $R$ only. For such spaces, the bounded subsets are precisely the finite subsets. More generally, we can consider metric spaces of bounded geometry which are also relevant for physical applications (such as $\RR^d$ or its subset for quantum field theories). However, since the latter always have coarsely dense uniformly locally finite metric subspaces, there is a natural way to discretize the corresponding quasi-local algebras without sacrificing the structures we care about in this paper.

In fact, most of our results depend only on the underlying coarse structure of the metric space (in the sense of Roe \cite{Roe2003CoarseGeometry}), as in \cite{ludewig2026quantum, EloklJones2025UniversalCoarseGeometry}. However, to avoid an additional mathematical formalism which may obscure the physical nature of our arguments, we stick to the class of uniformly locally finite metric spaces.

\end{remark}

Let $\ql{A}$, $\ql{B}$ be quasi-local algebras over a metric space $X$. We say that a map $\varphi: \ql{A} \to \ql{B}$ {\it has bounded spread} if for some $l \geq 0$, $\varphi(\ql{A}_U) \subseteq \ql{B}_{U^{+l}}$. In this case we say that $\varphi$ {\it has spread} $l$. We say that $\varphi$ is {\it locally normal} if $\varphi:\ql{A}_U \to \ql{B}_{U^{+l}}$ are normal for all $U \subset X$. A state on a quasi-local algebra $\ql{A}$ is called {\it locally normal} if it is normal on any $\ql{A}_U$.
\begin{definition}
Let $\ql{A}$, $\ql{B}$ be quasi-local algebras over a metric space $X$. A {\it bounded spread homomorphism} between $\ql{A}$ and $\ql{B}$ is a locally normal map $\varphi$ that has bounded spread $l$ for some $l \geq 0$ and such that $\varphi:\ql{A}_U \to \ql{B}_{U^{+l}}$ are $*$-homomorphisms of von Neumann algebras. The set of such homomorphisms is denoted $\Hombs(\ql{A},\ql{B})$. A {\it bounded spread isomorphism} is a $*$-isomorphism $\varphi$ such that $\varphi \in \Hombs(\ql{A},\ql{B})$ and $\varphi^{-1} \in \Hombs(\ql{B},\ql{A})$. The set of such isomorphisms is denoted $\Isobs(\ql{A},\ql{B})$. The set of {\it bounded spread automorphisms} of $\ql{A}$ is denoted $\Autbs(\ql{A}) := \Isobs(\ql{A},\ql{A})$. 
\end{definition}

\begin{remark}  \label{rmk:automaticUWcont}
Suppose $\cstar{M}$ is a von Neumann algebra that doesn't have a summand of the form $M_n(\cstar{N})$ for an infinite dimensional abelian von Neumann algebra $\cstar{N}$. Then by \cite[Theorem 6.5]{baudier2024embeddings}, any $*$-homomorphism from $\cstar{M}$ to a separable von Neumann algebra is automatically normal. Therefore, if for a quasi-local algebras $\ql{A}$ over $X$ all von Neumann algebras $\ql{A}_U$, $U \subset X$ have the same property as $\cstar{M}$, then any bounded spread homomorphism from $\ql{A}$ to a quasi-local algebra $\ql{B}$ is automatically locally normal. For example, stable quasi-local algebras $\ql{A}$ over a uniformly locally finite $X$ defined below satisfy this property automatically.
\end{remark}

We record the following observation as a lemma.

\begin{lemma}\label{bsinverse}
Suppose $\ql{A},\ql{B}$ are quasi-local algebras, $\varphi \in \Hombs(\ql{A},\ql{B})$ is injective, and there exists an $l\ge 0$ such that $\ql{B}_{U}\subseteq \varphi(\ql{A}_{U^{+l}})$. Then $\varphi\in \Isobs(\ql{A},\ql{B})$.
\end{lemma}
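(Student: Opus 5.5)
We need to show two things: that $\varphi$ is surjective, so it is a $*$-isomorphism $\ql{A}\to\ql{B}$, and that $\varphi^{-1}$ lies in $\Hombs(\ql{B},\ql{A})$, with spread $l$ and locally normal.

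\emph{Surjectivity.} By the hypothesis, for every bounded $U$ we have $\ql{B}_U\subseteq \varphi(\ql{A}_{U^{+l}})\subseteq\varphi(\ql{A})$. Hence $\varphi(\ql{A})$ contains the dense union $\bigcup_U \ql{B}_U$. The map $\varphi$ is an injective $*$-homomorphism between $C^*$-algebras. It is a $*$-homomorphism on each $\ql{A}_U$, and these are dense, so by continuity it is a $*$-homomorphism on all of $\ql{A}$. Injective $*$-homomorphisms of $C^*$-algebras are isometric, so $\varphi(\ql{A})$ is norm closed. It therefore equals $\ql{B}$, and $\varphi$ is a $*$-isomorphism. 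If the grading matters, $\varphi$ commutes with $\fparity$ on each local algebra and hence globally, so $\varphi^{-1}$ is graded as well.

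\emph{Bounded spread of the inverse.} Injectivity and the hypothesis give $\varphi^{-1}(\ql{B}_U)\subseteq \ql{A}_{U^{+l}}$. So $\varphi^{-1}$ has spread $l$, and on each $\ql{B}_U$ it is a $*$-homomorphism into the von Neumann algebra $\ql{A}_{U^{+l}}$.

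\emph{Local normality of the inverse.} This is the only nontrivial point, and the plan is the following. Set $V=U^{+l}$. The restriction $\varphi|_{\ql{A}_V}:\ql{A}_V\to\ql{B}_{V^{+l'}}$, where $l'$ is the spread of $\varphi$, is a normal injective $*$-homomorphism between von Neumann algebras. Its image $\varphi(\ql{A}_V)$ is therefore a von Neumann subalgebra (the image of a normal $*$-homomorphism is $\sigma$-weakly closed). Moreover, $\varphi|_{\ql{A}_V}$ is a normal $*$-isomorphism onto this image, and the inverse of a normal $*$-isomorphism between von Neumann algebras is normal. Now $\ql{B}_U$ is a von Neumann subalgebra of $\ql{B}_{V^{+l'}}$ contained in $\varphi(\ql{A}_V)$, so the inclusion $\ql{B}_U\hookrightarrow\varphi(\ql{A}_V)$ is normal. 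Composing it with $(\varphi|_{\ql{A}_V})^{-1}$ shows that $\varphi^{-1}|_{\ql{B}_U}$ is normal.

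This gives $\varphi^{-1}\in\Hombs(\ql{B},\ql{A})$, and hence $\varphi\in\Isobs(\ql{A},\ql{B})$. The step to be careful about is the normality argument: one must use that the $\sigma$-weak topology on $\ql{B}_U$ agrees with the one it inherits from $\ql{B}_{V^{+l'}}$, which holds because the inclusions in the net are normal injective maps between von Neumann algebras.
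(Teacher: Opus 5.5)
Your proof is correct and follows the paper's argument: surjectivity comes from $\ql{B}_U\subseteq\varphi(\ql{A})$, and the spread bound $\varphi^{-1}(\ql{B}_U)\subseteq\ql{A}_{U^{+l}}$ comes from injectivity. You also fill in details the paper leaves implicit, namely that the range is closed because injective $*$-homomorphisms are isometric, and that $\varphi^{-1}$ is locally normal because it factors through the inverse of the normal $*$-isomorphism $\varphi|_{\ql{A}_{U^{+l}}}$ onto its $\sigma$-weakly closed image; both of these are sound.
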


\begin{proof}
Since $\ql{B}_{U}\subseteq \varphi(\ql{A})$, clearly $\varphi$ is surjective, hence is an isomorphism of $C^*$-algebras. Now, to see that $\varphi^{-1}$ has bounded spread, our hypothesis implies that $\varphi^{-1}(\ql{B}_{U})\subseteq \varphi^{-1}(\varphi(\ql{A}_{U^{+l}}))=\ql{A}_{U^{+l}}$. thus $\varphi^{-1}\in \Hombs(\ql{B},\ql{A})$, and $\varphi\in \Isobs(\ql{A},\ql{B})$.
\end{proof}

\begin{definition}
For two quasi-local algebras $\ql{A}$, $\ql{B}$ over a metric space $X$, their tensor product is defined by the quasi-local algebra $\ql{A} \otimes \ql{B}$ over $X$, whose local algebras are $(\ql{A} \otimes \ql{B})_U = \ql{A}_U \otimes \ql{B}_U$ with natural inclusions $(\ql{A} \otimes \ql{B})_U\subseteq (\ql{A} \otimes \ql{B})_{V}$ for $U \subseteq V$, and whose underlying $C^*$-algebra is the colimit (in the category of $C^*$-algebras) over bounded subsets.
\end{definition}

\begin{definition}For a metric space $X$, we let $\QLAcat(X)$ (respectively, $\sQLAcat(X)$) be a symmetric monoidal category whose objects are bosonic (resp. fermionic) quasi-local algebras over $X$, whose morphisms are (graded) unital completely positive (UCP) maps between quasi-local algebras, and whose monoidal structure is given by the tensor product of quasi-local algebra.
\end{definition}

Let us introduce various terminology that will be useful in the following.
\begin{definition}
Let $\ql{A}$ be a quasi-local algebra over a metric space $X$. A {\it quasi-local subalgebra} $\ql{B}$ of $\ql{A}$ is a quasi-local algebra over $X$ such that $\ql{B}_U \subseteq \ql{A}_U$ for any bounded subset $U$ of $X$. In this case, we write $\ql{B} \subseteq \ql{A}$.
\end{definition}

Let $\ql{B} \subset \ql{A}$ be a quasi-local subalgebra of a quasi-local algebra $\ql{A}$ over a metric space $X$. Then for bounded subsets $U,V \subset X$, the algebra $\{a \in \ql{A}_U: \text{for any }b  \in \ql{B}_V \text{ we have }[a,b]_{\pm}=0\}$ is a von Neumann (super)algebra. Since the intersection of any number of von Neumann (super)algebras is again a von Neumann (super)algebra, the algebra $\{a \in \ql{A}_U: \text{for any }b  \in \ql{B} \text{ we have }[a,b]_{\pm}=0\}$ is a von Neumann (super)algebra as well.

\begin{definition}
Let $\ql{A}$ be a quasi-local algebra over a metric space $X$. 

\begin{enumerate}
\item 
The {\it (super)commutant} of a quasi-local subalgebra $\ql{B}$ of $\ql{A}$ is the quasi-local subalgebra $\opComm(\ql{B},\ql{A})$ defined by 
$$\opComm(\ql{B},\ql{A})_U = \{a \in \ql{A}_U: \text{for any }b  \in \ql{B} \text{ we have }[a,b]_{\pm}=0\}$$
for a bounded subset $U$ of $X$. 

\medskip

\item 
We say that $\ql{B} \subset \ql{A}$ is a {\it tensor factor} of $\ql{A}$ if the multiplication map $\ql{B} \times \opComm(\ql{B},\ql{A}) \to \ql{A}$, $b \times b' \to b b'$ extends to a bounded spread isomorphism $\mu:\ql{B} \otimes \opComm(\ql{B},\ql{A}) \to \ql{A}$.
\end{enumerate}
\end{definition}

Following \cite{doplicher1984standard}, we say that an inclusion of von Neumann (super)algebras $\cstar{B} \subset \cstar{A}$ is {\it split} if there exists a type $\rm{I}$ factor $\cstar{N}$ such that $\cstar{B} \subset \cstar{N} \subset \cstar{A}$. We say that it is {\it quasi-split} if there exists a type $\rm{I}$ factor $\cstar{M}$ such that the induced inclusion $\cstar{B} \otimes \CCC \subset \cstar{A} \otimes \cstar{M}$ is split. 

\begin{definition}
A quasi-local algebra $\ql{A}$ over a metric space $X$ is called {\it $l$-split} (resp. {\it $l$-quasi-split}) for a given $l \geq 0$ if for any bounded subset $U \subset X$, the inclusion $\ql{A}_U \subset \ql{A}_{U^{+l}}$ is split (resp. quasi-split). We say that it is {\it weakly split} (resp. {\it weakly quasi-split}) if it is $l$-split (resp. $l$-quasi-split) for some $l \geq 0$.
\end{definition}

\begin{definition}
A quasi-local algebra $\ql{A}$ over a metric space $X$ is called {\it $l$-additive} for a given $l \geq 0$ if for any finite collection of bounded subsets $U \subset X$, $\{V_a \subset X\}$ such that $U = \cup_a V_a$, we have $\ql{A}_U \subset \bigvee_{a} \ql{A}_{V^{+l}_a}$. We say that it is {\it weakly additive} if it is $l$-additive for some $l \geq 0$.
\end{definition}

\begin{definition}
Let $\ql{A}$ be a quasi-local algebra over a metric space $X$. A {\it complex conjugate quasi-local algebra} is a quasi-local algebra $\overline{\ql{A}}$ defined by the same data as $\ql{A}$, but with the scalar multiplication being $\lambda \cdot a = \bar{\lambda} a$, $\lambda \in \CCC$, $a \in \overline{\ql{A}}$. An {\it opposite quasi-local algebra} is a quasi-local algebra $\ql{A}^{op}$ defined by the same data as $\ql{A}$, but with the multiplication being $a \cdot b = b a$.
\end{definition}

\subsection{Coarse geometric perspective on quasi-local algebras}

Coarse geometry is the study of geometric structure viewed from the large-scale perspective \cite{Roe2003CoarseGeometry}. In particular, coarse equivalence between metric spaces ignores the local details of the geometry, but preserves certain global structure. For example, $\mathbb{Z}^{n}$ is coarsely equivalent to $\mathbb{R}^{n}$, but $\mathbb{R}^{n}$ is coarsely equivalent to $\mathbb{R}^{m}$ if and only if $n=m$. Coarse equivalence is best thought of as isomorphism in the coarse category of metric spaces, which we define.

Let $X$, $Y$ be metric spaces. Recall that a map $f:X\rightarrow Y$ is proper if the preimage of bounded sets is bounded. A proper map $f:X\rightarrow Y$ is called \textit{coarse} if for all $R\ge 0$, there is some $T\ge 0$ such that $|x-x'|\le R$ implies $\|f(x)-f(x')\|\le T$. Two coarse maps $f,g:X\rightarrow Y$ are said to be \textit{close} if there exists some $R\ge 0$ with $|f(x)-g(x)|\le R$ for all $x\in X$. Closeness defines an equivalence relation on the collection of coarse maps from $X$ to $Y$.

\begin{definition} The coarse category of metric spaces, denoted $\textbf{Coarse}$, is the category whose objects are metric spaces and morphisms are coarse maps up to closeness, with categorical composition induced from the composition of maps.
\end{definition}

A coarse equivalence is a representative of an isomorphism in the coarse category. Two metric spaces are coarsely equivalent if there exists a coarse equivalence between them, or equivalently if they are isomorphic in the coarse category.

\begin{definition} Let $\textbf{SymMonCat}$ denote the category whose objects are symmetric monoidal categories, and whose morphisms are symmetric monoidal functors, taken up to monoidal natural isomorphism.
\end{definition}

We will see that the assignment $X\mapsto \opQLAcat(X)$ extends to a functor $\textbf{Coarse}\rightarrow \textbf{SymMonCat}$. The vehicle for this functor is the pushforward.

\begin{definition}
Let $f:X \to Y$ be a proper map of proper metric spaces. Given a quasi-local algebra $\ql{A}$ over $X$, the {\it pushforward quasi-local algebra} $f_* \ql{A}$ over $Y$ is defined by $(f_* \ql{A})_U := \ql{A}_{f^{-1}(U)}$.
\end{definition}

\begin{prop}
If $f:X\rightarrow Y$ is a coarse map of metric spaces, then the pushforward quasi-local algebra operation on quasi-local algebras over $X$ extends to a symmetric monoidal functor of symmetric monoidal categories $f_{*}: \opQLAcat(X)\rightarrow \opQLAcat(Y)$. If $f$ is close to $g$, then the functors $f_{*}$ and $g_{*}$ are monoidally naturally isomorphic.  
\end{prop}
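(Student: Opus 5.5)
Two things need checking: $f_*$ produces a quasi-local algebra and is a symmetric monoidal functor on $\opQLAcat$; and closeness of $f$ and $g$ gives a monoidal natural isomorphism $f_*\cong g_*$.

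\textbf{Well-definedness.} For a bounded $U\subset Y$, properness of $f$ makes $f^{-1}(U)$ bounded, so $(f_*\ql{A})_U=\ql{A}_{f^{-1}(U)}$ is defined. Isotony follows from $U\subseteq V\Rightarrow f^{-1}(U)\subseteq f^{-1}(V)$. Supercommutation holds because $U\cap V=\emptyset$ gives $f^{-1}(U)\cap f^{-1}(V)=\emptyset$.

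\textbf{Underlying $C^*$-algebra.} Every bounded $W\subset X$ lies in $f^{-1}(f(W))$, and $f(W)$ is bounded since $f$ is coarse. Hence $\bigcup_U \ql{A}_{f^{-1}(U)}=\bigcup_W\ql{A}_W$, and the underlying $C^*$-algebra of $f_*\ql{A}$ equals that of $\ql{A}$. We therefore set $f_*(\varphi)=\varphi$ on morphisms. Since morphisms in $\opQLAcat$ are just (graded) UCP maps of the underlying algebras, with no locality condition, functoriality is immediate.

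\textbf{Monoidal structure.} We have $(f_*(\ql{A}\otimes\ql{B}))_U=\ql{A}_{f^{-1}U}\otimes\ql{B}_{f^{-1}U}=(f_*\ql{A}\otimes f_*\ql{B})_U$. So the structure maps are identities on local algebras, and they extend to the colimit $C^*$-algebras. The unit, associator and symmetry coherences hold because everything is the identity on local pieces. The one thing to verify is that both colimits are the same $C^*$-algebra. This holds because the index systems $\{f^{-1}(U)\}$ and $\{W\}$ are mutually cofinal, by the argument in the previous step.

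\textbf{Closeness.} Suppose $|f(x)-g(x)|\le R$ for all $x$. Then $f^{-1}(U)\subseteq g^{-1}(U^{+R})$, and symmetrically $g^{-1}(U)\subseteq f^{-1}(U^{+R})$. Both $f_*\ql{A}$ and $g_*\ql{A}$ have the same underlying $C^*$-algebra, namely that of $\ql{A}$. Take $\eta_{\ql{A}}=\mathrm{id}$. It is a $*$-isomorphism with spread $R$ in both directions, hence lies in $\Isobs$; in particular it is a UCP isomorphism. Naturality in UCP maps $\varphi$ is trivial since $\eta$ is the identity. Monoidality is also trivial, because $\eta_{\ql{A}\otimes\ql{B}}=\eta_{\ql{A}}\otimes\eta_{\ql{B}}=\mathrm{id}$ under the identifications of the previous step.

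\textbf{Main obstacle.} The only real subtlety is confirming that the colimit $C^*$-algebras coincide, i.e. that no new norm arises. This follows from the mutual cofinality of the index systems, which uses both properness and coarseness of $f$.
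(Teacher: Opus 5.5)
Your well-definedness checks match the paper's proof and are somewhat more careful than it: isotony, locality, and identifying the underlying $C^*$-algebras through mutual cofinality of $\{f^{-1}(U)\}$ and $\{W\}$. The same holds for your closeness argument: $f^{-1}(U)\subseteq g^{-1}(U^{+R})$, so the identity $f_*\ql{A}\to g_*\ql{A}$ has spread $R$ in both directions.

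The gap is the sentence ``functoriality is immediate''. You take the morphisms of $\opQLAcat(X)$ to be arbitrary (graded) UCP maps. That is what the definition literally says, so your argument is complete under that reading. However, the paper's proof works with UCP maps \emph{of bounded spread}, and only under that reading does the proposition have content. With no locality condition on morphisms, the identity of the underlying $C^*$-algebra would give a monoidal natural isomorphism $f_*\cong g_*$ for \emph{any} two coarse maps, close or not. Indeed, any two quasi-local structures on the same $C^*$-algebra would then be isomorphic objects. The closeness hypothesis and your own spread-$R$ computation would be idle, and the discretization statements in Remark~\ref{rem:discretizations}, which concern bounded spread isomorphisms, would not be covered.

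In the intended setting you must also show that $f_*$ sends bounded spread morphisms to bounded spread morphisms. This is where the coarse (uniform expansiveness) condition does its real work; in your proposal it only enters through the boundedness of $f(W)$. Suppose $\psi:\ql{A}\to\ql{B}$ has spread $l$, and let $T$ be the constant the coarse condition gives for $R=l$. For $x'\in f^{-1}(U)^{+l}$, choose $x\in f^{-1}(U)$ with $|x-x'|\le l$. Then $|f(x)-f(x')|\le T$, so $f^{-1}(U)^{+l}\subseteq f^{-1}(U^{+T})$, and
\begin{equation*}
f_*\psi\bigl((f_*\ql{A})_U\bigr)=\psi\bigl(\ql{A}_{f^{-1}(U)}\bigr)\subseteq \ql{B}_{f^{-1}(U)^{+l}}\subseteq \ql{B}_{f^{-1}(U^{+T})}=(f_*\ql{B})_{U^{+T}}.
\end{equation*}
With this step added, your argument coincides with the paper's. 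Your monoidal structure maps have spread $0$ and $\eta_{\ql{A}}$ has spread $R$, so they are all morphisms in the bounded spread category.
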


\begin{proof}
First we see that $f_{*}$ indeed naturally extends to a functor. Let $\ql{A},\ql{B}\in \opQLAcat(X)$, and $\psi:\ql{A}\rightarrow\ql{B}$ a (graded) UCP map with bounded spread.
Note that as $C^*$-algebras, we naturally have $f_{*}\ql{A}\cong \ql{A}$ and $f_{*}\ql{B} \cong \ql{B}$, with the difference being the quasi-local structure. In particular, using this natural identification, we have a well defined UCP map $f_{*}\psi:f_{*}\ql{A}\rightarrow f_{*}\ql{B}$. It remains to show $f_{*}\psi$ has bounded spread. Pick any bounded region $U\subseteq Y$. If $\psi$ has spread $l$, note that the coarse condition guarantees there is some $t$ such that $f^{-1}(U^{+t})$ contains $f^{-1}(U)^{+l}$. Then
 
$$f_{*}\psi(f_{*}\ql{A}_{U})=\psi(\ql{A}_{f^{-1}(U)})\subseteq \ql{B}_{f^{-1}(U)^{+l}}\subseteq \ql{B}_{f^{-1}(U^{+t})}=f_{*}\ql{B}_{U^{+t}}.$$

\medskip

\noindent Clearly from the definitions, if $\psi:\ql{A}\rightarrow \ql{B}$ and $\phi:\ql{B}\rightarrow \ql{C}$, then $f_{*}(\psi\circ \phi)=f_{*}\psi \circ f_{*}\phi$. The fact that $f_{*}$ is symmetric monoidal is trivial from the definitions.

Now suppose $f$ is close to $g$, with constant $R$. The then claim is that the composite of the canonical isomorphisms of $C^*$-algebras $f_{*}\ql{A}\cong \ql{A}\cong g_{*}\ql{A}$ is a bounded spread isomorphism. Let us denote this composite $\beta_{f,g}$. Then

$$\beta_{f,g}(f_{*}\ql{A}_{U})=\ql{A}_{f^{-1}(U)}\subseteq \ql{A}_{f^{-1}(U)^{+R}}\subseteq \ql{A}_{g^{-1}(U^{+T})}=\ql{B}_{U^{+T}}$$

\medskip

\noindent from the definitions, this is clearly a monoidal natural isomorphism, that preserves the braiding.

\end{proof}

\begin{corollary}
The assignment $X\mapsto \opQLAcat(X)$ extends to a functor $\textbf{Coarse}\rightarrow \textbf{SymMonCat}$. In particular, if $X$ is coarsely equivalent to $Y$, then $\opQLAcat(X)\cong \opQLAcat(Y)$ as symmetric monoidal categories.
\end{corollary}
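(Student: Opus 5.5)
The plan is to build the functor from the preceding proposition. On objects, send $X$ to $\opQLAcat(X)$. On a morphism of $\textbf{Coarse}$ represented by a coarse map $f$, send it to the monoidal natural isomorphism class of the symmetric monoidal functor $f_{*}$. This assignment is well defined on closeness classes, because the proposition shows that $f$ close to $g$ gives $f_{*}\cong g_{*}$ monoidally, and the isomorphism $\beta_{f,g}$ preserves the braiding. Since morphisms in $\textbf{SymMonCat}$ are taken up to monoidal natural isomorphism, the class $[f_{*}]$ depends only on $[f]$.

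Next I will check functoriality. For the identity map, $(\mathrm{id}_X)_{*}\ql{A}$ has local algebras $\ql{A}_{\mathrm{id}^{-1}(U)}=\ql{A}_U$, so it equals the identity functor on the nose. For composable coarse maps $f:X\to Y$ and $g:Y\to Z$, the composite $g\circ f$ is again coarse, and $(g\circ f)^{-1}(U)=f^{-1}(g^{-1}(U))$. Hence
$$((g\circ f)_{*}\ql{A})_U=\ql{A}_{f^{-1}(g^{-1}(U))}=(g_{*}f_{*}\ql{A})_U$$
strictly. On morphisms, every pushforward is the identity on the underlying UCP map, so $(g\circ f)_{*}=g_{*}\circ f_{*}$ as symmetric monoidal functors, with identity coherence data. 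Composition in $\textbf{Coarse}$ is defined on representatives and is well defined up to closeness. Combined with the previous paragraph, this gives a functor $\textbf{Coarse}\to\textbf{SymMonCat}$.

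The "in particular" statement is formal. A coarse equivalence $f:X\to Y$ is an isomorphism in $\textbf{Coarse}$, so there is a $g$ with $g\circ f$ close to $\mathrm{id}_X$ and $f\circ g$ close to $\mathrm{id}_Y$. Functors preserve isomorphisms, so $f_{*}$ and $g_{*}$ are mutually inverse up to monoidal natural isomorphism. Therefore $\opQLAcat(X)\simeq\opQLAcat(Y)$ as symmetric monoidal categories, meaning an equivalence, which is the intended reading of $\cong$ in $\textbf{SymMonCat}$.

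The only point needing care is that pushforward is defined for proper maps between the relevant spaces. Coarse maps are proper by definition here, so $f_{*}$ makes sense. The step most likely to hide an issue is that the morphisms of $\opQLAcat$ are required to be bounded spread UCP maps, so that $f_{*}$ is well defined on them. This was already established in the proposition via the coarse condition $f^{-1}(U)^{+l}\subseteq f^{-1}(U^{+t})$, so I expect no real obstacle.
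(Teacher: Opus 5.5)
Your proposal is correct and matches the paper. The paper states the corollary as an immediate consequence of the preceding proposition, which supplies the symmetric monoidal functors $f_{*}$ and their invariance, up to monoidal natural isomorphism, under closeness; the strict identities $(\mathrm{id}_X)_{*}=\mathrm{Id}$ and $(g\circ f)_{*}=g_{*}\circ f_{*}$ that you verify, together with reading $\cong$ in $\textbf{SymMonCat}$ as a symmetric monoidal equivalence, are exactly the steps the paper leaves implicit.
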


\begin{remark}\label{rem:discretizations}The above proposition allows us to coherently perform \textit{discretizations}. Suppose $\ql{A}\in \opQLAcat(X)$, and suppose we have a discrete subset $L\subseteq X$, such that the inclusion $L\hookrightarrow X$ is a coarse equivalence. If we let $j:X\rightarrow L$ be a coarse inverse, then the pushforward functor $j_{*}:\opQLAcat(X)\cong \opQLAcat(L)$ is an equivalence of categories. Then for $\ql{A}\in \opQLAcat(X)$,  $j_{*}\ql{A}\in \opQLAcat(L)$ is called a \textit{discretization}. Functoriality allows us to keep track of discretizations obtained from picking different discrete subsets, so that any two discretizations will be canonically bounded spread isomorphic. Of particular importance for us is the standard inclusion $\mathbb{Z}^d\subseteq \mathbb{R}^d$. \end{remark}

\subsection{Invertible quasi-local algebras and the Brauer group}\label{subsecbrauergroup}

\begin{definition}\label{definitionbosonicproductalgebra}
Let $X$ be a uniformly locally finite metric space and $s:X \to \NNb$ be a function. A {\it bosonic product algebra} over $X$ of spin $s$ is the quasi-local algebra $\Mat(X,s)$ such that for any finite $U \subset X$, we have $\Mat(X,s)_U = B(\bigotimes_{j \in U} \hilb{H}_j)$ for separable Hilbert spaces $\{\hilb{H}_j\}_{j \in X}$ with $\dim \hilb{H}_j = s(j)$. A {\it bosonic stable product algebra} over $X$ is the quasi-local algebra $\Mat(X):= \Mat(X,s)$ for $s$ satisfying $s(j) = \infty$ for any $j \in X$. Note that $\Mat(X)$ is absorbing: $\Mat(X,s) \otimes \Mat(X) \cong \Mat(X)$.
\end{definition}

\begin{definition}\label{definitionfermionicproductalgebra}
Let $X$ be a uniformly locally finite metric space and $s:X \to \NNb|\NNb$ be a function\footnote{Here by $\NNb|\NNb$ we mean the set of pairs $n|m$ of elements $n,m$ of $\NNb$.}. A {\it fermionic product algebra} $\sMat(X,s)$ over $X$ of spin $s$ is a fermionic quasi-local algebra such that for any finite $U \subset X$, we have $\sMat(X,s)_U = B(\bigotimes_{j \in U} \hilb{H}_j)$ for separable Hilbert superspaces $\{\hilb{H}_j\}_{j \in X}$ with $\text{sdim}(\hilb{H}_j) = s(j)$. A {\it fermionic stable product algebra} over $X$ is the quasi-local algebra $\sMat(X):= \sMat(X,s)$ for $s$ satisfying $s(j) = \infty|\infty$ for any $j \in X$. Note that $\sMat(X)$ is absorbing: $\sMat(X,s) \otimes \sMat(X) \cong \sMat(X)$.
\end{definition}

\begin{definition}
We say that two bosonic quasi-local algebras $\ql{A}$, $\ql{B}$ over a uniformly locally finite metric space $X$ are Brauer equivalent if $\ql{A} \otimes \Mat(X)$ and $\ql{B} \otimes \Mat(X)$ are bounded spread isomorphic. We say that two fermionic quasi-local algebras $\ql{A}$, $\ql{B}$ over a uniformly locally finite metric space $X$ are super Brauer equivalent if $\ql{A} \otimes \sMat(X)$ and $\ql{B} \otimes \sMat(X)$ are bounded spread isomorphic.    
\end{definition}

We say that a bosonic (resp. fermionic) quasi-local algebra $\ql{A}$ over a uniformly locally finite metric space $X$ is {\it stable} if $\ql{A} \otimes \opMat(X) \cong \ql{A}$. Two stable quasi-local algebras $\ql{A}$, $\ql{B}$ are (super) Brauer equivalent if they are bounded spread isomorphic. 

\begin{definition}  \label{def:Brgrp}
We say that a bosonic quasi-local algebra $\ql{A}$ over a uniformly locally finite metric space $X$ is {\it invertible} if there exists a bosonic quasi-local algebra $\tilde{\ql{A}}$ over $X$, such that $\ql{A} \otimes \tilde{\ql{A}}$ is bounded spread isomorphic to $\Mat(X)$. 

\begin{enumerate}
\item 
We say that $\ql{A}$ has spread $l$ if there exists $\tilde{\ql{A}}$ and a bounded spread isomorphism $\ql{A} \otimes \tilde{\ql{A}} \to \Mat(X)$  of spread $l$. 
\item
Brauer equivalence classes of invertible bosonic quasi-local algebras form an abelian group $\Brgr(X)$ called {\it Brauer group} defined via $[\ql{A}] \cdot [\ql{B}] = [\ql{A} \otimes \ql{B}]$.
\item 
We denote by $\Azcat(X)$ the symmetric monoidal subcategory of $\QLAcat(X)$ consisting of invertible bosonic quasi-local algebras as objects and bounded spread isomorphisms as morphisms.
\end{enumerate}
\end{definition}

\begin{definition}  \label{def:sBrgrp}
We say that a quasi-local algebra $\ql{A}$ over a uniformly locally finite metric space $X$ is {\it invertible} if there exists a quasi-local algebra $\tilde{\ql{A}}$ over $X$, such that $\ql{A} \otimes \tilde{\ql{A}}$ is bounded spread isomorphic to $\sMat(X)$. 

\begin{enumerate}
\item 
We say that $\ql{A}$ has spread $l$ if there exists $\tilde{\ql{A}}$ and a bounded spread isomorphism $\ql{A} \otimes \tilde{\ql{A}} \to \sMat(X)$  of spread $l$. 
\item
Brauer equivalence classes of invertible quasi-local algebras form an abelian group $\sBrgr(X)$ called the {\it super Brauer group} defined via $[\ql{A}] \cdot [\ql{B}] = [\ql{A} \otimes \ql{B}]$.
\item 
We denote by $\sAzcat(X)$ the symmetric monoidal subcategory of $\sQLAcat(X)$ consisting of invertible  quasi-local algebras as objects and bounded spread isomorphisms as morphisms.
\end{enumerate}
\end{definition}

\begin{prop}    \label{prop:0dBrauerGroup}
We have $\Brgr(\rm pt) = 0$ and $\sBrgr(\rm pt) = \ZZ/2$.
\end{prop}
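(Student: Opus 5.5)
Over a point there is a single bounded subset, $U=\{\mathrm{pt}\}$, so a quasi-local algebra over $\rm pt$ is just a separable von Neumann (super)algebra $\cstar{A}=\ql{A}_{\rm pt}$. Every map has spread $0$, so a bounded spread isomorphism is simply a normal graded $*$-isomorphism. Also $\Mat({\rm pt})=B(\hilb{H})$ with $\dim\hilb{H}=\infty$, and $\sMat({\rm pt})=B(\hilb{H})$ with $\text{sdim}\,\hilb{H}=\infty|\infty$. The plan is to classify invertible objects first and then compute the resulting group.

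\textbf{Bosonic case.} Suppose $\cstar{A}\otimes\tilde{\cstar{A}}\cong B(\hilb{H})$. The center of the tensor product contains $Z(\cstar{A})\otimes 1$, so $\cstar{A}$ is a factor. Lemma \ref{lma:gekadison} does not apply directly, since we do not yet know that $\cstar{A}$ is type I. Instead I use the standard fact that tensor factors of a type I factor are type I factors: $\cstar{A}\otimes 1$ is a subfactor of $B(\hilb{H})$ whose relative commutant contains $1\otimes\tilde{\cstar{A}}$, and the two together generate $B(\hilb{H})$, so $(\cstar{A}\otimes 1)'=1\otimes\tilde{\cstar{A}}$. A factor $\cstar{M}\subset B(\hilb{H})$ with $\cstar{M}\vee\cstar{M}'=B(\hilb{H})$ and $\cstar{M}\cong\cstar{M}\otimes1$ in a tensor splitting is type I; this follows from the type decomposition, since a product of a type II or III factor with anything is not type I. Hence $\cstar{A}\cong B(\hilb{K})$. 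It follows that $\cstar{A}\otimes B(\hilb{H})\cong B(\hilb{K}\otimes\hilb{H})\cong B(\hilb{H})$, so every class is trivial and $\Brgr({\rm pt})=0$.

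\textbf{Fermionic case.} Suppose $\cstar{A}\otimes\tilde{\cstar{A}}\cong B(\hilb{H})$ as superalgebras. The supercenter of $\cstar{A}$ sits inside the supercenter of the product, so $\cstar{A}$ is a superfactor. By Proposition \ref{prop:SuperfactorCharacterization}, $\cstar{A}$ is either a factor or $\cstar{M}\otimes\text{Cl}_1$ with $\cstar{M}$ an ungraded factor. Forgetting the grading, the Klein transform from the proof of Lemma \ref{lma:gekadison} (or the graded-to-ungraded comparison) shows that the underlying algebra is still a tensor factor of a type I factor up to $\text{Cl}_1$'s. Ungraded type considerations as in the bosonic case then give two possibilities. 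Either $\cstar{A}\cong B(\hilb{K})$ with some grading, or $\cstar{A}\cong B(\hilb{K})\otimes\text{Cl}_1$. Every grading on $B(\hilb{K})$ by an involutive automorphism is inner, $\Ad_\Gamma$ with $\Gamma^2=1$ (Wigner), so the first case is $B(\hilb{K})$ for a Hilbert superspace $\hilb{K}$. Tensoring with $\sMat({\rm pt})$ makes both graded pieces infinite, so this case is trivial in $\sBrgr({\rm pt})$.

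For the second case, $\text{Cl}_1\otimes\text{Cl}_1\cong\text{Cl}_2\cong B(\CCC^{1|1})$, so $[\text{Cl}_1]$ has order at most $2$; it is also invertible, with inverse $\text{Cl}_1$ itself. It remains to show $[\text{Cl}_1]\neq 0$. The algebra $\text{Cl}_1\otimes B(\hilb{H})$ has nontrivial ungraded center, spanned by the odd central element $u\otimes\Gamma$ under the Klein identification, or equivalently has a nontrivial odd part of $Z_0$. The algebra $B(\hilb{H})$ has trivial center. Since graded isomorphisms preserve $Z_0$, the two are not isomorphic. Hence $\sBrgr({\rm pt})\cong\ZZ/2$.

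\textbf{Main obstacle.} I expect the hardest step to be the rigorous claim that a (super)tensor factor of a type I (super)factor is type I, including the separable and infinite-dimensional bookkeeping. I would handle it by noting that $\cstar{A}$ contains minimal projections: a minimal projection $p$ of $B(\hilb{H})$ slices to $p$ via the slice maps of Proposition \ref{prop:superslice}. The alternative is to cite the standard result on tensor products of von Neumann algebras (Takesaki, type of $\cstar{M}\otimes\cstar{N}$).
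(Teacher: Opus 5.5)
Your outline follows the paper's proof. The (super)center argument gives (super)factors. Proposition \ref{prop:SuperfactorCharacterization} splits the graded case into a factor or $\cstar{M}\otimes\text{Cl}_1$. The fact that a tensor product of factors is type I only if both are then gives type I, and absorption into $B(\hilb{H})$ leaves only the classes of $B(\hilb{H})$ and $\text{Cl}_1$. You go a little beyond the paper by checking that $[\text{Cl}_1]$ has order exactly two. You note that $\text{Cl}_1\otimes\text{Cl}_1\cong B(\CCC^{1|1})$, and that the odd element $u\otimes\Gamma$ is central in $\text{Cl}_1\otimes B(\hilb{H})$ while $B(\hilb{H})$ has trivial center. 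The paper leaves this implicit.

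The one step whose stated justification does not work is the type I claim in the graded case when $\cstar{A}$ is a factor. When $\cstar{A}\cong\cstar{M}\otimes\text{Cl}_1$ with $\cstar{M}$ trivially graded, $\cstar{M}$ is an ordinary ungraded tensor factor of $B(\hilb{H})$, so your bosonic argument applies. But a type II or III factor can carry an outer grading. In that case $\cstar{A}\otimes\tilde{\cstar{A}}$ is not the ungraded tensor product, and the bosonic type theorem does not apply directly. The Klein transform of Lemma \ref{lma:gekadison} cannot bridge this gap: it requires one factor to already be type I with grading $\Ad_\Gamma$ for some $\Gamma\in\cstar{A}$, which is exactly what you are trying to prove. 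Slicing a minimal projection $p$ also does not help, since $R_\psi(p)$ is in general not a projection.

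The paper itself only asserts that ``the factors must be type I'' here, so you are at the paper's level of detail. A correct argument is available, however:
\begin{itemize}
\item Take a faithful normal state $\psi$ on $\tilde{\cstar{A}}$ and replace it by $\frac12(\psi+\psi\circ\fparity)$, which is still faithful and is now even.
\item For even $\psi$, $R_\psi(a\otimes b)=\psi(b)a$ is an ungraded faithful normal conditional expectation of $\cstar{A}\otimes\tilde{\cstar{A}}\cong B(\hilb{H})$ onto $\cstar{A}\otimes 1$. The sign in $(a\otimes b)(a'\otimes 1)=(-1)^{|a'||b|}aa'\otimes b$ only appears for odd $b$, where $\psi(b)=0$.
\item By Tomiyama's theorem, the range of a faithful normal conditional expectation on a type I algebra is type I.
\end{itemize}
Hence $\cstar{A}$ is type I. If it is a factor, its grading is inner; in the $\text{Cl}_1$ case, $\cstar{M}$ is type I. Your case analysis then goes through unchanged.
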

\begin{proof}
For $X = \text{pt}$, a quasi-local algebra is just a von Neumann (super)algebra. 

Let $\cstar{A}$, $\tilde{\cstar{A}}$ be von Neumann algebras such that $\cstar{A} \otimes \tilde{\cstar{A}} \cong B(\hilb{H})$ for a separable infinite-dimensional Hilbert space. $\cstar{A}$, $\tilde{\cstar{A}}$ can not have a non-trivial center since otherwise the center of $B(\hilb{H})$ would be non-trivial. A tensor product of two factors is type $\rm{I}$ if and only if both of them are type $\rm{I}$. Thus, $\cstar{A} \otimes B(\hilb{H}) \cong B(\hilb{H})$ and $\Brgr(\rm pt) = 0$.

Let $\cstar{A}$, $\tilde{\cstar{A}}$ be von Neumann superalgebras such that $\cstar{A} \otimes \tilde{\cstar{A}} \cong B(\hilb{H})$ for a separable Hilbert superspace of superdimension $\infty|\infty$. $\cstar{A}$, $\tilde{\cstar{A}}$ can not have a non-trivial supercenter since otherwise the supercenter of $B(\hilb{H})$ would be non-trivial. Hence, both $\cstar{A}$, $\tilde{\cstar{A}}$ are superfactors.

By Proposition \ref{prop:SuperfactorCharacterization}, both $\cstar{A}$ and $\tilde{\cstar{A}}$ are either factors or are isomorphic to $\cstar{M} \otimes \text{Cl}_1$ for a factor $\cstar{M}$ with a trivial $\ZZ/2$-grading. In both cases, since $\cstar{A} \otimes \tilde{\cstar{A}} \cong B(\hilb{H})$, the factors must be type $\rm{I}$. Thus, either $\cstar{A} \otimes B(\hilb{H}) \cong \tilde{\cstar{A}} \otimes B(\hilb{H}) \cong B(\hilb{H})$ or $\cstar{A} \otimes B(\hilb{H}) \cong \tilde{\cstar{A}} \otimes B(\hilb{H}) \cong B(\hilb{H}) \otimes \text{Cl}_1$, implying $\sBrgr(\rm pt) = \ZZ/2$.
\end{proof}

Let us show some basic properties of invertible quasi-local algebras.

\begin{prop}[Quasi-split property] \label{prop:SplitPropertyForQLAlgebras}
An invertible quasi-local algebra $\ql{A}$ is weakly quasi-split. Moreover, if $\ql{A}$ has spread $l$, then $\ql{A}$ is $3l$-quasi-split. 
\end{prop}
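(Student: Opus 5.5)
Fix $\tilde{\ql{A}}$ and a bounded spread isomorphism $\alpha:\ql{A}\otimes\tilde{\ql{A}}\to\opMat(X)$ of spread $l$; by definition $\alpha^{-1}$ also has spread at most $l$ (enlarging $l$ if necessary). For a bounded $U\subset X$, the local algebra $\ql{A}_U\otimes 1$ is mapped by $\alpha$ into $\opMat(X)_{U^{+l}}=B(\bigotimes_{j\in U^{+l}}\hilb{H}_j)$, which is a type I factor. Pulling back by $\alpha^{-1}$, the algebra $\cstar{N}:=\alpha^{-1}\big(\opMat(X)_{U^{+l}}\big)$ is a type I (super)factor lying inside $(\ql{A}\otimes\tilde{\ql{A}})_{U^{+2l}}=\ql{A}_{U^{+2l}}\otimes\tilde{\ql{A}}_{U^{+2l}}$ and containing $\ql{A}_U\otimes 1$.

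We thus have the chain
\[
\ql{A}_U\otimes 1\subset \cstar{N}\subset \ql{A}_{U^{+2l}}\otimes\tilde{\ql{A}}_{U^{+2l}}.
\]
This is not yet a split of $\ql{A}_U\subset\ql{A}_{U^{+3l}}$ tensored with a type I factor, because the ambient second factor is $\tilde{\ql{A}}_{U^{+2l}}$ rather than a type I factor. So the next step is to enlarge the ambient algebra to something type I on the $\tilde{\ql{A}}$ side while staying inside $\ql{A}_{U^{+3l}}\otimes(\text{type I})$. Note that $\tilde{\ql{A}}$ is itself invertible (with inverse $\ql{A}$, using the symmetry of the tensor product), so the same argument applied to $\tilde{\ql{A}}$ embeds $1\otimes\tilde{\ql{A}}_{U^{+2l}}$ into the type I factor $\cstar{M}:=\alpha^{-1}(\opMat(X)_{U^{+3l}})\subset \ql{A}_{U^{+4l}}\otimes\tilde{\ql{A}}_{U^{+4l}}$. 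This gives spread $4l$ rather than $3l$, so a sharper choice is needed.

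The better route for the $3l$ bound is to work directly in the ambient algebra $\ql{A}_{U^{+3l}}\otimes B(\hilb{K})$, where $B(\hilb{K})$ is a type I factor containing $\tilde{\ql{A}}_{U^{+2l}}$. Such a factor exists because $\tilde{\ql{A}}_{U^{+2l}}$ is a separable von Neumann (super)algebra, and we may represent it on a separable Hilbert (super)space and take $\cstar{M}=B(\hilb{K})$. Then
\[
\ql{A}_U\otimes\CCC\subset\cstar{N}\subset\ql{A}_{U^{+2l}}\otimes\tilde{\ql{A}}_{U^{+2l}}\subset \ql{A}_{U^{+3l}}\otimes\cstar{M},
\]
which is precisely a quasi-split inclusion $\ql{A}_U\subset\ql{A}_{U^{+3l}}$, indeed already $2l$-quasi-split. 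The care needed is mainly in the super case: $\cstar{N}$ should be a type I superfactor, and we must check the definition of split accepts a graded type I factor; this follows because $\cstar{N}$ is isomorphic to some $B(\hilb{H})$ with grading $\Ad_\Gamma$. Locality and uniform local finiteness ensure that $U^{+l}$ is finite and hence $\opMat(X)_{U^{+l}}$ is a genuine type I factor.

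The main point to verify, and the likely obstacle, is the identification $\alpha^{-1}(\opMat(X)_{U^{+l}})\subset(\ql{A}\otimes\tilde{\ql{A}})_{U^{+2l}}$ as von Neumann algebras. This is not merely inside the $C^*$-colimit, and it requires local normality of $\alpha^{-1}$ so that normal images of type I factors remain type I factors. We also need that $\ql{A}_U\otimes 1$ embeds in the local tensor algebra through the unit of $\tilde{\ql{A}}_U$. Weak quasi-splitness then follows immediately, and the stated constant $3l$ is a safe upper bound.
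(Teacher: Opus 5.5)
Your argument is correct. Its first step is the same as the paper's: pulling back $\opMat(X)_{U^{+l}}$ along the inverse of the invertibility isomorphism gives a type I factor $\cstar{N}$ with $\ql{A}_U\otimes\CCC\subset\cstar{N}\subset\ql{A}_{U^{+2l}}\otimes\tilde{\ql{A}}_{U^{+2l}}$.

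The two routes diverge in how the second slot $\tilde{\ql{A}}_{U^{+2l}}$, which need not be type I, is enlarged to a type I factor. The paper uses invertibility a second time. It adjoins another copy of $\ql{A}$ and applies $\varphi$ to $\tilde{\ql{A}}\otimes\ql{A}$ (the maps $\varphi_{12}$, $\varphi_{23}$). This carries $\tilde{\ql{A}}_{U^{+2l}}$ into the concrete local factor $\opMat(X)_{U^{+3l}}$, which then serves as the auxiliary factor $\cstar{M}$ in the definition of quasi-split.

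You instead represent $\tilde{\ql{A}}_{U^{+2l}}$ faithfully, normally and (in the super case) gradedly on some $\hilb{K}$. You then use that graded spatial tensor products preserve unital inclusions. Since the definition allows an arbitrary type I factor $\cstar{M}$, this is legitimate and more elementary, and it makes clear that all the content lies in $\cstar{N}$.

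What the paper's version buys is an auxiliary factor that is itself a local algebra of $\opMat(X)$ at a controlled location. Everything then stays inside the bounded-spread framework, which is convenient for the stabilization corollary that follows. With your route you would first embed $B(\hilb{K})$ unitally into a single-site algebra $\opMat(X)_{\{x\}}$, which is harmless.

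Your remark that one really gets $2l$ is right, and the paper's chain gives the same. Since $\varphi_{23}$ fixes the first tensor factor, it lands in $\ql{A}_{U^{+2l}}\otimes\opMat(X)_{U^{+3l}}$. Like the paper, you rely on the convention that ``spread $l$'' bounds both the isomorphism and its inverse.
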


\begin{proof}
Let $\varphi:\ql{A} \otimes \tilde{\ql{A}} \to \opMat(X)$ be a bounded spread isomorphism of spread $l$. Let $\varphi_{23}:\ql{A} \otimes \opMat(X) \to \ql{A} \otimes \tilde{\ql{A}} \otimes \ql{A}$ and $\varphi_{12}:\ql{A} \otimes \tilde{\ql{A}} \otimes \ql{A} \to \opMat(X) \otimes \ql{A}$ be the induced bounded spread isomorphisms. We have
$\varphi_{12}(\varphi_{23}(\ql{A}_U \otimes \CCC)) \subset \opMat(X)_{U^{+l}} \otimes \CCC$ and $\varphi^{-1}_{23}( \varphi^{-1}_{12} (\opMat(X)_{U^{+l}} \otimes \CCC)) \subset \ql{A}_{U^{+3l}} \otimes \opMat(X)_{U^{+3l}}$. It follows that the inclusion $\ql{A}_U \subset \ql{A}_{U^{+3l}}$ is quasi-split.
\end{proof}
By applying stabilization, we immediately get the following
\begin{corollary}[Split property in the stable case] 
A stable invertible quasi-local algebra $\ql{A}$ is weakly split. Moreover, if $\ql{A}$ has spread $l$, then $\ql{A}$ is $3l$-split. 
\end{corollary}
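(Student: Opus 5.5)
The plan is to reduce the corollary to Proposition \ref{prop:SplitPropertyForQLAlgebras} by stabilization. Let $\ql{A}$ be stable and invertible of spread $l$. First, $\ql{A}$ is $3l$-quasi-split by Proposition \ref{prop:SplitPropertyForQLAlgebras}. So for every finite $U\subset X$ there is a type $\rm{I}$ factor $\cstar{M}$ with $\ql{A}_U\otimes \CCC\subset \cstar{N}\subset \ql{A}_{U^{+3l}}\otimes\cstar{M}$ for some type $\rm{I}$ factor $\cstar{N}$.

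In fact the proof of Proposition \ref{prop:SplitPropertyForQLAlgebras} gives more. There $\cstar{N}=\varphi_{23}^{-1}\varphi_{12}^{-1}(\opMat(X)_{U^{+l}}\otimes\CCC)$, and this sits inside $\ql{A}_{U^{+3l}}\otimes \opMat(X)_{U^{+3l}}$. Hence the ancilla can be taken to be the local algebra of $\opMat(X)$ on the same region, $U^{+3l}$.

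Next, use stability. Fix a bounded spread isomorphism $\psi:\ql{A}\otimes\opMat(X)\to \ql{A}$ of spread $m$. Applying $\psi$ to the chain above gives
$$\psi(\ql{A}_U\otimes\CCC)\subset\psi(\cstar{N})\subset \ql{A}_{U^{+3l+m}},$$
where $\psi(\cstar{N})$ is a type $\rm{I}$ factor. This is not yet a split inclusion for $\ql{A}_U$ itself, because $\psi(\ql{A}_U\otimes\CCC)$ need not equal $\ql{A}_U$.

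The key point, and the main obstacle, is to choose the stabilizing isomorphism so that it restricts to the identity on the first factor. Concretely, we want $\psi(a\otimes 1)=a$, so that $\ql{A}_U\subset\psi(\cstar{N})\subset\ql{A}_{U^{+3l+m}}$. I would get this by choosing the presentation of $\ql{A}$ as stable in a specific way. Write $\ql{A}\cong \ql{B}\otimes\opMat(X)$ for some $\ql{B}$; this is always possible, since stability itself gives $\ql{B}=\ql{A}$. Use the absorbing property $\opMat(X)\otimes\opMat(X)\cong\opMat(X)$, which acts only on the $\opMat$ factors. Then $\ql{A}\otimes\opMat(X)\cong\ql{B}\otimes\opMat(X)\otimes\opMat(X)\to \ql{B}\otimes \opMat(X)$, and this map is the identity on $\ql{B}$.

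Now redo the argument of Proposition \ref{prop:SplitPropertyForQLAlgebras} with this adapted isomorphism, which is the route I would actually take. Take the region-wise type $\rm{I}$ factor $\cstar{N}$ constructed there, containing $\ql{A}_U\otimes 1$ inside $\ql{A}_{U^{+3l}}\otimes\opMat(X)_{U^{+3l}}$. Then compose with the site-wise absorption isomorphism $\opMat(X)\otimes\opMat(X)\to\opMat(X)$, which has spread $0$, in the $\ql{B}\otimes\opMat$ picture. The bookkeeping to check is that the identification $\ql{A}\cong\ql{B}\otimes\opMat(X)$ only costs spread on the order of $l$, so the final split distance stays $3l$.

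I expect that check to be the delicate step. If the identification cannot be made spread-$0$, the honest conclusion is split at distance $3l+O(m)$. That still gives weak splitness, which is the essential content; matching the exact constant $3l$ requires taking the isomorphism $\ql{A}\otimes\opMat(X)\cong\ql{A}$ to be spread-zero on $\ql{A}$, which is the form in which stability is used.
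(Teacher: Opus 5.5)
\textbf{There is a genuine gap.} Your overall plan matches the paper, which only says the corollary follows from Proposition~\ref{prop:SplitPropertyForQLAlgebras} ``by applying stabilization''. You also correctly identified the point that needs an argument: $\psi(\ql{A}_U\otimes\CCC)$ is not $\ql{A}_U$. Your way of fixing it, however, does not work.

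No isomorphism $\psi:\ql{A}\otimes\opMat(X)\to\ql{A}$ can satisfy $\psi(a\otimes 1)=a$ for all $a$. Its restriction to $\ql{A}\otimes 1$ would already be onto $\ql{A}$, which contradicts injectivity on $1\otimes\opMat(X)$. For a zero-spread $\psi$ this fails even on a single $\ql{A}_U$, because $\psi(\ql{A}_U\otimes\opMat(X)_U)=\ql{A}_U$.

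Your map $\ql{B}\otimes\opMat(X)\otimes\opMat(X)\to\ql{B}\otimes\opMat(X)$, built from a site-wise absorption $\theta:\opMat(X)\otimes\opMat(X)\to\opMat(X)$, is the identity on $\ql{B}$, not on $\ql{A}=\ql{B}\otimes\opMat(X)$. It sends $\ql{A}_U\otimes\CCC$ to $\ql{B}_U\otimes\theta(\opMat(X)_U\otimes\CCC)$, which is a proper subalgebra of $\ql{A}_U$. The missing generators are $\psi(1\otimes\opMat(X)_U)$, the image of the ancilla on the sites of $U$ itself. The factor $\cstar{N}\subset\ql{A}_{U^{+3l}}\otimes\opMat(X)_{U^{+3l}}$ has no reason to contain or commute with $1\otimes\opMat(X)_U$. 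So the issue is not spread bookkeeping, and this route does not establish even your fallback ``split at distance $3l+O(m)$''.

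\textbf{The missing idea: move the ancilla off the sites of $U$.} Quasi-splitness allows any type I factor as ancilla. When $U^{+3l}\neq U$, $\opMat(X)_{U^{+3l}}$ and $\opMat(X)_{U^{+3l}\setminus U}$ are isomorphic (graded) infinite separable type I factors.
\begin{enumerate}
\item Apply $\Id\otimes\kappa$ for such an isomorphism $\kappa$. This fixes $\ql{A}_U\otimes\CCC$ and turns $\cstar{N}$ into a type I factor $\cstar{N}_1$ with $\ql{A}_U\otimes\CCC\subset\cstar{N}_1\subset\ql{A}_{U^{+3l}}\otimes\opMat(X)_{U^{+3l}\setminus U}$. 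View this inside the stabilization $\ql{A}\otimes\opMat(X)$.
\item $\cstar{N}_1$ now (super)commutes with $1\otimes\opMat(X)_U$. Hence $\cstar{P}=\cstar{N}_1\vee(1\otimes\opMat(X)_U)$ is a type I factor with $(\ql{A}\otimes\opMat(X))_U\subset\cstar{P}\subset(\ql{A}\otimes\opMat(X))_{U^{+3l}}$.
\item A zero-spread stabilization $\psi$ then gives $\ql{A}_U\subset\psi(\cstar{P})\subset\ql{A}_{U^{+3l}}$, which is $3l$-splitness.
\item If $\psi$ only has spread $m$, run the same argument at $U^{+m}$. This gives $(3l+2m)$-splitness, hence weak splitness.
\end{enumerate}
The degenerate case $U^{+3l}=U$ must be handled separately. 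There $\varphi$ restricts to an isomorphism $\ql{A}_U\otimes\tilde{\ql{A}}_U\cong\opMat(X)_U$.
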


\begin{remark}
There are unstable invertible quasi-local algebra that are not weakly split. It was recently shown that the quasi-local algebras associated to anyon chains are typically invertible in our sense \cite{bunner2026universalfusioncategorysymmetries}. However, for quasi-local algebras associated to anyon chains built from fusion categories with non-integer dimension objects, we typically do not have the weak split property. Indeed, if we did, by finite-dimensionality the intertwining type $\rm{I}$ factors would necessarily be finite-dimensional, hence the quasi-local algebras would be increasing unions of matrix algebras, hence UHF algebras. Such algebras $\ql{A}$ have ordered $K_{0}$ isomorphic to an additive subgroup of $\mathbb{Q}$. In particular, for a fusion category $C$ with an object of non-integer dimension, there is no unitary tensor functor $C \rightarrow \text{Bim}(\ql{A})$ \cite[Proposition 5.2]{ChenHernandezPalomaresJonesPenneys2022}. However, for an anyon chain built from a fusion category $C$ with quasi-local algebra $\ql{A}$, there is a canonical tensor functor $C\rightarrow \text{Bim}(\ql{A})$, \cite[Example 3.18]{evans2026operatoralgebraicapproachfusion}. The simplest example is the Golden chain, built from the fusion category \textbf{Fib} \cite{Feiguin2007GoldenChain}.
\end{remark}

\begin{prop}[Additivity] \label{prop:AdditivityForQLAlgebras}
An invertible quasi-local algebra $\ql{A}$ is weakly additive. Moreover, if $\ql{A}$ has spread $l$, then $\ql{A}$ is $2l$-additive.
\end{prop}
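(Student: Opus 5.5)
Fix a bounded spread isomorphism $\varphi:\ql{A}\otimes\tilde{\ql{A}}\to\opMat(X)$ of spread $l$, so that both $\varphi$ and $\varphi^{-1}$ have spread $l$. The idea is to pass through $\opMat(X)$, where additivity is exact. Let $U=\bigcup_a V_a$ be a finite cover by bounded sets, and identify $\ql{A}$ with $\ql{A}\otimes 1\subset \ql{A}\otimes\tilde{\ql{A}}$.

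\emph{Push forward.} We have $\varphi(\ql{A}_U\otimes 1)\subseteq \opMat(X)_{U^{+l}}$. Since $X$ is uniformly locally finite, $U^{+l}$ is a finite set, and $U^{+l}=\bigcup_a V_a^{+l}$ because a point within distance $l$ of $U$ lies within distance $l$ of some $V_a$. For a product algebra we have exact additivity: $\opMat(X)_W=B(\bigotimes_{j\in W}\hilb{H}_j)$ is generated as a von Neumann algebra by the single-site algebras $B(\hilb{H}_j)$, $j\in W$. This holds in the graded case too, since the graded tensor product of type $\rm I$ superfactors is generated by its factors. Hence
\[
\opMat(X)_{U^{+l}}=\bigvee_a \opMat(X)_{V_a^{+l}}.
\]

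\emph{Pull back.} Apply $\varphi^{-1}$, which is locally normal and restricts on $\opMat(X)_{U^{+l}}$ to a normal injective $*$-homomorphism into $(\ql{A}\otimes\tilde{\ql{A}})_{U^{+2l}}$. A normal $*$-homomorphism carries the von Neumann algebra generated by a family of subalgebras into the von Neumann algebra generated by their images. It follows that
\[
\ql{A}_U\otimes 1\subseteq \bigvee_a \varphi^{-1}\bigl(\opMat(X)_{V_a^{+l}}\bigr)\subseteq \bigvee_a \ql{A}_{V_a^{+2l}}\otimes\tilde{\ql{A}}_{V_a^{+2l}}.
\]
The right-hand side sits inside $\ql{A}_{U^{+2l}}\otimes\tilde{\ql{A}}_{U^{+2l}}$ and equals $\bigl(\bigvee_a \ql{A}_{V_a^{+2l}}\bigr)\otimes\bigl(\bigvee_a\tilde{\ql{A}}_{V_a^{+2l}}\bigr)$. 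Write $\cstar{C}:=\bigvee_a \ql{A}_{V_a^{+2l}}\subseteq \ql{A}_{U^{+2l}}$ and $\cstar{D}:=\bigvee_a\tilde{\ql{A}}_{V_a^{+2l}}$. We then have
\[
\ql{A}_U\otimes 1\subseteq \cstar{C}\otimes\cstar{D}\cap(\ql{A}_{U^{+2l}}\otimes 1).
\]

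\emph{Remove the ancilla.} We must show $\cstar{C}\otimes\cstar{D}\cap(\ql{A}_{U^{+2l}}\otimes 1)\subseteq \cstar{C}\otimes 1$. Use the slice maps of Proposition \ref{prop:superslice}. For $x=y\otimes 1$ and any even normal state $\psi$ on $\tilde{\ql{A}}_{U^{+2l}}$, we have $R_\psi(x)=y$. On the other hand, $x\in \cstar{C}\otimes\tilde{\ql{A}}_{U^{+2l}}$, so by Proposition \ref{prop:superslice} its slices lie in $\cstar{C}$. Hence $y\in\cstar{C}$. This gives $\ql{A}_U\subseteq\bigvee_a\ql{A}_{V_a^{+2l}}$, which is $2l$-additivity.

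\emph{Main obstacle.} The delicate step is the pull-back. It needs the identity $\bigvee_a(\cstar{C}_a\otimes\cstar{D}_a)\subseteq(\bigvee_a\cstar{C}_a)\otimes(\bigvee_a\cstar{D}_a)$ in the graded spatial tensor product, and normality of $\varphi^{-1}$ on the finite region. Both follow from local normality and the definition of $\otimes$ via double supercommutants.
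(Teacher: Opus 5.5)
Your proof is correct and follows the same route as the paper: push $\ql{A}_U\otimes 1$ forward into $\opMat(X)_{U^{+l}}$, use exact additivity of the product algebra there, and pull back with $\varphi^{-1}$ to get $2l$-additivity of $\ql{A}\otimes\tilde{\ql{A}}$. Your slice-map step via Proposition~\ref{prop:superslice} spells out the passage from additivity of $\ql{A}\otimes\tilde{\ql{A}}$ to additivity of the factor $\ql{A}$, which the paper only asserts, and you treat arbitrary finite covers where the paper treats only two sets.
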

\begin{proof}
Suppose $\ql{A}$ is defined over a uniformly locally finite metric space $X$ and has spread $l$. There exists a spread $l$ isomorphism $\varphi:\ql{A} \otimes \tilde{\ql{A}} \to \opMat(X)$. Let $\ql{B} = \ql{A} \otimes \tilde{\ql{A}}$ and $U$, $V_1$, $V_2$ be bounded subsets of $X$ such that $U = V_1 \cup V_2$. We have $\varphi(\ql{B}_U) \subseteq \opMat(X)_{U^{+l}} \subseteq \opMat(X)_{V_1^{+l}} \vee \opMat(X)_{V_2^{+l}} \subseteq \ql{B}_{V_1^{+2l}} \vee \ql{B}_{V_2^{+2l}}$. Therefore, $\ql{B}$ is $2l$-additive. $2l$-additivity of $\ql{A} \otimes \tilde{\ql{A}}$ implies $2l$-additivity of each tensor factor.
\end{proof}

In the following, we record some basic results on tensor factors, which are very useful when considering invertible quasi-local algebras. The next several lemmas follow \cite[Section 2]{ludewig2026quantum}, with some modifications due to the fact that we are in the super, locally infinite-dimensional setting. 

\begin{lemma}\label{lem:alternatetensorfactor}
Let $\ql{A}$ be a quasi-local algebra over a metric space $X$. A quasi-local subalgebra $\ql{B} \subset \ql{A}$ is a tensor factor if and only if there exists $l \geq 0$ such that for any bounded $U \subset X$, we have $\ql{A}_U \subset \ql{B}_{U^{+l}} \vee \opComm(\ql{B},\ql{A})_{U^{+l}}$ and the multiplication map $\mu:\ql{B}_U \otimes \opComm(\ql{B},\ql{A})_U \to \ql{A}_U$ is injective.
\end{lemma}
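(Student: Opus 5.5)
We prove the two directions separately, writing $\ql{C} := \opComm(\ql{B},\ql{A})$ throughout.

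\emph{Forward direction.} Suppose $\ql{B}$ is a tensor factor, so the multiplication map extends to a bounded spread isomorphism $\mu:\ql{B}\otimes\ql{C}\to\ql{A}$. Let $l$ be a common bound for the spreads of $\mu$ and $\mu^{-1}$.

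\begin{itemize}
\item \textbf{Injectivity on $\ql{B}_U\otimes\ql{C}_U$.} The restriction of $\mu$ to $\ql{B}_U\otimes\ql{C}_U$ is the restriction of an injective map, hence injective. Its image lies in $\ql{A}_U$, because $\mu$ acts as multiplication on elementary tensors, $\ql{B}_U,\ql{C}_U\subset\ql{A}_U$, and $\mu$ is locally normal. So the multiplication map $\ql{B}_U\otimes\ql{C}_U\to\ql{A}_U$ is injective.
\item \textbf{The covering condition.} By the spread bound on $\mu^{-1}$,
\[ \mu^{-1}(\ql{A}_U)\subset \ql{B}_{U^{+l}}\otimes\ql{C}_{U^{+l}}. \]
Apply $\mu$, which is normal on this local algebra. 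It maps the algebraic tensor product onto products $bc$, and their ultraweak closure is $\ql{B}_{U^{+l}}\vee\ql{C}_{U^{+l}}$. Hence $\ql{A}_U\subset\ql{B}_{U^{+l}}\vee\ql{C}_{U^{+l}}$.
\end{itemize}

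\emph{Converse direction.} Assume the inclusion $\ql{A}_U \subset \ql{B}_{U^{+l}}\vee\ql{C}_{U^{+l}}$ and injectivity of $\mu_U:\ql{B}_U\otimes\ql{C}_U\to\ql{A}_U$ for every bounded $U$.

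\begin{enumerate}
\item \textbf{Normality of each $\mu_U$.} Since $\ql{B}_U$ and $\ql{C}_U$ supercommute, the map $b\otimes c\mapsto bc$ is a graded $*$-homomorphism on the algebraic graded tensor product. It must be shown to extend normally to the spatial tensor product. I expect this to be the main obstacle: injectivity is part of the hypothesis, but the existence of the normal extension is not automatic.
\begin{itemize}
\item First attempt: interpret the hypothesis as asserting that a normal multiplication map exists, as in the convention of \cite{ludewig2026quantum}, and then only injectivity needs checking.
\item Alternatively: if one tensor factor is type $\rm I$, use Lemma~\ref{lma:gekadison} to obtain the extension.
\item More generally: use Proposition~\ref{prop:SuperalgebraSplitProperty}-type arguments with product vector states.
\end{itemize}
\item \textbf{Assembling the global map.} The maps $\mu_U$ are compatible with the inclusions $U\subseteq V$. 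They therefore assemble into a $*$-homomorphism $\mu$ from the colimit $C^*$-algebra $\ql{B}\otimes\ql{C}$ into $\ql{A}$. It has spread $0$, and it is injective, because each $\mu_U$ is injective and an injective $*$-homomorphism between $C^*$-algebras is isometric, so injectivity passes to the completion.
\item \textbf{Applying Lemma~\ref{bsinverse}.} The covering hypothesis says
\[ \ql{A}_U\subset\ql{B}_{U^{+l}}\vee\ql{C}_{U^{+l}}=\mu\bigl((\ql{B}\otimes\ql{C})_{U^{+l}}\bigr), \]
where the last equality holds because $\mu_{U^{+l}}$ is normal and injective, so its image is ultraweakly closed and equals the generated von Neumann algebra. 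Lemma~\ref{bsinverse} then gives that $\mu$ is a bounded spread isomorphism, so $\ql{B}$ is a tensor factor.
\end{enumerate}
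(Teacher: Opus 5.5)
Your proof is correct and is essentially the paper's argument. The forward direction restricts the bounded spread isomorphism and uses the spread of $\mu^{-1}$. The converse assembles the local multiplication maps into an injective spread-zero homomorphism whose local images are $\ql{B}_{U^{+l}}\vee\opComm(\ql{B},\ql{A})_{U^{+l}}$, and then invokes Lemma~\ref{bsinverse}.

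Your ``first attempt'' reading, that the hypothesis already supplies a normal multiplication map on the spatial tensor product, is exactly what the paper tacitly assumes. It is also necessary. Over a point, take a type $\mathrm{III}$ factor $\cstar{M}\subset B(\hilb{H})$: the algebraic multiplication $\cstar{M}\odot\cstar{M}'\to B(\hilb{H})$ is injective and $\cstar{M}\vee\cstar{M}'=B(\hilb{H})$, yet $\cstar{M}\otimes\cstar{M}'\not\cong B(\hilb{H})$. So your other two fallbacks are not needed, and they would not work in general.
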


\begin{proof}
Let $\tilde{\ql{B}} = \opComm(\ql{B},\ql{A})$. If $\ql{B}$ is a tensor factor, then there exists a bounded spread isomorphism $\varphi:\ql{B} \otimes \tilde{\ql{B}} \to \ql{A}$ which implies injectivity and $\ql{A}_U \subset \ql{B}_{U^{+l}} \otimes \tilde{\ql{B}}_{U^{+l}}$.

Conversely, suppose the hypotheses of the lemma are satisfied. Then the multiplication defines an injective map $\mu:\ql{B}_U \otimes \opComm(\ql{B},\ql{A})_U \to \ql{A}_U$. Since the image of $\ql{B}_{U^{+l}} \otimes \opComm(\ql{B},\ql{A})_{U^{+l}}$ is $ \ql{B}_{U^{+l}} \vee \opComm(\ql{B},\ql{A})_{U^{+l}}$, then the second hypothesis ensures that $\mu$ is is a bounded spread isomorphism by Lemma \ref{bsinverse}.
\end{proof}

\begin{lemma}   \label{lma:Lud228}
Let $\varphi:\ql{A} \to \ql{B}$ be a bounded spread isomorphism of quasi-local algebras, and let $\ql{C} \subseteq \ql{A}$ be quasi-local subalgebra which is a tensor factor with $\tilde{\ql{C}} = \opComm(\ql{C},\ql{A})$. Then the assignments
$$
U\longmapsto \ql{B}_U\cap\varphi(\ql{C}),
\qquad
U\longmapsto \ql{B}_U\cap\varphi(\tilde{\ql{C}})
$$
define quasi-local subalgebras $\ql{D},\tilde{\ql{D}}\subseteq \ql{B}$ such that
\begin{enumerate}
\item $\varphi$ restricts to bounded spread isomorphisms $\ql{C}\to\ql{D}$ and
$\tilde{\ql{C}}\to\tilde{\ql{D}}$;
\item $\tilde{\ql{D}}=\opComm(\ql{D},\ql{B})$;
\item $\ql{D}$ is a tensor factor of $\ql{B}$.
\end{enumerate}    
\end{lemma}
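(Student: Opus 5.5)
The idea is to reduce everything to the tensor factor structure of $\ql{A}$, transported through $\varphi$. Let $l$ be a common bound for the spreads of $\varphi$ and $\varphi^{-1}$, and $m$ a spread for the multiplication isomorphism $\mu:\ql{C}\otimes\tilde{\ql{C}}\to\ql{A}$ and its inverse.

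\emph{Quasi-local structure.} First check that $\ql{D}_U := \ql{B}_U\cap\varphi(\ql{C})$ is a von Neumann superalgebra. It is clearly monotone in $U$, and disjoint regions supercommute because $\ql{D}_U\subseteq\ql{B}_U$. Since $\varphi^{-1}$ has spread $l$, we have $\varphi^{-1}(\ql{D}_U)\subseteq \ql{A}_{U^{+l}}$, so
$$\ql{D}_U=\varphi\bigl(\ql{A}_{U^{+l}}\cap\ql{C}\bigr)\cap\ql{B}_U.$$
To see this is weak$^*$-closed, I would argue that $\ql{A}_{U^{+l}}\cap\ql{C}=\ql{C}_{U^{+l+m}}\cap\ql{A}_{U^{+l}}$. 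This should follow from the tensor factor isomorphism and a slice-map argument (Proposition \ref{prop:superslice}): inside $\ql{C}_V\otimes\tilde{\ql{C}}_V$, an element of $\ql{C}$ must be of the form $c\otimes 1$. The locally normal map $\varphi$ restricted to the von Neumann algebra $\ql{A}_{U^{+l}}$ then has a weak$^*$-closed image subalgebra, so $\ql{D}_U$ is an intersection of von Neumann algebras. The same argument applies to $\tilde{\ql{D}}$. Finally, the norm closure of $\bigcup_U\ql{D}_U$ equals $\varphi(\ql{C})$, since $\varphi(\ql{C}_U)\subseteq\ql{B}_{U^{+l}}\cap\varphi(\ql{C})=\ql{D}_{U^{+l}}$.

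\emph{(1) and (3).} The inclusion just noted gives spread $l$ for $\varphi|_{\ql{C}}:\ql{C}\to\ql{D}$, and $\varphi^{-1}(\ql{D}_U)\subseteq\ql{A}_{U^{+l}}\cap\ql{C}\subseteq\ql{C}_{U^{+l+m}}$ gives the bounded spread inverse. For (3), I would apply Lemma \ref{lem:alternatetensorfactor} with the pair $\ql{D},\tilde{\ql{D}}$. We have $\ql{B}_U\subseteq\varphi(\ql{A}_{U^{+l}})\subseteq\varphi(\ql{C}_{U^{+l+m}})\vee\varphi(\tilde{\ql{C}}_{U^{+l+m}})\subseteq\ql{D}_{U^{+2l+m}}\vee\tilde{\ql{D}}_{U^{+2l+m}}$. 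Injectivity of the multiplication $\ql{D}_U\otimes\tilde{\ql{D}}_U\to\ql{B}_U$ follows by pulling back along $\varphi^{-1}$, which is a normal injective map on a larger local algebra, to the injective multiplication on $\ql{C}_V\otimes\tilde{\ql{C}}_V$. This requires $\varphi^{-1}\otimes\varphi^{-1}$ to be compatible with the graded tensor product, which holds because $\varphi$ is graded.

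\emph{(2), the main obstacle.} The inclusion $\tilde{\ql{D}}\subseteq\opComm(\ql{D},\ql{B})$ is immediate, since $\varphi$ preserves supercommutation. For the converse, take $b\in\ql{B}_U$ supercommuting with $\ql{D}$, hence with $\varphi(\ql{C})$. Then $a=\varphi^{-1}(b)\in\ql{A}_{U^{+l}}$ supercommutes with $\ql{C}$, so $a\in\opComm(\ql{C},\ql{A})=\tilde{\ql{C}}$, and therefore $b\in\ql{B}_U\cap\varphi(\tilde{\ql{C}})=\tilde{\ql{D}}_U$. The delicate points, which I expect to be the real work, are the two identifications already used above:
$$\tilde{\ql{C}}\cap\ql{A}_V\subseteq\tilde{\ql{C}}_{V^{+m}},\qquad \ql{A}_V\cap\ql{C}\subseteq\ql{C}_{V^{+m}}.$$
These are proved via the superslice Proposition \ref{prop:superslice} applied to $\mu^{-1}(\ql{A}_V)\subseteq\ql{C}_{V^{+m}}\otimes\tilde{\ql{C}}_{V^{+m}}$, combined with the super bicommutant property.
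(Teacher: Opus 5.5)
Your proposal is correct and follows essentially the same route as the paper. The common steps are: localize $\ql{A}_V\cap\ql{C}$ inside $\ql{C}_{V^{+m}}$ by slicing $\mu^{-1}(c)=c\otimes 1$; read off (1) from the spreads; prove (2) exactly as you do; and obtain (3) by transporting the multiplication isomorphism through $\varphi$. The paper differs only in presentation. It applies a single slice $R_\psi(c\otimes 1)=c$ with a normal even state $\psi$, so neither Proposition \ref{prop:superslice} nor the bicommutant is needed. For (3) it writes $\mu_{\ql{B}}=\varphi\,\mu_{\ql{A}}(\varphi\otimes\varphi)^{-1}$ directly instead of checking the hypotheses of Lemma \ref{lem:alternatetensorfactor}. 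Note that your step (3) uses $\tilde{\ql{D}}=\opComm(\ql{D},\ql{B})$, so (2) should be established before (3).
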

\begin{proof}
Let $\mu_{\ql{A}}:\ql{C}\otimes\tilde{\ql{C}}\to\ql{A}$ be the multiplication bounded spread isomorphism, and let
$l$ bound the spreads of $\varphi$, $\varphi^{-1}$, $\mu_{\ql{A}}$ and $\mu_{\ql{A}}^{-1}$.

Let $c \in \ql{C}$ satisfy $\varphi(c) \in \ql{B}_U$. Since $\varphi^{-1}$ and $\mu^{-1}_{\ql{A}}$ have spread $l$ and $\mu^{-1}_{\ql{A}}(c) = c \otimes 1$, we have
$$
c \otimes 1 \in (\ql{C} \otimes \tilde{\ql{C}})_{U^{+2l}}
=\ql{C}_{U^{+2l}} \otimes \tilde{\ql{C}}_{U^{+2l}}.
$$
Applying the slice map $R_{\psi}$ for a normal state $\psi$ of degree $0$ on
$\tilde{\ql{C}}_{U^{+2l}}$ gives $c = R_{\psi}(c \otimes 1) \in \ql{C}_{U^{+2l}}$. Hence,
$$
\ql{D}_U := \ql{B}_U \cap \varphi(\ql{C}) = \ql{B}_U \cap \varphi \bigl(\ql{C}_{U^{+2l}} \bigr)
$$
is a von Neumann (super)algebra. Thus, $\ql{D}$ is a quasi-local subalgebra
of $\ql{B}$ with underlying $C^*$-algebra $\varphi(\ql{C})$.  Slicing in the first tensor factor instead, the same argument applied to $1 \otimes \tilde{c}$ for $\tilde{c} \in \tilde{\ql{C}}$ satisfying $\varphi(\tilde{c}) \in \ql{B}_U$ gives the corresponding statements for $\tilde{\ql{D}}$.

(1) Since $\varphi(\ql{C}_U)\subseteq\ql{D}_{U^{+l}}$ and
$\ql{D}_U\subseteq\varphi(\ql{C}_{U^{+2l}})$, $\varphi:\ql{C}\to\ql{D}$ is a bounded spread isomorphism. Likewise for $\tilde{\ql{C}} \to \tilde{\ql{D}}$.

(2) An element of $\tilde{\ql{D}}_U$ lies in $\ql{B}_U$ and in $\varphi(\tilde{\ql{C}})$, which
(super)commutes with $\varphi(\ql{C}) \supseteq \ql{D}$. Hence,
$\tilde{\ql{D}} \subseteq \opComm(\ql{D},\ql{B})$. Conversely, let
$b \in \opComm(\ql{D},\ql{B})_U$. The algebra $\varphi(\ql{C})$ is the norm closure of
$\bigcup_V \ql{D}_V$, so $b$ (super)commutes with all of $\varphi(\ql{C})$. Therefore, the element $\varphi^{-1}(b)$ (super)commutes with $\ql{C}$, i.e. $\varphi^{-1}(b) \in \tilde{\ql{C}}$. Hence, $b\in\ql{B}_U\cap\varphi(\tilde{\ql{C}})=\tilde{\ql{D}}_U$.

(3) The map $\varphi\otimes\varphi:\ql{C}\otimes\tilde{\ql{C}}\to\ql{D}\otimes\tilde{\ql{D}}$ is
a bounded spread isomorphism. Hence, $\mu_{\ql{B}} = \varphi \mu_{\ql{A}} (\varphi\otimes\varphi)^{-1}$ defines a bounded spread isomorphism $\mu_{\ql{B}}: \ql{D} \otimes \tilde{\ql{D}} \to \ql{B}$ which extends the multiplication map $ \ql{D} \times \tilde{\ql{D}} \to \ql{B}$.

\end{proof}

\begin{lemma}   \label{lma:Lud230}
Let $\ql{A} \subseteq \ql{B} \subseteq \ql{C}$ be quasi-local algebras over a metric space $X$. If $\ql{A}$ and $\ql{B}$ are tensor factors of $\ql{C}$, then $\ql{A}$ is a tensor factor of $\ql{B}$. 
\end{lemma}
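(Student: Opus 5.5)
We will use the characterization of tensor factors in Lemma \ref{lem:alternatetensorfactor}. Write $\tilde{\ql{A}} = \opComm(\ql{A},\ql{C})$ and $\tilde{\ql{B}} = \opComm(\ql{B},\ql{C})$. The relative commutant we need is $\opComm(\ql{A},\ql{B})_U = \ql{B}_U \cap \tilde{\ql{A}}_U$. Lemma \ref{lem:alternatetensorfactor} asks for two things:
\begin{itemize}
\item injectivity of the multiplication map $\ql{A}_U \otimes \opComm(\ql{A},\ql{B})_U \to \ql{B}_U$;
\item the inclusion $\ql{B}_U \subseteq \ql{A}_{U^{+l'}} \vee \opComm(\ql{A},\ql{B})_{U^{+l'}}$ for some uniform $l'$.
\end{itemize}

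Injectivity is immediate. The multiplication map is the restriction of the multiplication map $\mu_{\ql{A}}:\ql{A}_U \otimes \tilde{\ql{A}}_U \to \ql{C}_U$, which is injective because $\ql{A}$ is a tensor factor of $\ql{C}$. The inclusion $\ql{A}_U \otimes (\ql{B}_U\cap\tilde{\ql{A}}_U) \subseteq \ql{A}_U\otimes\tilde{\ql{A}}_U$ holds as von Neumann superalgebras. So the real work is the generation statement.

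To prove generation, take $b \in \ql{B}_U$ and let $l$ bound the spreads of $\mu_{\ql{A}}^{-1}$ and $\mu_{\ql{B}}^{\pm1}$. Then
\[\mu_{\ql{A}}^{-1}(b) \in \ql{A}_{U^{+l}} \otimes \tilde{\ql{A}}_{U^{+l}}.\]
The goal is to show that $\mu_{\ql{A}}^{-1}(b)$ lies in $\ql{A}_{U^{+l}} \otimes \mathcal{S}$, where
\[\mathcal{S} := \{\tilde a\in\tilde{\ql{A}}_{U^{+l}} : \tilde a \in \ql{B}\}.\]
We will do this with Proposition \ref{prop:superslice}, applied in the first tensor factor by symmetry: we need every left slice $L_\psi(\mu_{\ql{A}}^{-1}(b))$, for $\psi\in(\ql{A}_{U^{+l}})_*$, to lie in $\ql{B}$.

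The key observation is that $\ql{B}$ is characterized inside $\ql{C}$ by supercommuting with $\tilde{\ql{B}}$. This holds because $\ql{B}$ is a tensor factor, so $\opComm(\tilde{\ql{B}},\ql{C})=\ql{B}$. That identity is itself a slicing argument: through $\mu_{\ql{B}}$, an element of $\ql{B}_V\otimes\tilde{\ql{B}}_V$ supercommuting with $1\otimes\tilde{\ql{B}}_V$ has right slices in the supercenter of $\tilde{\ql{B}}_V$, and we may need to enlarge $V$ to handle this.

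Since $\ql{A}\subseteq\ql{B}$, we have $\tilde{\ql{B}}\subseteq\tilde{\ql{A}}$. Under $\mu_{\ql{A}}^{-1}$, an element $x\in\tilde{\ql{B}}$ becomes $1\otimes x$. Because $b$ supercommutes with $\tilde{\ql{B}}$, $\mu_{\ql{A}}^{-1}(b)$ supercommutes with $1\otimes x$, and the left-slice analogue of \eqref{eqn:supercommutator} then gives that $L_\psi(\mu_{\ql{A}}^{-1}(b))$ supercommutes with $x$. Each slice therefore lies in $\opComm(\tilde{\ql{B}},\ql{C})\cap\tilde{\ql{A}}_{U^{+l}} = \ql{B}\cap\tilde{\ql{A}}_{U^{+l}}$. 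By Proposition \ref{prop:superslice}, $\mu_{\ql{A}}^{-1}(b)\in\ql{A}_{U^{+l}}\otimes\mathcal{S}$.

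Applying $\mu_{\ql{A}}$ gives $b\in\ql{A}_{U^{+l}}\vee(\ql{B}\cap\tilde{\ql{A}}_{U^{+l}})$. It remains to move $\ql{B}\cap\tilde{\ql{A}}_{U^{+l}}$ into $\ql{B}_{U^{+l'}}\cap\tilde{\ql{A}}$. We do this exactly as in the proof of Lemma \ref{lma:Lud228}. Apply $\mu_{\ql{B}}^{-1}$ to an element $y\in\ql{B}\cap\ql{C}_{U^{+l}}$: this gives $y\otimes1\in\ql{B}_{U^{+2l}}\otimes\tilde{\ql{B}}_{U^{+2l}}$, and slicing with a normal even state shows $y\in\ql{B}_{U^{+2l}}$. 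So we can take $l'=2l$.

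The main obstacle is the identification $\opComm(\tilde{\ql{B}},\ql{C})=\ql{B}$ at the level of local von Neumann algebras. Double commutant properties are not automatic for quasi-local subalgebras. The plan is to pass everything through $\mu_{\ql{B}}$, reducing to the statement that the relative supercommutant of $1\otimes\tilde{\ql{B}}_V$ in $\ql{B}_V\otimes\tilde{\ql{B}}_V$ is $\ql{B}_V\otimes Z(\tilde{\ql{B}}_V)$. This requires controlling the supercenters, which possibly forces a further enlargement of the region using local normality and norm density of $\bigcup_V\tilde{\ql{B}}_V$.
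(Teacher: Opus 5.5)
Your injectivity step and the final relocalization (moving $\ql{B}\cap\tilde{\ql{A}}_{U^{+l}}$ into $\ql{B}_{U^{+2l}}$ as in Lemma \ref{lma:Lud228}) are fine. The core step, however, has a genuine gap. You claim every left slice of $\mu_{\ql{A}}^{-1}(b)$ lies in $\ql{B}$, and this rests on the identity $\opComm(\tilde{\ql{B}},\ql{C})=\ql{B}$. You flag that identity as the main obstacle, but it is not a technicality: it is false under the hypotheses of the lemma. As your own reduction indicates, via $\mu_{\ql{B}}$ the supercommutant of $\tilde{\ql{B}}$ in $\ql{C}$ is $\ql{B}$ tensored with the center of $\tilde{\ql{B}}$. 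Being a tensor factor constrains $\ql{B}\cap\tilde{\ql{B}}$, but it says nothing about that center.

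Concretely, take $X$ a point, $\ql{C}=B(\hilb{H})\otimes L^\infty[0,1]$ (trivially graded) and $\ql{B}=B(\hilb{H})\otimes 1$. Then $\ql{B}$ is a tensor factor with $\tilde{\ql{B}}=1\otimes L^\infty[0,1]$, which is central, so $\opComm(\tilde{\ql{B}},\ql{C})=\ql{C}\neq\ql{B}$. If $\ql{A}$ is a type I subfactor of $B(\hilb{H})$, your slice test is vacuous, since every element of $\tilde{\ql{A}}$ already commutes with $\tilde{\ql{B}}$. The same happens over $\ZZ$ with $\ql{C}=\Mat(\ZZ)\otimes\ql{Z}$, where $\ql{Z}_U$ is a tensor product of copies of $L^\infty[0,1]$ over the sites of $U$. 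Enlarging regions cannot repair this, because the obstruction is the global center of $\ql{C}$, not the local supercenters $Z(\tilde{\ql{B}}_V)$. Your route only closes under an extra hypothesis such as triviality of the center of $\ql{C}$; that holds for invertible $\ql{C}$ but is not assumed.

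The missing idea is to slice the other leg, which is what the paper does.
\begin{itemize}
\item Via $\mu_{\ql{B}}$ you have $\ql{C}_U\subseteq\ql{B}_{U^{+l}}\vee\tilde{\ql{B}}_{U^{+l}}\cong\ql{B}_{U^{+l}}\otimes\tilde{\ql{B}}_{U^{+l}}$.
\item You then prove $\tilde{\ql{A}}\cap(\ql{B}_{U^{+l}}\otimes\tilde{\ql{B}}_{U^{+l}})=(\tilde{\ql{A}}\cap\ql{B}_{U^{+l}})\otimes\tilde{\ql{B}}_{U^{+l}}$. For a right slice $R_\psi$ on the $\tilde{\ql{B}}$ leg and any $a\in\ql{A}$, Eq. \eqref{eqn:supercommutator} gives $[R_\psi(c),a]_{\pm}=R_\psi([c,a\otimes 1]_{\pm})=0$. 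This works precisely because $\ql{A}\subseteq\ql{B}$ sits in the first leg. Proposition \ref{prop:superslice} then finishes, and no commutant or center identity enters.
\item With $\ql{D}=\tilde{\ql{A}}\cap\ql{B}$, this gives $\ql{C}_U\subseteq\ql{A}_{U^{+2l}}\vee\ql{D}_{U^{+2l}}\vee\tilde{\ql{B}}_{U^{+2l}}$.
\item For $b\in\ql{B}_U$, slice $\mu_{\ql{B}}^{-1}(b)=b\otimes 1$ with an even normal state on the $\tilde{\ql{B}}$ leg. This yields $b\in\ql{A}_{U^{+2l}}\vee\ql{D}_{U^{+2l}}$.
\end{itemize}
In your own terms: do not test left slices of $\mu_{\ql{A}}^{-1}(b)$ against $\tilde{\ql{B}}$. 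Instead apply the locally normal, $\ql{B}$-bimodular map $R_\phi\circ\mu_{\ql{B}}^{-1}$, which fixes $b$ and sends $\tilde{\ql{A}}$ into $\tilde{\ql{A}}\cap\ql{B}$.
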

\begin{proof}
Let $\tilde{\ql{A}} = \opComm(\ql{A},\ql{C})$ and $\tilde{\ql{B}} = \opComm(\ql{B},\ql{C})$. Let $\ql{D}=\tilde{\ql{A}}\cap \ql{B}$. Then since $\mu:\ql{A} \otimes \tilde{\ql{A}} \rightarrow \ql{C}$ is injective, its restriction to $\ql{A} \otimes \ql{D} \rightarrow \ql{B}$ is injective. 

By Lemma \ref{lem:alternatetensorfactor}, there is an $l$ such that for any $U$, $\ql{C}_{U}\subseteq \ql{B}_{U^{+l}}\vee \tilde{\ql{B}}_{U^{+l}}\cong \ql{B}_{U^{+l}}\otimes \tilde{\ql{B}}_{U^{+l}}$.

Thus 

$$\tilde{\ql{A}}_{U}=\tilde{\ql{A}}\cap \ql{C}_{U}\subseteq \tilde{\ql{A}} \cap (\ql{B}_{U^{+l}} \otimes \tilde{\ql{B}}_{U^{+l}}).$$

\medskip

\noindent We claim $\tilde{\ql{A}} \cap (\ql{B}_{U^{+l}}\otimes \tilde{\ql{B}}_{U^{+l}})=(\tilde{\ql{A}}\cap \ql{B}_{U^{+l}})\otimes \tilde{\ql{B}}_{U^{+l}}$. The inclusion 

$$(\tilde{\ql{A}} \cap \ql{B}_{U^{+l}})\otimes \tilde{\ql{B}}_{U^{+l}} \subseteq \tilde{\ql{A}} \cap (\ql{B}_{U^{+l}}\otimes \tilde{\ql{B}}_{U^{+l}})$$ 

\medskip

\noindent is obvious, so it remains to show the other direction.

Now, let $\psi$ be a normal functional on $\tilde{\ql{B}}_{U^{+l}}$, and let $R_{\psi}:\ql{B}_{U^{+l}}\otimes \tilde{\ql{B}}_{U^{+l}}\rightarrow \ql{B}_{U^{+l}}$ be the super slice map defined above. Now, let $c\in \tilde{\ql{A}}\cap (\ql{B}_{U^{+l}}\otimes \tilde{\ql{B}}_{U^{+l}})$. Then for any $a\in \ql{A}$, recalling Eq. (\ref{eqn:supercommutator}),

$$[R_{\psi}(c),a]_{\pm}=R_{\psi}([c,a\otimes 1]_{\pm})=0.$$

\noindent In particular, $R_{\psi}(c)\in \tilde{\ql{A}}\cap \ql{B}_{U^{+l}}$, and thus by Proposition \ref{prop:superslice}, $c \in (\tilde{\ql{A}}\cap \ql{B}_{U^{+l}}) \otimes \tilde{\ql{B}}_{U^{+l}}$.

Using Lemma \ref{lem:alternatetensorfactor}, this implies
$$\ql{A}\otimes \ql{D}\rightarrow \ql{B}$$
is a bounded spread isomorphism. Hence, $\ql{A}$ is a tensor factor of $\ql{B}$.

\end{proof}

\section{Quantum Cellular Automata} \label{sec:QCAs}

In this section, we study quantum cellular automata (QCA) on product algebras (see the review \cite{Farrelly2020} and references there-in) and their relationship to invertible quasi-local algebras.

\begin{definition}
    Let $X$ be a uniformly locally finite metric space. A {\it quantum cellular automaton} (QCA) on $\opMat(X,s)$ is a bounded spread automorphism $\alpha$ of $\opMat(X,s)$ for some $s$. In this case we say that $\alpha$ is a QCA {\it over $X$}.\\ 
\end{definition}

A standard example of QCAs is (finite-depth) quantum circuits.

\begin{definition}
Let $X$ be a uniformly locally finite metric space, and $\ql{A}$ be a quasi-local algebra over $X$. A {\it quantum circuit} on $\ql{A}$ consists of the following data:
\begin{itemize}
    \item A collection of subsets $\{U^{(a)}_k \subset X\}^{a \in I_k}_{k=1,...,n}$ of uniformly bounded diameter satisfying $U^{(a)}_k \cap U^{(b)}_k = \emptyset$ for any $k=1,...,n$ and $a,b \in I_k$, $a \neq b$. The integer $n$ is called the {\it number of layers} in a quantum circuit, and the quantum circuit is called {\it $n$-layered}.
    \item A collection of (even) unitary elements $u^{(a)}_k \in \ql{A}_{U^{(a)}_k}$ called {\it gates}. We say that $U^{(a)}_k$ is the {\it support} of the gate $u^{(a)}_k$.
    \item A (graded) $*$-automorphism $\alpha = \beta_1 \beta_2 ... \beta_n$ of $\ql{A}$, where $\beta_k = \prod_{a \in I_k} \Ad_{u^{(a)}_k}$.
\end{itemize}
A (graded) $*$-automorphism $\alpha$ of $\ql{A}$ is called a {\it quantum circuit automorphism} if it appears in the data of a quantum circuit.
\end{definition}

Note that quantum circuit automorphisms on $\opMat(X,s)$ are QCAs by construction. Quantum circuits are bounded spread versions of ``locally generated" automorphisms of quasi-local algebras that are commonly used in the classification of topological phases of matter. We consider circuits as ``trivial" QCAs, and are led to the following notions.

\begin{definition}
A {\it stabilization} of a QCA $\alpha$ on $\opMat(X,s)$ is a QCA on $\opMat(X)$ of the form $\sigma^{-1}(\alpha \otimes \Id) \sigma$, where $\sigma: \opMat(X) \to \opMat(X,s) \otimes \opMat(X)$ is a zero-spread isomorphism. A QCA $\alpha$ is called {\it stable} if it is related to its stabilization by a zero-spread isomorphism.
\end{definition}

\begin{definition}\label{def:QCA}
Two QCAs $\alpha, \beta$ over $X$ are \textit{in the same phase}, denoted $\alpha\sim \beta$, if there exists a quantum circuit $\gamma$ on $\opMat(X)$ such that $\gamma \tilde{\alpha} = \tilde{\beta}$ for stabilizations $\tilde{\alpha}$, $\tilde{\beta}$ of $\alpha$, $\beta$, respectively. Being in the same phase defines an equivalence relation. We denote the collection of equivalence classes $\opQCA(X)$. 
\end{definition}

Note that a QCA on $\opMat(X)$ is stable if we have  $\alpha= \sigma^{-1}_{2}(\alpha\otimes \Id)\sigma_{1}$ for some $0$-spread isomorphisms $\sigma_{1},\sigma_{2}: \opMat(X)\rightarrow \opMat(X)\otimes \opMat(X)$. We also have that $\alpha \sim \sigma^{-1}_{2}(\alpha\otimes \Id)\sigma_{1}$ for any QCA $\alpha$, which follows from the fact that zero-spread automorphisms of $\opMat(X)$ are one-layered quantum circuits. Thus, every QCA is in the same phase as any of its stabilizations, and for the analysis of equivalence classes it is enough to consider QCAs on $\opMat(X)$.

By the usual QCA arguments $\opQCA(X)$ forms an abelian group. We record this in the following lemma:

\begin{lemma}   \label{lma:QCACompositionInBrGr}
$\opQCA(X)$ is an abelian group with the product of equivalence classes induced by composition.
\end{lemma}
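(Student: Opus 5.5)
We have to show three things: the operation induced by composition is well defined on classes, it is associative with a unit and inverses, and it is commutative. As noted just before the statement, every QCA is in the same phase as its stabilizations, so we may work with QCAs on $\opMat(X)$ throughout.

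\textbf{Well-definedness.} Define $[\alpha]\cdot[\beta] := [\alpha\beta]$ for QCAs $\alpha,\beta$ on $\opMat(X)$. Suppose $\gamma\tilde\alpha = \tilde\alpha'$ and $\delta\tilde\beta = \tilde\beta'$ for circuits $\gamma,\delta$ and stabilizations. Using the freedom in choosing the zero-spread isomorphisms $\sigma$, we can first arrange that both stabilizations are taken with respect to a common $\sigma$; the different choices differ by zero-spread automorphisms, which are one-layer circuits. Then
\[
\tilde\alpha'\tilde\beta' = \gamma\tilde\alpha\delta\tilde\beta = \gamma(\tilde\alpha\delta\tilde\alpha^{-1})\tilde\alpha\tilde\beta .
\]
The key observation is that conjugating a circuit by a QCA of spread $l$ gives a circuit. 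Indeed, $\tilde\alpha\,\Ad_u\,\tilde\alpha^{-1} = \Ad_{\tilde\alpha(u)}$, and the gate $\tilde\alpha(u)$ is supported on $U^{+l}$. A layer of disjoint gates therefore becomes a family of gates with supports of bounded diameter that may now overlap. Because $X$ is uniformly locally finite, this family can be recoloured into finitely many layers of disjoint gates, and since the gates within the original layer still commute, the product is unchanged. Hence $\tilde\alpha\delta\tilde\alpha^{-1}$ is a circuit, and circuits are closed under composition. Finally, $\tilde\alpha\tilde\beta$ is a stabilization of $\alpha\beta$, so $[\alpha'\beta'] = [\alpha\beta]$.

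\textbf{Group structure.} Associativity is inherited from composition, the identity class is the unit, and $[\alpha^{-1}]$ is inverse to $[\alpha]$, since $\alpha^{-1}$ is again a QCA because $\alpha$ is a bounded spread isomorphism.

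\textbf{Commutativity.} This is the main point, and I would use the standard swap trick. Since each QCA is equivalent to its stabilization, $[\alpha\beta] = [(\alpha\otimes\Id)(\Id\otimes\beta)]$, with both QCAs placed on $\opMat(X)\otimes\opMat(X)\cong\opMat(X)$ via a zero-spread isomorphism. The factors $\alpha\otimes\Id$ and $\Id\otimes\beta$ commute. It remains to show that $\alpha\otimes\Id \sim \Id\otimes\alpha$, which gives
\[
[\alpha][\beta] = [\alpha\otimes\beta] = [\beta][\alpha].
\]
For this, let $S$ be the swap automorphism of $\opMat(X)\otimes\opMat(X)$. Then $S(\alpha\otimes\Id)S^{-1} = \Id\otimes\alpha$. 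The swap is zero-spread: on each site it is the automorphism $\Ad$ of the flip unitary on $\hilb H_j\otimes\hilb H_j$, which is even in the graded case. So $S$ is a one-layer circuit. Consequently
\[
\Id\otimes\alpha = S(\alpha\otimes\Id)S^{-1} = \bigl(S(\alpha\otimes\Id)S^{-1}(\alpha\otimes\Id)^{-1}\bigr)(\alpha\otimes\Id),
\]
and the bracketed factor is a circuit by the conjugation observation above. Hence $\alpha\otimes\Id\sim\Id\otimes\alpha$.

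The main technical obstacle is the conjugation lemma: re-layering the conjugated gates needs a uniform bound on how many gate supports can meet a given one, which is exactly what uniform local finiteness supplies. A secondary point in the fermionic case is to check that the site swap is implemented by an even unitary, so that it is a genuine graded circuit.
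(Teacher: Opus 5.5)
Your proof is correct and follows the paper's swap trick: the (super)swap $\tau$ is a one-layer circuit, so $\alpha\otimes\beta=(\alpha\otimes\Id)\tau(\beta\otimes\Id)\tau=\tau(\beta\otimes\Id)\tau(\alpha\otimes\Id)$ lies in the class of both $\alpha\beta$ and $\beta\alpha$. The paper leaves well-definedness of the product to ``the usual QCA arguments''; your lemma that conjugating a circuit by a QCA gives a circuit (by re-layering, using uniform local finiteness) supplies exactly the normality step that the paper's chain of $\sim$'s uses implicitly.
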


\begin{proof} 
Let $\tau$ be a (super)swap automorphism of $\opMat(X,s)\otimes \opMat(X,t)$ defined by $\tau(a \otimes b) = b \otimes a$. It is a one-layered quantum circuit. Hence, for any zero-spread isomorphism $\sigma:\opMat(X,s \cdot t)\cong \opMat(X,s)\otimes \opMat(X,t)$, where $(s \cdot t)(x) = s(x) t(x)$, we have

$$\beta \alpha\sim  \sigma^{-1}( \tau(\beta\otimes \Id) \tau (\alpha\otimes \Id ))\sigma=\sigma^{-1}(\alpha\otimes \beta)\sigma=\sigma^{-1}((\alpha\otimes\Id) \tau(\beta\otimes\Id)\tau)\sigma\sim \alpha\beta.$$

\end{proof}

\begin{prop} \label{prop:QCAs_on_a_half-space_are_QC}
For a uniformly locally finite metric space $X$, we have $\opQCA(X\times \ZZ_{\ge 0}) = 0$.
\end{prop}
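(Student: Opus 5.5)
The plan is an Eilenberg swindle along the $\ZZ_{\ge 0}$ direction. The infinite-dimensional local algebras let us pack infinitely many copies of the system into the half-space while keeping spread bounded. By the remarks after Definition \ref{def:QCA}, it suffices to treat a QCA $\alpha$ of spread $l$ on $\opMat(X\times\ZZ_{\ge0})$.

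\emph{Step 1 (the $\alpha\otimes\alpha^{-1}$ trick).} First show that $\alpha\otimes\alpha^{-1}$ on $\opMat\otimes\opMat$ is a quantum circuit, using the (super)swap $\tau$, which is a one-layer circuit. Write
\[\alpha\otimes\alpha^{-1}=\bigl[(\alpha\otimes\Id)\,\tau\,(\alpha^{-1}\otimes\Id)\bigr]\,\tau .\]
The bracketed automorphism is the product over sites $p$ of $\Ad$ of the conjugated local swaps $(\alpha\otimes\Id)(\tau_p)$. These are even unitaries supported in $\{p\}^{+l}$, and they commute with one another. By uniform local finiteness of $X\times\ZZ_{\ge0}$, the sites can be colored with boundedly many colors so that same-colored supports are disjoint. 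This makes the bracket a finite-depth circuit.

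\emph{Step 2 (packing infinitely many copies).} Each local Hilbert space is infinite-dimensional, so $\hilb H\cong\bigotimes_{k=0}^{n}\hilb H$ for every finite $n$. Using this, fix a zero-spread isomorphism
\[\sigma:\opMat(X\times\ZZ_{\ge0})\to\bigotimes_{k\ge0}\opMat(X\times\ZZ_{\ge k}),\]
in which the site $(x,n)$ carries one factor for each $k\le n$. Copy $k$ is a quasi-local algebra over $X\times\ZZ_{\ge k}$. Let $\alpha_k$ denote $\alpha$ transported to copy $k$ by the translation $n\mapsto n+k$; it has spread $l$, uniformly in $k$. Define the QCA
\[\beta=\sigma^{-1}\Bigl(\bigotimes_{k\ge0}\alpha_k^{(-1)^k}\Bigr)\sigma .\]
Then compare $\beta$ and the stabilization $\alpha\otimes\beta$ under two groupings of the factors. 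Grouping copies $(2j,2j+1)$ displays $\beta$ as a product of factors of the form $\alpha_{2j}\otimes\alpha_{2j+1}^{-1}$. Grouping $(\alpha,\alpha_0^{-1})$ and then $(\alpha_{2j+1},\alpha_{2j+2}^{-1})$ displays $\alpha\otimes\beta$ the same way.

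\emph{Step 3 (assembling the circuits).} Suppose each pair factor $\alpha_k\otimes\alpha_{k+1}^{-1}$ is a circuit of uniformly bounded depth and gate diameter. Gates from different copies acting on the same bounded region can be merged into a single gate, since only a finite number ($\le n+1$) of copies live at height $n$. So the infinite tensor products in both groupings are circuits, $\beta\sim 1$, and $\alpha\otimes\beta\sim1$. Since $\alpha\sim\alpha\otimes\beta$ after stabilization, this gives $\alpha\sim 1$.

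\emph{Main obstacle.} The main obstacle is that Step 1 only handles $\alpha_k\otimes\alpha_k^{-1}$ when both factors live over the same region. In Step 2 the neighbors sit over $X\times\ZZ_{\ge k}$ and $X\times\ZZ_{\ge k+1}$ and differ by a translation, which is bounded spread but not a circuit. My first attempt is to avoid translated copies altogether. In that variant, copy $k$ still lives over all of $X\times\ZZ_{\ge0}$, but the local factor at $(x,n)$ is trivial ($\CCC$) for $n<k$, so copy $k$ is a \emph{subsystem} supported on $X\times\ZZ_{\ge k}$. Then the question becomes whether $\alpha$ restricted to $X\times\ZZ_{\ge k}$ makes sense; I would try to make it so by re-identifying the region near the cut $n=k$ with a zero-spread stabilization, so that the pairs become honest $\alpha\otimes\alpha^{-1}$. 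If that fails, I would instead conjugate by the translation $t$, writing $\alpha_{k+1}=t\,\alpha_k\,t^{-1}$. The aim there is to show that $t$ itself is stably a circuit on the half-space. This uses the infinite local dimension: shifting by one site on a half-line can be absorbed into a zero-spread re-identification of $\hilb H_{n}\otimes\hilb H_{n+1}\cong\hilb H_{n+1}$, which is exactly where the half-space boundary is used.
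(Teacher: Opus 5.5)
Your skeleton is exactly the paper's swindle: you pack the translated copies $\alpha_k$ into the half-space via $\sigma$ and compare two groupings. As written, though, the proof is incomplete at the step that carries the weight. Step~3 is only conditional on each pair factor $\alpha_k\otimes\alpha_{k+1}^{-1}$ being a circuit of uniformly bounded depth and gate size, and you leave this open.

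Neither fallback would close it. There is no ``restriction of $\alpha$ to $X\times\ZZ_{\ge k}$'': $\alpha$ has spread $l$ and does not preserve $\opMat(X\times\ZZ_{\ge k})$. So any automorphism of a subsystem supported on $X\times\ZZ_{\ge k}$ that plays the role of $\alpha$ is just the translated copy $\alpha_k$ again. The translation is also not an automorphism of the half-space algebra, since it maps $\opMat(X\times\ZZ_{\ge0})$ onto $\opMat(X\times\ZZ_{\ge1})$. So ``$t$ is stably a circuit'' is not a meaningful target, and it is not needed.

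The missing observation is that the translation is applied to \emph{both} members of a pair, so only the relative offset between neighbouring copies matters. Let $T_j:\opMat(X\times\ZZ_{\ge0})\to\opMat(X\times\ZZ_{\ge j})$ be the translation isomorphism and $\Psi_k=T_k\otimes T_{k+1}$. Then $\alpha_k\otimes\alpha_{k+1}^{-1}=\Psi_k(\alpha\otimes\alpha^{-1})\Psi_k^{-1}$, so conjugating your Step~1 circuit by $\Psi_k$ gives a circuit with gates $\Psi_k(g)$. A gate $g\in\opMat(U)\otimes\opMat(U)$ goes to $\opMat(U+k)\otimes\opMat(U+k+1)$. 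Its support in the base $X\times\ZZ_{\ge0}$ of the packed algebra is $(U+k)\cup(U+k+1)$, which has diameter at most $\operatorname{diam}U+1$ independently of $k$. After a bounded recolouring, so that gates in one layer have disjoint base supports, every pair factor is a circuit of the same depth with spread increased by one. This is the ``shear'' of Fig.~\ref{fig:qcahalfspaceswindle}: stack untranslated rows, apply $\alpha^{-1}\otimes\alpha$ to pairs of rows, shear $(m,k)\mapsto(m+k,k)$ and project. That is where the paper's spread $l+1$ comes from.

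Two smaller fixes:
\begin{itemize}
\item When merging in Step~3, the number of copies over a point is finite but unbounded in the height, so "finitely many copies" does not by itself give bounded depth. Instead, assign each gate to one point of its support and multiply the finitely many commuting gates assigned to each point. Then colour the base points with boundedly many colours.
\item Your signs are inconsistent. With exponent $(-1)^k$, copy $0$ carries $\alpha_0$, so the second grouping of $\alpha\otimes\beta$ begins with $(\alpha,\alpha_0)$, not $(\alpha,\alpha_0^{-1})$. Use exponent $(-1)^{k+1}$ as the paper does; the pairs $\alpha_{2j}^{-1}\otimes\alpha_{2j+1}$ are then handled identically.
\end{itemize}
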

\begin{proof}
Without loss of generality, we can assume that $\alpha$ is stable. Suppose $\alpha$ has spread $l$. By the proof of Lemma \ref{lma:QCACompositionInBrGr}, the automorphism $\alpha \otimes \alpha^{-1}$ on $\opMat(X\times \ZZ_{\ge0})\otimes \opMat(X\times \ZZ_{\ge 0})$ is a quantum circuit.

Let $\tilde{\ql{A}}$ be a quasi-local algebra over $X \times \ZZ_{\geq 0}$ that is defined by a formal composition of quasi-local algebras $\{ \opMat(X \times \ZZ_{\geq n}) \}_{n \in \ZZ_{\geq 0}}$. Since any bounded subset $U \subset X \times \ZZ_{\geq 0}$ overlaps with $\{X \times \ZZ_{\geq n}\}_{n \in \ZZ_{\geq 0}}$ only for finitely many $n$ and $B(\hilb{H})^{\otimes k} \cong B(\hilb{H})$ for any $k \in \NN$ and any separable Hilbert (super)space $\hilb{H}$ of infinite dimension (of superdimension $\infty|\infty$), we have a zero-spread isomorphism $\sigma: \tilde{\ql{A}} \to \opMat(X\times \ZZ_{\ge 0})$. 

Let $\alpha^{(n)}$ be QCAs on $\opMat(X \times \ZZ_{\geq n})$ naturally induced via shifts $(x,m) \to (x,m+n)$, $(x,m) \in X \times \ZZ_{\geq 0}$ by $\alpha$, and let $\beta$ be a QCA on $\tilde{\ql{A}}$ that is formally defined by an infinite tensor product $(\alpha^{(0)})^{-1} \otimes \alpha^{(1)} \otimes (\alpha^{(2)})^{-1} \otimes ...$. Since $\alpha \otimes \alpha^{-1}$ has spread $l$, a QCA $\alpha \otimes \beta$ on $\opMat(X\times \ZZ_{\ge 0}) \otimes \tilde{\ql{A}}$ is related to $\alpha \otimes \Id$ and $\Id \otimes \Id$ by a quantum circuit of spread $l+1$ (see Fig. \ref{fig:qcahalfspaceswindle}). Since $\tilde{\ql{A}} \cong \opMat(X\times \ZZ_{\ge 0})$ by a zero-spread isomorphism, $\alpha \otimes \Id$ is a quantum circuit.
\end{proof}

\begin{figure}
    \centering
    \includegraphics[width=10cm]{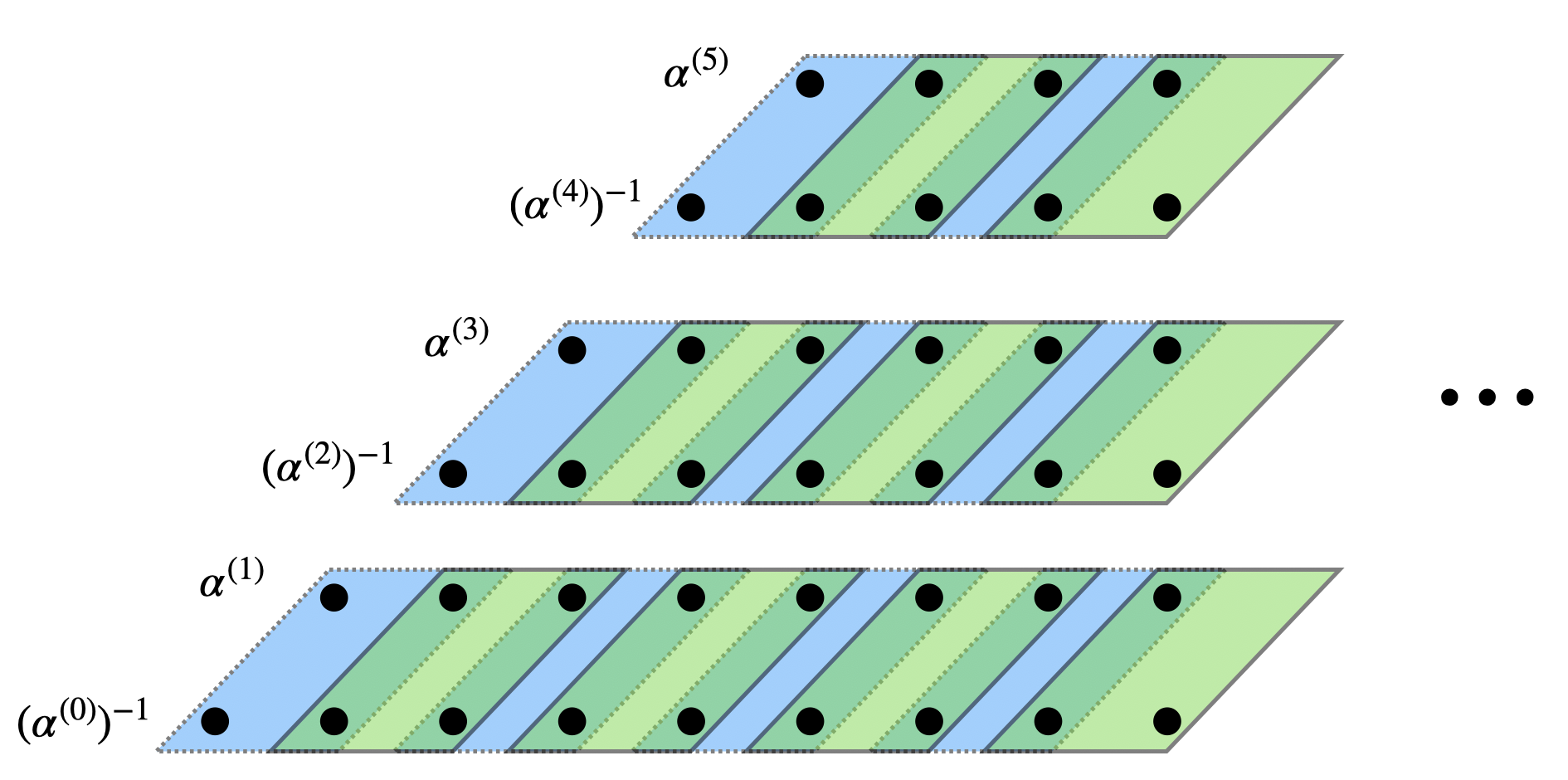}
    \caption{A depiction of the quantum circuit $\beta$ constructed in the proof of Proposition \ref{prop:QCAs_on_a_half-space_are_QC}, which shows that all QCA acting only in a half-space are quantum circuits after stabilization. One can think of this construction as first forming an infinite tower of rows, with the vertical direction denoting the ancillas, and then applying the circuit $C$ which on each pair $(2n,2n+1)$ produces $\alpha^{-1} \otimes \alpha$. Then applying a "shear" transformation we obtain a circuit on the wedge above. Projection to the horizontal axis (the $\ZZ$ factor of $X \times \ZZ$) groups a finite amount of points in the vertical direction for each point of $\ZZ$. Thus, we can regard the resulting circuit as defined in a product algebra over $X \times \ZZ_{\ge 0}$.}
    \label{fig:qcahalfspaceswindle}
\end{figure}

There is a useful criterion for two QCAs to be in the same phase for uniformly locally finite metric spaces of the form $X \times \ZZ$.

\begin{definition}\label{definblendequivalence}
Given two QCAs $\alpha,\beta$ on $\opMat(X\times \ZZ)$, a {\it blend from $\alpha$ to $\beta$} is another QCA $\gamma$ on $\opMat(X\times \ZZ)$ such that there is an interval $I = [a,b] \cap \ZZ$ so that

\medskip

    \begin{enumerate}
        \item $\gamma(a)=\alpha(a)$ for any $a \in \opMat(X\times \ZZ_{<a})$,

\medskip
        
        \item $\gamma(a)=\beta(a)$ for any $a \in \opMat(X\times \ZZ_{>b})$.
    \end{enumerate}

    \medskip

\end{definition}

\begin{prop}\label{prop:circuitandblendequiv}
    Let $\alpha$, $\beta$ be two QCAs on $\opMat(X \times \ZZ)$. There exists a blend from $\alpha$ to $\beta$ if and only if $\alpha$ and $\beta$ are in the same phase.
\end{prop}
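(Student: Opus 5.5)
We prove the two directions separately. Throughout, by stabilization and the remark after Definition \ref{def:QCA}, we may assume $\alpha$, $\beta$ are stable QCAs on $\opMat(X\times\ZZ)$.

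($\Leftarrow$) Suppose $\gamma\tilde\alpha=\tilde\beta$ for a quantum circuit $\gamma$ and stabilizations $\tilde\alpha,\tilde\beta$. Write $\gamma=\beta_1\cdots\beta_n$ with layers $\beta_k=\prod_a \Ad_{u^{(a)}_k}$. Let $\gamma'$ be the truncated circuit keeping only those gates whose supports meet $X\times\ZZ_{\ge 0}$. Since each layer has gates of uniformly bounded diameter $D$, $\gamma'$ is again a quantum circuit, hence a QCA. It agrees with $\Id$ on $\opMat(X\times \ZZ_{<-nD})$ and with $\gamma$ on $\opMat(X\times\ZZ_{>nD})$; this uses the finite depth, so an operator far to one side only sees gates that are all kept, or all discarded.

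Then $\gamma'\tilde\alpha$ agrees with $\tilde\alpha$ far to the left and with $\tilde\beta$ far to the right, using the bounded spread of $\tilde\alpha$ to move supports by at most $l$. Conjugating by the zero-spread isomorphisms $\sigma$ relating $\tilde\alpha$ to $\alpha\otimes\Id$ (and using stability), we obtain a blend between $\alpha$ and $\beta$ on $\opMat(X\times\ZZ)$ itself. This direction is routine.

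($\Rightarrow$) Let $\gamma$ be a blend from $\alpha$ to $\beta$ on the interval $[a,b]$.
\begin{enumerate}
\item Consider $\delta_+:=\gamma\alpha^{-1}$. It is the identity on $\opMat(X\times\ZZ_{<a-l})$, where $l$ bounds the spread of $\alpha^{-1}$. Hence $\delta_+$ restricts to, and preserves, the commutant algebra $\opMat(X\times\ZZ_{\ge a-l'})$ for suitable $l'$. The preservation follows from the supercommutant description: $\delta_+$ fixes the left part, so it maps its commutant to itself, and the commutant of $\opMat(X\times \ZZ_{<c})$ is $\opMat(X\times\ZZ_{\ge c})$ by Lemma \ref{lma:gekadison} applied locally.
\item So $\delta_+$ is, up to shifting, a QCA on a half-space $X\times\ZZ_{\ge 0}$. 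By Proposition \ref{prop:QCAs_on_a_half-space_are_QC} it is a quantum circuit after stabilization, so $\gamma\sim\alpha$.
\item Symmetrically, $\delta_-:=\gamma\beta^{-1}$ is the identity on $X\times\ZZ_{>b+l}$ and is a QCA on the left half-space. Hence $\gamma\sim\beta$.
\item Transitivity of $\sim$ then gives $\alpha\sim\beta$. Here one uses that $\sim$ is an equivalence relation and that composition with circuits respects stabilization.
\end{enumerate}

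The main technical obstacle is the first step. Proposition \ref{prop:QCAs_on_a_half-space_are_QC} is stated for QCAs on $\opMat(X\times\ZZ_{\ge0})$ viewed over the half-space, whereas $\delta_+$ is a priori an automorphism of the whole algebra over $X\times\ZZ$. We must check two things. First, $\delta_+$ has bounded spread as an automorphism of the half-space algebra (the restriction of $\delta_+^{-1}$ also preserves it). Second, the resulting circuit on the half-space, extended by the identity on the left, is a circuit on $\opMat(X\times\ZZ)$ compatible with the stabilizations. The latter holds because stabilizing on $X\times\ZZ_{\ge0}$ by $\opMat(X\times\ZZ_{\ge0})$ can be matched with stabilizing on all of $X\times\ZZ$ by a zero-spread isomorphism, using $B(\hilb{H})\otimes B(\hilb{H})\cong B(\hilb{H})$ sitewise.
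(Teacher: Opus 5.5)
Your ($\Rightarrow$) direction follows the paper's proof. After reducing to $\beta=\Id$, the paper writes $\alpha=\gamma_R\gamma_L$, where $\gamma_L=\gamma$ (your $\gamma\beta^{-1}$) is a left-half-space QCA and $\gamma_R=\alpha\gamma^{-1}$ (your $\delta_+^{-1}$) is a right-half-space QCA, and it applies Proposition \ref{prop:QCAs_on_a_half-space_are_QC} to both. Your check that $\delta_\pm$ restrict to bounded spread automorphisms of the half-space algebras is fine. Since $\delta_\pm^{\pm 1}$ have bounded spread, you only need the local relative commutant $\opMat(U\cap L)'\cap\opMat(U)=\opMat(U\setminus L)$, not the commutant in the full $C^*$-algebra. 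The truncation in ($\Leftarrow$) also matches the paper.

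The gap is your opening reduction to stable $\alpha,\beta$. The ($\Rightarrow$) direction never uses it. In ($\Leftarrow$), however, the statement asks for a blend on $\opMat(X\times\ZZ)$ for the given $\alpha,\beta$, and passing to stabilizations changes that question. Not every QCA is stable. For the shift $s$ on $\opMat(\ZZ)$ and any zero-spread $\sigma_1,\sigma_2$, the map $\sigma_2^{-1}(s\otimes\Id)\sigma_1$ sends $\opMat(\{x\})$ onto an algebra containing non-scalar elements of $\opMat(\{x\})$, because the ancilla stays put. But $s$ sends it onto $\opMat(\{x+1\})$, so $s\neq\sigma_2^{-1}(s\otimes\Id)\sigma_1$.

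So ``using stability'' leaves the descent step unproved, and that is exactly the step the paper supplies:
\begin{enumerate}
\item Conjugate your blend on the stabilizations by a single zero-spread isomorphism. Zero-spread isomorphisms are on-site, so you can glue $\sigma_\alpha$ on the left to $\sigma_\beta$ on the right. This gives a blend $\hat\gamma$ from $\alpha\otimes\Id$ to $\beta\otimes\Id$.
\item For large $n$, $\hat\gamma$ fixes $1\otimes\opMat(X\times(\ZZ\setminus[-n,n]))$ pointwise. Hence it preserves the supercommutant of that algebra, whose local algebras are $\opMat(U)\otimes\opMat(U\cap(X\times[-n,n]))$.
\item Identify this subalgebra with $\opMat(X\times\ZZ)$ using a zero-spread isomorphism $\opMat(X\times[-n,n])^{\otimes 2}\cong\opMat(X\times[-n,n])$ on the strip. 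Under this identification $\hat\gamma$ becomes a blend from $\alpha$ to $\beta$.
\end{enumerate}
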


\begin{proof}
    Note that $\alpha$ and $\beta$ are in the same phase if and only if $\alpha \beta^{-1}$ is in a trivial phase, and existence of a blend from $\alpha$ to $\beta$ is equivalent to existence of a blend from $\alpha \beta^{-1}$ to the identity. Therefore, without loss of generality, we can assume $\beta = \Id$.

    Suppose $\alpha$ is in a trivial phase. Then the stabilization $\alpha\otimes \text{Id}$ of $\alpha$ is a quantum circuit automorphism of some spread $l$. Let $\gamma$ be a quantum circuit obtained by replacing all the gates with the support not strictly inside $X \times \ZZ_{< 0}$ by the identity element. This quantum circuit defines a blend from $\alpha\otimes \text{Id}_{\opMat(X \times \ZZ)}$ on the left to $\text{Id}_{\opMat(X \times \ZZ)} \otimes \text{Id}_{\opMat(X \times \ZZ)}$ on the right. Note that for some $n \in \NN$, $\gamma$ acts as $\alpha \otimes \Id$ on $X \times \ZZ_{<-n}$ and as $\Id \otimes \Id$ on $X \times \ZZ_{> n}$. Therefore, by choosing a zero-spread isomorphism $\kappa: \opMat(X \times I) \cong \opMat(X \times I) \otimes \opMat(X \times I)$ for $I = [-n,n] \cap \ZZ$, we naturally get a blend from $\alpha$ to $\Id$ on $\opMat(X \times \ZZ)$.

    For the other direction, suppose $\alpha$ has spread $l$. Let $\gamma_{L}$ be a blend from $\alpha$ to the identity automorphism, which acts as $\alpha$ on $X\times \ZZ_{< -n} $ and as the identity on $X\times \ZZ_{> n}$ for some $n \in \NN$. The automorphism $\gamma_R = \alpha \gamma_L^{-1}$ acts as the identity on $X\times \ZZ_{< -(n+l)} $ and as $\alpha$ on $X\times \ZZ_{> (l+n)}$. By Proposition \ref{prop:QCAs_on_a_half-space_are_QC}, $\gamma_L \otimes \Id$ and $\gamma_R \otimes \Id$ are quantum circuit automorphisms. Hence, $\alpha \otimes \Id$ is a quantum circuit automorphism.
\end{proof}

\subsection{Invertible Quasi-Local Algebras and QCAs}\label{subsecboundaryalgebras}

The goal of this section is to show that there is a canonical isomorphism between the group of phases of QCAs over uniformly locally finite metric spaces of the form $X \times \ZZ$ and the Brauer group of invertible quasi-local algebras over $X$, generalizing the results of \cite{Haah2023InvertibleSubalgebras} beyond the context of nets of finite-dimensional $*$-algebras. This result, together the construction of Brauer non-trivial invertible quasi-local algebras over $\ZZ$ defined in Sections~\ref{sec:invertiblefreefermions} and Section~\ref{sec:bosonicholomorphic}, provides examples of non-trivial QCAs over $\ZZ^2$.

Given a QCA on $\opMat(X \times \ZZ)$ for a uniformly locally finite metric space $X$, there is a natural way to associate a quasi-local algebra.

\begin{definition}  \label{def:bdryalgQCA}
Let $X$ be a uniformly locally finite metric space, and let $\alpha$ be a QCA on $\opMat(X \times \ZZ)$. For $l \in \NN$ and $a,b,n \in \ZZ$ such that $\alpha$ has spread $l$ and $a < n-3l$, $n+3l < b$, the $(n,l,a,b)$-left boundary algebra associated with $\alpha$ is the quasi-local algebra $\ql{B}^{(n,l,a,b)}_L$ over $X$ defined via
\beq
(\ql{B}^{(n,l,a,b)}_L)_U :=  \alpha(\opMat(U^{+l} \times [a-l,n-1])) \cap \opMat(U \times [a,b]) ,
\eeq
We say that $\ql{B}_L$ is a left boundary algebra associated with $\alpha$ if it is zero-spread isomorphic to $\ql{B}^{(n,l,a,b)}_L$ for some $n,l,a,b$. Similarly, the $(n,l,a,b)$-right boundary algebra associated with $\alpha$ is the quasi-local algebra $\ql{B}^{(n,l,a,b)}_R$ defined via
\beq
(\ql{B}^{(n,l,a,b)}_R)_U := \alpha(\opMat(U^{+l} \times [n,b+l])) \cap \opMat(U \times [a,b]).
\eeq
We say that $\ql{B}_R$ is a right boundary algebra associated with $\alpha$ if it is zero-spread isomorphic to $\ql{B}^{(n,l,a,b)}_R$ for some $n,l,a,b$.
\end{definition}

\begin{figure}
\centering
\includegraphics[width=10cm]{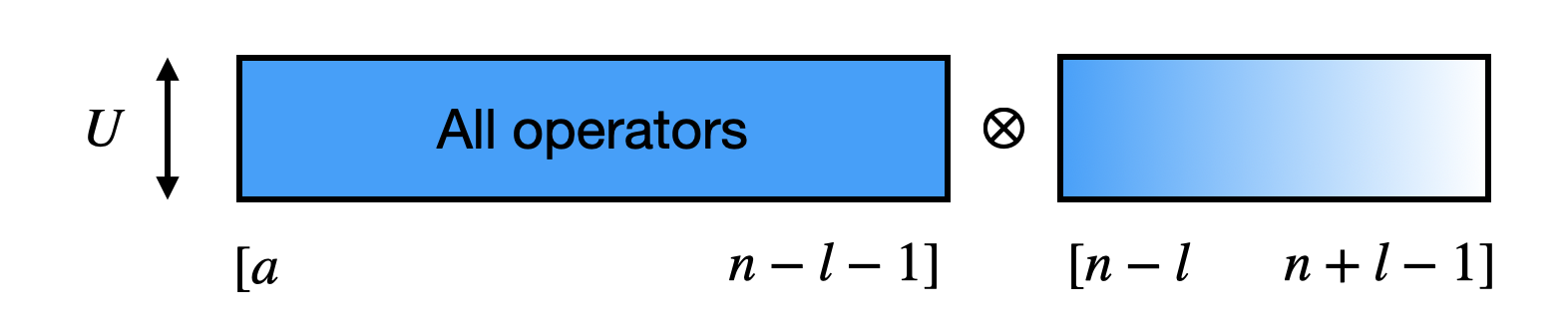}
\caption{A picture of a local piece of the left boundary algebra $\ql{B}^{(n,l,a,b)}_L$ for a bounded subset $U \subset X$. The algebra $\ql{B}^{(n,l,a,b)}_L$ is supported in $U \times [a,n+l-1]$ and contains all observables supported in $U \times [a,n-l-1]$. By Lemma \ref{lma:BLRdecomp}, $\ql{B}^{(n,l,a,b)}_L$ is a tensor product of this algebra and its subalgebra supported in $U\times [n-l,n+l-1]$. With this definition, the width of the right (latter) piece depends on $l$, while the width of the left (former) piece may be chosen by us and acts as a buffer region for proofs that we choose to be $a < n-3l$.}
\label{fig:boundary-algebra-structure}
\end{figure}

\begin{lemma}   \label{lma:BLRdecomp}
For boundary algebras $\ql{B}^{(n,l,a,b)}_{L,R}$ associated with a given QCA $\alpha$ on $\opMat(X \times \ZZ)$, we have
\beq    \label{eq:BLdecomp}
(\ql{B}^{(n,l,a,b)}_L)_U = \opMat(U\times [a,n-l-1]) \otimes (\ql{C}^{(n,l)}_L)_U,
\eeq
\beq
(\ql{B}^{(n,l,a,b)}_R)_U = \opMat(U\times [n+l,b]) \otimes (\ql{C}^{(n,l)}_R)_U,
\eeq
where $\ql{C}^{(n,l)}_{L,R}$ are quasi-local algebras defined via
\beq
(\ql{C}^{(n,l)}_L)_U := \alpha(\opMat(U^{+l} \times [n-2l,n-1])) \cap \opMat(U \times [n-l,n+l-1]),
\eeq
\beq
(\ql{C}^{(n,l)}_R)_U := \alpha(\opMat(U^{+l} \times [n,n+2l-1])) \cap \opMat(U \times [n-l,n+l-1]).
\eeq
\end{lemma}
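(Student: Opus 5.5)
The plan is to prove the decomposition for $\ql{B}^{(n,l,a,b)}_L$ and get the right boundary algebra by the mirror-image argument, reflecting $\ZZ$. Fix a bounded $U \subset X$ and write $\ql{M}_1 := \opMat(U\times[a,n-l-1])$ and $\ql{M}_2 := \opMat(U \times [n-l,n+l-1])$. I will use throughout one elementary fact about product algebras: $\opMat(V)\cap\opMat(W)=\opMat(V\cap W)$ for finite $V,W$. To prove it, write $\opMat(V\cup W)=\opMat(V\cap W)\otimes\opMat(V\setminus W)\otimes\opMat(W\setminus V)$. An element $x$ of the intersection lies both in $\opMat(V)=\opMat(V\cap W)\otimes\opMat(V\setminus W)$ and in $\opMat(V\cap W)\otimes\opMat(W\setminus V)$. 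Applying the slice-map characterization, Proposition \ref{prop:superslice}, in each of the two outer factors then forces $x\in\opMat(V\cap W)$.

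The first step is to place the local algebra of the boundary algebra between $\ql{M}_1$ and $\ql{M}_1\otimes\ql{M}_2$.
\begin{itemize}
\item Since $\alpha$ has spread $l$, $\alpha(\opMat(U^{+l}\times[a-l,n-1]))\subseteq\opMat(U^{+2l}\times[a-2l,n+l-1])$. Intersecting with $\opMat(U\times[a,b])$ and using the fact above gives $(\ql{B}^{(n,l,a,b)}_L)_U\subseteq\opMat(U\times[a,n+l-1])=\ql{M}_1\otimes\ql{M}_2$.
\item Since $\alpha^{-1}$ also has spread $l$, $\alpha^{-1}(\ql{M}_1)\subseteq\opMat(U^{+l}\times[a-l,n-1])$, so $\ql{M}_1\subseteq\alpha(\opMat(U^{+l}\times[a-l,n-1]))$. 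Trivially $\ql{M}_1\subseteq\opMat(U\times[a,b])$, and $[a,n-l-1]$ is nonempty because $a<n-3l$. Hence $\ql{M}_1\otimes 1\subseteq(\ql{B}^{(n,l,a,b)}_L)_U$.
\end{itemize}
Because $\ql{M}_1$ is a type $\rm I$ (super)factor, Lemma \ref{lma:gekadison} then gives
\[(\ql{B}^{(n,l,a,b)}_L)_U=\ql{M}_1\otimes\big((\ql{B}^{(n,l,a,b)}_L)_U\cap\ql{M}_2\big).\]

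It remains to identify the second factor with $(\ql{C}^{(n,l)}_L)_U$, which I expect to be the only step needing care. Since $\ql{M}_2\subseteq\opMat(U\times[a,b])$, the second factor equals $\alpha(\opMat(U^{+l}\times[a-l,n-1]))\cap\ql{M}_2$.
\begin{itemize}
\item The inclusion $\supseteq (\ql{C}^{(n,l)}_L)_U$ is immediate, since $[n-2l,n-1]\subseteq[a-l,n-1]$.
\item For $\subseteq$, take $x$ in this algebra. Then $\alpha^{-1}(x)\in\opMat(U^{+l}\times[a-l,n-1])$ by assumption. Also $\alpha^{-1}(x)\in\opMat(U^{+l}\times[n-2l,n+2l-1])$, because $x\in\ql{M}_2$ and $\alpha^{-1}$ has spread $l$.
\item By the intersection fact, $\alpha^{-1}(x)\in\opMat(U^{+l}\times[n-2l,n-1])$, using $a-l\le n-2l$. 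So $x\in(\ql{C}^{(n,l)}_L)_U$.
\end{itemize}

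Finally, I will check that $U\mapsto(\ql{C}^{(n,l)}_L)_U$ is indeed a quasi-local algebra over $X$. It is monotone in $U$ and consists of von Neumann superalgebras, being an intersection of two such. Algebras over disjoint $U,V$ supercommute because they sit in $\opMat(U\times\cdot)$ and $\opMat(V\times\cdot)$ respectively. The grading is compatible because $\alpha$ is graded, so all intersections are supersubalgebras.
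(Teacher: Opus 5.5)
Your proof is correct and follows essentially the same route as the paper. Both arguments use the spread of $\alpha^{-1}$ to place $\opMat(U\times[a,n-l-1])$ inside the boundary algebra, split it off with Lemma \ref{lma:gekadison}, and then use the spread bounds to identify the complementary factor with $(\ql{C}^{(n,l)}_L)_U$.

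The differences are cosmetic. You first confine the boundary algebra to $\opMat(U\times[a,n+l-1])$ and identify the second factor by intersecting the two supports of $\alpha^{-1}(x)$. The paper instead phrases that last step via supercommutation, and your version is if anything the more precise one: the paper's stated commutation with $\alpha(\opMat(U\times[a,n-2l-1]))$ alone does not suffice, but yours covers the whole region $U^{+l}\times[a-l,n-2l-1]$.
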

\begin{proof}
Since $\alpha^{-1}$ has spread $l$, for any bounded $U \subset X$ we have $\opMat(U \times [a,n-l-1]) \subset (\ql{B}^{(n,l,a,b)}_L)_U$. Therefore, by Lemma \ref{lma:gekadison}, we have 
\begin{multline}    
(\ql{B}^{(n,l,a,b)}_L)_U = \opMat(U\times [a,n-l-1]) \otimes \l (\ql{B}^{(n,l,a,b)}_L)_U \cap \opMat(U\times [n-l,b]) \r = \\
= \opMat(U\times [a,n-l-1]) \otimes \l \alpha(\opMat(U^{+l} \times [a-l,n-1])) \cap \opMat(U \times [n-l,b]) \r.
\end{multline}
Any element $x \in \alpha(\opMat(U^{+l} \times [a-l,n-1])) \cap \opMat(U \times [n-l,b])$ (super)commutes with $\opMat(U \times [n+l,b])$ and with $\alpha(\opMat(U \times [a,n-2l-1]))$. Hence, $x \in (\ql{C}^{(n,l)}_L)_U$.

The case of the right boundary algebras $\ql{B}^{(n,l,a,b)}_R$  can be treated similarly.
\end{proof}

The following lemma shows that for QCAs on $\opMat(X \times \ZZ)$ there is essentially unique left/right boundary algebra associated with it.
\begin{lemma} \label{lma:boundaryalgebrasarebruaerequivalent}
The quasi-local algebras $\ql{B}^{(n,l,a,b)}_{L,R}$ associated with a given QCA $\alpha$ on $\opMat(X \times \ZZ)$ for different values of $(n,l,a,b)$ are bounded spread isomorphic.
\end{lemma}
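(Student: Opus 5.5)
The plan is to reduce the dependence on the four parameters one at a time. Every comparison will be a chain of bounded spread isomorphisms, built from the decomposition of Lemma \ref{lma:BLRdecomp} and the absorbing property of $\opMat(X)$. I treat the left boundary algebras; the right ones are handled by the mirror-image argument.

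\emph{Dependence on $a$ and $b$.} By Lemma \ref{lma:BLRdecomp},
$(\ql{B}^{(n,l,a,b)}_L)_U = \opMat(U\times[a,n-l-1])\otimes(\ql{C}^{(n,l)}_L)_U$, and $\ql{C}^{(n,l)}_L$ does not involve $a$ or $b$. Since $n-l-a\ge 2l\geq 1$, each fiber $\bigotimes_{k\in[a,n-l-1]}\hilb{H}_{(x,k)}$ is a separable infinite-dimensional Hilbert (super)space. So the quasi-local algebra $U\mapsto \opMat(U\times[a,n-l-1])$ over $X$ is zero-spread isomorphic to $\opMat(X)$ for every admissible $a$. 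Hence $\ql{B}^{(n,l,a,b)}_L\cong \opMat(X)\otimes \ql{C}^{(n,l)}_L$ by a zero-spread isomorphism, independently of $a,b$. The same formula shows $\ql{B}^{(n,l,a,b)}_L\otimes\opMat(X)\cong\ql{B}^{(n,l,a,b)}_L$, i.e. it is stable. I will use this freely to move $a$ further left whenever a buffer is needed.

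\emph{Dependence on $n$.} I compare $n$ with $n+1$. Let $\ql{E}$ be the quasi-local algebra over $X$ with $\ql{E}_U=\alpha(\opMat(U\times\{n\}))$; via $\alpha$ it is zero-spread isomorphic to $\opMat(X)$. Consider the multiplication map
\[\mu:\ql{B}^{(n,l,a,b)}_L\otimes\ql{E}\to\ql{B}^{(n+1,l,a',b)}_L ,\]
where $a'$ is chosen so that the target contains what is needed. We have $\ql{E}_U\subset\alpha(\opMat(U^{+2l}\times[a-l,n]))\cap\opMat(U^{+l}\times[n-l,n+l])$, so $\mu$ has spread about $l$. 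It is injective because it is $\alpha$ applied to a tensor product of algebras on disjoint supports.

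For bounded spread of the inverse, take $y\in(\ql{B}^{(n+1,l,a,b)}_L)_U$. Then $\alpha^{-1}(y)\in\opMat(U^{+l}\times[a-l,n-1])\otimes\opMat(U^{+l}\times\{n\})$. Applying Lemma \ref{lma:gekadison} and the slice maps of Proposition \ref{prop:superslice} in the $\{n\}$-factor, I will show that $y$ lies in the von Neumann algebra generated by $\alpha(\opMat(U^{+l}\times[a-l,n-1]))\cap\opMat(U^{+2l}\times[a-2l,b])$ and $\ql{E}_{U^{+l}}$. The first of these is a local algebra of $\ql{B}^{(n,l,a-2l,b)}_L$; this is where the freedom in $a$ is used. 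Lemma \ref{bsinverse} then gives a bounded spread isomorphism
\[\ql{B}^{(n+1,l,a,b)}_L\cong\ql{B}^{(n,l,a-2l,b)}_L\otimes\opMat(X)\cong\ql{B}^{(n,l,a,b)}_L ,\]
by stability and the $a$-independence above. Iterating in $n$ in both directions handles arbitrary $n$.

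\emph{Dependence on $l$.} For $l'>l$, both $l$ and $l'$ are admissible spreads. The same slicing argument applies with the enlarged thickening $U^{+l'}$ in place of $U^{+l}$: it shows that $\ql{B}^{(n,l',a,b)}_L$ and $\ql{B}^{(n,l,a,b)}_L$ are contained in each other's local algebras up to enlarging $U$ by $l'-l$. The identity is then a bounded spread isomorphism, again after using Lemma \ref{bsinverse} and adjusting $a$ if necessary.

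The main obstacle is the inverse-spread step in the $n\mapsto n+1$ comparison. There one must decompose an element of $\alpha(\opMat(U^{+l}\times[a-l,n]))\cap\opMat(U\times[a,b])$ into pieces lying in $\ql{B}^{(n)}$ and in $\ql{E}$ with controlled support. The difficulty is that slicing in the $\{n\}$ factor only produces elements of $\alpha(\opMat(U^{+l}\times[a-l,n-1]))$ whose support is controlled only up to $U^{+2l}\times[a-2l,n+l]$. I would first try to handle this precisely by the Ge–Kadison splitting (Lemma \ref{lma:gekadison}) combined with the super slice map identity \eqref{eqn:supercommutator}, absorbing the loss of $2l$ into the buffer region $[a-2l,a]$.
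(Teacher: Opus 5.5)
Your reduction in $a,b$ (via Lemma \ref{lma:BLRdecomp} and stability) is correct and is what the paper does. Your $l$-step also reaches the right conclusion, and it needs neither slicing nor a change of $a$. For $y\in(\ql{B}^{(n,l',a,b)}_L)_U$ one has $\alpha^{-1}(y)\in\opMat(U^{+l}\times[a-l,b+l])\cap\opMat(U^{+l'}\times[a-l',n-1])=\opMat(U^{+l}\times[a-l,n-1])$, so the local algebras for $l$ and $l'$ coincide.

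The genuine gap is in the $n\mapsto n+1$ step, which is the heart of the lemma. You plan to show that $y$ lies in the algebra generated by $\alpha(\opMat(U^{+l}\times[a-l,n-1]))$ and $\ql{E}_{U^{+l}}$. This is immediate from $\alpha^{-1}(y)\in\opMat(U^{+l}\times[a-l,n-1])\otimes\opMat(U^{+l}\times\{n\})$, and the intersection with $\opMat(U^{+2l}\times[a-2l,b])$ is automatic. But it is not what you need. The first piece reaches down to $a-2l$, so it lies in $\ql{B}^{(n,l,a-2l,b)}_L$ and not in $\ql{B}^{(n+1,l,a,b)}_L$. Hence there is no multiplication homomorphism $\ql{B}^{(n,l,a-2l,b)}_L\otimes\ql{E}\to\ql{B}^{(n+1,l,a,b)}_L$ to which Lemma \ref{bsinverse} can be applied. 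The inclusion in the other direction is not surjective either, since it misses $\opMat(U\times[a-2l,a-1])$. Enlarging the target to $\ql{B}^{(n+1,l,a-2l,b)}_L$ does not help: the same argument then lands in $\ql{B}^{(n,l,a-4l,b)}_L$, and so on, because source and target move together. So ``absorbing the loss into the buffer $[a-2l,a]$'' cannot close the argument.

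The missing idea is to split off the bulk before decomposing. As in the proof of Lemma \ref{lma:BLRdecomp} (via Lemma \ref{lma:gekadison}), $(\ql{B}^{(n+1,l,a,b)}_L)_U=\opMat(U\times[a,n-l-1])\otimes\bigl((\ql{B}^{(n+1,l,a,b)}_L)_U\cap\opMat(U\times[n-l,b])\bigr)$. The first factor already lies in $(\ql{B}^{(n,l,a,b)}_L)_U$.

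For $x$ in the second factor, the spread of $\alpha^{-1}$ gives $\alpha^{-1}(x)\in\opMat(U^{+l}\times[n-2l,b+l])\cap\opMat(U^{+l}\times[a-l,n])=\opMat(U^{+l}\times[n-2l,n])$. Hence $x\in\alpha(\opMat(U^{+l}\times[n-2l,n-1]))\vee\ql{E}_{U^{+l}}$. The first piece is supported in $U^{+2l}\times[n-3l,n+l-1]\subseteq U^{+2l}\times[a,b]$, precisely because of the built-in buffer $a<n-3l$. So it lies in $(\ql{B}^{(n,l,a,b)}_L)_{U^{+2l}}$.

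Thus your $\mu:\ql{B}^{(n,l,a,b)}_L\otimes\ql{E}\to\ql{B}^{(n+1,l,a,b)}_L$, with the same $a,b$ on both sides, is onto with bounded spread inverse. Then $\ql{E}\cong\opMat(X)$ together with stability finishes exactly as you intended. This is the paper's argument; it uses $\ql{D}_U=\alpha(\opMat(U^{+l}\times\{n\}))\cap\opMat(U\times[a,b])$ in place of your $\ql{E}$, which makes no difference. No slice maps are needed anywhere.
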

\begin{proof}
The fact that we have zero-spread isomorphisms $\ql{B}^{(n,l,a,b)}_L \cong \ql{B}^{(n,l,a,b+1)}_L$ and $\ql{B}^{(n,l,a,b)}_L \cong \ql{B}^{(n,l,a-1,b)}_L$ follows from Lemma \ref{lma:BLRdecomp}. Since $\opMat(U \times [a,b])$ (super)commutes with $\alpha(a)$ for any observable $a$ in the complement of $U^{+l} \times [a-l,n-1]$ inside $U^{+(l+1)} \times [a-l-1,n-1]$, we also have $\ql{B}^{(n,l,a,b)}_L \cong \ql{B}^{(n,l+1,a,b)}_L$.

To show isomorphism for different values of $n$, let $\ql{D}$ be a quasi-local algebra defined by
\beq
\ql{D}_U := \alpha(\opMat(U^{+l} \times \{n\})) \cap \opMat(U \times [a,b]).
\eeq
The multiplication map
$$
\alpha(\opMat(U^{+l} \times [a-l,n-1] )) \times \alpha(\opMat(U^{+l} \times \{n\})) \to \alpha(\opMat(U^{+l} \times [a-l,n] ))
$$
extends to an isomorphism of von Neumann (super)algebras 
$$
\alpha(\opMat(U^{+l} \times [a-l,n-1] )) \otimes \alpha(\opMat(U^{+l} \times \{n\})) \to \alpha(\opMat(U^{+l} \times [a-l,n] )).
$$
Hence, the multiplication map
$$
(\ql{B}^{(n,l,a,b)}_L)_U \times \ql{D}_U \to (\ql{B}^{(n+1,l,a,b)}_L)_U
$$
extends to an injective bounded spread homomorphism
$$
\mu: \ql{B}^{(n,l,a,b)}_L \otimes \ql{D} \to (\ql{B}^{(n+1,l,a,b)}_L).
$$
To show surjectivity of $\mu$, we first note that by the same reasoning as in the proof of Lemma \ref{lma:BLRdecomp}, we have 
\beq   
(\ql{B}^{(n+1,l,a,b)}_L)_U = \opMat(U\times [a,n-l-1]) \otimes \l (\ql{B}^{(n+1,l,a,b)}_L)_U \cap \opMat(U\times [n-l,n+l]) \r.
\eeq
The first tensor factor is contained in $(\ql{B}^{(n,l,a,b)}_L)_{U^{+2l}}$ by Lemma \ref{lma:BLRdecomp}. Let $x$ be an element in the second tensor factor. Then $\alpha^{-1}(x)$ belongs to $\opMat(U^{+l} \times [n-2l,n])$. Since
\beq
\alpha(\opMat(U^{+l} \times [n-2l,n-1])) \subset (\ql{B}^{(n,l,a,b)}_L)_{U^{+2l}},
\eeq
\beq
\alpha(\opMat(U^{+l} \times \{n \})) \subset \ql{D}_{U^{+2l}},
\eeq
$x$ belongs to $(\ql{B}^{(n,l,a,b)}_L)_{U^{+2l}} \otimes \ql{D}_{U^{+2l}}$. Thus, $\mu$ is surjective.

The case of $\ql{B}^{(n,l,a,b)}_R$ can be treated similarly.

\end{proof}

\begin{prop}\label{prop:boundaryalgebraisinvertible}
Let $X$ be a uniformly locally finite metric space, and $\alpha$ be a QCA on $\opMat(X \times \ZZ)$. Then the associated boundary algebras $\ql{B}_{L,R}$ are invertible, and $[\ql{B}_L] + [\ql{B}_R] = 0$ in $\opBrgr(X)$.
\end{prop}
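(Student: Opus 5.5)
Following Haah's argument, the plan is to show that $\ql{B}_L \otimes \ql{B}_R$, with suitable parameters, is bounded spread isomorphic to a product algebra over $X$. This shows at once that each factor is invertible (the other one serves as its inverse) and that $[\ql{B}_L]+[\ql{B}_R]=0$. By Lemma \ref{lma:boundaryalgebrasarebruaerequivalent} we are free to choose $(n,l,a,b)$, and by Lemma \ref{lma:BLRdecomp} the layers $\opMat(U\times[a,n-l-1])$ and $\opMat(U\times[n+l,b])$ are zero-spread product factors. These layers are Brauer trivial after projecting $X\times[a,b]\to X$, which is a zero-spread identification because $[a,b]$ is finite. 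So it suffices to work with $\ql{C}_L:=\ql{C}^{(n,l)}_L$ and $\ql{C}_R:=\ql{C}^{(n,l)}_R$, both of which live inside the strip algebra $\ql{S}:=\opMat(X\times[n-l,n+l-1])$, viewed over $X$. Up to stabilization, $\ql{S}$ is zero-spread isomorphic to $\opMat(X)$.

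The central claim is that $\ql{C}_L$ and $\ql{C}_R$ supercommute and that multiplication gives a bounded spread isomorphism $\ql{C}_L\otimes\ql{C}_R\to\ql{S}$.
\begin{enumerate}
\item Supercommutation holds because the preimages $\opMat(U^{+l}\times[n-2l,n-1])$ and $\opMat(V^{+l}\times[n,n+2l-1])$ occupy disjoint columns and therefore supercommute, and $\alpha$ preserves this.
\item Injectivity of the multiplication map $\ql{C}_{L,U}\otimes\ql{C}_{R,U}\to\ql{S}_U$ follows by transporting along $\alpha^{-1}$. There the algebras sit inside $\opMat(U^{+l}\times[n-2l,n-1])$ and $\opMat(U^{+l}\times[n,n+2l-1])$, which form a genuine tensor product of type I factors.
\item Covering is the key step: we need $\ql{S}_U\subseteq (\ql{C}_L)_{U^{+r}}\vee(\ql{C}_R)_{U^{+r}}$ for some uniform $r$. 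Take $x\in\ql{S}_U$. Then $\alpha^{-1}(x)$ lies in $\opMat(U^{+l}\times[n-2l,n+2l-1])$, which is $\opMat(U^{+l}\times[n-2l,n-1])\otimes\opMat(U^{+l}\times[n,n+2l-1])$. So $x$ lies in $\alpha(\text{left})\otimes\alpha(\text{right})$. We then slice, using Lemma \ref{lma:gekadison} and Proposition \ref{prop:superslice}, to show that $x$ lies in the join of $\alpha(\text{left})\cap\ql{S}_{U^{+2l}}$ and $\alpha(\text{right})\cap\ql{S}_{U^{+2l}}$.
\end{enumerate}
Lemma \ref{lem:alternatetensorfactor} together with Lemma \ref{bsinverse} then gives the isomorphism. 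The non-super ingredients are the same as in Lemma \ref{lma:boundaryalgebrasarebruaerequivalent}.

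The main obstacle is the covering step. Here $\alpha(\text{left})$ is not contained in $\ql{S}$: it spills into $[n-3l,n-l-1]$ and $[n+l,n+l]$. So one must show that an element of $\ql{S}_U$ decomposes into pieces that themselves lie in $\ql{S}$. I would handle this with a commutant argument. The algebra $\ql{M}:=\alpha(\opMat(U^{+l}\times[n-2l,n+2l-1]))$ contains $\ql{S}_U$, and also contains the "outer" algebra $\ql{O}:=\opMat(U\times([n-2l,n-l-1]\cup[n+l,n+2l-1]))$. The two boundary pieces of $\ql{O}$ are captured exactly by $\alpha(\text{left})$ and $\alpha(\text{right})$ respectively, because $\alpha^{-1}$ has spread $l$. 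Applying Lemma \ref{lma:gekadison} inside $\ql{M}\cong\alpha(\text{left})\otimes\alpha(\text{right})$, with the type I factor $\ql{O}$ split across the two tensor factors, identifies $\ql{S}_U\subseteq\ql{O}'\cap\ql{M}$. This commutant factorizes as $(\ql{O}_L'\cap\alpha(\text{left}))\otimes(\ql{O}_R'\cap\alpha(\text{right}))$, and each of these factors sits in $\ql{C}_L$ and $\ql{C}_R$ respectively, after enlarging $U$ by $2l$. Spelling this factorization out carefully, including the handling of $U$ versus $U^{+l}$ in the $X$-direction and the parity twists, is where I expect most of the work.
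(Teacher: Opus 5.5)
Steps (1) and (2) are fine, and reducing to $\ql{C}_L,\ql{C}_R$ is harmless for Brauer classes. But the covering step (3) has a genuine gap, and that reduction is what creates it.

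Your commutant argument rests on $\ql{O}\subseteq\ql{M}$, with $\opMat(U\times[n-2l,n-l-1])\subseteq\alpha(\opMat(U^{+l}\times[n-2l,n-1]))$. This is false. Spread $l$ of $\alpha^{-1}$ only gives $\alpha^{-1}(\opMat(U\times[n-2l,n-l-1]))\subseteq\opMat(U^{+l}\times[n-3l,n-1])$, which leaves the window $[n-2l,n-1]$. In fact the spread bound guarantees $\opMat(V\times[c,d])\subseteq\alpha(\opMat(U^{+l}\times[n-2l,n-1]))$ only when $[c-l,d+l]\subseteq[n-2l,n-1]$, which never happens for a nonempty column interval. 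For a concrete counterexample, take the column shift $(x,m)\mapsto(x,m+l)$. Then $\ql{M}=\opMat(U^{+l}\times[n-l,n+3l-1])$, which does not contain $\ql{O}$.

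Enlarging the windows does not rescue the argument as you set it up. Suppose you pass to $\alpha(\opMat(U^{+2l}\times[n-3l,n-1]))\supseteq\opMat(U^{+l}\times[n-2l,n-l-1])$. The relative commutant of that block inside the $\alpha$-image can still be supported on columns below $n-2l$ and on points of $X$ outside $U^{+l}$. So nothing places the two factors inside $\ql{S}$, let alone inside $\ql{C}_L$ and $\ql{C}_R$. Your sentence about slicing $x$ into the join of $\alpha(\text{left})\cap\ql{S}_{U^{+2l}}$ and $\alpha(\text{right})\cap\ql{S}_{U^{+2l}}$ restates what must be proved rather than proving it.

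The missing idea is to keep the buffer columns. The conditions $a<n-3l$ and $b>n+3l$ in the definition of $\ql{B}^{(n,l,a,b)}_{L,R}$ are there precisely to make covering trivial.
\begin{itemize}
\item Since $\alpha(\opMat(U^{+l}\times[n-2l,n-1]))\subseteq\opMat(U^{+2l}\times[n-3l,n+l-1])\subseteq\opMat(U^{+2l}\times[a,b])$, this whole algebra lies in $(\ql{B}_L)_{U^{+2l}}$. The same holds on the right.
\item Hence $\opMat(U\times[n-l,n+l-1])$ lies in the image of $(\ql{B}_L)_{U^{+2l}}\otimes(\ql{B}_R)_{U^{+2l}}$, because it is contained in $\alpha(\opMat(U^{+l}\times[n-2l,n-1]))\otimes\alpha(\opMat(U^{+l}\times[n,n+2l-1]))$. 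No element needs to be decomposed.
\item The outer columns already lie in $\ql{B}_L$ and $\ql{B}_R$ by Lemma~\ref{lma:BLRdecomp}.
\item Injectivity is your step (2), run with the windows $[a-l,n-1]$ and $[n,b+l]$.
\end{itemize}
Together these give the bounded spread isomorphism $\ql{B}_L\otimes\ql{B}_R\cong\opMat(X\times[a,b])$. This is the paper's proof.

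If you still want the strip statement $\ql{C}_L\otimes\ql{C}_R\cong\ql{S}$, deduce it afterwards. By Lemma~\ref{lma:BLRdecomp}, any $x\in\ql{S}_U$ lies in $\opMat(U^{+2l}\times[a,n-l-1])\otimes\bigl((\ql{C}_L)_{U^{+2l}}\vee(\ql{C}_R)_{U^{+2l}}\bigr)\otimes\opMat(U^{+2l}\times[n+l,b])$. It also lies in $1\otimes\ql{S}_{U^{+2l}}\otimes 1$. Slicing off the outer type I factors then puts $x$ in $(\ql{C}_L)_{U^{+2l}}\vee(\ql{C}_R)_{U^{+2l}}$. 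In short, the commutant trick works inside the buffered algebra, which contains the entire outer block, but not inside $\alpha(\text{left})$.
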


\begin{proof}
Fix an admissible quadruple $(n,l,a,b)$ for the boundary algebras, and let $\ql{B}_{L,R}$ be $\ql{B}^{(n,l,a,b)}_{L,R}$.The multiplication map
$$
\alpha(\opMat(U^{+l} \times [a-l,n-1] )) \times \alpha(\opMat(U^{+l} \times [n,b+l])) \to \alpha(\opMat(U^{+l} \times [a-l,b+l] ))
$$
for bounded $U \subset X$ extends to an isomorphism of von Neumann (super)algebras 
$$
\alpha(\opMat(U^{+l} \times [a-l,n-1] )) \otimes \alpha(\opMat(U^{+l} \times [n,b+l])) \to \alpha(\opMat(U^{+l} \times [a-l,b+l] )).
$$
Hence, the multiplication map
$$
(\ql{B}_L)_U \times (\ql{B}_R)_U \to \opMat(X \times [a,b])_U
$$
extends to an injective bounded spread homomorphism
$$
\mu: \ql{B}_L \otimes \ql{B}_R \to \opMat(X \times [a,b])
$$
of quasi-local algebras over $X$, where we regard $\opMat(X \times [a,b])$ as a quasi-local algebra over $X$. On the other hand, since $\alpha$, $\alpha^{-1}$ have spread $l$, we have
\begin{multline}
\opMat(U \times [n-l,n+l-1]) \subset \alpha(\opMat(U^{+l} \times [n-2l,n+2l-1])) = \\ =\alpha(\opMat(U^{+l} \times [n-2l,n-1])) \otimes \alpha(\opMat(U^{+l} \times [n,n+2l-1]))
\end{multline}
and
\bqa
\alpha(\opMat(U^{+l} \times [n-2l,n-1])) \subset \opMat(U^{+2l} \times [a,b]), \\
\alpha(\opMat(U^{+l} \times [n,n+2l-1])) \subset \opMat(U^{+2l} \times [a,b]).
\eqa
Hence, we obtain
\beq
\opMat(U \times [n-l,n+l-1]) \subset \mu((\ql{B}_L)_{U^{+2l}} \otimes (\ql{B}_R)_{U^{+2l}}),
\eeq
that together with Lemma \ref{lma:BLRdecomp} implies surjectivity of $\mu$. Thus, $\mu: \ql{B}_L \otimes \ql{B}_R \to \opMat(X \times [a,b])$ is a bounded spread isomorphism.
\end{proof}

\begin{prop}\label{prop:boundaryalgblendinvariant}
Let $X$ be a uniformly locally finite metric space, and let $\alpha$, $\beta$ be QCAs on $\opMat(X \times \ZZ)$ which are in the same phase. Then the associated boundary algebras are Brauer equivalent.
\end{prop}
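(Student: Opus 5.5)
By Proposition \ref{prop:circuitandblendequiv}, it suffices to treat the case where there is a blend $\gamma$ from $\alpha$ to $\beta$. We may also stabilize freely. Replacing $\alpha$ by a stabilization $\sigma^{-1}(\alpha\otimes\Id)\sigma$ changes the left boundary algebra only by tensoring with $\opMat(X)$-type factors. Indeed, the zero-spread isomorphism $\sigma$ lets us compute the boundary algebra of $\alpha\otimes\Id$ directly, and it equals $\ql{B}_L(\alpha)\otimes\opMat(X\times[a,b])$. Since $\opMat(X\times[a,b])\cong\opMat(X)$ by a zero-spread isomorphism, this leaves the Brauer class unchanged. Hence we may assume the blend $\gamma$ lives on $\opMat(X\times\ZZ)$ itself, agrees with $\alpha$ on $X\times\ZZ_{<c}$ and with $\beta$ on $X\times\ZZ_{>d}$, and that $\alpha,\beta,\gamma$ all have spread at most $l$.

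The key observation is that the left boundary algebra is defined locally in the $\ZZ$-direction. Choose $n$ with $n+3l<c$, and take $a<n-3l$ and $b$ with $n+3l<b<c-l$. The algebra $(\ql{B}^{(n,l,a,b)}_L)_U=\alpha(\opMat(U^{+l}\times[a-l,n-1]))\cap\opMat(U\times[a,b])$ involves only $\alpha$ applied to observables supported in $X\times\ZZ_{<c}$. On these observables $\gamma=\alpha$, so $\ql{B}_L^{(n,l,a,b)}(\alpha)=\ql{B}_L^{(n,l,a,b)}(\gamma)$ literally. Symmetrically, choosing all parameters to the right of $d$ gives $\ql{B}_L^{(n',l,a',b')}(\beta)=\ql{B}_L^{(n',l,a',b')}(\gamma)$.

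It remains to show that for the single QCA $\gamma$, the left boundary algebras at positions $n$ and $n'$ are Brauer equivalent. This is essentially the content of the proof of Lemma \ref{lma:boundaryalgebrasarebruaerequivalent}, where moving $n$ to $n+1$ gives $\ql{B}^{(n+1)}_L\cong\ql{B}^{(n)}_L\otimes\ql{D}$ with $\ql{D}_U=\gamma(\opMat(U^{+l}\times\{n\}))\cap\opMat(U\times[a,b])$. Here we must check that $\ql{D}$ is Brauer trivial. Since $\gamma$ restricts to $\opMat(U^{+l}\times\{n\})$ and its image lies inside $\opMat(U^{+2l}\times[n-l,n+l])$, the algebra $\ql{D}$ is bounded spread isomorphic to $\opMat(X\times\{n\})\cong\opMat(X)$ via $\gamma$, using Lemma \ref{bsinverse}. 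One has to verify the reverse inclusion $\ql{D}_U\supseteq\gamma(\opMat(U'\times\{n\}))$ for a shrunken $U'$. This follows because $\gamma(\opMat(V\times\{n\}))\subseteq\opMat(V^{+l}\times[a,b])$ once $a,b$ are far enough from $n$. Hence $[\ql{B}^{(n+1)}_L]=[\ql{B}^{(n)}_L]$ in $\opBrgr(X)$. Alternatively, we could read Lemma \ref{lma:boundaryalgebrasarebruaerequivalent} as already asserting bounded spread isomorphism for all $n$. Its proof produces an isomorphism up to the factor $\ql{D}$, so the Brauer-triviality of $\ql{D}$ is the point to be careful about.

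Iterating the step from $n$ to $n'$ is a finite composition, with $a$ and $b$ adjusted using the zero-spread isomorphisms in Lemma \ref{lma:boundaryalgebrasarebruaerequivalent}. This gives $[\ql{B}_L(\alpha)]=[\ql{B}_L(\gamma)]=[\ql{B}_L(\beta)]$. The right boundary algebras then follow from Proposition \ref{prop:boundaryalgebraisinvertible}, since $[\ql{B}_R]=-[\ql{B}_L]$. I expect the main obstacle to be the bookkeeping in the stabilization step, that is, identifying the boundary algebra of a stabilization with $\ql{B}_L\otimes\opMat$ with bounded spread. The locality argument with the blend is essentially immediate.
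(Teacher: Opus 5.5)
Your proof is correct and is essentially the paper's: take a blend $\gamma$ from Proposition~\ref{prop:circuitandblendequiv}, observe that the boundary algebras of $\gamma$ coincide with those of $\alpha$ (resp.\ $\beta$) at cuts far to the left (resp.\ right), and move the cut using Lemma~\ref{lma:boundaryalgebrasarebruaerequivalent}. The paper instead matches right boundary algebras of $\gamma$ and $\beta$ at the right end, and your explicit check that $\ql{D}\cong\opMat(X)$ fills in a step the lemma's proof leaves implicit; your stabilization paragraph, by contrast, is unnecessary, since Proposition~\ref{prop:circuitandblendequiv} already gives the blend on $\opMat(X\times\ZZ)$ itself.
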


\begin{proof}
Let $\gamma$ be a blend from $\alpha$ to $\beta$, the existence of which is guaranteed by Proposition \ref{prop:circuitandblendequiv}. Suppose $\alpha$, $\beta$ have spread $l$. Then for large enough $n$, the $(-n,l,-n-3l-1,-n+3l+1)$-left boundary algebras of $\alpha$, $\gamma$ and the $(n,l,n-3l-1,n+3l+1)$-right boundary algebras of $\gamma$, $\beta$ coincide. Hence, by Lemma \ref{lma:boundaryalgebrasarebruaerequivalent}, all boundary algebras associated with $\alpha$ and $\beta$ are Brauer equivalent.
\end{proof}

The proposition above shows that associating a boundary algebra with a QCA yields a well-defined map
\[B_{\text{QCA}\to\text{BAlg}}:\opQCA(X\times \ZZ) \to \opBrgr(X).\]
which is naturally a homomorphism. We show below that this map is an isomorphism.

\begin{definition}\label{def:ingeneralizedshift}
Let $\ql{B}_L$, $\ql{B}_R$ be invertible quasi-local algebras over a uniformly locally finite metric space $X$ with a bounded spread isomorphism $\varphi:\ql{B}_L \otimes \ql{B}_R \to \opMat(X)$. Let $\ql{A}$ be a quasi-local algebra over $X \times \ZZ$ formally defined by an infinite tensor product $\bigotimes_{n \in \ZZ} \l \ql{B}_L^{(n)} \otimes \ql{B}_R^{(n)} \r$, $\ql{B}^{(n)}_L \cong \ql{B}_{L}$, $\ql{B}^{(n)}_R \cong \ql{B}_{R}$, and let $\tau$ be an automorphism of $\ql{A}$ defined via isomorphisms $\ql{B}_L^{(n)} \to \ql{B}_L^{(n+1)}$ and identity maps on $\ql{B}_R^{(n)}$. Since $\opMat(X \times \ZZ)$ is also formally an infinite tensor product $\bigotimes_{n \in \ZZ} \opMat(X)$, the isomorphism $\varphi$ naturally induces an isomorphism $\tilde{\varphi}: \ql{A} \to \opMat(X \times \ZZ)$. A {\it generalized shift QCA} associated with $\varphi$, is a QCA $\alpha$ on $\opMat(X \times \ZZ)$ defined by $\alpha = \tilde{\varphi} \tau \tilde{\varphi}^{-1}$.
\end{definition}

\begin{lemma}\label{lma:generalizedshiftblend}
Let $X$ be a uniformly locally finite metric space, $\alpha$ be a QCA on $\opMat(X \times \ZZ)$, and $\ql{B}_L$, $\ql{B}_R$ be the associated with $\alpha$ boundary algebras. Let $\varphi:\ql{B}_L \otimes \ql{B}_R \to \opMat(X)$ be a bounded spread isomorphism. Then there exists a blend from $\alpha$ to a generalized shift QCA associated with $\varphi$.
\end{lemma}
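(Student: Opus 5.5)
Following Haah's argument, I would build the blend by cutting $\ZZ$ into blocks and using the boundary-algebra decomposition of Proposition \ref{prop:boundaryalgebraisinvertible} to rewrite $\alpha$, away from a cut, as a shift of left pieces.

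\textbf{Block decomposition.} Fix spread $l$ for $\alpha$ and $\alpha^{-1}$. Choose a block length $L \gg l$ and set $n_k = kL$. For each $k$, take the boundary algebras $\ql{B}^{(k)}_{L} = \ql{B}^{(n_k,l,a_k,b_k)}_L$ and $\ql{B}^{(k)}_{R}$ with $a_k = n_k - L/2$ and $b_k = n_k + L/2$. These are supported in the block $J_k = [a_k,b_k]$. By Proposition \ref{prop:boundaryalgebraisinvertible}, multiplication gives a bounded spread isomorphism $\mu_k:\ql{B}^{(k)}_L\otimes\ql{B}^{(k)}_R\to\opMat(X\times J_k)$. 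We then have
$$\alpha\bigl(\opMat(X\times[n_k,n_{k+1}-1])\bigr)\cong \ql{B}^{(k)}_R\otimes \ql{B}^{(k+1)}_L.$$
To see this, note that the left side is generated by $\alpha$ applied to the appropriate pieces, and the $\ql{B}$'s are defined as exactly these intersections. Lemma \ref{lma:BLRdecomp} and the slice arguments of Lemma \ref{lma:Lud230} show that the right boundary of block $k$ and the left boundary of block $k+1$ together exhaust the image. All the $\ql{B}^{(k)}_{L,R}$ are translates of a fixed pair and are zero-spread isomorphic to each other by Lemma \ref{lma:boundaryalgebrasarebruaerequivalent}. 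Using stabilization and the absorbing property, we may replace $\ql{B}_L,\ql{B}_R$ by the algebras appearing in $\varphi$: these are bounded spread isomorphic to the given ones, and after stabilizing we may transport $\varphi$ to $\mu_k$ up to an automorphism of $\opMat(X)$ on each block.

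\textbf{Identifying $\alpha$ with a shift.} Relative to the factorization $\opMat(X\times\ZZ)=\bigotimes_k (\ql{B}^{(k)}_L\otimes\ql{B}^{(k)}_R)$ given by the $\mu_k$, the previous step says that $\alpha$ sends the block algebra $\opMat(X\times[n_k,n_{k+1}-1])$ onto $\ql{B}^{(k)}_R\otimes\ql{B}^{(k+1)}_L$. Composing $\alpha$ with the inverse of the generalized shift $\tilde\mu\tau\tilde\mu^{-1}$ therefore gives an automorphism. Its failure to commute with the block structure is absorbed by a blockwise automorphism $\theta_k$ of $\ql{B}^{(k)}_R\otimes\ql{B}^{(k+1)}_L$, which is a one-layer circuit on doubled blocks, once each $\ql{B}$ is isomorphic to $\opMat$ after stabilization. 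So $\alpha = (\prod_k \theta_k)\circ\text{shift}_\mu\circ(\text{block relabelling})$.

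\textbf{Building the blend.} Keep the factors $\theta_k$ for $k<0$ and drop them for $k\ge 0$. Likewise, on the left use the identification $\mu_k$ and on the right use $\varphi$ (made compatible by a local isomorphism at the single cut near $0$). The resulting automorphism $\gamma$ agrees with $\alpha$ on $X\times\ZZ_{<-2L}$ and with the generalized shift associated with $\varphi$ on $X\times\ZZ_{>2L}$. It has bounded spread because each modification is local to one block. Its inverse is built the same way, and the spread bound follows from Lemma \ref{bsinverse}.

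\textbf{Expected obstacle.} The hard step is the exhaustion claim in the block decomposition: that $\ql{B}^{(k)}_R$ and $\ql{B}^{(k+1)}_L$ generate all of $\alpha$ of the block with a tensor-product structure, using only type $\rm I$ factor arguments such as Lemma \ref{lma:gekadison} and Proposition \ref{prop:superslice}, in the infinite-dimensional super setting. A second difficulty is the bookkeeping needed to replace $\mu$ by the arbitrary given $\varphi$ at a single cut while keeping bounded spread.
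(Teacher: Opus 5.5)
Your argument is correct in outline, but it follows a different route from the paper's. The paper works at a single cut. It composes $\alpha$ with the translation $\sigma^l$ and uses that $\alpha\sigma^l$ induces a bounded spread isomorphism $\opMat(X\times\ZZ_{<a})\to\opMat(X\times\ZZ_{<a})\otimes\ql{B}_L$. Across that one cut the map therefore already behaves like a generalized shift, and the paper glues it there to a generalized shift on the right.

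You instead factor $\alpha$ globally, in Haah's style, as $(\prod_k\theta_k)\circ S_\mu\circ R$, with $\theta_k$ an automorphism of $\alpha(\opMat(X\times[n_k,n_{k+1}-1]))=\ql{B}^{(k)}_R\vee\ql{B}^{(k+1)}_L$. You then get the blend by dropping the $\theta_k$ with $k\ge 0$ and conjugating by the product over $k\ge 0$ of the blockwise automorphisms $\varphi_k\mu_k^{-1}$ of $\opMat(X\times J_k)$. Here $\varphi_k$ is $\varphi$ transported to $J_k$, so this conjugation switches the factorization from $\mu$ to $\varphi$ on the right.

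Your route gives a normal form for every QCA on $X\times\ZZ$ and a completely explicit blend. The price is that you need the boundary structure at every cut, with spreads controlled uniformly in $k$, whereas the paper needs it at only one cut.

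The step you flag as the main obstacle is in fact routine. First, $\alpha(\opMat(X\times[n_k,n_{k+1}-1]))\subseteq\opMat(X\times(J_k\cup J_{k+1}))$. By Proposition~\ref{prop:boundaryalgebraisinvertible}, the latter is $\ql{B}^{(k)}_L\otimes\ql{B}^{(k)}_R\otimes\ql{B}^{(k+1)}_L\otimes\ql{B}^{(k+1)}_R$ up to bounded spread in $X$. After applying $\alpha^{-1}$, the two outer factors land in $\opMat(X\times\ZZ_{<n_k})$ and $\opMat(X\times\ZZ_{\ge n_{k+1}})$. Slicing them off with normal states (Proposition~\ref{prop:superslice}) then gives the exhaustion.

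Some statements need repair.
\begin{itemize}
\item Since $\alpha$ is not translation invariant, the $\ql{B}^{(k)}_{L,R}$ are not translates of a single pair. Also, Lemma~\ref{lma:boundaryalgebrasarebruaerequivalent} gives bounded spread, not zero-spread, isomorphisms between different cuts. What you actually need, and what holds, is that these identifications (hence $S_\mu$ and the $\varphi_k\mu_k^{-1}$) have spread bounded in terms of $l$ and $L$ alone.
\item The $\theta_k$ are bounded spread automorphisms of strip algebras isomorphic to $\opMat(X)$, i.e.\ QCAs over $X$, not one-layer circuits. Also, the individual $\ql{B}$'s need not become trivial after stabilization. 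This is harmless, since the truncation only uses a uniform spread bound.
\item The relabelling $R$ (take it to be a half-block translation) does not disappear. Your $\gamma$ agrees on the far right with $S_\varphi\circ R$, not with $S_\varphi$. To finish, note that translations of $\opMat(X\times\ZZ)$ are circuits: apply the remark after Proposition~\ref{prop:classificationof1dQCAs} on each line $\{x\}\times\ZZ$. So $S_\varphi R$ and $S_\varphi$ are in the same phase by Lemma~\ref{lma:QCACompositionInBrGr}, and hence admit a blend by Proposition~\ref{prop:circuitandblendequiv}. Blends compose: if $\gamma_1$ is a blend from $\alpha$ to $\beta$ and $\gamma_2$ one from $\beta$ to $\delta$, then $\gamma_1\beta^{-1}\gamma_2$ is a blend from $\alpha$ to $\delta$. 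The paper's proof leaves the same residual translation $\sigma^l$.
\item Your generalized shift is built on blocks of length $L$ rather than single sites. This does not matter for its use in Theorem~\ref{theorembrauerblendiso}.
\end{itemize}
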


\begin{proof}
Let $\sigma$ be a "shift" automorphism along $\ZZ$ that maps $(x,n) \in X \times \ZZ$ to $(x,n+1)$. For $l \in \NN$ and $a,b,n \in \ZZ$ such that $\alpha$ has spread $l$ and $a < n-3l$, $n+3l < b$, the automorphism $\alpha \sigma^{l}$ induces a bounded spread isomorphism $\opMat(X \times \ZZ_{<a}) \to \opMat(X \times \ZZ_{<a}) \otimes \ql{B}^{(0,l,a,b)}_L$ for the left boundary algebra $\ql{B}^{(0,l,a,b)}_L$ associated with $\alpha$. Such an isomorphism naturally induces a blend from $\alpha \sigma^l$ to a generalized shift QCA, and therefore there exists a blend $\gamma$ from $\alpha$.
\end{proof}

\begin{theorem}\label{theorembrauerblendiso}
The map $B_{\text{QCA}\to\text{BAlg}}:\opQCA(X\times \mathbb{Z}) \to \opBrgr(X)$ is an isomorphism.
\end{theorem}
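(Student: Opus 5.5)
We prove injectivity and surjectivity separately. By Proposition \ref{prop:boundaryalgblendinvariant} the map $B_{\text{QCA}\to\text{BAlg}}$ is well defined. It is a homomorphism because stacking is compatible with taking boundary algebras: for a stacked QCA $\alpha\otimes\beta$, the defining intersections for its boundary algebras split as tensor products by Lemma \ref{lma:gekadison}. Composition agrees with stacking in $\opQCA$ by Lemma \ref{lma:QCACompositionInBrGr}.

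\emph{Surjectivity.} Let $\ql{B}$ be an invertible quasi-local algebra over $X$, and choose a bounded spread isomorphism $\varphi:\ql{B}\otimes\tilde{\ql{B}}\to\opMat(X)$. Form the generalized shift QCA $\alpha=\tilde\varphi\tau\tilde\varphi^{-1}$ of Definition \ref{def:ingeneralizedshift}, with $\ql{B}_L=\ql{B}$ and $\ql{B}_R=\tilde{\ql{B}}$. We then compute its left boundary algebra directly. Under $\tilde\varphi$, the algebra $\alpha(\opMat(U^{+l}\times[a-l,n-1]))$ corresponds to the copies $\ql{B}_L^{(m)}$ with $m\le n$ and $\ql{B}_R^{(m)}$ with $m\le n-1$, restricted near $U$. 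Intersecting with $\opMat(U\times[a,b])$ leaves, up to a stable product factor, the single extra copy $\ql{B}_L^{(n)}$. Hence the boundary algebra is bounded spread isomorphic to $\ql{B}\otimes\opMat(X)$ after the absorptions of Lemma \ref{lma:BLRdecomp} and Lemma \ref{lma:boundaryalgebrasarebruaerequivalent}. One must check that the spread of $\tilde\varphi$ only enlarges $U$ by a bounded amount, which uses Lemma \ref{lma:gekadison} and slice maps as in Lemma \ref{lma:Lud228}. Therefore $[\ql{B}]$ is in the image.

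\emph{Injectivity.} Suppose $\alpha$ has left boundary algebra $\ql{B}_L$ that is Brauer trivial, so that $\ql{B}_L\otimes\opMat(X)\cong\opMat(X)$ by a bounded spread isomorphism. By Proposition \ref{prop:boundaryalgebraisinvertible} the same holds for $\ql{B}_R$, after stabilization. Lemma \ref{lma:generalizedshiftblend} gives a blend from $\alpha$ to a generalized shift associated with some $\varphi:\ql{B}_L\otimes\ql{B}_R\to\opMat(X)$. By Proposition \ref{prop:circuitandblendequiv}, $\alpha$ is then in the same phase as that shift. It therefore suffices to show that the generalized shift with trivial $\ql{B}_L$ is a trivial phase.

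After stabilizing, choose $\varphi$ to be the composite of the trivializations $\ql{B}_L\cong\opMat(X)$ and $\ql{B}_R\cong\opMat(X)$ with a zero-spread isomorphism $\opMat(X)\otimes\opMat(X)\cong\opMat(X)$. The resulting $\tau$ is conjugate, by a bounded spread isomorphism acting columnwise in $X$ (hence a one-layer circuit on each $X\times\{n\}$ factor), to a shift of a stable product factor along $\ZZ$. Such a shift of $\opMat(X)^{\otimes\ZZ}$ is a trivial QCA. Absorbing $\opMat(X)\otimes\opMat(X)\cong\opMat(X)$ columnwise lets us write it as two layers of swaps, the infinite-dimensional Eilenberg swindle in which the shift becomes $\tau_{\text{odd}}\tau_{\text{even}}$ of local swap gates. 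A blend between the trivial-$\varphi$ shift and the chosen-$\varphi$ shift exists because the two differ by a product of columnwise bounded spread automorphisms, which can be truncated to a half-line.

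\emph{Main obstacle.} The delicate point is showing that the generalized shift is independent of the choice of $\varphi$ up to phase. Two choices differ by $\psi\in\Autbs(\ql{B}_L\otimes\ql{B}_R)$, and the two shifts differ by $\bigotimes_n\psi^{(n)}$ conjugated appropriately. I would handle this by showing that this product has an obvious blend to the identity (apply $\psi^{(n)}$ only for $n<0$), so that Proposition \ref{prop:circuitandblendequiv} applies. Verifying that this truncation remains a bounded spread automorphism is where the spread bookkeeping has to be done carefully.
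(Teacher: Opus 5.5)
Your proposal is correct and follows the paper's proof: surjectivity comes from realizing a given class as the left boundary algebra of a generalized shift, and injectivity comes from combining Lemma \ref{lma:generalizedshiftblend} and Proposition \ref{prop:circuitandblendequiv} with the triviality of a generalized shift built from a Brauer-trivial boundary algebra, which you spell out in more detail than the paper does. The $\varphi$-dependence you flag as the main obstacle can be bypassed, because Lemma \ref{lma:generalizedshiftblend} is stated for an arbitrary $\varphi$ (and conjugating by the columnwise QCA $\bigotimes_n \psi^{(n)}$ cannot change a class in the abelian group $\opQCA(X\times\ZZ)$ anyway); moreover, if $\varphi$ is built from trivializations of $\ql{B}_L$, $\ql{B}_R$ followed by a zero-spread isomorphism, the trivializations cancel in $\tilde\varphi\tau\tilde\varphi^{-1}$, so the generalized shift is literally a partial shift of a stable product and you do not need your parenthetical claim that a columnwise bounded spread isomorphism is a one-layer circuit, which is false in general.
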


\begin{proof}
Let $\ql{B}_L$, $\ql{B}_R$ be invertible quasi-local algebras over a uniformly locally finite metric space $X$ with a bounded spread isomorphism $\varphi:\ql{B}_L \otimes \ql{B}_R \to \opMat(X)$. The generalized shift QCA associated with $\varphi$ has a left boundary algebra which is bounded spread isomorphic to $\ql{B}_L$ by construction. It follows that the map $B_{\text{QCA}\to\text{BAlg}}$ is surjective.

Let $\alpha$ be a QCA on $\opMat(X \times \ZZ)$ such that its left boundary algebra is Brauer trivial. Since a generalized shift QCA associated with such a quasi-local algebra is in a trivial phase, by Lemma \ref{lma:generalizedshiftblend} and Proposition \ref{prop:circuitandblendequiv}, $\alpha$ is in a trivial phase as well. Thus, the map $B_{\text{QCA}\to\text{BAlg}}$ is injective.
\end{proof}

\subsection{Classification of QCAs over $\ZZ$}\label{subsecqca1d}

As a corollary of Theorem \ref{theorembrauerblendiso} and Proposition \ref{prop:0dBrauerGroup}, we obtain the classification of 1d QCAs:
\begin{prop} \label{prop:classificationof1dQCAs}
We have $\sQCA(\ZZ) = \ZZ/2$ and $\QCA(\ZZ) = 0$.   
\end{prop}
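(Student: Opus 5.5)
The plan is to combine Theorem \ref{theorembrauerblendiso} with Proposition \ref{prop:0dBrauerGroup}, taking $X=\text{pt}$. The uniformly locally finite metric space $\text{pt}\times\ZZ$ is canonically isometric to $\ZZ$. Under this identification, $\opMat(\text{pt}\times\ZZ)=\opMat(\ZZ)$, and QCAs, quantum circuits and stabilizations correspond exactly. So $\opQCA(\ZZ)=\opQCA(\text{pt}\times\ZZ)$. Theorem \ref{theorembrauerblendiso} then gives group isomorphisms $\QCA(\ZZ)\cong\Brgr(\text{pt})$ and $\sQCA(\ZZ)\cong\sBrgr(\text{pt})$. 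Proposition \ref{prop:0dBrauerGroup} identifies these groups as $0$ and $\ZZ/2$, which is the claim.

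The only thing to check is that the hypotheses of Theorem \ref{theorembrauerblendiso} are met for $X=\text{pt}$. A single point is trivially uniformly locally finite. A quasi-local algebra over $\text{pt}$ is just a von Neumann (super)algebra, as used in the proof of Proposition \ref{prop:0dBrauerGroup}, and bounded spread isomorphisms are just isomorphisms. The boundary-algebra constructions of Definition \ref{def:bdryalgQCA} make sense verbatim, with $U^{+l}=U=\text{pt}$.

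For a concrete picture of the fermionic generator, I would note that under the isomorphism the nontrivial class corresponds to the Brauer class of $\text{Cl}_1$. This class is realized by the generalized shift of Definition \ref{def:ingeneralizedshift}, with $\ql{B}_L=\ql{B}_R=B(\hilb{H})\otimes\text{Cl}_1$ and $\varphi$ an isomorphism $\ql{B}_L\otimes\ql{B}_R\cong B(\hilb{H}')$. Such an isomorphism exists since $\text{Cl}_1\otimes\text{Cl}_1\cong\text{Cl}_2$ is a type $\rm{I}$ superfactor $B(\CCC^{1|1})$. The resulting QCA is the Majorana translation. This identification is not needed for the statement, though.

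I expect no real obstacle: all the work sits in Theorem \ref{theorembrauerblendiso} and Proposition \ref{prop:0dBrauerGroup}. The one mildly delicate point is the bosonic superfactor argument in Proposition \ref{prop:0dBrauerGroup}, which uses that a tensor product of factors is type $\rm{I}$ only if both factors are. That fact is already established there.
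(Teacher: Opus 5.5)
Your proposal is correct and matches the paper's argument exactly: the paper obtains this proposition as an immediate corollary of Theorem \ref{theorembrauerblendiso} with $X=\text{pt}$ together with Proposition \ref{prop:0dBrauerGroup}. Your aside identifying the fermionic generator with the generalized shift built from $B(\hilb{H})\otimes\text{Cl}_1$ (the stabilized Majorana translation) is consistent with the paper, though it is not needed for the statement.
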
 
 
\begin{remark}
Note that the classification of QCAs on product algebras with only finite-dimensional type $\rm{I}$ factors is very different. In particular, it was shown in \cite{Gross_2012} that a shift automorphism represents a 1d QCA that is non-trivial in this setting. In contrast, in our stabilized setting, a shift automorphism of $\opMat(\ZZ)$ can be represented by a two layered quantum circuit in the following way. Let $\hilb{H}$ be a separable Hilbert space of infinite dimension (or a superspace of superdimension $\infty|\infty$). Let $w: \hilb{H} \to \hilb{H} \otimes \hilb{H}$ be a (graded) isomorphism, $v \in U(\hilb{H}^{\otimes 3})$ be a unitary defined by $v(\xi_1 \otimes \xi_2 \otimes \xi_3) = w(\xi_1) \otimes w^{-1}(\xi_2 \otimes \xi_3)$, and let $\{ u_n \in \opMat(\ZZ) \}_{n \in \ZZ}$ be unitary elements that act as $v$ on $\opMat(\ZZ)_{\{n,n+1,n+2\}}$. Then a quantum circuit
$$
\alpha = \l \prod_{k \in \ZZ} \Ad_{u_{3k+2}} \r \l \prod_{k \in \ZZ} \Ad_{u_{3k}} \r
$$
defines a shift automorphism.

\end{remark}

\section{Invertible states} \label{sec:InvertibleStates}

In this section, we introduce invertible states and their equivalence classes called invertible phases. As in the previous section, we treat both fermionic and bosonic cases in parallel. 

\begin{definition}
Let $X$ be a uniformly locally finite metric space. We call a state $\omega$ on $\opMat(X)$ {\it unentangled} if it is (graded) locally normal pure and for any $U,V \subset X$ such that $U \cap V = \emptyset$ and any $a \in \opMat(X)_U$, $b \in \opMat(X)_V$, we have $\omega(ab) = \omega(a) \omega(b)$.
\end{definition}

\begin{definition}
Let $X$ be a uniformly locally finite metric space. We call a state $\omega$ on $\opMat(X)$ {\it invertible} if it is (graded) pure and there exists a state $\tilde{\omega}$ on $\opMat(X)$ and a quantum circuit $\alpha$ on $\opMat(X) \otimes \opMat(X) \cong \opMat(X)$ such that $(\omega \otimes \tilde{\omega})\alpha$ is unentangled. In this case, we call $\tilde{\omega}$ {\it an inverse} of $\omega$.
\end{definition}

\begin{remark}  \label{rmk:InvStateIsLocNormal}
Since, by definition, any unentangled state is locally normal, any invertible state is also locally normal. 
\end{remark}

\begin{remark} \label{rmk:QCAtoUS}
An application of a QCA $\alpha$ to an invertible state produces an invertible state. Indeed, if $\alpha$ is a QCA on $\opMat(X)$ for a uniformly locally finite metric space $X$, then $\alpha \otimes \alpha^{-1}$ is a quantum circuit automorphism on $\opMat(X) \otimes \opMat(X)$. Therefore, if $\omega$ is an invertible state on $\opMat(X)$ with an inverse $\tilde{\omega}$, then $\omega \alpha$ is an invertible state with an inverse $\tilde{\omega} \alpha^{-1}$.
\end{remark}

\begin{definition}\label{def:InvPh}
Let $X$ be a uniformly locally finite metric space. We say that states $\omega_1$, $\omega_2$ are {\it in the same phase} if for an unentangled state $\omega_0$ on $\opMat(X)$ the states $\omega_1 \otimes \omega_0$ and $\omega_2 \otimes \omega_0$ are related by a quantum circuit. This is an equivalence relation, and the equivalence classes are called {\it phases}. The set of phases of invertible states is denoted $\opIP(X)$. A tensor product of states and a (zero-spread) isomorphism $\opMat(X) \otimes \opMat(X) \cong \opMat(X)$ equips $\opIP(X)$ with the structure of an Abelian group. The phase of unentangled states corresponds to the unit element of $\opIP(X)$ and is called {\it trivial}. 
\end{definition}

\begin{definition}
A {\it stabilization} of a state $\omega$ is a state on $\opMat(X)$ of the form $(\omega \otimes \omega_0)\alpha$ for an unentangled state $\omega_0$ on $\opMat(X)$ and a (zero-spread) isomorphism $\alpha: \opMat(X) \to \opMat(X) \otimes \opMat(X)$. A state $\omega$ is called {\it stable} if it is related to its stabilization by a zero-spread isomorphism.
\end{definition}
\begin{remark}  \label{rmk:stablestates}
Every invertible state is in the same phase as its stabilization. For stable states, being in the same phase is equivalent to being related by a quantum circuit automorphism. This makes it convenient to work with stable states and quantum circuits when studying $\opIP$.
\end{remark}

\begin{prop} \label{prop:XxZg0istrivial}
For a uniformly locally finite metric space $X$, we have $\opIP(X \times \ZZ_{\geq 0}) = 0$.    
\end{prop}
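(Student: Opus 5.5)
Let $\omega$ be an invertible state on $\opMat(X \times \ZZ_{\geq 0})$ with inverse $\tilde{\omega}$. The plan is an Eilenberg swindle along the half-line, in the same spirit as the proof of Proposition \ref{prop:QCAs_on_a_half-space_are_QC}. By Remark \ref{rmk:stablestates} we may replace $\omega$ and $\tilde{\omega}$ by their stabilizations. We then fix a quantum circuit $\alpha$ of spread $l$ on $\opMat(X\times\ZZ_{\ge0})\otimes\opMat(X\times\ZZ_{\ge0})$ such that $(\omega\otimes\tilde\omega)\alpha=\omega_0$ is unentangled. The goal is to exhibit a quantum circuit taking $\omega\otimes\omega_0'$ to an unentangled state, for some unentangled $\omega_0'$.

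\textbf{Step 1: the infinite tower.} As in the proof of Proposition \ref{prop:QCAs_on_a_half-space_are_QC}, let $\tilde{\ql{A}}$ be the formal tensor product of the algebras $\opMat(X\times\ZZ_{\ge n})$, $n\ge0$. It is zero-spread isomorphic to $\opMat(X\times\ZZ_{\ge0})$, since each site meets only finitely many layers. On it we place the state
\[
\Omega=\tilde\omega^{(1)}\otimes\omega^{(2)}\otimes\tilde\omega^{(3)}\otimes\omega^{(4)}\otimes\cdots,
\]
where $\omega^{(n)}$ and $\tilde\omega^{(n)}$ denote the shifts of $\omega$ and $\tilde\omega$ to $X\times\ZZ_{\ge n}$. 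We then compare two ways of grouping the layers.

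\begin{itemize}
\item Group the pairs $(1,2),(3,4),\dots$. On each pair the shifted circuit $\alpha$ (with the factors swapped, which is a zero-spread operation) turns $\tilde\omega\otimes\omega$ into an unentangled state.
\item Group $\omega^{(0)}$ on the base layer with $\tilde\omega^{(1)}$, and then the pairs $(2,3),(4,5),\dots$. Each pair is disentangled by a shift of $\alpha$.
\end{itemize}

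The key point is that the shifted circuits are supported on $X\times\ZZ_{\ge n}$ with spread $l$. Hence for every bounded region only finitely many layers act nontrivially, so the infinite products of these gates assemble into a genuine finite-depth quantum circuit with spread about $l+1$ after projecting to $X\times\ZZ_{\ge0}$. This is the same shear picture as in Figure \ref{fig:qcahalfspaceswindle}.

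\textbf{Step 2: conclude.} The first grouping shows that $\omega\otimes\Omega$ is related by a circuit $\gamma_1$ to $\omega\otimes\omega_0'$, with $\omega_0'$ unentangled. The second grouping shows it is related by a circuit $\gamma_2$ to an unentangled state. Composing $\gamma_2\gamma_1^{-1}$ and using the zero-spread identification $\opMat\otimes\tilde{\ql A}\cong\opMat$ shows that $\omega$ is in the trivial phase. Purity is not an issue: all states involved are invertible, hence locally normal by Remark \ref{rmk:InvStateIsLocNormal}, and infinite tensor products of the relevant pure states remain pure.

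\textbf{Main obstacle.} The main obstacle is making the infinite tensor product of states and circuits rigorous. We must check two things:
\begin{enumerate}
\item The layer-grouping still yields a circuit with uniformly bounded gate supports and a finite number of layers. This means choosing a concrete layering: odd pairs in one step and even pairs in another, each a product over layers $n$ of $l$-spread circuits supported in $X\times\ZZ_{\ge n}$.
\item The resulting unentangled states remain unentangled after the zero-spread regrouping.
\end{enumerate}
A preliminary subtlety is that $\tilde\omega$ is defined on all of $X\times\ZZ_{\ge0}$ while we shift it to $X\times \ZZ_{\ge n}$. This is handled by the coarse identification $X\times\ZZ_{\ge n}\cong X\times \ZZ_{\ge 0}$ via translation, exactly as for $\alpha^{(n)}$ in Proposition \ref{prop:QCAs_on_a_half-space_are_QC}.
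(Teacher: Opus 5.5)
Your proposal is correct and is essentially the paper's own proof: an Eilenberg swindle that places the alternating tower of shifted copies of $\tilde\omega$ and $\omega$ on the formal tensor product of the $\opMat(X\times\ZZ_{\geq n})$ next to $\omega$, and then disentangles it in two different pairings, each by a circuit of spread $l+1$ (the $+1$ comes from the one-step offset between consecutive layers). The differences are cosmetic. First, the paper starts the tower at $\tilde\omega^{(0)}$, while you define $\tilde{\ql{A}}$ over $n\geq 0$ but start $\Omega$ at $n=1$. Second, the paper takes $(\omega\otimes\tilde\omega)\alpha=\omega_0\otimes\omega_0$, so the two endpoints are exactly $\omega\otimes\psi_0$ and $\omega_0\otimes\psi_0$ with the same ancilla. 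This lets the definition of ``same phase'' apply directly, whereas you must additionally relate two different unentangled states.
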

\begin{proof}
Let $\omega$ be an invertible state on $\ql{A} = \opMat(X \times \ZZ_{\geq 0})$. Choose a state $\tilde{\omega}$ on $\ql{A}$ and a quantum circuit $\alpha$ on $\ql{A} \otimes \ql{A}$ such that $(\omega \otimes \tilde{\omega}) \alpha = \omega_0 \otimes \omega_0$ is unentangled. Assume $\alpha$ has spread $l$.

Let $\ql{B}$ be a quasi-local algebra on $X \times \ZZ_{\geq 0 }$ that is defined by a formal composition of quasi-local algebras $\{ \opMat(X \times \ZZ_{\geq n}) \}_{n \in \ZZ_{\geq 0}}$. Since any bounded subset $U \subset X \times \ZZ_{\geq 0}$ overlaps with $\{X \times \ZZ_{\geq n}\}_{n \in \ZZ_{\geq 0}}$ only for finitely many $n$ and $B(\hat{\hilb{H}})^{\otimes k} \cong B(\hat{\hilb{H}})$ for any $k \in \NN$ and any separable $\hat{\hilb{H}}$ of superdimension $\infty|\infty$, we have an isomorphism $\sigma: \ql{B} \to \ql{A}$. 

Let $\omega^{(n)}$, $\tilde{\omega}^{(n)}, \omega^{(n)}_0$ be states on $\opMat(X \times \ZZ_{\geq n})$ naturally induced via shifts by $\omega$, $\tilde{\omega}$, $\omega_0$, respectively, and let $\psi_0$ and $\psi$ be states on $\ql{B}$ that are formally defined by infinite tensor products $\omega_0^{(0)} \otimes \omega^{(1)}_0 \otimes \omega^{(2)}_0 \otimes ...$ and $\tilde{\omega}^{(0)} \otimes \omega^{(1)} \otimes \tilde{\omega}^{(2)} \otimes ...$, respectively. Since $\alpha$ has spread $l$, a state $\omega \otimes \psi$ on $\ql{A} \otimes \ql{B}$ can be transformed into states $\omega \otimes \psi_0$ and $\omega_0 \otimes \psi_0$ by a quantum circuit of spread $l+1$ (see Fig. \ref{fig:halfspaceswindle}). Since $\ql{B} \cong \ql{A}$, the states $\omega \otimes \omega_0$ and $\omega_0 \otimes \omega_0$ are related by a quantum circuit. Thus, any invertible state $\omega$ on $X \times \ZZ_{\geq 0}$ is in a trivial phase.
\end{proof}

\begin{remark}
    This argument would not work for $X\times\ZZ$ since it relies on the shift between layers so that we are only ever considering finite tensor products of Hilbert spaces. For $X \times \ZZ_{\ge 0}$, it also works in a finite-dimensional context, although the dimensions of the local Hilbert spaces are unbounded.
\end{remark}

\begin{figure}
    \centering
    \includegraphics[width=10cm]{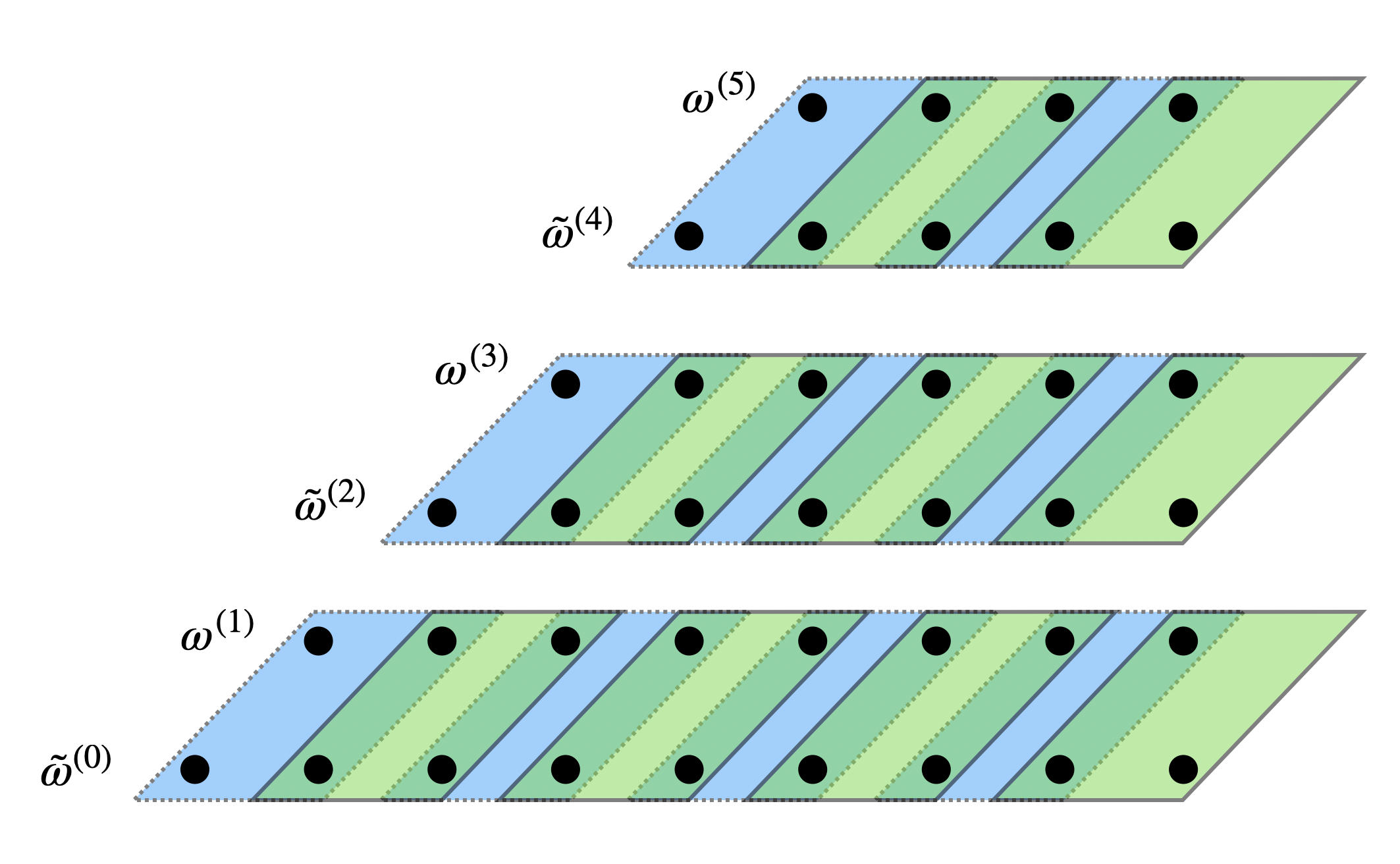}
    \caption{This figure depicts the construction in Proposition \ref{prop:XxZg0istrivial}, where the state $\psi = {\tilde{\omega}}^{(0)} \otimes \omega^{(1)} \otimes \cdots$ is created from $\psi_0 = \omega_0^{(0)} \otimes \omega_0^{(1)} \otimes \cdots$ by a circuit. Here we have drawn the circuit mapping $\omega_0 \otimes \omega_0$ to $\omega \otimes \tilde{\omega}$ as a depth-two circuit consisting of blue and green circuit elements, drawing only the $\ZZ_{\ge 0}$ coordinate. This picture is then compressed onto the $\ZZ_{\ge 0}$ axis to give a circuit on $\ZZ_{\ge 0}$. This results in a circuit, since on each finite $U \subset X \times \ZZ_{\ge 0}$, we only take a finite tensor product of sites and unitaries. The picture shows that the spread is bounded by $l+1$ if the spread of the original circuit is $l$.}
    \label{fig:halfspaceswindle}
\end{figure}

\begin{lemma}   \label{lma:stateonahalfplane}
Let $\omega_1$, $\omega_2$ be invertible states on $\ql{A} = \opMat(X\times\ZZ)$, such that their restrictions to $\opMat(X \times \ZZ_{< n})$, for some $n \in \ZZ$ coincide. Then they are in the same phase.
\end{lemma}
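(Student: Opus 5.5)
The plan is to reduce the statement to Proposition \ref{prop:XxZg0istrivial} by disentangling the left half-space, where the two states agree. Stabilize as in Remark \ref{rmk:stablestates}, so that being in the same phase means being related by a quantum circuit after tensoring with unentangled states. Pick an inverse $\tilde\omega_2$ of $\omega_2$ and a quantum circuit $\alpha_2$ of spread $l$ such that $(\omega_2\otimes\tilde\omega_2)\alpha_2$ is unentangled. Let $\alpha_2^{L}$ be the circuit obtained from $\alpha_2$ by keeping only the gates supported in $X\times\ZZ_{<n-l}$ (in the doubled system). The truncated circuit still maps $\opMat(X\times\ZZ_{<n})^{\otimes 2}$ into itself. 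Since $\omega_1$ and $\omega_2$ agree on $\opMat(X\times\ZZ_{<n})$, the states $(\omega_1\otimes\tilde\omega_2)\alpha_2^L$ and $(\omega_2\otimes\tilde\omega_2)\alpha_2^L$ therefore agree on the left region. On $X\times\ZZ_{<m}$, for $m=n-cl$ with a suitable constant $c$, $\alpha_2^L$ acts as $\alpha_2$, so both states restrict to the unentangled state $\omega_0|_{<m}$ there.

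Next I would use a factorization step. Let $\psi$ be a pure state on $\opMat(X\times\ZZ)$ (after identifying the doubled algebra with $\opMat(X\times\ZZ)$ by a zero-spread isomorphism) whose restriction to $\opMat(X\times\ZZ_{<m})$ is a pure product state. Then $\psi=\psi|_{<m}\otimes\psi|_{\geq m}$. This is the standard fact that a pure restriction forces factorization, and I would prove it via a Cauchy--Schwarz argument with the GNS vector. Applying it to both states gives
\[
(\omega_i\otimes\tilde\omega_2)\alpha_2^L=\omega_0|_{<m}\otimes\psi_i,\qquad i=1,2,
\]
with $\psi_i$ pure on $\opMat(X\times\ZZ_{\geq m})$. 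Once we know that $\psi_1$ and $\psi_2$ are invertible states on the half-space, Proposition \ref{prop:XxZg0istrivial} says that each is in the trivial phase there. Tensoring with $\omega_0|_{<m}$ then shows that $\omega_1\otimes\tilde\omega_2$ and $\omega_2\otimes\tilde\omega_2$ are in the same phase. Tensoring with $\omega_2$ and using a circuit that maps $\tilde\omega_2\otimes\omega_2$ to an unentangled state (a swap composed with $\alpha_2$) then gives $\omega_1\sim\omega_2$.

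The main obstacle is showing that $\psi_i$ is invertible on $X\times\ZZ_{\geq m}$. I would first try the same truncation trick from the other side. Consider $\omega_1\otimes\tilde\omega_2\otimes\tilde\omega_1\otimes\omega_2$. The right part of $\alpha_1\otimes\alpha_2$, with all gates supported in $X\times\ZZ_{\geq m'}$ for some $m'>n$, disentangles $\omega_1\otimes\tilde\omega_1$ and $\tilde\omega_2\otimes\omega_2$ far to the right, and commutes with $\alpha_2^L$ up to a bounded strip. This should show that $\psi_i\otimes\chi$ becomes unentangled after a circuit on the half-space, where $\chi$ is built from the analogous right-truncated states, and that $\chi$ is the required inverse. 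If controlling the overlap strip near $n$ becomes messy, the fallback is to absorb that strip into an enlarged site: this is a zero-spread regrouping, since the infinite-dimensional local algebras satisfy $B(\hilb{H})^{\otimes k}\cong B(\hilb{H})$.
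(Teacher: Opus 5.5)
Your reduction is sound and matches the paper's first step. You tensor with $\tilde\omega_2$, apply a disentangler of $\omega_2\otimes\tilde\omega_2$, and obtain a state $\Psi$ that is unentangled on a left half-space, which purity splits as $\Psi=\omega_0|_{<m}\otimes\psi_1$. The truncation $\alpha_2^L$ is unnecessary: the full $\alpha_2$ maps $\opMat(X\times\ZZ_{<n-l})$ into $\opMat(X\times\ZZ_{<n})$, so $(\omega_1\otimes\tilde\omega_2)\alpha_2$ already agrees with the unentangled $(\omega_2\otimes\tilde\omega_2)\alpha_2$ on $X\times\ZZ_{<n-l}$.

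The gap is the step you flag yourself. Nothing in the proposal shows that $\psi_1$ is an invertible state \emph{on the half-space}. That requires a pure state $\chi$ on the half-space and a circuit with all gates in the half-space taking $\psi_1\otimes\chi$ to an unentangled state, which is exactly what Proposition \ref{prop:XxZg0istrivial} needs. The four-copy sketch does not supply this, for two reasons.
\begin{itemize}
\item Right truncations of $\alpha_1\otimes\alpha_2$ disentangle only far to the right. Between the already-unentangled left region and that far-right region, a strip $X\times[a,b]$ remains whose state nothing controls.
\item The partner state $\tilde\omega_1\otimes\omega_2$ is neither unentangled on the left nor a product across the cut. Its half-space part is in general mixed, so it cannot be the inverse.
\end{itemize}
The regrouping fallback does not help either. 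Coarse-graining $X\times[a,b]$ into one layer of enlarged sites changes the description only in the $\ZZ$-direction. When $X$ is infinite the strip is still infinite, and the state on it can stay entangled along $X$; the isomorphism $B(\hilb{H})^{\otimes k}\cong B(\hilb{H})$ trivializes such a strip only when $X$ is finite.

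The missing idea, which is how the paper closes the argument, is to split a disentangler of $\Psi$ itself rather than truncating from the right.
\begin{itemize}
\item $\Psi$ is invertible, being a circuit applied to the invertible state $\omega_1\otimes\tilde\omega_2$. Take an inverse $\tilde\Psi$ and a circuit $\beta$ with $(\Psi\otimes\tilde\Psi)\beta$ unentangled.
\item Reorder gates (light-cone argument) to write $\beta=\beta_-\beta_+$. Here $\beta_-$ has all gates in the doubled $X\times\ZZ_{<m}$ and $\beta_+$ has all gates in $X\times\ZZ_{\ge m'}$, with $m'=m-c$.
\item The state $\Psi\otimes\tilde\Psi=\psi_1\otimes(\omega_0|_{<m}\otimes\tilde\Psi)$ is a product of the first copy on $\ZZ_{\ge m}$ with everything else, and $\beta_-$ acts only on the second factor. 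Hence $(\Psi\otimes\tilde\Psi)\beta_-=\psi_1\otimes\theta$, with $\theta=(\omega_0|_{<m}\otimes\tilde\Psi)\beta_-$ pure.
\item Since $\beta_+$ acts trivially on $X\times\ZZ_{<m'}$, $\theta$ is unentangled there. Your purity lemma then gives $\theta=\omega_0|_{<m'}\otimes\theta'$, and $(\psi_1\otimes\theta')\beta_+$ is unentangled with $\beta_+$ supported in $X\times\ZZ_{\ge m'}$.
\item Regroup $\theta'$ onto $\opMat(X\times\ZZ_{\ge m'})$ by absorbing its first-copy strip $X\times[m',m)$ into the second copy (zero-spread, after stabilization). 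It is then a pure inverse of $\omega_0|_{[m',m)}\otimes\psi_1$ on the half-space. The disentangler is $\beta_+$ together with a swap of the two copies of that strip, and Proposition \ref{prop:XxZg0istrivial} applies.
\end{itemize}
This is what the paper compresses into one sentence: one may choose $\tilde\omega_1$ unentangled on $X\times\ZZ_{<-l}$ and $\alpha$ with gates in $X\times\ZZ_{\ge -l}$. No four-copy construction is needed.
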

\begin{proof}
Since we can tensor both states with an inverse $\tilde{\omega}_2$ for $\omega_2$, without loss of generality, we can assume that $\omega_2$ is in a trivial phase. Furthermore, since for a quantum circuit $\alpha$ on $\ql{A}$ the restrictions of the states $\omega_1 \alpha$ and $\omega_2 \alpha$ to $\opMat(X \times \ZZ_{< (n-l)})$ coincide for some $l \geq 0$, without loss of generality, we can assume that $\omega_2$ is unentangled and $n = 0$.

Choose a state $\tilde{\omega}_1$ on $\ql{A}$ and a quantum circuit $\alpha$ on $\ql{A} \otimes \ql{A}$ such that $(\omega_1 \otimes \tilde{\omega}_{1})\alpha$ is unentangled. Suppose $\alpha$ has spread $l$. We can represent $\alpha$ as a composition of quantum circuits $\alpha_-$ and $\alpha_+$ whose unitary gates are all in $\opMat(X \times \ZZ_{<0})$ and $\opMat(X \times \ZZ_{\geq - l})$, respectively. It follows that we can choose $\tilde{\omega}_1$ such that its restriction to $\opMat(X \times \ZZ_{< - l})$ is unentangled, and $\alpha$ such that all its unitary gates are inside $\opMat(X \times \ZZ_{\geq - l})$. Thus, $\omega_1$ defines an invertible state on $\opMat(X \times \ZZ_{\geq -l})$. By Proposition \ref{prop:XxZg0istrivial}, this state is in a trivial phase. It follows that $\omega_1$ itself is in a trivial phase.
\end{proof}

A concrete inverse for a given invertible state is provided by a reflected state:

\begin{prop}    \label{prop:ReflectedState}
Let $\omega$ be an invertible state on $\ql{A} = \opMat(X \times \ZZ)$. Let $\sigma$ be a $*$-automorphism of $\ql{A}$ that maps $\ql{A}_U$ to $\ql{A}_{\tilde{U}}$, where $\tilde{U}$ is the reflection of $U \subset X \times \ZZ$ in a plane $X \times \{x\} \subset X \times \RR$ for $x \in \ZZ+1/2 \subset \RR$, and let $\tilde{\omega} = \omega \sigma$. Then $[\omega] = - [\tilde{\omega}]$ in $\IP(X \times \ZZ)$. 
\end{prop}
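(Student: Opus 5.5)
We must show that $\omega\otimes\tilde{\omega}$ lies in the trivial phase, where $\tilde{\omega}=\omega\sigma$. The plan is to construct a single invertible state on $\ql{A}$ that agrees with $\omega\otimes\tilde{\omega}$ on one side of the reflection plane and with an unentangled state on the other side. Lemma \ref{lma:stateonahalfplane} then forces the two to be in the same phase.

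The key construction is a folding map. The reflection plane is $X\times\{x\}$ with $x\in\ZZ+1/2$, and reflection in it is an involution of $X\times\ZZ$. Folding along it identifies $X\times\ZZ$ with two copies of the half-space $X\times\ZZ_{>x}$, and this identification moves points by a distance comparable to $|n-x|$, so it is unbounded. Near the plane, however, any bounded region is moved by a bounded amount.

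Here is a cleaner route that uses only the results stated above. First stabilize $\omega$, so that without loss of generality $\omega$ is stable. Let $\tau$ be the super-swap automorphism of $\ql{A}\otimes\ql{A}$. Define an automorphism $\kappa$ of $\ql{A}\otimes\ql{A}$ that acts as $\Id\otimes\Id$ on sites with $\ZZ$-coordinate greater than $x$, and as $\tau\circ(\sigma\otimes\sigma)$, suitably interpreted, on sites with coordinate less than $x$. More concretely, $\kappa$ swaps the two layers on the left half and leaves the right half fixed. This $\kappa$ is a zero-spread (graded) automorphism: it is the product of on-site swaps over sites with coordinate less than $x$, hence a one-layered circuit. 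We therefore have $[(\omega\otimes\tilde{\omega})\kappa]=[\omega\otimes\tilde{\omega}]$.

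Restricted to the half-space coordinate greater than $x$, the state $(\omega\otimes\tilde{\omega})\kappa$ is unchanged. On the left half, it is the state $\tilde{\omega}\otimes\omega$ restricted there. Next compare this with $\omega\otimes\omega'$, where $\omega'$ is an inverse of $\omega$. The restriction of $\omega'$ to the right half can be chosen unentangled by the argument in the proof of Lemma \ref{lma:stateonahalfplane}, since the circuit gates can be pushed to one side. This comparison does not by itself close the argument, because the swap trick only rearranges which layer carries which half.

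The honest core step is therefore the following. I would show that $\omega\otimes\tilde{\omega}$ restricted to $X\times\ZZ_{<x}$ equals $(\tilde{\omega}\otimes\omega)$ restricted to the left half composed with the swap. On the right half, $\omega\otimes\tilde{\omega}$ equals $\tilde{\omega}\otimes\omega$ composed with the swap, and the swap is a one-layer circuit. Hence $[\omega\otimes\tilde{\omega}]$ is invariant under the layer swap, which is automatic. To get triviality, I would fold: use $\sigma$ to identify the left half of the first layer with the right half of the second layer. The state $\omega\otimes\tilde{\omega}$ then becomes the state $\omega|_{\text{right}}\otimes\omega|_{\text{left}}$ placed on the two halves, and the same holds for both layers. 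The main obstacle is to show that this reorganization is implemented by a bounded-spread map only near the plane, while far from the plane it is a product of on-site identifications. I expect this step to need the half-space swindle of Proposition \ref{prop:XxZg0istrivial} applied to each of the two half-spaces, so that the state can be untangled away from the plane. Lemma \ref{lma:stateonahalfplane} would then handle the remaining region near the plane.
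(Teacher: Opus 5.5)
Your opening plan is the right one, and it is the paper's plan: build an invertible state that is unentangled on one side of the plane and agrees with $\omega\otimes\tilde{\omega}$ on the other, then apply Lemma \ref{lma:stateonahalfplane}. However, the proposal never produces that state. The layer swap $\kappa$ is a dead end, as you note yourself. The step you call the ``honest core'' is both left open and misdiagnosed.

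The folding map is \emph{not} bounded spread away from the plane: it moves a site at distance $d$ from $X\times\{x\}$ by $2d$. So it is nowhere ``a product of on-site identifications far from the plane,'' and no bounded-spread implementation of it exists or is needed. What has to be compared are \emph{states}, not maps. Lemma \ref{lma:stateonahalfplane} does not ``handle a region near the plane''; it says that two invertible states coinciding on a half-space are in the same phase. So you must say which two invertible states coincide where.

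The missing ingredient is that an invertible state has strictly finite correlation length. If $(\omega\otimes\omega')\alpha$ is unentangled and $\alpha^{-1}$ has spread $l$, then $\omega(ab)=\omega(a)\omega(b)$ whenever the supports of $a$ and $b$ are more than $2l$ apart.

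With this, the argument closes as follows (take $x=-1/2$).
\begin{enumerate}
\item Let $\rho$ exchange $\ql{A}_{X\times\ZZ_{<0}}\otimes 1$ with $1\otimes\ql{A}_{X\times\ZZ_{\geq 0}}$ via the reflection and fix the other two quarters. Set $\psi=(\omega\otimes\omega_0)\rho$ with $\omega_0$ unentangled.
\item $\psi$ is invertible: fold the disentangling circuit for $\omega\otimes\omega'$ as well, since a folded circuit is still a circuit.
\item $\psi$ equals $\omega_0\otimes\omega_0$ on $X\times\ZZ_{<0}$, so it is trivial by Lemma \ref{lma:stateonahalfplane}.
\item On $X\times\ZZ_{\geq l}$, $\psi$ is the restriction of $\omega$ to $X\times(\ZZ_{\geq l}\cup\ZZ_{<-l})$, with the reflection applied to the second piece. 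These two pieces are $2l+1$ apart, so this restriction factorizes and coincides with $\omega\otimes\tilde{\omega}$ there.
\item Since $\omega\otimes\tilde{\omega}$ is invertible, a second application of Lemma \ref{lma:stateonahalfplane} gives $[\omega\otimes\tilde{\omega}]=[\psi]=0$.
\end{enumerate}

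Your own variant would also work: folding $\omega\otimes\tilde{\omega}$ itself yields a product of a folded $\omega$ on the right half-space and a folded $\tilde{\omega}$ on the left, each trivial by Proposition \ref{prop:XxZg0istrivial}. But it needs the same two steps you did not supply: invertibility of the folded state, and its agreement with $\omega\otimes\tilde{\omega}$ far from the plane via the correlation-length bound.
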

\begin{proof}
Without loss of generality we can assume $x = -1/2$.

Let $\omega_0$ be an unentangled state on $\ql{A}$, and let $\rho$ be an automorphism of $\ql{A} \otimes \ql{A} \cong \ql{A}$ that swaps $\ql{A}_{X \times \ZZ_{<0}} \otimes 1$ and $1 \otimes \ql{A}_{X \times \ZZ_{\geq 0}}$ and acts trivially on $\ql{A}_{X \times \ZZ_{\geq 0}} \otimes \ql{A}_{X \times \ZZ_{< 0}}$. The state $\psi = (\omega \otimes \omega_0) \rho$ is invertible and its restriction to $X \times \ZZ_{<0}$ is unentangled. By Lemma \ref{lma:stateonahalfplane}, it is in a trivial phase. 

Let $\alpha$ be a quantum circuit on $\ql{A} \otimes \ql{A}$ such that $\psi \alpha$ is unentangled. Then the restriction of $(\omega \otimes \tilde{\omega}) \alpha$ to $X \times \ZZ_{\geq l}$ is unentangled for some $l>0$. Since $(\omega \otimes \tilde{\omega})$ is invertible, by Lemma \ref{lma:stateonahalfplane}, it is in a trivial phase.
\end{proof}

\subsection{Invertible Quasi-Local Algebras and invertible states}\label{ssec:InvAlgebrasAndStates}

With any invertible state $\omega$ on $\opMat(X\times\ZZ)$, one can associate a natural quasi-local algebra over $X$ in the following way.
\begin{definition} \label{def:BALGfromSTATE}
Let $X$ be a uniformly locally finite metric space. An {\it $n$-th right boundary algebra} associated with a locally normal state $\omega$ on $\ql{A} = \opMat(X\times\ZZ)$ is a quasi-local algebra $\ql{B}^{(n)}_R$ over $X$ defined via $U \to \pi_{\omega}(\ql{A}_{U \times \ZZ_{\geq n}})''$ for bounded subsets $U \subset X$. We let $\ql{B}_R = \ql{B}^{(0)}_R$ and call it a {\it right boundary algebra} associated with $\omega$.  Similarly, an {\it $n$-th left boundary algebra} associated with $\omega$ is a quasi-local algebra $\ql{B}^{(n)}_L$ over $X$ defined via $U \to \pi_{\omega}(\ql{A}_{U \times \ZZ_{< n}})''$. We let $\ql{B}_L = \ql{B}^{(0)}_L$ and call it a {\it left boundary algebra} associated with $\omega$. See Fig. \ref{fig:boundaryalgebraofstate}.
\end{definition}

\begin{remark}
The algebras $\ql{B}^{(n)}_{L,R}$ are stable and therefore are all isomorphic to each other and $\ql{B}_{L,R}$. Moreover, a stabilization of a state $\omega$ has boundary algebras which are isomorphic to the boundary algebras of $\omega$.
\end{remark}

\begin{figure}
    \centering
    \includegraphics[width=8cm]{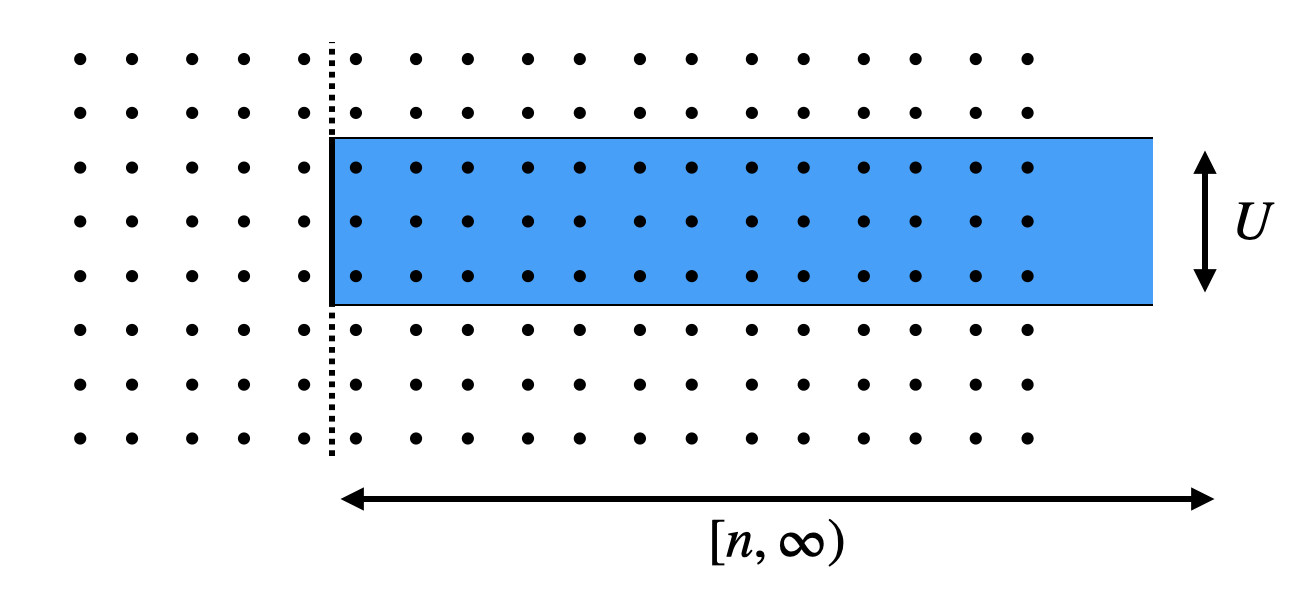}
    \caption{An illustration of Definition \ref{def:BALGfromSTATE} of the boundary algebra of a state $\omega$. For each $U \subset \ZZ^{d-1}$, We take all the quasi-local operators whose support lies in $[n,\infty) \times U$ and complete this algebra to a von Neumann algebra in the GNS representation of $\omega$. This gives a quasi-local algebra on $\ZZ^{d-1}$. In Proposition \ref{prop:SREtoinvBAlg} we show that for an invertible state $\omega$, this defines an invertible quasi-local algebra.}
    \label{fig:boundaryalgebraofstate}
\end{figure}

The following lemma shows that the von Neumann (super)algebras of a boundary algebra are all (super)factors.
\begin{lemma}   \label{lma:RestrictionIsASuperfactor}
Let $X$ be a uniformly locally finite metric space and $\omega$ be a graded, locally normal pure state on $\opMat(X)$. Then for any subset $Y \subset X$, the von Neumann algebra $\pi_{\omega}(\opMat(Y))''$ is a hyperfinite (super)factor. 
\end{lemma}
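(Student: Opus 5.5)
Two properties need proof: that $\pi_\omega(\opMat(Y))''$ is a (super)factor, and that it is hyperfinite. We work in the GNS representation $(\pi_\omega,\hilb{H}_\omega,\Omega_\omega)$. Because $\omega$ is pure (and graded), $\pi_\omega$ is irreducible; in the graded case we use irreducibility as a graded representation, so that the supercommutant of $\pi_\omega(\opMat(X))$ is $\CCC$. Set $\cstar{M}_Y=\pi_\omega(\opMat(Y))''$ and $\cstar{M}_{Y^c}=\pi_\omega(\opMat(X\setminus Y))''$. By the supercommutation property of the quasi-local structure, $\cstar{M}_Y$ and $\cstar{M}_{Y^c}$ supercommute. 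Moreover $\opMat(X)$ is the norm closure of the algebraic span of products $\opMat(Y)_U\cdot\opMat(X\setminus Y)_V$. Hence $\cstar{M}_Y\vee\cstar{M}_{Y^c}=\pi_\omega(\opMat(X))''=B(\hilb{H}_\omega)$.

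\emph{Superfactor property.} Take $z$ in the supercenter $Z(\cstar{M}_Y)$; as noted after Proposition \ref{prop:SuperfactorCharacterization}, such a $z$ is even. It supercommutes with $\cstar{M}_Y$, and since it is even and lies in $\cstar{M}_Y$ it also supercommutes with $\cstar{M}_{Y^c}$. So $z$ supercommutes with $\cstar{M}_Y\vee\cstar{M}_{Y^c}=B(\hilb{H}_\omega)$, and by irreducibility $z\in\CCC$. This gives the superfactor statement; in the bosonic case the same argument gives an ordinary factor.

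\emph{Hyperfiniteness.} The plan is to exhaust $Y$ by finite subsets $Y_1\subset Y_2\subset\cdots$, which is possible because $X$ is uniformly locally finite and hence countable. Then $\cstar{M}_Y$ is generated by the increasing union $\bigcup_k\pi_\omega(\opMat(X)_{Y_k})''$. Each $\opMat(X)_{Y_k}$ is a type $\rm{I}$ factor, and local normality makes $\pi_\omega$ normal on it. A normal representation of a type $\rm{I}$ factor has image that is again a type $\rm{I}$ factor, being isomorphic to it, so each $\pi_\omega(\opMat(X)_{Y_k})''$ is injective. An increasing union of injective von Neumann algebras has injective weak closure: take weak-$*$ limit points of conditional expectations, or equivalently use that the union of AFD algebras is AFD. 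By Connes' theorem, injectivity is equivalent to hyperfiniteness for algebras on the separable space $\hilb{H}_\omega$ (separable because $\opMat(X)$ is generated by countably many separable von Neumann algebras and $\omega$ is locally normal). For the super case we then pass to the underlying ungraded algebra.

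\emph{Main obstacle.} The delicate point is the graded irreducibility step. Graded purity gives that the graded commutant is trivial, but the ungraded commutant may contain an odd unitary; this is exactly the $\text{Cl}_1$ situation. To handle it, I would argue via the $\fparity$-invariance of $\omega$. Parity is implemented by a unitary $\Gamma$ fixing $\Omega_\omega$, so the even part of $\pi_\omega(\opMat(X))'$ commutes with $\Gamma$ and lies in the supercommutant, and is therefore $\CCC$. This forces the supercommutant to be trivial. One also has to check that $z$ even is indeed what is needed to apply this. The hyperfiniteness step is standard.
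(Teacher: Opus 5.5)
Your superfactor argument is the paper's. Both write $\pi_\omega(\opMat(X))''$ as the join of the supercommuting algebras $\cstar{M}_Y$ and $\cstar{M}_{Y^c}$, observe that an (automatically even) supercentral element of $\cstar{M}_Y$ commutes with both, and conclude from purity that it is a scalar.

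However, your justification of $\cstar{M}_Y\vee\cstar{M}_{Y^c}=\pi_\omega(\opMat(X))''$ is false as written. $\opMat(X)$ is \emph{not} the norm closure of the span of products $\opMat(X)_U\cdot\opMat(X)_V$ with $U\subseteq Y$, $V\subseteq X\setminus Y$. Each local algebra $\opMat(X)_W=\opMat(X)_{W\cap Y}\otimes\opMat(X)_{W\cap Y^c}$ is a spatial von Neumann tensor product of infinite type $\rm{I}$ factors, and the algebraic tensor product is only weakly dense in it. Its norm closure is the minimal $C^*$-tensor product, a proper subalgebra: already for $X=\{1,2\}$, $Y=\{1\}$ one has $B(\hilb{H})\otimes_{\min}B(\hilb{H})\subsetneq B(\hilb{H}\otimes\hilb{H})$. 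The equality you need still holds, but the reason is weak density together with normality of $\pi_\omega$ on each $\opMat(X)_W$, i.e.\ local normality of $\omega$. This is also how the paper (implicitly) gets $\pi_\omega(\ql{A}_U)\subseteq\cstar{M}\vee\cstar{N}$. You invoke local normality only in the hyperfiniteness part, but the factor part needs it too, and norm density does not substitute for it.

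There are two smaller points.
\begin{enumerate}
\item In the graded case you only need the \emph{even} part of $\pi_\omega(\opMat(X))'$ to be trivial, which either reading of ``graded pure'' gives. That suffices because $z$ is even. Your claim that this ``forces the supercommutant to be trivial'' is wrong in general: if the ungraded commutant contains an odd unitary $u$, then $\Gamma u$ is a nonzero odd element of the supercommutant. Fortunately that claim is not needed.
\item The paper's proof does not address hyperfiniteness at all, so that part of your proposal is additional. It is correct in outline, but a weak-$*$ limit point of expectations onto the $\cstar{M}_k$ is not normal and is only guaranteed to fix the norm closure of $\bigcup_k\cstar{M}_k$. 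You have two ways to repair it. One is to run the limit argument on the decreasing (ungraded) commutants $\bigcap_k\cstar{M}_k'=\cstar{M}_Y'$ and use that a von Neumann algebra is injective iff its commutant is. The other is to use directly that nested type $\rm{I}$ factors split, $\cstar{M}_{k+1}\cong\cstar{M}_k\otimes(\cstar{M}_k'\cap\cstar{M}_{k+1})$, which yields an explicit increasing sequence of finite-dimensional subalgebras with weakly dense union.
\end{enumerate}
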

\begin{proof}
Let $\ql{A} = \opMat(X)$, $\ql{B} = \opMat(Y)$, $\ql{C} = \opMat(Y^c)$. Consider the von Neumann superalgebras $\cstar{P} = \pi_{\omega}(\ql{A})''$, $\cstar{M} = \pi_{\omega}(\ql{B})''$ and $\cstar{N} = \pi_{\omega}(\ql{C})''$. Since $\ql{B}$ and $\ql{C}$ supercommute, clearly $\cstar{N} \subset \cstar{M}'$ and $\cstar{M} \subseteq \cstar{N}'$. We claim $\cstar{P} = \cstar{M} \vee \cstar{N}$. Note that for any bounded $U$, $\ql{A}_{U} = \ql{B}_{U\cap Y} \otimes \ql{C}_{U\cap Y^{c}}$. In particular, $\pi_{\omega}(\ql{A}_{U}) \subseteq \cstar{M} \vee \cstar{N}$, and since $\ql{A}$ is generated by $\{\ql{A}_{U}\}_{U \subset X}$, this implies $\cstar{P} \subseteq \cstar{M} \vee \cstar{N}$. The reverse inclusion follows by definition, so $\cstar{P} = \cstar{M} \vee \cstar{N}$ as claimed.

Now, suppose $x\in Z(\cstar{M}) = \cstar{M} \cap \cstar{M}'$. Since $x \in \cstar{M}$ and $\cstar{M} \subseteq \cstar{N}'$, $x\in \cstar{N}'$. Then $x \in \cstar{M}'\cap \cstar{N}'=(\cstar{M} \vee \cstar{N})' = \cstar{P}'$. Thus, $x \in \cstar{P} \cap \cstar{P}' = Z(\cstar{P})$. But since $\omega$ is graded pure on $\ql{A}$, $Z(\cstar{P}) = Z(B(\hilb{H}_{\omega}))=\CCC 1$. Thus $Z(\cstar{M}) \subseteq \CCC 1$, so $\cstar{M}$ is a superfactor.

\end{proof}

\begin{prop} \label{prop:SREtoinvBAlg}
Let $X$ be a uniformly locally finite metric space. Let $\omega$ be an invertible state on $\ql{A} = \opMat(X\times\ZZ)$, and let $\psi$ be the induced state on $\ql{B}_{L}\otimes \ql{B}_{R}$. Then $\ql{B}_L$, $\ql{B}_R$ are invertible and there exists a bounded spread isomorphism $\beta: \ql{B}_L \otimes \ql{B}_R \to \opMat(X)$ such that $\psi\beta^{-1}$ is in a trivial phase.
\end{prop}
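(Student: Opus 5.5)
The plan is to pass to the doubled state $\Omega:=\omega\otimes\tilde{\omega}$ on $\ql{A}\otimes\ql{A}\cong\ql{A}$, with $\tilde{\omega}$ an inverse of $\omega$. Fix a quantum circuit $\alpha$ of spread $l$ with $\Omega\alpha=\omega_0$ unentangled. Everything about the boundary algebras of $\omega$ will be read off from $\omega_0$ through $\alpha$, and then peeled apart using the tensor-factor lemmas of Section \ref{sec:QuasiLocalAlgebras}.

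\textbf{Step 1: the unentangled state.} For $\omega_0$ the picture is transparent. The GNS space is an infinite tensor product of the site GNS spaces. Normality and purity on each $B(\hilb{H}_j)$ make each site GNS representation the identity representation. Hence $\pi_{\omega_0}(\ql{A}_{U\times Y})''$ is a type $\rm{I}$ (super)factor for every $Y\subset\ZZ$, by Lemma \ref{lma:RestrictionIsASuperfactor} together with the product structure. In particular:
\begin{itemize}
\item the strip algebras $U\mapsto\pi_{\omega_0}(\ql{A}_{U\times\ZZ})''$ form a quasi-local algebra over $X$ that is zero-spread isomorphic to $\opMat(X)$;
\item it factorizes as $\pi_{\omega_0}(\ql{A}_{U\times \ZZ_{<n}})''\otimes\pi_{\omega_0}(\ql{A}_{U\times\ZZ_{\ge n}})''$, by Proposition \ref{prop:SuperalgebraSplitProperty} using the product GNS vector.
\end{itemize}

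\textbf{Step 2: transport through $\alpha$.} Since $\pi_\Omega\circ\alpha$ is unitarily equivalent to $\pi_{\omega_0}$, and $\alpha^{\pm1}$ has spread $l$, we get sandwich inclusions of the form
$$\pi_\Omega(\ql{A}_{U\times\ZZ_{<0}})''\subseteq \pi_{\omega_0}\alpha^{-1}\bigl(\ql{A}_{U^{+l}\times\ZZ_{<l}}\bigr)'',$$
and similarly for the other half-strip and the full strip. The full-strip algebra of $\Omega$ is then bounded spread isomorphic to $\opMat(X)$. The same holds for the full-strip algebra of $\omega$, since it is a (super)factor by Lemma \ref{lma:RestrictionIsASuperfactor} and sits as a tensor factor inside that of $\Omega$. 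The tensor factor statement is checked via Lemma \ref{lma:gekadison} and Lemma \ref{lma:Lud230}.

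\textbf{Step 3: factorization of the strip.} Next I would show that multiplication $\ql{B}_L(U)\otimes\ql{B}_R(U)\to\pi_\omega(\ql{A}_{U\times\ZZ})''$ is an isomorphism, with surjectivity up to spread coming from the additivity, Proposition \ref{prop:AdditivityForQLAlgebras}. Injectivity should follow from a split argument: near the cut, $\Omega$ looks like $\omega_0$ up to a circuit supported in $U^{+l}\times[-l,l]$. Undoing the gates crossing the cut gives a vector for which the two half-strip algebras (buffered by $l$) are in product position, so Proposition \ref{prop:SuperalgebraSplitProperty} applies. This yields $\beta:\ql{B}_L\otimes\ql{B}_R\to\opMat(X)$ of bounded spread. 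Since the inverse algebras $\tilde{\ql{B}}_{L,R}$ of $\tilde\omega$ sit in the complementary tensor factor, the same argument also shows that $\ql{B}_L$ and $\ql{B}_R$ are each invertible, with inverse $\ql{B}_R$ (or $\ql{B}_L\otimes\tilde{\ql{B}}_L$ paired with $\tilde{\ql{B}}_R$).

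\textbf{Step 4: triviality of the induced state.} The induced state $\psi$ is the vector state of $\Omega_\omega$ on the strip algebras, i.e. essentially $\omega$ pushed forward along $X\times\ZZ\to X$. The pushforward of the circuit $\alpha$ is still a circuit over $X$, because gates of bounded diameter project to gates of bounded diameter and each layer stays disjoint after regrouping finitely many rows. It carries $\psi\otimes\tilde\psi$ to the pushforward of $\omega_0$, which is unentangled over $X$. So $\psi\beta^{-1}$ is invertible with inverse $\tilde\psi$. To see it is actually trivial, I would stabilize. Then $\psi$ itself is related to an unentangled state by the pushed-forward swindle, because the $\ZZ$-direction collapses into the (infinite-dimensional) local factors, exactly as in Proposition \ref{prop:XxZg0istrivial}.

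\textbf{Main obstacle.} I expect the hardest step to be Step 3: showing that the two half-strip von Neumann algebras of $\omega$ generate a tensor product, and not just a type $\rm{I}$ factor. My first attempt would be to transfer the exact product structure of $\omega_0$ across the cut via the finitely many gates of $\alpha$ straddling it. The second factor $\tilde\omega$ would then be removed by slicing with Proposition \ref{prop:superslice}.
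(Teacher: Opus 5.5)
There is a genuine gap, and it sits in Step 2. You correctly get that the strip net $\ql{S}_\Omega$, $U\mapsto\pi_\Omega(\ql{A}_{U\times\ZZ})''$, equals $\ql{S}_\omega\otimes\ql{S}_{\tilde\omega}$ and is bounded spread isomorphic to $\opMat(X)$. But you then conclude that $\ql{S}_\omega$ itself is isomorphic to $\opMat(X)$ ``since it is a (super)factor and sits as a tensor factor'', and this does not follow. Being a tensor factor of $\opMat(X)$ is exactly invertibility, not Brauer triviality. For instance, $\Amajd$ is a tensor factor of $\sMat(\ZZ)$ by Theorem \ref{thm:InvertibilityofAmaj}, yet it is nontrivial by Theorem \ref{thm:NontrivialityOfAmaj}. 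Local (super)factoriality cannot rescue the inference either: by Remark \ref{rmk:HyperfiniteRepresentative}, every Brauer class has representatives with hyperfinite superfactor local algebras.

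Your Step 3 is essentially fine: product position across a buffered cut via Proposition \ref{prop:SuperalgebraSplitProperty}, with the finite middle slab absorbed as a type I factor via Lemma \ref{lma:gekadison} and stability. It identifies $\ql{S}_\omega\cong\ql{B}_L\otimes\ql{B}_R$. So the triviality of $\ql{S}_\omega$ that you assumed is precisely the nontrivial content $[\ql{B}_L]+[\ql{B}_R]=0$ of the proposition. What your argument actually proves is only that $\ql{B}_L$ and $\ql{B}_R$ are each invertible.

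Step 4 does not close the gap. The projection $X\times\ZZ\to X$ is not proper. A single layer of $\alpha$ has infinitely many gates over each bounded $U\subset X$, and their product is not an element of any local algebra over $X$; it need not even converge in the GNS representation. So $\alpha$ does not push forward to a circuit over $X$. Likewise, the swindle of Proposition \ref{prop:XxZg0istrivial} needs a half-line, and the paper remarks that it fails on $X\times\ZZ$.

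Note also that triviality of $\psi\beta^{-1}$ depends on the choice of $\beta$. Replacing $\beta$ by $\gamma\beta$ for a nontrivial QCA $\gamma$ over $X$ (e.g.\ the Majorana translation when $X=\ZZ$) changes the phase, so you must construct a specific $\beta$.

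The missing idea is folding. Let $\rho$ be the automorphism of $\ql{A}\otimes\ql{A}$ that exchanges the left half-space of the first copy with the right half-space of the second copy, acting trivially on the rest. Then the state $\varphi=(\omega\otimes\omega_0)\rho$ is invertible and unentangled on the left half-space. Its right boundary algebra is the whole strip net $\ql{S}_\omega\cong\ql{B}_L\otimes\ql{B}_R$. Proposition \ref{prop:XxZg0istrivial} then gives a disentangling circuit with all gates in the right half-space. This circuit induces a bounded spread isomorphism from that right boundary algebra onto the boundary algebra of an unentangled state, i.e.\ onto $\opMat(X)$. This particular $\beta$ carries $\psi$ to the restriction of an unentangled state, which yields both conclusions at once.
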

\begin{proof}
By Remark \ref{rmk:stablestates}, without loss of generality, we can assume that $\omega$ is stable. By definition, there exists a state $\tilde{\omega}$ on $\opMat(X)$, an unentangled state $\omega_0$ on $\opMat(X)$, and a quantum circuit $\alpha$ on $\ql{A} \otimes \ql{A}$, such that $\omega \otimes \tilde{\omega} = (\omega_0 \otimes \omega_0) \alpha$. Assume $\alpha$ has spread $l$.

For any $a \in (\ql{B}^{(-2l)}_L)_{U}$ and $b \in (\ql{B}^{(2l)}_R)_{U}$, we have $\lal \Omega_{\omega}, a b \Omega_{\omega} \ral = \lal \Omega_{\omega}, a \Omega_{\omega} \ral \lal \Omega_{\omega}, b \Omega_{\omega} \ral$. 
By Lemma \ref{lma:RestrictionIsASuperfactor} and Proposition \ref{prop:SuperalgebraSplitProperty}, we have isomorphisms $(\ql{B}^{(-2l)}_L)_{U} \vee (\ql{B}^{(2l)}_R)_{U} \cong (\ql{B}^{(-2l)}_L)_{U} \otimes (\ql{B}^{(2l)}_R)_{U}$. Therefore, for quasi-local algebras $\ql{C}^{(n)}$ defined via $U \to \pi_{\omega}(\ql{A}_{U \times (\ZZ_{< -n} \cup \ZZ_{\geq n})})''$, we have $\ql{C}^{(0)} \cong \ql{C}^{(n)} \cong \ql{B}^{(-n)}_L \otimes \ql{B}^{(n)}_R \cong \ql{B}_L \otimes \ql{B}_R$, where we have used that $\ql{C}^{(n)}, \ql{B}^{(n)}_L, \ql{B}^{(n)}_R$ are stable.

Let $\rho$ be an automorphism of $\ql{A} \otimes \ql{A} \cong \ql{A}$ that swaps $\ql{A}_{\ZZ_{<0}} \otimes 1$ and $1 \otimes \ql{A}_{\ZZ_{\geq 0}}$ and acts trivially on $\ql{A}_{\ZZ_{\geq 0}} \otimes \ql{A}_{\ZZ_{< 0}}$ as described in Proposition \ref{prop:ReflectedState}. The state $\varphi = (\omega \otimes \omega_0) \rho$ is invertible and its associated right boundary algebra is $\ql{C} \cong \ql{C}^{(0)}$ (see Figure \ref{fig:balg-inv}). Its restriction to $\opMat(X \times \ZZ_{<0})$ in unentangled, and since $\omega$ is stable, by Proposition \ref{prop:XxZg0istrivial}, there exists a quantum circuit $\beta$ with all gates being in $\opMat(X \times \ZZ_{\geq 0})$ that transforms $\varphi$ to an unentangled state. This circuit induces a bounded spread isomorphism between $\ql{C} \cong \ql{B}_L \otimes \ql{B}_R$ and the boundary algebra of an unentangled state which is isomorphic to $\opMat(X)$. Indeed, since $\cstar{C}$ is stable, it is isomorphic to the right boundary algebra of $\varphi$ at some cut $n < 0$. The circuit $\beta$ acts entirely in $\ZZ_{>n} \times X$ and so defines a bounded-spread isomorphism on this right boundary algebra. 

The quantum circuit $\beta$ induces a quantum circuit on $\ql{C} \cong \opMat(X)$ that transforms the state $\psi \beta^{-1}$ to an unentangled state.   
\end{proof}

\begin{figure}
    \centering
    \includegraphics[width=8cm]{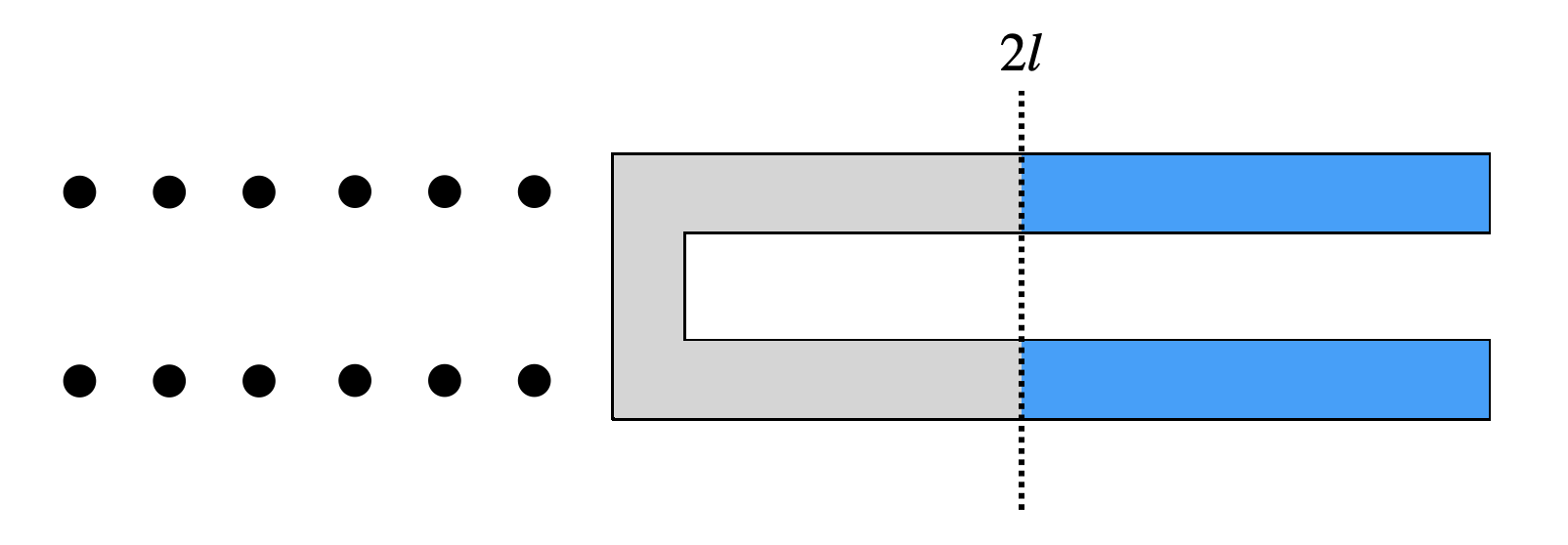}
    \caption{The right boundary algebra (blue) of the folded state is isomorphic to $\cstar{B}_L^{(-2l)}\otimes \cstar{B}_R^{(2l)}$. For an invertible state, this algebra is bounded-spread isomorphic to $\opMat(X)$ since the folded state is disentanglable by Proposition \ref{prop:XxZg0istrivial}, and therefore $\cstar{B}_L$ and $\cstar{B}_R$ are invertible quasi-local algebras, see Proposition \ref{prop:SREtoinvBAlg}. }
    \label{fig:balg-inv}
\end{figure}

\begin{prop} \label{prop:SamePhaseSameBAlg}
Let $\omega, \tilde{\omega}$ be invertible states on $\ql{A} = \opMat(X\times\ZZ)$ in the same phase. Then for the associated boundary algebras $\ql{B}_{L,R}$, $\tilde{\ql{B}}_{L,R}$, we have $[\ql{B}_{R}] = [\tilde{\ql{B}}_{R}]$, $[\ql{B}_{L}] = [\tilde{\ql{B}}_{L}]$ in $\opBrgr(X)$.
\end{prop}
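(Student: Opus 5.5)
Since being in the same phase means $\omega\otimes\omega_0$ and $\tilde\omega\otimes\omega_0$ are related by a quantum circuit, the plan is to split the statement into two invariance properties of the boundary algebra. The first is \emph{invariance under stabilization}: the boundary algebras of $\omega\otimes\omega_0$ (under a zero-spread identification $\opMat(X\times\ZZ)\otimes\opMat(X\times\ZZ)\cong\opMat(X\times\ZZ)$) are Brauer equivalent to those of $\omega$. The second is \emph{invariance under circuits}: if $\tilde\omega=\omega\gamma$ for a quantum circuit $\gamma$ on $\ql{A}=\opMat(X\times\ZZ)$, then $\tilde{\ql{B}}_R$ is bounded spread isomorphic to $\ql{B}_R$, and similarly on the left. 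Combining the two gives the claim. I treat only the right boundary algebras; the left case is symmetric.

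For stabilization, the GNS representation of $\omega\otimes\omega_0$ is the (graded) tensor product of the GNS representations. Hence $\pi_{\omega\otimes\omega_0}(\ql{A}_{U\times\ZZ_{\ge0}}\otimes\ql{A}_{U\times\ZZ_{\ge0}})''=(\ql{B}_R)_U\otimes\pi_{\omega_0}(\ql{A}_{U\times\ZZ_{\ge0}})''$. Since $\omega_0$ is unentangled and locally normal pure, the second factor is a type I (super)factor. I expect it to be isomorphic, with zero spread over $X$, to $\opMat(X)_U$: identify $\pi_{\omega_0}(\ql{A}_{U\times\ZZ_{\ge0}})''$ with $B(\bigotimes_{j\in U}\hilb{K}_j)$, where $\hilb{K}_j$ is the infinite tensor product of the local Hilbert spaces along $\{j\}\times\ZZ_{\ge0}$ with respect to the product vectors. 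So the stabilized boundary algebra is $\ql{B}_R\otimes\opMat(X)$, which is Brauer equivalent to $\ql{B}_R$.

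For circuit invariance, let $\gamma$ have spread $l$. Then $\pi_{\tilde\omega}=\pi_\omega\circ\gamma$ on the same Hilbert space, so $(\tilde{\ql{B}}_R)_U=\pi_\omega(\gamma(\ql{A}_{U\times\ZZ_{\ge0}}))''$. The difficulty is that $\gamma$ maps the half-space algebra only into $\ql{A}_{U^{+l}\times\ZZ_{\ge -l}}$, not into the half-space itself. To deal with this, I decompose $\gamma=\gamma_-\gamma_+$, where $\gamma_+$ collects the gates supported in $X\times\ZZ_{\ge -l'}$ and $\gamma_-$ the remaining ones, which are supported in $X\times\ZZ_{<0}$ and commute with the right half. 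Thus $\gamma$ and $\gamma_+$ agree on $\ql{A}_{X\times\ZZ_{\ge0}}$, and $\gamma_+$ restricts to a bounded spread automorphism of the quasi-local algebra over $X$ given by $U\mapsto\ql{A}_{U\times\ZZ_{\ge -l'}}$. It then remains to compare the cuts at $0$ and at $-l'$ for $\omega$. This is where I would use the earlier Remark that all $\ql{B}^{(n)}_R$ are stable and mutually isomorphic; more concretely, $\ql{B}^{(-l')}_R\cong\ql{B}^{(0)}_R\otimes\opMat(X)$ with zero spread over $X$, by the split argument from the proof of Proposition \ref{prop:SREtoinvBAlg}. Since the gates of $\gamma_+$ are unitaries in $\ql{A}$, $\pi_\omega\gamma_+$ extends normally to the weak closures, giving a bounded spread isomorphism $(\tilde{\ql{B}}_R)$-type algebra $\to\ql{B}^{(-l')}_R$ after conjugating by the (normal) implementing unitaries on each finite layer.

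I expect the main obstacle to be the normality and extension step. A circuit with infinitely many gates is not inner, so $\pi_\omega\circ\gamma_+$ has to be extended to the von Neumann closures locally rather than globally. The remedy is that for a fixed $U$, only finitely many gates act nontrivially on $\ql{A}_{U\times\ZZ_{\ge0}}$ modulo gates far away in the $\ZZ$ direction. This still involves infinitely many gates along $\ZZ_{\ge0}$, so I would first apply the Remark to replace the half-line by truncated algebras, or alternatively invoke local normality of invertible states (Remark \ref{rmk:InvStateIsLocNormal}) together with a Proposition \ref{prop:SuperalgebraSplitProperty}-type factorization to obtain the needed unitary equivalence.
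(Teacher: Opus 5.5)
Your overall plan, invariance under stabilization plus invariance under a circuit split at the cut, is sound and differs from the paper's. However, the step you yourself flag as the crux is not proved, and the mechanism you propose for it is false.

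\textbf{The failing step.} You want $\pi_\omega(a)\mapsto\pi_\omega(\gamma_+(a))$ to extend normally to $\pi_\omega(\ql{A}_{U\times\ZZ_{\ge -l'}})''$. That amounts to a quasi-equivalence of $\pi_\omega$ and $\pi_\omega\circ\gamma_+$ on half-strip algebras, and it fails already for $X$ a point. Take $\omega$ an unentangled product state with site vectors $\xi_j$, and $\gamma_+$ one layer of on-site gates $u_j$, $j\ge 0$, with $u_j\xi_j\perp\xi_j$. Then $\omega$ and $\omega\gamma_+$ restrict to inequivalent pure product states of $\ql{A}_{\ZZ_{\ge 0}}$, which are disjoint. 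So no such normal extension exists. Neither truncating the half-line nor a split-property factorization can produce the ``needed unitary equivalence''.

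\textbf{A smaller point on ordering.} With $\gamma=\gamma_-\gamma_+$ as you write it, $\gamma$ does not agree with $\gamma_+$ on $\ql{A}_{X\times\ZZ_{\ge 0}}$. The reason is that $\gamma_+(a)$ reaches into $X\times\ZZ_{<0}$, where $\gamma_-$ acts. You need the order $\gamma=\gamma_+\gamma_-$, which you can obtain by splitting $\gamma^{-1}$ and inverting.

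\textbf{How to close the gap.} Stop trying to transport $\gamma_+$, and instead compare subalgebras of the single algebra $B(\hilb{H}_\omega)$. You already set this up via $\pi_{\tilde\omega}=\pi_\omega\circ\gamma$.
\begin{itemize}
\item Since $\gamma_+^{\pm 1}$ have all gates in $X\times\ZZ_{\ge -l'}$ and spread $l$, they map $\ql{A}_{U\times\ZZ_{\ge -l'}}$ into $\ql{A}_{U^{+l}\times\ZZ_{\ge -l'}}$.
\item Hence the nets $U\mapsto\pi_\omega(\gamma_+(\ql{A}_{U\times\ZZ_{\ge -l'}}))''$ and $U\mapsto\pi_\omega(\ql{A}_{U\times\ZZ_{\ge -l'}})''$ interleave with spread $l$ and have the same norm closure.
\item The identity map is then a bounded spread isomorphism. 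Local normality is automatic, because these are inclusions of von Neumann subalgebras.
\item Finally, split off the type I strip factors $\pi_\omega(\ql{A}_{U\times[-l',0)})''$ and $\pi_\omega(\gamma_+(\ql{A}_{U\times[-l',0)}))''$ using Lemma~\ref{lma:gekadison}. This gives $\tilde{\ql{B}}_R\otimes\opMat(X)\cong\ql{B}_R\otimes\opMat(X)$.
\end{itemize}

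\textbf{Comparison with the paper.} The paper avoids this step altogether. It writes $\tilde\omega=\omega\alpha_-\alpha_+$ and argues:
\begin{itemize}
\item $\omega$ and $\omega\alpha_-$ agree on the right half-space, so they have the same right boundary algebra.
\item $\omega\alpha_-$ and $\omega\alpha_-\alpha_+$ agree on $X\times\ZZ_{<-l}$, so they have the same left boundary algebra.
\item It transfers the left-side invariance to the right side using $[\ql{B}_L]=-[\ql{B}_R]$ from Proposition~\ref{prop:SREtoinvBAlg}.
\end{itemize}
Your repaired route is more direct: it does not need that proposition, only local normality.
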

\begin{proof}
By Remark \ref{rmk:stablestates}, without loss of generality, we can assume that $\tilde{\omega} = \omega \alpha$ for a quantum circuit automorphism $\alpha$.

Suppose $\alpha$ has spread $l$. We can represent $\alpha$ as a composition $\alpha_- \alpha_+$ of quantum circuits $\alpha_-$, $\alpha_+$ with all their unitary gates being in $\opMat(X \times \ZZ_{<0})$, $\opMat(X \times \ZZ_{\geq -l})$, respectively. The states $\omega$, $\omega \alpha_-$ have the same restrictions to $\opMat(X \times \ZZ_{\geq 0})$, and therefore have the same right boundary algebras. The states $\omega\alpha_-$, $(\omega \alpha_-)\alpha_+$ have the same restrictions to $\opMat(X \times \ZZ_{< -l})$, and therefore have the same left boundary algebras. Since by Proposition \ref{prop:SREtoinvBAlg} the left and the right boundary algebras are the inverses of each other, we obtain $[\ql{B}_{L,R}] = [\tilde{\ql{B}}_{L,R}]$.
\end{proof}

Thus, associating a boundary algebra with an invertible state provides a homomorphism $$B_{\text{state}\to\text{BAlg}}:\opIP(X\times\ZZ) \to \opBrgr(X).$$

\begin{prop}   \label{prop:SREtoBALGinj}
The map $B_{\text{state}\to\text{BAlg}}:\opIP(X\times\ZZ) \to \opBrgr(X)$ is injective.
\end{prop}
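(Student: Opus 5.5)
To prove injectivity, I will take a stable invertible state $\omega$ on $\ql{A}=\opMat(X\times\ZZ)$ whose right boundary algebra $\ql{B}_R$ is Brauer trivial, and show that $\omega$ is in the trivial phase. Since $\ql{B}_R$ is stable, Brauer triviality means there is a bounded spread isomorphism $\theta:\ql{B}_R\to\opMat(X)$. By Proposition \ref{prop:SREtoinvBAlg}, $[\ql{B}_L]=-[\ql{B}_R]=0$, so we also have a bounded spread isomorphism $\ql{B}_L\cong\opMat(X)$.

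The key point is that the von Neumann algebras $(\ql{B}_R)_U=\pi_\omega(\ql{A}_{U\times\ZZ_{\ge0}})''$ are therefore type I factors, uniformly in $U$ up to spread. I will use this to factorize the GNS space. Using $\theta$ and the $l$-split property, I will find type I factors $\cstar{N}_{U}$ with
\[
\pi_\omega(\ql{A}_{U\times\ZZ_{\ge 0}})''\subseteq \cstar{N}_{U^{+r}},
\]
where $\cstar{N}_U=\theta^{-1}(\opMat(X)_U)$, and these factors mutually commute for disjoint $U$. The $\cstar{N}_{\{x\}}$, $x\in X$, then generate a tensor product decomposition of $(\ql{B}_R)$ as $\opMat(X)$, and similarly on the left.

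Since $\pi_\omega(\ql{A}_{\ZZ_{<0}})''$ and $\pi_\omega(\ql{A}_{\ZZ_{\ge0}})''$ are then both generated by local type I factors, $\hilb{H}_\omega$ splits as $\hilb{H}_L\otimes\hilb{H}_R$ with $\pi_\omega(\ql{A}_{\ZZ_{\ge0}})''=1\otimes B(\hilb{H}_R)$. In particular $\omega$ is quasi-equivalent to a product state $\omega_L\otimes\omega_R$ across the cut, but this alone does not give a circuit. To produce one, I will instead reuse the folding trick of Proposition \ref{prop:SREtoinvBAlg}. The folded state $\varphi=(\omega\otimes\omega_0)\rho$ is trivial, through a circuit $\beta$ supported in $X\times\ZZ_{\ge 0}$, and $\beta$ induces an isomorphism
\[
\ql{B}_L\otimes\ql{B}_R\cong\opMat(X)
\]
under which $\psi\beta^{-1}$ is trivial.

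The main plan is to cut the state near $0$. Using $\ql{B}_R\cong\opMat(X)$, I will show that $\omega$ agrees, after a circuit supported near the cut $X\times[-l,l]$, with a state whose restriction to $X\times\ZZ_{<0}$ is a state $\omega_L$. Here $\omega_L$ is invertible on the half-space, and the action on the cut region is implemented by local unitaries in the type I factors $(\ql{B}_R)_U\otimes(\ql{B}_L)_U$ living near the cut. The step I expect to be hardest is turning the abstract isomorphism $\theta$ into an honest finite-depth circuit near the cut. I will try to realize $\theta$, stabilized, as a QCA on the boundary via Theorem \ref{theorembrauerblendiso}: a Brauer-trivial boundary algebra gives a trivial generalized shift, hence a circuit. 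After this, the half-space pieces are trivial by Proposition \ref{prop:XxZg0istrivial} and Lemma \ref{lma:stateonahalfplane}, so $\omega$ is in the trivial phase.
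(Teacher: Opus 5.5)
There is a genuine gap, and one intermediate claim is false. Your assertion that $\hilb{H}_\omega\cong\hilb{H}_L\otimes\hilb{H}_R$ with $\pi_\omega(\ql{A}_{X\times\ZZ_{\ge 0}})''=1\otimes B(\hilb{H}_R)$ fails for infinite $X$. A factor generated by infinitely many commuting type I factors need not be type I, and Brauer triviality of $\ql{B}_R$ only constrains the algebras $(\ql{B}_R)_U$ for bounded $U$. For instance, obtain $\omega$ from an unentangled state by one layer of gates, each preparing the same vector $\sum_k\lambda_k^{1/2}e_k\otimes e_k$ (all $\lambda_k>0$) on the pair of sites $(x,-1),(x,0)$, $x\in X$. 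Then $\omega$ is in the trivial phase and $\ql{B}_R$ is zero-spread isomorphic to $\opMat(X)$. Yet $\pi_\omega(\ql{A}_{X\times\ZZ_{\ge 0}})''$ has the ITPFI factor $\bigotimes_{x\in X}(B(\hilb{H}),\phi)$ as a tensor factor, where $\phi$ is a fixed faithful non-pure normal state, so it is not type I. Even trivial states are therefore not quasi-equivalent to products across an infinite cut.

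More seriously, the central step is not supplied. Unitaries in $(\ql{B}_R)_U\otimes(\ql{B}_L)_U$ are weak limits, in the GNS representation, of observables on the half-infinite strips $U\times\ZZ_{\ge 0}$ and $U\times\ZZ_{<0}$. They lie in no $\ql{A}_V$ with $V$ bounded, so they cannot be gates of a quantum circuit on $\opMat(X\times\ZZ)$, and nothing in your outline replaces them exactly by lattice gates. Note also that disentangling $\omega$ across the cut by a lattice circuit is essentially the statement to be proved. Likewise, Theorem \ref{theorembrauerblendiso} concerns a bulk QCA on $X\times\ZZ$: Brauer triviality makes the generalized shift stably a circuit, but you never relate $\omega$ to that shift.

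The missing idea is the paper's comparison state. Apply $\beta^{-1}$ from Proposition \ref{prop:SREtoinvBAlg} on every strip $X\times\{n\}$ and put a copy of $\psi$ on each pair $\ql{B}^{(n-1)}_L\otimes\ql{B}^{(n)}_R$; this gives the coupled-wires state $\omega_{\text{cw}}$. A blend $\omega_{\text{bcw}}$ agrees with $\omega$ on $X\times\ZZ_{\ge 0}$ and with $\omega_{\text{cw}}$ on $X\times\ZZ_{<0}$. One proves both are invertible, first for $\omega=\omega_0\alpha$ with $\alpha$ supported near the cut, then in general via $\omega\otimes\tilde{\omega}$ and circuits supported on either side of the cut. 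Two applications of Lemma \ref{lma:stateonahalfplane} then show that $\omega$ and $\omega_{\text{cw}}$ are in the same phase. Only at this point does your ingredient enter: $\omega_{\text{cw}}$ is a product over strips of the trivial state $\psi\beta^{-1}$, composed with the generalized shift moving the $\ql{B}_L$ factors by one strip. That shift is a circuit up to stabilization when $[\ql{B}_L]=0$. Since everything passes through abstract bounded spread isomorphisms, no boundary-algebra unitary ever has to be realized as a lattice gate.
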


\begin{proof}

Let $\omega$ be a stable invertible state on $\ql{A} = \opMat(X\times\ZZ)$, and $\ql{B}_L$, $\ql{B}_R$ be the associated boundary algebras over $\ZZ^{d}$ with a bounded spread isomorphism $\beta: \ql{B}_L \otimes \ql{B}_R \to \opMat(X)$ from Proposition \ref{prop:SREtoinvBAlg}. We define a quasi-local algebra $\ql{C}$ over $X\times\ZZ$ by a formal infinite tensor product $\bigotimes_{n \in \ZZ} (\ql{B}_R^{(n)} \otimes \ql{B}_L^{(n)})$, $\ql{B}_R^{(n)} \cong \ql{B}_R$, $\ql{B}_L^{(n)} \cong \ql{B}_L$ (see Fig. \ref{fig:CWstate}). The composition of the bounded spread isomorphism $\beta^{-1}$ and $\ql{B}_L \otimes \ql{B}_R \cong \ql{B}_R \otimes \ql{B}_L$ applied to codimension one strips $\{\opMat(X \times \{n\})\}_{n \in \ZZ}$ induces a bounded spread isomorphism $\sigma: \ql{A} \to \ql{C}$. The state $\omega$ naturally induces a state $\psi$ on $\ql{B}_L \otimes \ql{B}_R$ and therefore states $\psi^{(n)}$ on $\ql{B}^{(n-1)}_L \otimes \ql{B}_R^{(n)}$ via isomorphisms $\ql{B}^{(n)}_{L,R} \cong \ql{B}_{L,R}$. Therefore, $\omega$ naturally induces a state $\omega_{\text{cw}}$ on $\ql{A}$ defined by composing the formal infinite tensor product state $(... \otimes \psi^{(n-1)} \otimes \psi^{(n)} \otimes \psi^{(n+1)} \otimes ...)$ with $\sigma$. We call $\omega_{\text{cw}}$ {\it a coupled wires state} associated with $\omega$.

Similarly, we define a quasi-local algebra $\ql{C}_{< 0}$ over $X\times\ZZ$ by a composition of $\opMat(X \times \ZZ_{\geq 0})$ and a formal infinite tensor product $\bigotimes_{n \in \ZZ_{< 0}} (\ql{B}_R^{(n)} \otimes \ql{B}_L^{(n)})$ (see Fig. \ref{fig:BCWstate}). The composition of the bounded spread isomorphism $\beta^{-1}$ and $\ql{B}_L \otimes \ql{B}_R \cong \ql{B}_R \otimes \ql{B}_L$ applied to codimension one strips $\{\opMat(X \times \{n\})\}_{n \in \ZZ_{<0}}$ induces a bounded spread isomorphism $\sigma_{< 0 } : \ql{A} \to \ql{C}_{<0}$. The state $\omega$ naturally induces a state $\phi$ on the quasi-local algebra over $X \times \ZZ_{ \geq -1}$ defined by a composition of $\opMat(X \times \ZZ_{\geq 0})$ and $\ql{B}_L$ on $X \times \{-1\}$. Therefore, given and isomorphism $\ql{B}^{(-1)}_L \cong \ql{B}_L$, $\omega$ naturally induces a state $\omega_{\text{bcw}}$ on $\ql{A}$ defined by composing the formal infinite tensor product state $(... \psi^{(-2)} \otimes \psi^{(-1)} \otimes \phi)$ with $\sigma_{<0}$. We call $\omega_{\text{bcw}}$ {\it a blend to coupled wires state} associated with $\omega$.

Suppose for a moment that $\omega$ has the form $\omega = \omega_0 \alpha$ for an unentangled state $\omega_0$ and a quantum circuit $\alpha$ with the support of all gates being inside $X \times [-l,l]$ for some $l>0$. In this case, there are isomorphisms $\gamma_R:\ql{B}_R \to \opMat(X \times [0,2l])$ and $\gamma_L: \ql{B}_L \to \opMat(X \times [-2l,-1])$ which map $(\ql{B}_R)_U \to \opMat(U \times [0,2l])$, $(\ql{B}_L)_U \to \opMat(U \times [-2l,-1])$ for $U \subset X$ such that the state $\psi(\gamma_L \otimes \gamma_R)^{-1}$ on $\opMat(X \times [-2l,2l])$ coincides with $\omega|_{\opMat(X \times [-2l,2l])}$. These isomorphisms provide zero-spread isomorphisms that transform $\omega_{\text{cw}}$ and $\omega_{\text{bcw}}$ to a tensor product of copies of $\omega|_{\opMat(X \times [-2l,2l])}$. Therefore, both states $\omega_{\text{cw}}$ and $\omega_{\text{bcw}}$ are invertible.

Suppose now that $\omega$ is a general invertible state. Let $\tilde{\omega}$ be its inverse and $\varphi = \omega \otimes \tilde{\omega}$. We have $\varphi_{\text{cw}} = \omega_{\text{cw}} \otimes \tilde{\omega}_{\text{cw}}$ and $\varphi_{\text{bcw}} = \omega_{\text{bcw}} \otimes \tilde{\omega}_{\text{bcw}}$ for the associated coupled wires and blend to coupled wires states, respectively. There are quantum circuits, $\alpha_{-}$ and $\alpha_{+}$ with all gates being supported inside $X \times \ZZ_{<0}$, $X \times \ZZ_{\geq 0}$, respectively, such that $\chi = \varphi \alpha_{-}\alpha_{+}$ has the form of $\omega$ from the previous paragraph. The automorphisms $\alpha_{-}$ and $\alpha_{+}$ induce bounded spread isomorphisms between boundary algebras for $\varphi$ and $\chi$, and therefore induce bounded spread isomorphisms between $\varphi_{\text{cw}}$, $\varphi_{\text{bcw}}$ and $\chi_{\text{cw}}$, $\chi_{\text{bcw}}$, respectively. Since $\chi_{\text{cw}}$, $\chi_{\text{bcw}}$ are invertible, the states $\varphi_{\text{cw}}$, $\varphi_{\text{bcw}}$ are invertible as well. It follows that $\omega_{\text{cw}}$, $\omega_{\text{bcw}}$ are invertible.

To prove injectivity of $B_{\text{state}\to\text{BAlg}}$, it is enough to show that the associated coupled wires state $\omega_{\text{cw}}$ is in the same phase as $\omega$. By construction, the states $\omega_{\text{bcw}}$ and $\omega$ have the same restrictions to $\opMat(X \times \ZZ_{\geq 0})$ and the states $\omega_{\text{bcw}}$ and $\omega_{\text{cw}}$ have the same restrictions to $\opMat(X \times \ZZ_{< 0})$. Since $\omega_{\text{cw}}$ and $\omega_{\text{bcw}}$ are invertible, by Lemma \ref{lma:stateonahalfplane}, the states $\omega$ and $\omega_{\text{cw}}$ are in the same phase.
\end{proof}

\definecolor{figblue}{RGB}{71,159,248}

\begin{figure}[ht]
  \centering

  \begin{subfigure}{\textwidth}
    \centering
    \begin{tikzpicture}[
        scale=0.7, transform shape,
        blackline/.style={black, line width=1.4pt},
        blueline/.style ={figblue, thick},
        shadestyle/.style={fill=figblue, opacity=0.4}
      ]
      \def\W{2.5}\def\Ra{0.75}\def\Lb{1.75}\def\H{5}
      \foreach \n in {-2,...,2}{
        \pgfmathsetmacro{\xa}{(\n+2)*\W + \Lb}
        \pgfmathsetmacro{\xb}{(\n+3)*\W + \Ra}
        \fill[shadestyle] (\xa,0) rectangle (\xb,\H);
      }
      \pgfmathsetmacro{\xLedge}{(-3+3)*\W + \Ra}  
      \fill[shadestyle] (0,0) rectangle (\xLedge,\H);
      \pgfmathsetmacro{\xRedge}{(2+3)*\W + \Lb}   
      \fill[shadestyle] (\xRedge,0) rectangle (6*\W,\H);
      \foreach \i in {0,...,6}{ \draw[blackline] (\i*\W,0) -- (\i*\W,\H); }
      \foreach \n in {-3,...,2}{
        \pgfmathsetmacro{\xR}{(\n+3)*\W + \Ra}
        \pgfmathsetmacro{\xL}{(\n+3)*\W + \Lb}
        \draw[blueline] (\xR,0) -- (\xR,\H);
        \draw[blueline] (\xL,0) -- (\xL,\H);
        \node[below, font=\footnotesize] at (\xR,0) {$\mathcal{B}^{(\n)}_{R}$};
        \node[below, font=\footnotesize] at (\xL,0) {$\mathcal{B}^{(\n)}_{L}$};
      }
    \end{tikzpicture}
    \caption{An illustration of a coupled wires state $\omega_{\text{cw}}$. Black lines split $\opMat(X\times\ZZ)$ into $\opMat(X \times \{n\})$, $n \in \ZZ$. Each $\opMat(X \times \{n\})$ is bounded spread isomorphic to $\ql{B}^{(n)}_R \otimes \ql{B}^{(n)}_L$. The state $\omega_{\text{cw}}$ is a tensor product of pure states on $\ql{B}^{(n-1)}_L \otimes \ql{B}^{(n)}_R$.}
    \label{fig:CWstate}
  \end{subfigure}

  \vspace{1.5em}

  \begin{subfigure}{\textwidth}
    \centering
    \begin{tikzpicture}[
        scale=0.7, transform shape,
        blackline/.style={black, line width=1.4pt},
        blueline/.style ={figblue, thick},
        shadestyle/.style={fill=figblue, opacity=0.4}
      ]
      \def\W{2.5}\def\Ra{0.75}\def\Lb{1.75}\def\H{5}
      \foreach \n in {-2,-1}{
        \pgfmathsetmacro{\xa}{(\n+2)*\W + \Lb}
        \pgfmathsetmacro{\xb}{(\n+3)*\W + \Ra}
        \fill[shadestyle] (\xa,0) rectangle (\xb,\H);
      }
      \pgfmathsetmacro{\xLedge}{(-3+3)*\W + \Ra}   
      \fill[shadestyle] (0,0) rectangle (\xLedge,\H);
      \pgfmathsetmacro{\xstart}{(-1+3)*\W + \Lb}   
      \pgfmathsetmacro{\xend}{6*\W}                
      \fill[shadestyle] (\xstart,0) rectangle (\xend,\H);
      \foreach \i in {0,...,6}{ \draw[blackline] (\i*\W,0) -- (\i*\W,\H); }
      \foreach \n in {-3,-2,-1}{
        \pgfmathsetmacro{\xR}{(\n+3)*\W + \Ra}
        \pgfmathsetmacro{\xL}{(\n+3)*\W + \Lb}
        \draw[blueline] (\xR,0) -- (\xR,\H);
        \draw[blueline] (\xL,0) -- (\xL,\H);
        \node[below, font=\footnotesize] at (\xR,0) {$\mathcal{B}^{(\n)}_{R}$};
        \node[below, font=\footnotesize] at (\xL,0) {$\mathcal{B}^{(\n)}_{L}$};
      }
      \node[below, font=\footnotesize] at (11.2,0) {$\opMat(X \times \ZZ_{\geq 0})$};
    \end{tikzpicture}
    \caption{An illustration of a coupled wires state $\omega_{\text{bcw}}$. Black lines split $\opMat(X\times\ZZ)$ into $\opMat(X \times \{n\})$, $n \in \ZZ$. Each $\opMat(X \times \{n\})$, $n \in \ZZ_{<0}$ is bounded spread isomorphic to $\ql{B}^{(n)}_R \otimes \ql{B}^{(n)}_L$. The state $\omega_{\text{bcw}}$ is a tensor product of pure states on $\ql{B}^{(n-1)}_L \otimes \ql{B}^{(n)}_R$ and a pure state on a $C^*$-algebra $\ql{B}^{(-1)}_L \otimes \opMat(X \times \ZZ_{\geq 0})$.}
    \label{fig:BCWstate}
  \end{subfigure}

\end{figure}

\begin{prop}   \label{prop:SREtoBALGsurj}
The map $B_{\text{state}\to\text{BAlg}}:\opIP(X\times\ZZ) \to \opBrgr(X)$ is surjective.
\end{prop}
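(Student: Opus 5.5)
Given an invertible quasi-local algebra $\ql{B}$ over $X$, choose $\tilde{\ql{B}}$ and a bounded spread isomorphism $\varphi:\tilde{\ql{B}}\otimes\ql{B}\to\opMat(X)$. The plan is to build a coupled-wires state whose right boundary algebra is Brauer equivalent to $\ql{B}$, and then show it is invertible by relating it to a QCA applied to a product state.

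\textbf{Step 1: the QCA.} Take the generalized shift QCA $\alpha$ on $\opMat(X\times\ZZ)$ associated with $\varphi$ (Definition \ref{def:ingeneralizedshift}). Through $\tilde{\varphi}$ this presents $\opMat(X\times\ZZ)$ as $\bigotimes_n (\tilde{\ql{B}}^{(n)}\otimes\ql{B}^{(n)})$.

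\textbf{Step 2: the state.} Let $\omega_0$ be an unentangled state on $\opMat(X\times\ZZ)$, chosen stable, and set $\omega=\omega_0\alpha$. By Remark \ref{rmk:QCAtoUS}, $\omega$ is invertible.

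\textbf{Step 3: its boundary algebra.} I would compute the right boundary algebra of $\omega$ at the cut $0$. Since $\omega_0$ is pure and of product form, $\pi_{\omega_0}(\opMat(U\times\ZZ_{\ge m}))''$ is the full type I factor $B(\hilb{H}_{U\times\ZZ_{\ge m}})\otimes 1$. The right boundary algebra of $\omega$ is the image under $\pi_{\omega_0}$ of $\alpha(\opMat(U\times\ZZ_{\geq 0}))$. Up to bounded spread in $U$, and modulo full type I pieces supported on $U\times\ZZ_{\ge 1}$, the algebra $\alpha(\opMat(X\times\ZZ_{\ge 0}))$ is generated by the following two pieces:
\begin{itemize}
\item $\opMat(X\times\ZZ_{\geq 1})$;
\item the shifted image of a single copy of $\ql{B}$ or $\tilde{\ql{B}}$, sitting in the strip $X\times\{0\}$ (this is exactly the computation behind Lemma \ref{lma:BLRdecomp}).
\end{itemize}
Its weak closure in the product representation is therefore a tensor product of a type I factor over $U\times\ZZ_{\geq 1}$ with the weak closure of that one copy. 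That weak closure is just $\ql{B}_U$ again, because $\ql{B}_U$ is a tensor factor of a type I factor, so the restriction of a normal state is faithful enough (Proposition \ref{prop:SuperalgebraSplitProperty}, Lemma \ref{lma:gekadison}). Grouping the infinite type I factor over each $U$ shows the boundary algebra is $\ql{B}\otimes\opMat(X)$ (or $\tilde{\ql{B}}\otimes\opMat(X)$, depending on the shift direction) up to bounded spread isomorphism. Hence $B_{\text{state}\to\text{BAlg}}([\omega])=\pm[\ql{B}]$. If the sign comes out wrong, replace $\varphi$ by its swap, or use the reflected state of Proposition \ref{prop:ReflectedState}, which negates the class.

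\textbf{Alternative route for Step 3.} One can instead use Theorem \ref{theorembrauerblendiso} and show that the composite $\QCA\to\IP\to\Brgr$, "apply to a product state, then take the boundary algebra", equals the QCA boundary algebra map. Surjectivity of $\QCA(X\times\ZZ)\to\Brgr(X)$ then gives surjectivity here. This reduces the claim to comparing the two notions of boundary algebra for $\omega_0\alpha$: one checks that $\pi_{\omega_0}(\alpha(\opMat(U\times\ZZ_{\ge0})))''$ contains $(\ql{B}^{(n,l,a,b)}_R)$-type pieces tensored with type I factors, with bounded spread in both directions.

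\textbf{Main obstacle.} The key difficulty is identifying a von Neumann closure in the GNS representation with an abstract local algebra $\ql{B}_U$. Local normality of $\alpha$ and $\omega_0$ gives a normal representation of $\opMat(U^{+l}\times[-l,l])$ on a type I factor. I would then invoke Lemma \ref{lma:gekadison} to split off the type I part, and normality to identify the remaining factor with $\ql{B}_U$ up to a $U^{+2l}$ enlargement.
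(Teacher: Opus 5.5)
Your main route is the paper's proof: apply the generalized shift QCA associated with $\varphi$ to an unentangled state, get invertibility from Remark \ref{rmk:QCAtoUS}, and use that (under $\tilde{\varphi}$) $\alpha(\opMat(X\times\ZZ_{\geq 0}))=\ql{B}^{(0)}\otimes\opMat(X\times\ZZ_{\geq 1})$, so the right boundary algebra is $\ql{B}\otimes\opMat(X)$ up to bounded spread; the paper simply asserts this last step ``by construction''. Two small remarks. First, identifying the weak closure with $\ql{B}_U$ needs only that $\pi_{\omega_0}$ is normal and injective on the type I factors of finite regions, so Proposition \ref{prop:SuperalgebraSplitProperty} and Lemma \ref{lma:gekadison} are not needed. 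Second, your sign hedge is unnecessary, both because Definition \ref{def:ingeneralizedshift} shifts $\tilde{\ql{B}}$ and leaves $\ql{B}$ fixed, and because the image of a homomorphism is a subgroup anyway.
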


\begin{proof}
Let $\ql{B}_R$, $\ql{B}_L$ be invertible quasi-local algebras over $X$ with a bounded spread isomorphism $\beta:\ql{B}_R \otimes \ql{B}_L \to \opMat(X)$, and let $\omega_0$ be an unentangled state on $\ql{A} = \opMat(X\times\ZZ)$. 

Let $\ql{C}$ be a quasi-local algebra over $X\times\ZZ$ defined by a formal infinite tensor product $\bigotimes_{n \in \ZZ} (\ql{B}_L^{(n)} \otimes \ql{B}_R^{(n)})$, where $\ql{B}_{L,R}^{(n)} \cong \ql{B}_{L,R}$. The bounded spread isomorphism $\beta$ induces a bounded spread isomorphism $\sigma: \ql{A} \to \ql{C}$. Let $\tau$ be a $*$-automorphism of $\ql{C}$ that shifts $\ql{B}_L^{(n)} \to \ql{B}_L^{(n+1)}$ while acting trivially on $\ql{B}^{(n)}_{R}$ for all $n \in \ZZ$. Then $\alpha = \sigma^{-1} \tau \sigma$ defines a QCA on $\ql{A}$. By Remark \ref{rmk:QCAtoUS}, the state $\chi = \omega_0 \alpha$ is invertible.

Let $\tilde{\ql{B}}_L$, $\tilde{\ql{B}}_R$ be boundary algebras associated with $\chi$. By construction, the automorphism $\alpha$ induces a bounded spread isomorphism between $\tilde{\ql{B}}_R$ and $\ql{B}_R \otimes \ql{B}_{0,R} \cong \ql{B}_R$, where $\ql{B}_{0,R} \cong \opMat(X)$ is the right boundary algebra associated with the state $\omega_0$. It follows that for any invertible quasi-local algebra $\ql{B}_R$, there is an invertible state $\omega$, whose right boundary algebra is Brauer equivalent to $\ql{B}_R$.
\end{proof}

Proposition \ref{prop:SREtoBALGinj} and Proposition \ref{prop:SREtoBALGsurj} together imply

\begin{theorem}
\label{thm:BAlgState}
The map $B_{\text{state}\to\text{BAlg}}:\opIP(X\times\ZZ) \to \opBrgr(X)$ is an isomorphism.
\end{theorem}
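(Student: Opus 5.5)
The theorem is a direct combination of the two preceding propositions, so the plan is mostly bookkeeping. First I will confirm that $B_{\text{state}\to\text{BAlg}}$ is a well-defined group homomorphism. Well-definedness on phases is Proposition \ref{prop:SamePhaseSameBAlg}, together with Remark \ref{rmk:stablestates}, which lets us pass to stable representatives; there, being in the same phase just means being related by a quantum circuit. Invertibility of the boundary algebra is Proposition \ref{prop:SREtoinvBAlg}.

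For additivity, let $\omega_1,\omega_2$ be invertible states, stacked as $\omega_1\otimes\omega_2$ via a zero-spread isomorphism $\opMat(X\times\ZZ)\otimes\opMat(X\times\ZZ)\cong\opMat(X\times\ZZ)$. I would check that the GNS representation of the product state is the tensor product of the GNS representations. Then
\[
\pi_{\omega_1\otimes\omega_2}(\ql{A}_{U\times\ZZ_{\ge0}}\otimes\ql{A}_{U\times\ZZ_{\ge0}})''=\pi_{\omega_1}(\ql{A}_{U\times\ZZ_{\ge0}})''\otimes\pi_{\omega_2}(\ql{A}_{U\times\ZZ_{\ge0}})'',
\]
which uses the super version of the commutation theorem for tensor products. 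Since the isomorphism has zero spread, the right boundary algebra of the stack is zero-spread isomorphic to $\ql{B}_R(\omega_1)\otimes\ql{B}_R(\omega_2)$, so the map respects the group operations. The same argument applies to the left boundary algebras.

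With the homomorphism property in hand, bijectivity follows at once. Injectivity is Proposition \ref{prop:SREtoBALGinj}, whose coupled-wires argument shows that a state with Brauer-trivial boundary algebra is in the trivial phase. Surjectivity is Proposition \ref{prop:SREtoBALGsurj}, where every class is realized by applying a generalized shift to an unentangled state. A bijective group homomorphism is an isomorphism.

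The only step I expect to need care is the additivity check, specifically the identification of the double supercommutant of a graded tensor product state. I would handle it by Proposition \ref{prop:SuperalgebraSplitProperty}: the two factors supercommute, the product vector is homogeneous and factorizes on them, and Lemma \ref{lma:RestrictionIsASuperfactor} gives that the local algebras are superfactors.
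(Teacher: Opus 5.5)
Your proposal is correct and matches the paper. The paper obtains the theorem directly by combining Proposition \ref{prop:SREtoBALGinj} (injectivity) and Proposition \ref{prop:SREtoBALGsurj} (surjectivity), with well-definedness and invertibility of the boundary algebra coming from Propositions \ref{prop:SamePhaseSameBAlg} and \ref{prop:SREtoinvBAlg}. Your explicit additivity check is a detail the paper leaves implicit, and it is sound; note that the identification $\pi_{\omega_1\otimes\omega_2}(\cdot)''=\pi_{\omega_1}(\cdot)''\otimes\pi_{\omega_2}(\cdot)''$ follows essentially from the definition of the spatial tensor product, not from the commutation theorem.
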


This theorem together with Proposition \ref{prop:ReflectedState} immediately provides a concrete inverse for an invertible quasi-local algebra over spaces of the form $X \times \ZZ$:
\begin{corollary}
Let $X$ be a uniformly locally finite metric space, and let $\ql{A}$ be an invertible quasi-local algebra over $X \times \ZZ$. Let $\tilde{\ql{A}}$ be the quasi-local algebra over $X \times \ZZ$ defined by $\tilde{\ql{A}}_U = \ql{A}_{\tilde{U}}$, where $\tilde{U}$ is a reflection of $U$ in the hyperplane $X \times \{1/2\}$. Then $\tilde{\ql{A}}$ is an inverse for $\ql{A}$.
\end{corollary}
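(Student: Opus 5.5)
The plan is to move the statement to one dimension higher, where Theorem \ref{thm:BAlgState} identifies Brauer classes with invertible phases, and then use Proposition \ref{prop:ReflectedState}. Set $Y = X \times \ZZ$, where this $\ZZ$ is the coordinate being reflected, and work over $Y \times \ZZ'$ with an auxiliary extra direction $\ZZ'$. By Theorem \ref{thm:BAlgState} (specifically the surjectivity in Proposition \ref{prop:SREtoBALGsurj}), there is an invertible state $\omega$ on $\opMat(X \times \ZZ \times \ZZ')$ whose right boundary algebra $\ql{B}_R$, taken with respect to the $\ZZ'$ direction, satisfies $[\ql{B}_R] = [\ql{A}]$ in $\opBrgr(X\times\ZZ)$.

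Next, let $\sigma$ be the zero-spread automorphism of $\opMat(X\times\ZZ\times\ZZ')$ that reflects the $\ZZ$ coordinate in the hyperplane at $1/2$ and acts trivially on $X$ and $\ZZ'$. Apply Proposition \ref{prop:ReflectedState} with base space $X' = X \times \ZZ'$ and reflected coordinate $\ZZ$; the ordering of the factors is immaterial because it is changed by a zero-spread identification. This gives $[\omega\sigma] = -[\omega]$ in $\opIP(X\times\ZZ\times\ZZ')$.

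Then I compute the boundary algebra of $\omega\sigma$. Since $\sigma$ preserves every half-space $\{(x,m,n) : n \geq 0\}$ and maps $\opMat(U\times\ZZ'_{\geq 0})$ onto $\opMat(\tilde U \times \ZZ'_{\geq 0})$, we have $\pi_{\omega\sigma} = \pi_\omega\circ\sigma$ with the same GNS space. Hence the right boundary algebra of $\omega\sigma$ is the reflected algebra $\tilde{\ql{B}}_R$, with $(\tilde{\ql{B}}_R)_U = (\ql{B}_R)_{\tilde U}$. Because $B_{\text{state}\to\text{BAlg}}$ is a homomorphism, $[\tilde{\ql{B}}_R] = -[\ql{B}_R] = -[\ql{A}]$. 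Reflection commutes with tensor products and fixes $\opMat(X\times\ZZ)$ up to zero-spread isomorphism. So a bounded spread isomorphism $\ql{A}\otimes\opMat \cong \ql{B}_R\otimes\opMat$ reflects to one of the same spread, $\tilde{\ql{A}}\otimes\opMat\cong\tilde{\ql{B}}_R\otimes\opMat$, since the reflection is an isometry. Therefore $[\tilde{\ql{A}}] = -[\ql{A}]$, that is, $\ql{A}\otimes\tilde{\ql{A}}\otimes\opMat(X\times\ZZ)\cong\opMat(X\times\ZZ)$ by a bounded spread isomorphism.

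The step I expect to be the main obstacle is the final one: upgrading this Brauer-level identity to the literal statement that $\ql{A}\otimes\tilde{\ql{A}}$ is bounded spread isomorphic to $\opMat$. The first thing I would try is to read ``inverse'' at the level of $\opBrgr$, or equivalently to note that $\tilde{\ql{A}}\otimes\opMat$ is a genuine inverse. If the literal statement is required, I would show that $\ql{A}$ may be replaced by its stabilization. Since $\ql{A}$ is invertible, $\ql{A}\otimes\opMat \cong \ql{A}\otimes\ql{A}'\otimes\ql{A} \cong \ql{A}\otimes \ql{A}'\otimes \ql{A}\otimes\ql{A}'\otimes\ql{A}\otimes\cdots$. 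An Eilenberg-swindle argument along these lines should give $\ql{A}\otimes\opMat\cong\ql{A}$ for invertible $\ql{A}$ with infinite-dimensional local algebras. I expect this swindle to be the delicate point, because the infinite tensor product has to be regrouped while keeping the spread bounded.
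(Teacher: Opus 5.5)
Your argument is correct and is the paper's own route. You realize $[\ql{A}]$ as the boundary class of an invertible state on $\opMat(X\times\ZZ\times\ZZ')$ via Theorem \ref{thm:BAlgState}, reflect the non-boundary $\ZZ$ coordinate, and combine Proposition \ref{prop:ReflectedState} with the homomorphism property to get $[\tilde{\ql{A}}]=-[\ql{A}]$ in $\opBrgr(X\times\ZZ)$; the corollary should be read at this Brauer level, equivalently as saying that $\tilde{\ql{A}}\otimes\opMat(X\times\ZZ)$ is an honest inverse.

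Drop the swindle, for two reasons. First, the literal version is false for unstable algebras: in the bosonic case $\ql{A}=\Mat(X\times\ZZ,s)$ with $s\equiv 2$ is invertible, yet $\ql{A}\otimes\tilde{\ql{A}}\cong\Mat(X\times\ZZ,4)$ has finite-dimensional local algebras, so it cannot be bounded spread isomorphic to $\Mat(X\times\ZZ)$. Second, regrouping infinitely many copies supported on the same sites is not available; the paper's swindles only work on half-spaces, where the copies are shifted apart so that each bounded region meets finitely many of them.
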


\begin{remark}
Theorem \ref{thm:BAlgState} together with an isomorphism $\text{BAlg}:\opQCA(X\times\ZZ) \to \opBrgr(X)$ from Theorem \ref{theorembrauerblendiso} implies that any invertible state can be obtained by applying a QCA to an unentangled state. Further, it implies that if a QCA preserves an unentangled state then it must be a quantum circuit automorphism.
\end{remark}

\begin{remark}  \label{rmk:HyperfiniteRepresentative}
The isomorphism from Theorem \ref{thm:BAlgState} also implies that any class of $\opBrgr(X)$ admits a representative $\ql{A}$, such that $\ql{A}_U$ is a hyperfinite (super)factor for any bounded subset $U \subset X$. Indeed, we can choose an invertible state over $X \times \ZZ$ in the phase corresponding to this Brauer class and take the associated right boundary algebra, which satisfies the desired property by Lemma \ref{lma:RestrictionIsASuperfactor}.
\end{remark}

\section{A spectrum of invertible phases}   \label{sec:KitaevSpectrum}

It was proposed by A. Kitaev that with any dimension $d$, one can associate a classifying space of invertible quantum many-body systems whose homotopy groups classify invertible phases in dimension $d$ and lower. Physically, this space describes the subspace of parameters of a microscopic Hamiltonian for which the system is an invertible topological phase. While the details of this subspace depend on a model, it was suggested that the homotopy type of this space is universal. The homotopy type contains information not only about the abelian groups of invertible phases, but also describes the junctions between topological defects of various codimension (encoded in $k$-invariants). Furthermore, this homotopy type is naturally equipped with the structure of an infinite loop space (or a connective $\Omega$-spectrum) due to the composition operation on quantum-many body systems. 

The anticipated infinite-loop spaces, which we denote by $\Kit^{(d)}$ for bosonic invertible phase and $\sKit^{(d)}$ for fermionic ones, should satisfy
\beq
\pi_{k} (\Kit^{(d)}) = \IP(\ZZ^{d-k}),\,\, \pi_{k} (\sKit^{(d)}) = \sIP(\ZZ^{d-k})
\eeq
\beq
\Kit^{(d)} \cong \Omega\, \Kit^{(d+1)}, \,\, \sKit^{(d)} \cong \Omega\, \sKit^{(d+1)}.
\eeq
Altogether, they can be organized into an $\Omega$-spectrum 
\beq
\Kit := \colim_{d \to \infty} \Sigma^{-d} \, \Kit^{(d)}. 
\eeq
Furthermore, since in dimension zero invertible quantum systems are supposed to be described by lines in a Hilbert (super)space, it is expected that at the level of spaces we have 
\beq
\Kit^{(0)} \cong \Omega^{d}\,\Kit^{(d)} \cong \mathbb{CP}^{\infty},\,\, \sKit^{(0)} \cong \Omega^{d}\,\sKit^{(d)} \cong \mathbb{CP}^{\infty} \times \ZZ/2.
\eeq
The extra $\ZZ/2$ factor in the fermionic case corresponds to the value of the fermionic parity.

An analytic model for Kitaev spectrum has been recently proposed in \cite{kubota2025stable}. An analogous conjecture was formulated for QCAs \cite{tu2026anomalies, czajka2025anomalies} for which a purely algebraic model has been proposed in \cite{ji2026quantum} and further discussed in \cite{ludewig2026quantum}. Given an equivalence between invertible phases and Brauer equivalence classes of invertible quasi-local algebras shown in the previous section, we propose a model for the Kitaev spectrum in the spirit of \cite{ji2026quantum} that, in our opinion, is more elementary than \cite{kubota2025stable} as it avoids many analytic aspects. More precisely, we relate the Kitaev spectrum to the $K$-theory space of a specific symmetric monoidal category which is the category of invertible quasi-local algebras over $\ZZ^d$ with morphisms being bounded spread isomorphisms. We show that our proposal for the Kitaev spectrum satisfies all the required properties.

\subsection{$K$-theory of symmetric monoidal categories} Let us recall some basic definitions and facts about $K$-theory we use in this paper. We refer the reader to \cite[Chapter IV]{weibel2013k} for more details.

We say that a symmetric monoidal category $C$ {\it has faithful translations} if for any object $\ql{C}$, the maps $\Hom(\ql{A},\ql{B}) \to \Hom(\ql{A} \otimes \ql{C}, \ql{B} \otimes \ql{C})$ are injective. We say that an object $A$ in $C$ is {\it absorbing}, if for each other object $X$, $A \otimes X \cong A$.

The $K$-theory $K(C)$ of $C$ is a connective spectrum that can be naturally associated with any symmetric monoidal category. When $C$ is a symmetric monoidal groupoid with faithful translations, the zeroth space of $K(C)$ can be defined as the classifying space $B(C^{-1} C)$ of the category $C^{-1} C$ whose objects are pairs $(\ql{A},\ql{A}')$, and whose morphisms $(\ql{A},\ql{A}') \to (\ql{B},\ql{B}')$ are equivalence classes of triples $(\ql{C} \in C,\alpha \in \Hom(\ql{A} \otimes \ql{C}, \ql{B}),\beta \in \Hom(\ql{A}' \otimes \ql{C}, \ql{B}'))$ up to equivalence: $(\ql{C},\alpha,\beta) \sim (\tilde{\ql{C}},\tilde{\alpha},\tilde{\beta})$ if there is an isomorphism $\varphi:\ql{C} \to \tilde{\ql{C}}$ satisfying $\alpha = \tilde{\alpha} (\Id \otimes \varphi)$ and $\beta = \tilde{\beta} (\Id \otimes \varphi)$.

If in addition $C$ has an object $S$ satisfying $S \otimes S \cong S$ and such that any other object $X$ divides $S$, i.e. there exists $Y$ such that $X \otimes Y \cong S$, then by \cite[Theorem IV.4.10]{weibel2013k}, the zeroth space of $K(C)$ can be identified with
\beq    \label{eq:KCinTermsofPlusConstruction}
K(C) \cong K_0(C) \times B \Gamma^{+},
\eeq
where 
$$
\Gamma = \text{colim} \l \Aut_C(S) \xrightarrow{\alpha \otimes \Id} \Aut_C(S^{\otimes 2}) \xrightarrow{\alpha \otimes \Id} \Aut_C(S^{\otimes 3}) \xrightarrow{\alpha \otimes \Id} ... \r,
$$
and $B \Gamma^{+}$ is the Quillen's plus construction of the group $\Gamma$ for a commutator subgroup $[\Gamma,\Gamma]$, which is defined as a space $B\Gamma^+$ with a map $B\Gamma \to B\Gamma^{+}$ that induces an isomorphism on homology $H_*(B\Gamma) \xrightarrow{\raisebox{-2pt}{$\scriptstyle\sim$}} H_{*}(B\Gamma^+)$ and $\pi_1(B\Gamma) \to \pi_{1}(B\Gamma^+)$ is the canonical map $\Gamma \to \Gamma/[\Gamma,\Gamma]$. In particular, $K_1(C) = \Gamma/[\Gamma,\Gamma]$.

A monoidal functor $F: D \to C$ is called {\it cofinal} if for any object $\ql{A} \in C$ there are objects $\tilde{\ql{A}} \in C$, $\ql{B} \in D$ such that $\ql{A} \otimes \tilde{\ql{A}} \cong F(\ql{B})$. By the cofinality theorem \cite[Theorem IV.4.11]{weibel2013k}, if for all $\ql{B} \in D$ we have $\Aut_{D}(\ql{B}) = \Aut_{C}(F(\ql{B}))$, then the basepoint components of $K(D)$ and $K(C)$ are homotopy equivalent. In particular, $K_{n \geq 1}(D) = K_{n \geq 1}(C)$. A full subcategory of $C$ is called {\it cofinal} if the inclusion functor is cofinal.

\subsection{$K$-theory of the category of invertible quasi-local algebras}
Let $\Azcat(X)$ (resp. $\sAzcat(X)$)\footnote{We use the same notation as in \cite{ji2026quantum, ludewig2026quantum} where in the context of nets of finite-dimensional algebras, invertible nets were called {\it Azumaya nets}. Following this terminology, we can also call invertible quasi-local algebras by {\it Azumaya quasi-local algebras}.} be a symmetric monoidal category whose objects are bosonic (resp. fermionic) invertible quasi-local algebras over a uniformly locally finite metric space $X$, whose morphisms are bounded spread isomorphisms, and whose monoidal structure is given by a tensor product of quasi-local algebras. The unit for $\Azcat(X)$ is given by $\Mat(X,s)$ for $s$ satisfying $s(j)=1$ for any $j \in X$. The unit for $\sAzcat(X)$ is given by $\sMat(X,s)$ for $s$ satisfying $s(j)= 1|0$ for any $j \in X$. We let $\opMatCatD(X)$ be the full subcategory of $\opAzcat(X)$ whose objects are $\opMat(X,s)$ for some $s$. It is a cofinal subcategory. Both $\opAzcat(X)$ and $\opMatCatD(X)$ are symmetric monoidal groupoids with faithful translations. The category $\opMatCatD(X)$ has an absorbing object $\opMat(X)$.

Note that the connected components $\pi_0(K(\opAzcat(X))) = K_0(\opAzcat(X))$ by definition correspond to Brauer equivalence classes of invertible quasi-local algebras $\opBrgr(X)$.

A relation $K(C(X)) \cong \Omega K(C(X \times \ZZ))$ between $K$-theory spaces of the categories of invertible quasi-local algebras can be derived using essentially the same ideas as in \cite[Section 3]{ludewig2026quantum} and \cite[Section 4]{ji2026quantum}, where the same relation has been shown for invertible nets of finite-dimensional algebras. The only additional ingredient is the generalization of some structural facts to general quasi-local algebras that were proven in Section \ref{sec:QuasiLocalAlgebras}. We repeat the arguments from \cite[Section 3]{ludewig2026quantum} adapted to our setting for completeness.

\begin{lemma}[Theorem 3.3 in \cite{ludewig2026quantum}]   \label{lma:Kspacecontractibility}
Let $X$ be a uniformly locally finite metric space. Then $K(\opAzcat(X \times \ZZ_{\geq 0}))$ is contractible.
\end{lemma}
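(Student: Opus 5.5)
We will show that $K(\opAzcat(X \times \ZZ_{\geq 0}))$ is contractible by an Eilenberg swindle, following \cite[Theorem 3.3]{ludewig2026quantum}. On the half-space we construct a symmetric monoidal endofunctor $\Phi$ of $C = \opAzcat(X \times \ZZ_{\geq 0})$ together with a natural isomorphism $\Id \otimes \Phi \cong \Phi$. The induced map on $K$-theory then satisfies $\Id + K(\Phi) \simeq K(\Phi)$. Since $K(C)$ is a grouplike $E_\infty$-space (the zeroth space of a spectrum), this gives $\Id \simeq 0$, and hence contractibility.

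\textbf{Definition of $\Phi$.} For $n \ge 0$ let $t_n: X \times \ZZ_{\geq 0} \to X \times \ZZ_{\geq n}$ be the shift $(x,m)\mapsto (x,m+n)$. Write $t_{n*}\ql{A}$ for the pushforward, extended by the trivial algebra outside $X\times\ZZ_{\geq n}$. We set
$$\Phi(\ql{A}) := \bigotimes_{n \ge 1} t_{n*}\ql{A},$$
understood as a formal infinite tensor product. This is well defined as a quasi-local algebra over $X\times\ZZ_{\geq 0}$, because each bounded $U$ meets only finitely many of the shifted copies. It is the same device used in Propositions \ref{prop:QCAs_on_a_half-space_are_QC} and \ref{prop:XxZg0istrivial}.

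\textbf{Checks on $\Phi$.} We verify the following in order.
\begin{enumerate}
\item Invertibility is preserved. If $\varphi:\ql{A}\otimes\tilde{\ql{A}}\to\opMat$ has spread $l$, then $\bigotimes_n t_{n*}\varphi$ is a bounded spread isomorphism $\Phi(\ql{A})\otimes\Phi(\tilde{\ql{A}})\to \bigotimes_n t_{n*}\opMat$. The target is zero-spread isomorphic to $\opMat(X\times\ZZ_{\geq 0})$, using $B(\hilb{H})^{\otimes k}\cong B(\hilb{H})$ for finite $k$.
\item $\Phi$ is a functor on bounded spread isomorphisms. A morphism $f$ of spread $l$ goes to $\bigotimes_n t_{n*}f$, whose spread is still $l$ because the shifts are isometries.
\item $\Phi$ is symmetric monoidal. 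Here $\Phi(\ql{A}\otimes\ql{B})\cong\Phi(\ql{A})\otimes\Phi(\ql{B})$ holds by a zero-spread reshuffling of factors.
\end{enumerate}

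\textbf{The natural isomorphism.} The map $\ql{A}\otimes\Phi(\ql{A}) \to \Phi(\ql{A})$ is not literally the identity, since $\ql{A}$ sits at $t_0$ while $\Phi(\ql{A})$ starts at $t_1$. We instead use the isomorphism $\ql{A}\otimes \Phi(\ql{A}) = \bigotimes_{n\ge 0} t_{n*}\ql{A} \cong t_{1*}^{-1}$-reindexing. This is not bounded spread: shifting everything by one changes supports by distance $1$, which is bounded, but pulling $t_{1*}\ql{A}$ back to $t_{0*}\ql{A}$ as quasi-local structures requires comparing $\ql{A}_U$ with $\ql{A}_{U+1}$, and these are unrelated.

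The remedy is the standard one: define $\Phi'(\ql{A}) = \bigotimes_{n\ge 0} t_{n*}\ql{A}$. Then $\ql{A}\otimes \Phi(\ql{A}) = \Phi'(\ql{A})$ holds on the nose. We then need $\Phi' \cong \Phi$ naturally. The comparison map is the identification of underlying $C^*$-algebras that sends the $n$-th copy to the $(n+1)$-th copy. On local algebras this sends $(t_{n*}\ql{A})_U=\ql{A}_{t_n^{-1}U}$ into $(t_{n+1*}\ql{A})_{U+1}$, where $U+1$ is the shift of $U$ by $(0,1)$. This has spread $1$, is natural in morphisms, and is monoidal. So $\Id\otimes\Phi\cong\Phi'\cong\Phi$ as symmetric monoidal functors, and on $K$-theory $[\Id]+[\Phi]=[\Phi]$ in the group of homotopy classes of self-maps of $K(C)$.

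\textbf{Main obstacle.} The hardest step will be making the natural isomorphisms genuinely coherent as monoidal transformations, so that they induce homotopies of spectrum maps; we need more than componentwise isomorphisms. I would rely on functoriality of $K$ with respect to symmetric monoidal natural transformations \cite[Chapter IV]{weibel2013k}. I would also check that all isomorphisms are compatible with the symmetry, in particular the super sign in the fermionic case. This works because all identifications are reindexings of graded tensor factors, which commute with the Koszul signs.
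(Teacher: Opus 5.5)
Your proposal is correct and is essentially the paper's Eilenberg swindle: your $\Phi'$ is the paper's $S(\ql{A})=\ql{A}\otimes s_*\ql{A}\otimes s^2_*\ql{A}\otimes\cdots$. Your spread-$1$ reindexing $\Phi'\cong\Phi$ is the closeness isomorphism $s_*\cong\Id$, i.e.\ the identity on underlying $C^*$-algebras, using $(t_{1*}\ql{A})_U=\ql{A}_{t_1^{-1}U}\subseteq\ql{A}_{U^{+1}}$; the paper uses exactly this map to get $S\cong\Id\otimes S$.

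There are two small corrections. First, the objection you raise before introducing $\Phi'$ is moot, because the composite $\ql{A}\otimes\Phi(\ql{A})=\Phi'(\ql{A})\to\Phi(\ql{A})$ is that same spread-$1$ map. Second, $\bigotimes_{n\geq 1}t_{n*}\opMat$ has trivial sites on $X\times\{0\}$, so it is a product algebra of non-constant spin rather than $\opMat(X\times\ZZ_{\geq 0})$ itself; this is harmless, since such algebras are invertible by the absorbing property.
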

\begin{proof}
Let $C = \opAzcat(X \times \ZZ_{\geq 0})$. To show contractibility of $K(C)$, it is enough to show an existence of a symmetric monoidal endofunctor $S : C \to C$ together with a natural isomorphism $S \cong \Id \otimes S$.

Let $s: X \times \ZZ_{\geq 0} \to X \times \ZZ_{\geq 0}$ be a shift map $(x,n) \to (x,n+1)$, $x \in X$, $n \in \ZZ_{\geq 0}$. There is a natural faithful symmetric monoidal endofunctor $S:C \to C$ that maps $\ql{A} \in C$ to $\ql{A} \otimes s_* \ql{A} \otimes s^2_* \ql{A} \otimes ...$. We have a natural isomorphism between $S$ and $\Id \otimes S$. Therefore, $K(C) \cong *$.
\end{proof}

\begin{theorem} \label{thm:LoopSpectrum}
Let $X$ be a uniformly locally finite metric space. We have $ K(\opAzcat(X)) = \Omega K(\opAzcat(X \times \ZZ))$. In particular, for $0 \leq k \leq d$, we have $\pi_{k} (K(\opAzcat(\ZZ^d))) = \opBrgr(\ZZ^{d-k})$.
\end{theorem}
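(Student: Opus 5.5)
Following \cite[Section 3]{ludewig2026quantum}, I will produce a homotopy fiber sequence
$$K(\opAzcat(X)) \to K(\opAzcat(X\times\ZZ_{\geq 0}))\times K(\opAzcat(X\times\ZZ_{<0})) \to K(\opAzcat(X\times\ZZ)).$$
The middle term is contractible by Lemma \ref{lma:Kspacecontractibility}, applied to both half-lines. The loop space identification $K(\opAzcat(X))\simeq\Omega K(\opAzcat(X\times\ZZ))$ then follows from the long exact sequence together with a check on $\pi_0$. For the $\pi_0$ statement, Theorem \ref{theorembrauerblendiso} identifies $\pi_1 K(\opAzcat(X\times\ZZ))=K_1=\Gamma^{ab}$ with $\opQCA(X\times\ZZ)\cong\opBrgr(X)=K_0(\opAzcat(X))$. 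Here $\Gamma$ is the colimit of $\Autbs(\opMat(X\times\ZZ))$, and its abelianization is QCA modulo circuits. This uses cofinality of $\opMatCatD$ and the fact that circuits, being commutator-like after stabilization (the swap argument of Lemma \ref{lma:QCACompositionInBrGr}), lie in $[\Gamma,\Gamma]$.

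To obtain the fiber sequence, I would model the relevant categories as in \cite{ludewig2026quantum}. I would work with the cofinal subcategory $\opMatCatD$, so that by \eqref{eq:KCinTermsofPlusConstruction} the basepoint components are $B\Gamma^+$. I would then realize the map $K(\opAzcat(X)) \to \Omega K(\opAzcat(X\times\ZZ))$ via the boundary algebra construction. Concretely, an invertible algebra $\ql{B}$ over $X$ with inverse $\tilde{\ql{B}}$ gives the generalized shift QCA of Definition \ref{def:ingeneralizedshift}, which is a loop in $B\Gamma^+$. Functoriality in bounded spread isomorphisms is immediate from the construction. The map is symmetric monoidal, so it is a map of infinite loop spaces.

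To show it is an equivalence on higher homotopy, I would use a group completion and localization argument. The category $\opAzcat(X\times\ZZ)$ receives the "restriction to a half-space" functor, and an invertible algebra over $X\times\ZZ$ splits up to bounded spread isomorphism across a cut. The splitting comes from Lemma \ref{lma:Lud228}, Lemma \ref{lma:Lud230}, and the quasi-split property in Proposition \ref{prop:SplitPropertyForQLAlgebras}: one decomposes $\ql{A}$ into a piece over $X\times\ZZ_{<0}$, a boundary piece that is an invertible algebra over $X$ (as in Proposition \ref{prop:boundaryalgebraisinvertible}), and a piece over $X\times\ZZ_{\geq 0}$. This makes $C(X)\to C(X\times\ZZ_{<0})\times C(X\times\ZZ_{\geq0})\to C(X\times\ZZ)$ behave like a Karoubi/Pedersen–Weibel filtration. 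The fiber sequence then follows by the standard argument: the quotient of the half-space categories by $C(X)$ is equivalent to $C(X\times\ZZ)$ "modulo germs at the cut".

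The main obstacle is this localization step. Specifically, one must verify that the boundary piece of every bounded spread isomorphism over $X\times\ZZ$ is canonically defined up to contractible choice, so that the construction is functorial on morphisms and not just on objects. I would try first to follow \cite[Theorem 3.5]{ludewig2026quantum} verbatim. The finite-dimensionality used there would be replaced by the slice-map results, Proposition \ref{prop:superslice} and Lemma \ref{lma:gekadison}, which give the tensor factor decompositions in the type I setting.

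Finally, the statement about homotopy groups follows by iterating the identification. We have $\pi_k K(\opAzcat(\ZZ^d))=\pi_0 \Omega^k K(\opAzcat(\ZZ^d))=\pi_0K(\opAzcat(\ZZ^{d-k}))=\opBrgr(\ZZ^{d-k})$ for $0 \leq k \leq d$, where $\ZZ^{0}=\mathrm{pt}$.
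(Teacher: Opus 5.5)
Your overall architecture matches the paper's: a Mayer--Vietoris square along the cut $X\times\{0\}$ whose half-space corners have contractible $K$-theory, following \cite{ludewig2026quantum}. However, the step that carries the content is both misstated and not supplied.

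\textbf{The splitting claim is false.} You assert that every invertible algebra over $X\times\ZZ$ splits, up to bounded spread isomorphism, into invertible pieces over $X\times\ZZ_{<0}$, over $X$ at the cut, and over $X\times\ZZ_{\geq 0}$. By Lemma \ref{lma:Kspacecontractibility}, each such piece is Brauer trivial over a half-space, hence trivial after extending by $\CCC$ to $X\times\ZZ$. The splitting would therefore force $\opBrgr(X\times\ZZ)=0$, contradicting Theorem \ref{thm:NontrivialityOfAmaj} for $X=\mathrm{pt}$ (e.g.\ $\Amajd$). For the same reason, a cofiber sequence of connective spectra ending in $K(\opAzcat(X\times\ZZ))$ with contractible middle term cannot exist, because it would kill $\pi_0$.

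The paper avoids this as follows. It replaces $\opAzcat(X\times\ZZ)$ by the full subcategory $C$ of algebras that are product algebras outside some strip $X\times[-n,n]$, with $C_0$ and $C_\pm$ defined analogously. For these objects the splitting at a far cut is tautological. It proves $K(C_\pm)\simeq *$ by a swindle on $K_0$, plus cofinality with $\opAzcat(X\times\ZZ_{\leq n})$ and Lemma \ref{lma:Kspacecontractibility}. It then identifies only basepoint components, $\Omega K(C)\simeq\Omega K(\opAzcat(X\times\ZZ))$, via cofinality of $\opMatCatD(X\times\ZZ)$. This is exactly why the theorem is a statement about $\Omega K$ rather than $K$.

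\textbf{The homotopy-pushout property is not established.} It does not follow from a localization or Karoubi-filtration theorem; those apply to additive or Waldhausen categories, not to symmetric monoidal groupoids under $\otimes$. The paper uses Thomason's double mapping cylinder $P$ for $C_-\leftarrow C_0\rightarrow C_+$ \cite{thomason1982first}, whose $K$-theory is the homotopy pushout. It then shows that $T:P\to C$, $(\ql{A}_-,\ql{A}_0,\ql{A}_+)\mapsto\ql{A}_-\otimes\ql{A}_0\otimes\ql{A}_+$, is a $K$-equivalence by Quillen's Theorem A:
\begin{itemize}
\item Since $T$ is faithful, the comma categories $\ql{C}\downarrow T$ have at most one morphism between any two objects.
\item Each filtration piece (objects in $C^{(n)}$, with $\varphi$ of spread $l$) receives a morphism from the tautological splitting of $\ql{C}$ at a cut $m>\max(n+l,k)$.
\item The morphism exists because, for $\varphi:\ql{C}\to\ql{A}_-\otimes\ql{A}_0\otimes\ql{A}_+$, Lemmas \ref{lma:Lud228} and \ref{lma:Lud230} show that $\varphi(\ql{C}_\pm)$ is a tensor factor of $\ql{A}_\pm$ whose commutant $\ql{B}_\pm$ is supported near the cut, so $\ql{B}_\pm\in C_0$.
\end{itemize}
This is how your concern about splittings being ``canonical up to contractible choice'' is resolved; no functorial splitting is ever needed.

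\textbf{The comparison map.} Your generalized-shift map is not a functor as stated: it requires choosing an inverse and an isomorphism $\varphi$ for each object, compatibly with morphisms. Assigning a loop to each object does not define a map of infinite loop spaces. In the paper, $K(C_0)\simeq\Omega K(C)$ comes directly from the square with contractible corners, so no separate $\pi_0$ check is needed.

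\textbf{The $\pi_0$ check.} Your justification there is also off. The swap argument of Lemma \ref{lma:QCACompositionInBrGr} shows that commutators are circuits, not that circuits are commutators. The latter needs Brown--Pearcy, as in Proposition \ref{prop:K1asQCA}.
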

\begin{proof}
The proof is essentially identical to the proof of \cite[Theorem 3.4]{ludewig2026quantum} adapted to our setting.

For $n \in \NN_0$, let $L_n = \ZZ_{<-n}$, $I_n = [-n,n] \cap \ZZ$, $R_n = \ZZ_{> n}$, and let $C^{(n)}$ be the full subcategory of $\Azcat(X \times \ZZ)$ whose objects have the form $\Mat(X \times L_n,s_L) \otimes \ql{A} \otimes \Mat(X \times R_n, s_R)$ for some $s_L: X \times L_n \to \NN$, $s_R: X \times R_n \to \NN$ and $\ql{A} \in \Azcat(X \times I_n)$. We also introduce categories defined similarly with some restrictions on $s_L,s_R$:
\begin{itemize}
    \item $C^{(n)}_0$ is the category with $s_L(L_n) = \{1\}$ and $s_R(R_n) = \{1\}$;
    \item $C^{(n)}_-$ is the category with $s_R(R_n) = \{1\}$;
    \item $C^{(n)}_+$ is the category with $s_L(L_n) = \{1\}$.
\end{itemize}
Let $C, C_-, C_0, C_+$ be the filtered colimits of $C^{(n)}, C^{(n)}_-, C^{(n)}_0, C^{(n)}_+$ over $n$, respectively. We have the following commutative diagram of symmetric monoidal categories with obvious inclusion maps:
\beq
\begin{tikzcd}
C_0 \arrow[r, "\iota_+"] \arrow[d, "\iota_-"'] & C_+ \arrow[d, ""] \\
C_- \arrow[r, ""']               & C.
\end{tikzcd}
\eeq
In addition, we introduce a double mapping cylinder category $P$ for the span $C_- \leftarrow{} C_0 \rightarrow C_+$ defined in \cite{thomason1982first}:
\begin{enumerate}
    \item the objects of $P$ are triples $(\ql{A}_-, \ql{A}_0, \ql{A}_+)$ for $\ql{A}_{-,0,+} \in C_{-,0,+}$;
    \item the morphisms between $(\ql{A}_-, \ql{A}_0, \ql{A}_+)$ and $(\tilde{\ql{A}}_-, \tilde{\ql{A}}_0, \tilde{\ql{A}}_+)$ are equivalence classes of quintuples $(\ql{B}_-, \ql{B}_+, f_-, f_0, f_+)$ for $\ql{B}_{\pm} \in C_0$, isomorphisms $f_-: \ql{A}_- \otimes \iota_-(\ql{B}_-) \to \tilde{\ql{A}}_-$, $f_0: \ql{A}_0 \to \ql{B}_- \otimes \tilde{\ql{A}}_0 \otimes \ql{B}_+$, $f_+: \ql{A}_+ \otimes \iota_+(\ql{B}_+) \to \tilde{\ql{A}}_+$ with equivalence relation $(\ql{B}_-, \ql{B}_+, f_-, f_0, f_+) \sim (\tilde{\ql{B}}_-, \tilde{\ql{B}}_+, \tilde{f}_-, \tilde{f}_0, \tilde{f}_+)$ if $\ql{B}_{\pm} \cong \tilde{\ql{B}}_{\pm}$.
\end{enumerate}
The significance of $P$ comes from the fact that the $K$-functor induces a homotopy pushout diagram of connective spectra
\beq
\begin{tikzcd}
K(C_0) \arrow[r, ""] \arrow[d, ""'] & K(C_+) \arrow[d, ""] \\
K(C_-) \arrow[r, ""']               & K(P).
\end{tikzcd}
\eeq
We have a natural symmetric monoidal functor $T: P \to C$ that maps $(\ql{A}_-, \ql{A}_0, \ql{A}_+)$ to $\ql{A}_- \otimes \ql{A}_0 \otimes \ql{A}_+$ which is faithful. By Quillen's theorem A \cite{quillen2006higher}, if for every object $\ql{C} \in C$, the classifying spaces of the comma categories $\ql{C} \downarrow T$ are weakly contractible, then the functor $T$ induces a homotopy equivalence $K(P) \to K(C)$.

Let $\ql{C}$ be an object in $C$. The comma category $\ql{C} \downarrow T$ is defined as follows:
\begin{enumerate}
    \item the objects are quadruples $(\ql{A}_-,\ql{A}_0,\ql{A}_+, \varphi)$, where $\ql{A}_{-,0,+} \in C_{-,0,+}$ and $\varphi$ is an isomorphism $\varphi:\ql{C} \to \ql{A}_- \otimes \ql{A}_0 \otimes \ql{A}_+$.
    \item the morphisms between $(\ql{A}_-, \ql{A}_0, \ql{A}_+, \varphi)$ and $(\tilde{\ql{A}}_-, \tilde{\ql{A}}_0, \tilde{\ql{A}}_+, \tilde{\varphi})$ are morphisms $f$ in $P$ such that $T(f)\varphi = \tilde{\varphi}$.
\end{enumerate}
Note that since $T$ is faithful, there is at most one morphism between objects of $\ql{C} \downarrow T$. Suppose $\ql{C} \in C^{(k)}$. Then for $m \geq k$, we have a decomposition $\ql{C} = \ql{C}_{-} \otimes \ql{C}_0 \otimes \ql{C}_{+}$ with $\ql{C}_-$, $\ql{C}_0$, $\ql{C}_+$ supported on $L_m$, $I_m$, $R_m$, respectively, that defines an object $(\ql{C}_-,\ql{C}_0,\ql{C}_+, \ql{C} \cong \ql{C}_{-} \otimes \ql{C}_0 \otimes \ql{C}_{+})$.

Let $(\ql{C} \downarrow T)^{(n,l)}$ be the full subcategories of $\ql{C} \downarrow T$ parametrized by $n,l \in \NN_0$ whose objects $(\ql{A}_-,\ql{A}_0,\ql{A}_+,\varphi)$ are such that $\ql{A}_{-,0,+} \in C^{(n)}_{-,0,+}$ and $\varphi$ has spread $l$. Pick an integer $m > \text{max}(n+l,k)$, and the corresponding decomposition $\ql{C} = \ql{C}_{-} \otimes \ql{C}_0 \otimes \ql{C}_{+}$. We're going to show that any object of $(\ql{C} \downarrow T)^{(n,l)}$ admits a single morphism from $(\ql{C}_-,\ql{C}_0,\ql{C}_+, \ql{C} \cong \ql{C}_{-} \otimes \ql{C}_0 \otimes \ql{C}_{+})$.

Let $(\ql{A}_-,\ql{A}_0,\ql{A}_+, \varphi)$ be an object in $(\ql{C} \downarrow T)^{(n,l)}$ with $\ql{A} = \ql{A}_{-} \otimes \ql{A}_0 \otimes \ql{A}_{+}$. We have a bounded spread isomorphism $\varphi:\ql{C}_{-} \otimes \ql{C}_0 \otimes \ql{C}_{+} \to \ql{A}_{-} \otimes \ql{A}_0 \otimes \ql{A}_{+}$. Each $\ql{C}_{\pm}$ is a tensor factor of $\ql{C}$, with $\opComm(\ql{C}_{-},\ql{C}) = \ql{C}_0 \otimes \ql{C}_+$ and $\opComm(\ql{C}_{+},\ql{C}) = \ql{C}_- \otimes \ql{C}_0$. By Lemma \ref{lma:Lud228}, the assignments
$$
U \to \ql{A}_U \cap \varphi(\ql{C}_{\pm}),
\qquad
U \to \ql{A}_U \cap \varphi \bigl( \opComm(\ql{C}_{\pm},\ql{C}) \bigr)
$$
define quasi-local subalgebras $\ql{D}_{\pm}, \tilde{\ql{D}}_{\pm} \subseteq \ql{A}$ with
$\ql{D}_{\pm}$ being tensor factors of $\ql{A}$, $\opComm(\ql{D}_{\pm},\ql{A}) = \tilde{\ql{D}}_{\pm}$, and $\varphi: \ql{C}_{\pm} \to \ql{D}_{\pm}$ being a bounded spread isomorphism. By Lemma \ref{lma:Lud230} and the fact that $\varphi(\ql{C}_{\pm}) \subset \ql{A}_{\pm}$, $\ql{D}_{\pm}$ are tensor factors of $\ql{A}_{\pm}$. Therefore, there exist bounded spread isomorphisms $\ql{C}_{\pm} \otimes \ql{B}_{\pm} \to \ql{A}_{\pm}$, where $\ql{B}_{\pm} = \opComm(\ql{D}_{\pm}, \ql{A}_{\pm})$. Since $\ql{B}_{\pm}$ (super)commutes with $\varphi(\ql{C}_{\pm})$ and lies in $\ql{A}$, we have $\ql{B}_{\pm} \subseteq \tilde{\ql{D}}_{\pm}$ and $\ql{B}_{\pm} \subseteq \varphi(\opComm(\ql{C}_{\pm},\ql{C}))$. Since $\varphi$ has spread $l$, we get $\ql{B}_{\pm} \in C^{(m+l)}_0$. Thus, we have a morphism in $P$ between $(\ql{C}_-,\ql{C}_0,\ql{C}_+)$ and $(\ql{A}_-,\ql{A}_0,\ql{A}_+)$ such that its $T$-image defines a morphism in $(\ql{C} \downarrow T)$.

It follows that for any $n,l \in \NN_0$, there is an object in $(\ql{C} \downarrow T)$ with a morphism to any object in $(\ql{C} \downarrow T)^{(n,l)}$, which is unique by faithfulness of $T$. Thus the classifying spaces of $(\ql{C} \downarrow T)^{(n,l)}$ are contractible in the classifying space of $(\ql{C} \downarrow T)$, and since $(\ql{C} \downarrow T)$ is a filtered colimit of $(\ql{C} \downarrow T)^{(n,l)}$ over $n,l$, the classifying space of $(\ql{C} \downarrow T)$ is contractible.

It follows that we have the following homotopy pushout diagram of connective spectra
\beq    \label{eq:htpypushout}
\begin{tikzcd}
K(C_0) \arrow[r, ""] \arrow[d, ""'] & K(C_+) \arrow[d, ""] \\
K(C_-) \arrow[r, ""']               & K(C).
\end{tikzcd}
\eeq
The functor $C_0 \to \opAzcat(X)$ that associates a quasi-local algebra $\ql{A}$ defined via $\ql{A}_U = \ql{B}_{U \times I_m}$ to $\ql{B} \in C_0^{(m)}$ yields an equivalence of $\opAzcat(X)$ and $C_0$, implying $K(\opAzcat(X)) \cong K(C_0)$. 

Note that for $n \geq 0$, the categories $\opMatCatD(X \times \ZZ_{\leq n})$, $\opMatCatD(X \times \ZZ)$, $\opMatCatD(X \times \ZZ_{\geq -n})$ are cofinal subcategories of $C^{(n)}_-$, $C$, $C^{(n)}_+$ and $\opAzcat(X \times \ZZ_{\leq n})$, $\opAzcat(X \times \ZZ)$, $\opAzcat(X \times \ZZ_{\geq -n})$, respectively. Therefore, by the cofinality theorem \cite[Theorem IV.4.11]{weibel2013k}, we have $\Omega K(C^{(n)}_-) \cong \Omega K(\opAzcat(X \times \ZZ_{\leq n}))$, $\Omega K(C) \cong \Omega K(\opAzcat(X \times \ZZ))$, $\Omega K(C^{(n)}_+) \cong \Omega K(\opAzcat(X \times \ZZ_{\geq -n}))$. By Lemma \ref{lma:Kspacecontractibility}, we have $K(C_0) \cong K(\opAzcat(X))$.

Let $\ql{A} \in C^{(n)}_{-}$ with an inverse $\tilde{\ql{A}} \in C^{(n)}_-$, and let $s: X \times \ZZ \to X \times \ZZ$ be a shift map $(x,l) \to (x,l-1)$, $x \in X$, $l \in \ZZ$. There are bounded spread isomorphisms relating $\ql{A}$ to $\ql{A} \otimes s^{2n}_* \tilde{\ql{A}} \otimes s^{4n}_* \ql{A} \otimes s^{6n}_* \tilde{\ql{A}} \otimes ...$ and the latter to a product quasi-local algebra. Therefore, $K_0(C^{(n)}_-) = 0$. Similarly, $K_0(C^{(n)}_+) = 0$. Since $\Omega K(C^{(n)}_{\pm}) \cong *$, we have $K(C^{(n)}_{\pm}) \cong *$ which implies $K(C_{\pm}) \cong *$.

Thus, the homotopy pushout diagram Eq. (\ref{eq:htpypushout}) becomes
\beq
\begin{tikzcd}
K(C_0) \arrow[r, ""] \arrow[d, ""'] & * \arrow[d, ""] \\
* \arrow[r, ""']               & K(C)
\end{tikzcd}
\eeq
and together with $K(C_0) \cong K(\opAzcat(X))$, $\Omega K(C) \cong \Omega K(\opAzcat(X \times \ZZ))$ implies an equivalence $K(\opAzcat(X)) = \Omega K(\opAzcat(X \times \ZZ))$.

\end{proof}

\begin{remark}  \label{rmk:WeCanUseQuillenPlus}
Since $\opMat(X)$ is an object in $\opAzcat(X)$ such that $\opMat(X) \otimes \opMat(X) \cong \opMat(X)$ and for any object $\ql{A} \in \opAzcat(X)$ there exists an object $\tilde{\ql{A}} \in \opAzcat(X)$ such that $\ql{A} \otimes \tilde{\ql{A}} \cong \opMat(X)$, we can apply Quillen's plus construction Eq. \ref{eq:KCinTermsofPlusConstruction}
\beq    \label{eq:KCinTermsofPlusConstruction2}
K(\opAzcat(X)) = \opBrgr(X) \times B\Gamma^{+},
\eeq
where $\Gamma$ is the group of stable QCAs
\beq
\Gamma := \text{colim} \l \Autbs(\opMat(X)) \xrightarrow{\alpha \otimes \Id} \Autbs(\opMat(X)^{\otimes 2}) \xrightarrow{\alpha \otimes \Id}  ... \r.
\eeq
\end{remark}

We can give an explicit characterization of $K_1(\opAzcat(X))$ in terms of QCAs:

\begin{prop}    \label{prop:K1asQCA}
Let $X$ be a uniformly locally finite metric space. We have $K_1(\opAzcat(X)) \cong \opQCA(X)$.
\end{prop}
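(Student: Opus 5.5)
Using Remark \ref{rmk:WeCanUseQuillenPlus}, we have $K_1(\opAzcat(X)) = \Gamma/[\Gamma,\Gamma]$, where $\Gamma$ is the colimit of $\Autbs(\opMat(X)^{\otimes k})$ under $\alpha \mapsto \alpha\otimes \Id$. The plan is to build a surjective homomorphism $\Gamma \to \opQCA(X)$ and show that its kernel is exactly $[\Gamma,\Gamma]$.

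\textbf{Constructing the map.} Fix zero-spread isomorphisms $\opMat(X)^{\otimes k}\cong \opMat(X)$. Through them, every element of $\Autbs(\opMat(X)^{\otimes k})$ becomes a QCA on $\opMat(X)$. The transition maps $\alpha \mapsto \alpha\otimes\Id$ are stabilizations, and every QCA is in the same phase as its stabilizations. The result therefore does not depend on the choice of zero-spread isomorphisms, since two choices differ by a zero-spread automorphism, i.e. a one-layer circuit, and $\opQCA(X)$ is abelian. So we get a well-defined map $q:\Gamma\to\opQCA(X)$. It is a homomorphism because the product in $\opQCA(X)$ is induced by composition (Lemma \ref{lma:QCACompositionInBrGr}). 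It is surjective because every QCA is equivalent to one on $\opMat(X)$, which is an element of $\Autbs(\opMat(X))$.

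\textbf{Commutators lie in the kernel.} Since $\opQCA(X)$ is abelian, $q$ kills $[\Gamma,\Gamma]$ and descends to a surjection $\Gamma/[\Gamma,\Gamma]\to\opQCA(X)$.

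\textbf{Kernel lies in the commutators.} This is the main obstacle. Let $\alpha\in\Gamma$ with $q(\alpha)=0$. Then some stabilization $\alpha\otimes\Id$ is a finite-depth quantum circuit, so we may assume $\alpha$ itself is a circuit on $\opMat(X)^{\otimes k}$. Each layer is a product of $\Ad_u$ over disjoint supports, so it suffices to show that a single layer $\beta=\prod_a \Ad_{u_a}$ is a commutator in $\Gamma$, possibly after tensoring with $\Id$. For this I would use a swindle based on stability of the local factors. The unitary group of $B(\hilb{H})$ for infinite-dimensional separable $\hilb{H}$ is perfect; concretely, $u\otimes 1 \in B(\hilb{H}\otimes\hilb{H})$ equals $v w v^{-1}w^{-1}$ for the Eilenberg-style unitaries
\[ w = u\otimes u^{*}\otimes u\otimes\cdots, \qquad v = \text{shift}, \]
both acting in $B(\hilb{H}\otimes \ell^2)\cong B(\hilb{H})$. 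I would choose $v_a, w_a$ supported on the same site set $U_a$ (or on $U_a$ inside the stabilizing copy), with uniformly bounded support. Taking products over the disjoint $U_a$ then writes $\beta\otimes\Id$ as a commutator of two one-layer circuits, each in $\Autbs$. In the fermionic case the gates are even, and the same construction works with even unitaries in $B(\hilb{H})^{(0)}$, taking the graded shift. Composing over the layers shows $\alpha\in[\Gamma,\Gamma]$. The key check is that the local commutator decompositions can be made with a uniform support bound; this is automatic because the construction is canonical given $u_a$ and uses only the sites $U_a$.

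Combining the three steps gives $K_1(\opAzcat(X))=\Gamma/[\Gamma,\Gamma]\cong\opQCA(X)$.
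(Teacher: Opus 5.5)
Your proof has the same skeleton as the paper's. You identify $K_1$ with $\Gamma/[\Gamma,\Gamma]$, note that commutators die in the abelian group $\opQCA(X)$, and show that stably-circuit elements of $\Gamma$ are products of commutators: write each gate as a commutator on its own support, then multiply over the disjoint supports of a layer. You apply the plus-construction formula to $\opAzcat(X)$ directly via Remark~\ref{rmk:WeCanUseQuillenPlus}, whereas the paper first passes to the cofinal subcategory $\opMatCatD(X)$; both are fine.

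The real difference is the local lemma. The paper invokes the Brown--Pearcy theorem, so each gate $u_a$ is itself a commutator of (even) unitaries in $\opMat(X)_{U_a}$ and no ancilla is needed. You only need $u_a\otimes 1$ to be a commutator after stabilizing. That is free in $\Gamma$, since $\beta$ and $\beta\otimes\Id$ are the same element, and it can be done by an elementary swindle, which makes your argument self-contained. The ancilla is essential on your route, since $u_a$ need not be unitarily equivalent to $u_a\otimes 1$.

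However, the swindle as you wrote it does not work. The unilateral shift is not unitary. With the bilateral shift on $\hilb{H}\otimes\ell^2(\ZZ)$ and $w$ alternating between $u$ and $u^{*}$, one gets $vwv^{-1}w^{-1}=\bigoplus_n u^{\mp 2}$ in alternating slots, not $u\otimes 1$.

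The correct choice is $w(\xi\otimes e_n)=u^{-n}\xi\otimes e_n$ and $v(\xi\otimes e_n)=\xi\otimes e_{n+1}$, for which $vwv^{-1}w^{-1}=u\otimes 1$. Here $\ell^2(\ZZ)$ is the ancilla $\bigotimes_{j\in U_a}\hilb{H}'_j$, so $w$ and $v$ are still supported on $U_a$. In the fermionic case, write the ancilla as $\CCC^{1|1}\otimes\ell^2(\ZZ)$ with $\ell^2(\ZZ)$ purely even, and run the swindle with $u=u_a\otimes 1_{\CCC^{1|1}}$, so that $w$ and $v$ are even. With this fix your argument is complete.
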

\begin{proof}
Note that $\opMatCatD(X)$ is a cofinal subcategory of $\opAzcat(X)$. That implies $K_1(\opAzcat(X)) = K_1(\opMatCatD(X))$. Since $\opMatCatD(X)$ has an absorbing object, we can use Eq. (\ref{eq:KCinTermsofPlusConstruction}) to identify $K_1(\opMatCatD(X))$ with $\Gamma/[\Gamma,\Gamma]$, where $\Gamma$ is the group of stable QCAs over $X$.

By Brown-Pearcy theorem, for a separable infinite-dimensional Hilbert space $\hilb{H}$, any element in the group $U(\hilb{H})$ is a multiplicative commutator. Any even unitary operator on a separable Hilbert superspace of superdimension $\infty|\infty$ is a direct sum of unitaries acting on the even and the odd subspaces. Therefore, by the same theorem, any even unitary is a multiplicative commutator of even unitaries. It follows that any quantum circuit automorphism that is a stable QCA is a composition of multiplicative commutators of quantum circuits which are stable QCAs.

By Lemma \ref{lma:QCACompositionInBrGr}, any commutator of stable QCAs define a trivial element in $\opQCA(X)$, and therefore is a quantum circuit.

Combining the implications of each paragraph, we obtain $K_1(\opAzcat(X)) \cong \opQCA(X)$.
\end{proof}

\begin{prop}    \label{prop:bosonicK(C(pt))}
We have $K(\Azcat(\text{pt}))\cong K(\RR/\ZZ, 2)$\footnote{Here and below, $\RR/\ZZ$ denotes the underlying abstract group, not a topological group.} and $K(\Azcat(\ZZ)) \cong \Brgr(\ZZ) \times K(\RR/\ZZ,3)$ at the level of zeroth $K$-spaces. 
\end{prop}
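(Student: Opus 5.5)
We compute $K(\Azcat(\text{pt}))$ directly from the plus-construction description in Remark \ref{rmk:WeCanUseQuillenPlus}, and then obtain the statement for $\ZZ$ from the delooping of Theorem \ref{thm:LoopSpectrum}.

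For $X=\text{pt}$, Proposition \ref{prop:0dBrauerGroup} gives $\Brgr(\text{pt})=0$. Morphisms in $\Azcat(\text{pt})$ are $*$-isomorphisms between type $\rm I$ factors. By Remark \ref{rmk:automaticUWcont} these are automatically normal, so every automorphism of $B(\hilb{H})$ is inner. Hence $\Aut(B(\hilb{H}))=PU(\hilb{H})=U(\hilb{H})/U(1)$, regarded as a discrete group. The stabilization maps $\alpha\mapsto\alpha\otimes\Id$ send a projective unitary $[u]$ to $[u\otimes 1]$, so $\Gamma=\text{colim}\,PU(\hilb{H}^{\otimes n})$. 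This colimit is the quotient of $\tilde\Gamma:=\text{colim}\,U(\hilb{H}^{\otimes n})$ by the central circle $U(1)$. We therefore have a central extension of discrete groups
\beq
1\to U(1)\to\tilde\Gamma\to\Gamma\to 1 .
\eeq

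The key input is that $U(\hilb{H})$, taken as a discrete group, is acyclic for infinite-dimensional separable $\hilb{H}$. This is the discrete analogue of Kuiper's theorem: a swindle using $\hilb{H}\cong\hilb{H}^{\oplus\infty}$ and the mitigation argument for infinite-sum constructions, as in de la Harpe--McDuff ``acyclic groups of automorphisms''. A filtered colimit of acyclic groups is acyclic, so $\tilde\Gamma$ is acyclic. Brown--Pearcy, as used in Proposition \ref{prop:K1asQCA}, already gives $K_1=0$, which is consistent with this.

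From the central extension, $B\tilde\Gamma\to B\Gamma\to BU(1)^\delta$ is a fibration up to homotopy, where $BU(1)^\delta$ denotes the classifying space of $U(1)=\RR/\ZZ$ as a discrete group, i.e. $K(\RR/\ZZ,2)$ after delooping. More precisely, the fibration is $BU(1)^\delta\to B\tilde\Gamma\to B\Gamma$. Since $B\tilde\Gamma^+\simeq *$, applying the plus construction fiberwise gives $B\Gamma^+\simeq K(\RR/\ZZ,2)$. Concretely, one uses the Serre spectral sequence together with acyclicity of $\tilde\Gamma$ to get $H_*(B\Gamma)\cong H_*(K(\RR/\ZZ,2))$. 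Then, because $\pi_1(B\Gamma^+)=\Gamma^{ab}=0$ (the image of $\tilde\Gamma^{ab}=0$) and the homology of $B\Gamma^+$ agrees with that of $K(\RR/\ZZ,2)$, the map $B\Gamma^+\to K(\RR/\ZZ,2)$ classifying the extension is a homology isomorphism between simply connected spaces, hence a homotopy equivalence by Whitehead. Combined with Eq. (\ref{eq:KCinTermsofPlusConstruction2}), this gives $K(\Azcat(\text{pt}))\cong K(\RR/\ZZ,2)$.

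For $\ZZ$, Theorem \ref{thm:LoopSpectrum} gives $\Omega K(\Azcat(\ZZ))\simeq K(\Azcat(\text{pt}))\simeq K(\RR/\ZZ,2)$. The zeroth space of a connective spectrum is a product of $\pi_0$ with its basepoint component. The basepoint component is connected and its loop space is $K(\RR/\ZZ,2)$, so it is $K(\RR/\ZZ,3)$: as an infinite loop space it is the corresponding Eilenberg--MacLane spectrum, with no $k$-invariants possible since there is a single homotopy group. Together with $\pi_0=\Brgr(\ZZ)$, this gives $K(\Azcat(\ZZ))\cong\Brgr(\ZZ)\times K(\RR/\ZZ,3)$.

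The main obstacle is justifying acyclicity of discrete $U(\hilb{H})$ and its stable colimit. I would cite the de la Harpe--McDuff theorem that automorphism groups admitting such ``flabby'' infinite-sum structures are acyclic, and check that the stabilization maps are compatible with it.
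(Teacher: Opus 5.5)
Your proposal is correct and follows essentially the same route as the paper: de la Harpe--McDuff acyclicity of $U(\hilb{H})$ as a discrete group passes to the stable colimit $\tilde\Gamma$; the central extension by $\RR/\ZZ$ yields a homology isomorphism $B\Gamma\to K(\RR/\ZZ,2)$, which factors through $B\Gamma^{+}$ as a homotopy equivalence of simply connected spaces; and Theorem \ref{thm:LoopSpectrum} then deloops the result (you get $\Gamma^{ab}=0$ from $\tilde\Gamma^{ab}=0$ rather than from Brown--Pearcy, which is equally valid). One notational slip: the fibration to feed into the Serre spectral sequence is $B\tilde\Gamma\to B\Gamma\to K(\RR/\ZZ,2)$, whose base is simply connected and whose fiber is acyclic, whereas at one point you write its base as $BU(1)^{\delta}$, which is $K(\RR/\ZZ,1)$; your later steps, however, correctly use $K(\RR/\ZZ,2)$.
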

\begin{proof}
Let $C = \Azcat(\text{pt})$. The objects of $C$ are separable type $\rm{I}$ factors and its morphisms are $*$-isomorphisms. By Proposition \ref{prop:0dBrauerGroup}, we have $K_0(C) = 0$. By Remark \ref{rmk:WeCanUseQuillenPlus}, we can use the Quillen's plus construction Eq. (\ref{eq:KCinTermsofPlusConstruction2}) for $K(C)$. 

Let $U_n = U(\hilb{H}^{\otimes n})^{\delta}$ (i.e., $U(\hilb{H}^{\otimes n})$ with discrete topology) and $\tilde{\Gamma} = \text{colim}(U_1 \to U_2 \to ...)$ be the filtered colimit with respect to injections $u \to u \otimes 1_{\hilb{H}}$. Similarly, let $PU_n = \Aut(B(\hilb{H}^{\otimes n}))^{\delta}$ and $\Gamma = \text{colim}(PU_1 \to PU_2 \to ...)$ which coincides with the group of stable QCAs. By Brown-Pearcy theorem, any element of $U(\hilb{H})$ can be written as a multiplicative commutator. Therefore, any element of $\Gamma$ is a multiplicative commutator. Hence, we have $K_1(C) = \Gamma/[\Gamma,\Gamma] = 0$.

As was shown by Harpe and McDuff \cite{de1983acyclic}, the group $U(\hilb{H})^{\delta}$ is acyclic $H_{*}(U(\hilb{H})^{\delta}) = 0$. Since group homology commutes with filtered colimits along injections, $\tilde{\Gamma}$ is acyclic as well. Since $\tilde{\Gamma}$ is a central extension of $\Gamma$ by $\RR/\ZZ$, which is classified by a map to $K(\RR/\ZZ,2)$, we have a fibration sequence $B \tilde{\Gamma} \to B \Gamma \to K(\RR/\ZZ,2)$ with the simply connected base and the acyclic fiber. Since the Serre spectral sequence collapses, we get a homology isomorphism $B \Gamma \to K(\RR/\ZZ,2)$. By the universal property of the Quillen's plus construction, it factors through $B\Gamma^+ \to K(\RR/\ZZ,2)$, and since both $B\Gamma \to B \Gamma^{+}$ and $B \Gamma \to K(\RR/\ZZ,2)$ are homology isomorphisms, $B\Gamma^+ \to K(\RR/\ZZ,2)$ is a homology isomorphism as well. Since both $B\Gamma^+$ and $K(\RR/\ZZ,2)$ are simply connected, $K(C) \cong B\Gamma^+ \to K(\RR/\ZZ,2)$ is a homotopy equivalence. 

By Theorem \ref{thm:LoopSpectrum}, $\Omega K(\Azcat(\ZZ)) \cong K(\RR/\ZZ, 2)$ that implies $K(\Azcat(\ZZ)) \cong \Brgr(\ZZ) \times K(\RR/\ZZ, 3)$.
\end{proof}

\begin{prop}    \label{prop:fermionicK(C(pt))}
We have $K(\sAzcat(\text{pt}))\cong \ZZ/2 \times K(\ZZ/2,1) \times K(\RR/\ZZ, 2)$.  
\end{prop}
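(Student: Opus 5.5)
We follow the proof of Proposition \ref{prop:bosonicK(C(pt))} and adapt it to the graded setting. Set $C = \sAzcat(\text{pt})$. By Proposition \ref{prop:0dBrauerGroup}, $K_0(C) = \sBrgr(\text{pt}) = \ZZ/2$. By Remark \ref{rmk:WeCanUseQuillenPlus} we then have $K(C) \cong \ZZ/2 \times B\Gamma^+$. Here $\Gamma$ is the colimit of the groups of graded automorphisms $\Aut(B(\hilb{H}^{\otimes n}))$, where $\hilb{H}$ has superdimension $\infty|\infty$. It therefore remains to show that $B\Gamma^+ \simeq K(\ZZ/2,1)\times K(\RR/\ZZ,2)$.

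\textbf{Step 1: structure of $\Gamma$.} Every graded automorphism of $B(\hilb{H})$ is $\Ad_u$ for some unitary $u$. Because $u$ must commute with $\Ad_\Gamma$ up to a scalar, and $\Gamma^2 = 1$, $u$ is either even or odd. This gives an extension
\[1 \to PU_0 \to \Gamma \to \ZZ/2 \to 1,\]
where $PU_0$ is the colimit of the even projective unitaries. An odd unitary exists, for example the unitary that swaps $\hilb{H}^{(0)}$ with $\hilb{H}^{(1)}$, so the parity map onto $\ZZ/2$ is surjective. Let $\tilde\Gamma$ be the corresponding group of homogeneous unitaries. Then $\tilde\Gamma$ is a central extension of $\Gamma$ by $\RR/\ZZ$, and $\tilde\Gamma_0$, the colimit of $U(\hilb{H}^{(0)})\times U(\hilb{H}^{(1)})$, is its index-two subgroup.

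\textbf{Step 2: abelianization.} By Brown--Pearcy, every even unitary is a commutator of even unitaries, as already used in Proposition \ref{prop:K1asQCA}. Hence $[\Gamma,\Gamma] \supseteq PU_0$, and therefore $\Gamma/[\Gamma,\Gamma] = \ZZ/2$. This gives $K_1 = \ZZ/2 = \sQCA(\text{pt})$, in agreement with Proposition \ref{prop:K1asQCA}. We then pass to the plus construction $B\Gamma^+$ taken with respect to the perfect normal subgroup $P = PU_0$.

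\textbf{Step 3: homology, the main step.} By de la Harpe--McDuff, $U(\hilb{H}^{(0)})^\delta \times U(\hilb{H}^{(1)})^\delta$ is acyclic, and the filtered colimit $\tilde\Gamma_0$ is acyclic as well. The fibration $B\tilde\Gamma_0 \to BPU_0 \to K(\RR/\ZZ,2)$ then has acyclic fiber, so $BPU_0 \to K(\RR/\ZZ,2)$ is a homology equivalence. From the fibration $BPU_0 \to B\Gamma \to B\ZZ/2$ we conclude that $B\Gamma^+$ fits into a fibration
\[K(\RR/\ZZ,2)\to B\Gamma^+ \to K(\ZZ/2,1).\]
This uses the standard fact that the plus construction with respect to $P$ yields the fibration $BP^+ \to B\Gamma^+ \to B(\Gamma/P)$, and $BP^+ \simeq K(\RR/\ZZ,2)$ by the bosonic argument.

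The main obstacle is showing that this fibration splits as a product. Two things must be checked.

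\emph{The monodromy is trivial.} The action of $\ZZ/2$ on $\pi_2 = \RR/\ZZ$ is conjugation of the central $\RR/\ZZ$ by an odd unitary. Since the scalars are central, this action is trivial.

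\emph{The extension splits.} There is a $k$-invariant in $H^3(K(\ZZ/2,1);\RR/\ZZ) = \ZZ/2$ that must vanish. We would show this directly by producing a section. Choose an odd self-adjoint unitary $v$ with $v^2 = 1$, for instance the swap. Then $\ZZ/2 \to \tilde\Gamma \to \Gamma$, $1 \mapsto v$, is a homomorphism, and it lifts the quotient map $\Gamma\to\ZZ/2$ compatibly with the central extension. This splits $B\Gamma \to B\ZZ/2$ and makes the map $B\Gamma\to K(\RR/\ZZ,2)$ restrict trivially on the section. Together with the map to $K(\ZZ/2,1)$, we obtain a map $B\Gamma^+ \to K(\ZZ/2,1)\times K(\RR/\ZZ,2)$ that induces isomorphisms on $\pi_1$ and $\pi_2$ and has no higher homotopy, so it is an equivalence.

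If constructing the section is delicate in the colimit (the $v$'s must be compatible under $-\otimes 1$), we would instead deduce the splitting from the infinite loop structure. $K(C)$ is an infinite loop space, and the relevant $k$-invariant is a stable operation $H\ZZ/2 \to \Sigma^2 H(\RR/\ZZ)$. The only candidate is $Sq^2$ composed with the inclusion $\ZZ/2 \hookrightarrow \RR/\ZZ$. This operation vanishes on the degree-one class, since $Sq^2x = 0$ for $|x| = 1$, so at the level of spaces the product decomposition holds.
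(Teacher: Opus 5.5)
Your proof is correct. The comparison map you end with, built from the parity map and the classifying map $k$ of the central extension $\tilde\Gamma\to\Gamma$, is exactly the paper's $f=(B\,\mathrm{deg},k)$. What differs is how you show it is an equivalence.

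The paper checks directly that $f$ is acyclic. Pulling back along $K(\RR/\ZZ,2)\to B\ZZ/2\times K(\RR/\ZZ,2)$ identifies the homotopy fiber of $f$ with that of $k|_{B\Gamma_0}$, i.e.\ with $B\tilde\Gamma_0$, which is acyclic. Since $\ker\pi_1(f)=\Gamma_0=[\Gamma,\Gamma]$, the universal property of the plus construction finishes the proof. You instead compute the two-stage Postnikov type of $B\Gamma^+$ from $B\Gamma_0^+\to B\Gamma^+\to B\ZZ/2$ and close with Whitehead's theorem.

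Your monodromy and section steps are in fact unnecessary. The map $k$ is defined on all of $B\Gamma$, because the homogeneous unitaries centrally extend all of $\Gamma$ and not only its even part. It induces an isomorphism on $\pi_2$, and that alone forces the $\pi_1$-action on $\pi_2$ to be trivial and the $k$-invariant to vanish. This is exactly what the paper exploits, and it also makes your worry about choosing compatible odd involutions in the colimit moot.

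Your infinite-loop fallback is a genuinely different and valid argument. Every stable operation $H\ZZ/2\to\Sigma^2H(\RR/\ZZ)$ factors through $Sq^2$, which vanishes on the degree-one fundamental class. So it would split any infinite loop space with $\pi_1=\ZZ/2$, $\pi_2=\RR/\ZZ$ and no higher homotopy, without using the explicit group extension. The price is that it relies on the infinite loop structure of $K(C)$ rather than on the group theory.
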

\begin{proof}
Let $C = \sAzcat(\text{pt})$. The objects of $C$ are separable type $\rm{I}$ superfactors and its morphisms are graded $*$-isomorphisms. By Proposition \ref{prop:0dBrauerGroup}, $K_0(C) = \ZZ/2$. By Remark \ref{rmk:WeCanUseQuillenPlus}, we can use the Quillen's plus construction Eq. (\ref{eq:KCinTermsofPlusConstruction2}) for $K(C)$. 

Let $\hilb{H} = \hilb{H}^{(0)} \oplus \hilb{H}^{(1)}$ be a separable Hilbert superspace of dimension $\infty|\infty$. Graded $*$-automorphisms of $B(\hilb{H})$ are inner $\Ad_u$ with $u$ being homogeneous. After choosing an isomorphism of (ungraded) Hilbert spaces $\iota:\hilb{H}^{(0)} \to \hilb{H}^{(1)}$, any homogeneous unitary $u$ has the form $(\phi \otimes 1)^{|u|}(v^{(0)} \oplus v^{(1)})$, where $v^{(0)} \in U(\hilb{H}^{(0)})$, $v^{(1)} \in U(\hilb{H}^{(1)})$ and 
$$
\phi = \begin{pmatrix}
    0 && \iota^{-1}\\
    \iota && 0
\end{pmatrix}.
$$
Let $U^{(h)}_n$ be the subgroup of homogeneous elements of $U(\hilb{H}^{\otimes n})^{\delta}$ (i.e., $U(\hilb{H}^{\otimes n})$ with discrete topology) and $\tilde{\Gamma} = \text{colim}(U^{(h)}_1 \to U^{(h)}_2 \to ...)$ be the filtered colimit with respect to injections $u \to u \otimes 1_{\hilb{H}}$. Similarly, let $PU^{(h)}_n$
be the subgroup of graded $*$-automorphisms of $B(\hilb{H}^{\otimes n})$ (with discrete topology) and $\Gamma = \text{colim}(PU^{(h)}_1 \to PU^{(h)}_2 \to ...)$ which coincides with the group of stable QCAs. The group $\tilde{\Gamma}$ is a central extension of $\Gamma$ by $\RR/\ZZ$: $1 \to \RR/\ZZ \to \tilde{\Gamma} \xrightarrow{\pi} \Gamma \to 1$. We also have a surjection $\text{deg}:\Gamma \to \ZZ/2$ that corresponds to the degree of the implementing unitary. Let $\tilde{\Gamma}_0 = \text{Ker}(\text{deg} \circ \pi) = U(\hilb{H}^{(0)}) \times U(\hilb{H}^{(1)})$ and $\Gamma_0 = \text{Ker}(\text{deg})$. By Brown-Pearcy theorem $\Gamma_0 = [\Gamma,\Gamma] = [\Gamma_0,\Gamma_0]$. Hence, we have $K_1(C) = \Gamma/[\Gamma,\Gamma]  = \ZZ/2$.

The central extension $1 \to \RR/\ZZ \to \tilde{\Gamma} \xrightarrow{\pi} \Gamma \to 1$ is classified by a map $B\Gamma \to K(\RR/\ZZ,2)$, and we have a fibration sequence $B \tilde{\Gamma} \xrightarrow{B \pi} B \Gamma \xrightarrow{k} K(\RR/\ZZ,2)$. Define $f: B\Gamma \to B(\ZZ/2) \times K(\RR/\ZZ,2)$ via $(B \text{deg},k)$. We have a homotopy pullback diagram
\beq
\begin{tikzcd}
B \Gamma_0 \arrow[r, ""] \arrow[d, ""'] & B\Gamma \arrow[d, "f"] \\
K(\RR/\ZZ,2) \arrow[r, ""']               & B\ZZ/2 \times K(\RR/\ZZ,2)
\end{tikzcd}
\eeq
By the same argument as in the proof of Proposition \ref{prop:bosonicK(C(pt))}, $B \Gamma_0 \to K(\RR/\ZZ,2)$ is acyclic, implying that $f$ is acyclic. Since the kernel of the induced by $f$ map on the fundamental group is $\Gamma_0$, by the universal property of the Quillen's plus construction, $B \Gamma^{+} \cong B\ZZ/2 \times K(\RR/Z,2)$. Thus, $K(C)\cong \ZZ/2 \times K(\ZZ/2,1) \times K(\RR/\ZZ, 2)$.

\end{proof}

\subsubsection{Kitaev spectra}

The spectra $K(\Azcat(\ZZ^{d-1}))$ and $K(\sAzcat(\ZZ^{d-1}))$ satisfy the desired properties of Kitaev spectra $\Kit^{(d)}$ and $\sKit^{(d)}$ with the exception that their highest homotopy group is $\pi_{d+1} = \RR/\ZZ$ instead of $\pi_{d+1}=0$, $\pi_{d+2} = \ZZ$. The origin of this mismatch comes from the fact that we use the discrete topology on the set of morphisms of the categories $\Azcat(X)$ and $\sAzcat(X)$. This can be easily fixed by the following definitions of the Kitaev spectra:
\bqa
\Kit^{(d)} := \text{cofib}\l \Sigma^{d+1} H \RR \to K(\Azcat(\ZZ^{d-1})) \r,\\
\sKit^{(d)} := \text{cofib}\l \Sigma^{d+1} H \RR \to K(\sAzcat(\ZZ^{d-1})) \r,
\eqa
where we used an exact sequence $0 \to \ZZ \to \RR \to \RR/\ZZ \to 0$.

\begin{remark}
One can alternatively use the topology on the set of morphisms of $\opAzcat(\ZZ^{d-1})$ and regard it an $\infty$-category. In that case, one can anticipate that the $K$-theory functor produces spectra with $\pi_{d+1} = 0$, $\pi_{d+2} = \ZZ$ providing Kitaev spectra directly. We postpone a detailed analysis.
\end{remark}

\subsection{SPT phases and anomaly indices}

The algebraic model described in the previous section immediately allows us to define anomaly indices for group actions on a quasi-local algebra, in the same way as they have been considered recently in \cite{ji2026k} in the setting of net of finite-dimensional algebras. Furthermore, our setting allows to define invariants of invertible states with symmetries (also known as {\it SPT states}). For simplicity, we consider the case of bosonic quasi-local algebras only.

Let $X$ be a uniformly locally finite metric space and $G$ be a group. Consider a category $(\Azcat(X))^{BG}$ of objects of $\Azcat(X)$ equipped with $G$-action (equivalently functors from $BG$ to $\Azcat(X)$). With any object $\ql{A}$ of $(\Azcat(X))^{BG}$, we have an associated map $BG \to K(\Azcat(X))$.

\begin{definition}
The {\it anomaly index} of an object $\ql{A}$ of $(\Azcat(X))^{BG}$ is the homotopy class $\omega(\ql{A}) \in H^0(BG,K(\Azcat(X)))$ of the associated map $BG \to K(\Azcat(X))$.
\end{definition}

By  definition, the anomaly index provides an invariant of $G$-equivariant Brauer equivalence classes, which we can define by $\Brgr_G(X) := K_0(\Azcat(X)^{BG})$.

Similarly, we can define $G$-equivariant generalization of invertible phases and their invariants for countable groups $G$.

\begin{definition}
Let $X$ be a uniformly locally finite metric space. We say that an action of a countable group $G$ on $\Mat(X,s)$ is {\it on-site} if there exist unitary representations $\{u^{(g)}_x \in \opMat(X)_{\{x\}}\}_{x \in X, g \in G}$ of $G$ on $\opMat(X)_{\{x\}}$ such that an element $g$ acts on $\opMat(X)_U$ by conjugations with $\prod_{x \in U} u^{(g)}_x$. We say that an action on $\Mat(X)$ is {\it stable on-site} if for each $x \in X$, the corresponding unitary representation is $l^2(G)^{\oplus \NN}$, where $l^2(G)$ is the left regular representation. We denote the quasi-local algebra $\Mat(X)$ equipped with a stable on-site $G$-action by $\Mat_G(X)$.
\end{definition}

\begin{definition}
Let $X$ be a uniformly locally finite metric space, and $G$ be a countable group. We say that $G$-invariant states $\omega_1$, $\omega_2$ on $\Mat_G(X)$ are {\it in the same SPT phase} if for an unentangled $G$-invariant state $\omega_0$ on $\Mat_G(X)$, the states $\omega_1 \otimes \omega_0$ and $\omega_2 \otimes \omega_0$ are related by a quantum circuit with $G$-invariant gates. This is an equivalence relation, and the equivalence classes are called {\it SPT phases}. Tensor product of states and a (zero-spread) $G$-equivariant isomorphism $\Mat_G(X) \otimes \Mat_G(X) \cong \Mat_G(X)$ equips this set with the structure of a commutative monoid. The abelian group of invertible SPT phases is denoted $\IP_G(X)$. The phase of the unit element of $\IP_G(X)$ is called {\it trivial}.
\end{definition}

Given an a state in a non-trivial invertible SPT phase, we can produce an ``anomalous'' $G$ action on the corresponding boundary algebra:

\begin{definition}
Let $X$ be a uniformly locally finite metric space, $G$ be a countable group, and let $\omega$ be a $G$-invariant state in an invertible SPT phase on $\Mat_G(X \times \ZZ)$. The $G$-action on $\Mat_G(X \times \ZZ)$ induces a $G$-action on the von Neumann algebras $\pi_{\omega}(\Mat_G(U \times \ZZ))''$ for bounded subsets $U \subset X$. The {\it associated $G$-action on the boundary algebra} is the $G$-action induced in this way.
\end{definition}

\begin{remark}
In the setting of lattice systems with finite-dimensional on-site Hilbert spaces, one generally needs a quantum circuit which disentangles a state $\omega$ in an invertible SPT phase and commutes (globally) with the action of $G$ to produce an anomalous boundary symmetry \cite{else2014classifying,zhang2022topologicalinvariantssptentanglers,czajka2025anomalies}. For example, it is expected that fermionic SPT phases over $\ZZ^2$ protected by $\ZZ/2$-symmetry are classified by $\ZZ/8$, and that the generator of $\ZZ/8$ does not correspond to an anomalous $\ZZ/2$-action on a 1d spin system, as the generator of $\ZZ/2$ would have to be a non-trivial 1d QCA \cite{Ellison_2019,jones20191dlatticemodelsboundary}. In contrast, with infinite-dimensional on-site Hilbert spaces, this issue does not arise. For example, we may construct from the invertible algebra of the chiral Majorana fermion and its inverse, constructed in Section \ref{subsec:ChiralMajorana}, a trivial quasi-local algebra equipped with a $\ZZ/2$ action coming from the fermion parity of just one of the chiral Majorana algebras. This anomalous action corresponds to a generator of the $\ZZ/8$ group.

\end{remark}

\begin{definition}
Let $X$ be a uniformly locally finite metric space and $G$ be a countable group, and let $\omega$ be a $G$-invariant invertible SPT state $\omega$ on $\Mat_G(X \times \ZZ)$.  We define the {\it SPT index} of $\omega$ to be the $G$-equivariant Brauer equivalence class of its boundary algebra, equipped with the associated $G$-action.
\end{definition}

An SPT index of a $G$-invariant invertible SPT state is an invariant of the SPT phase due to the following
\begin{prop}
Let $X \times \ZZ$ be a uniformly locally finite metric space and $G$ be a countable group. Two $G$-invariant invertible states on $\Mat_G(X \times \ZZ)$ in the same SPT phase have $G$-equivariant Brauer equivalent right boundary algebras.
\end{prop}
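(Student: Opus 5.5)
The plan is to imitate the proof of Proposition \ref{prop:SamePhaseSameBAlg}, keeping track of the $G$-action at each step. Let $\omega$ and $\omega'$ be $G$-invariant invertible states in the same SPT phase. First stabilize: tensoring with a $G$-invariant unentangled state $\omega_0$ on $\Mat_G(X\times\ZZ)$ changes the right boundary algebra to $\ql{B}_R\otimes\ql{B}_{0,R}$. Here $\ql{B}_{0,R}\cong\Mat(X)$ carries a $G$-action induced from an on-site action, hence by inner, locally implemented unitaries. Using the stable on-site form $l^2(G)^{\oplus\NN}$, I would check that $\ql{B}_{0,R}$ is $G$-equivariantly isomorphic to $\Mat_G(X)$. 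Tensoring with $\Mat_G(X)$ is exactly the stabilization that defines equivariant Brauer equivalence, so this step costs nothing. We may therefore assume $\omega'=\omega\alpha$, where $\alpha$ is a quantum circuit with $G$-invariant gates, of spread $l$.

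Next, split $\alpha=\alpha_-\alpha_+$ as in Proposition \ref{prop:SamePhaseSameBAlg}. Here $\alpha_-$ consists of the gates supported in $X\times\ZZ_{<0}$, and $\alpha_+$ consists of the remaining gates, supported in $X\times\ZZ_{\geq -l}$. Since each gate is $G$-invariant, both $\alpha_\pm$ commute with the $G$-action. The states $\omega$ and $\omega\alpha_-$ agree on $\Mat(X\times\ZZ_{\geq 0})$, so their GNS restrictions give the same right boundary algebra. The identification is canonical: it is implemented by the unitary equivalence of GNS representations of the restricted states. Since both states are $G$-invariant and the $G$-action on the restricted algebra is the same, this identification is $G$-equivariant.

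The harder part is passing from $\omega\alpha_-$ to $\omega\alpha_-\alpha_+$. Proposition \ref{prop:SamePhaseSameBAlg} handled this through left boundary algebras and the inverse relation $[\ql{B}_L]+[\ql{B}_R]=0$. In the equivariant setting, that relation would require an equivariant version of Proposition \ref{prop:SREtoinvBAlg}, which I would rather avoid. Instead I would compare right boundary algebras directly. The automorphism $\alpha_+$ maps $\Mat(U\times\ZZ_{\geq n})$ into $\Mat(U^{+l}\times\ZZ_{\geq n-l})$, and $\alpha_+^{-1}$ does the same. So $\alpha_+$ induces a $G$-equivariant bounded spread isomorphism between the $n$-th right boundary algebra of $\omega\alpha_-\alpha_+$ and a quasi-local algebra $\ql{E}$ of $\omega\alpha_-$ that is sandwiched between its $n$-th and $(n-l)$-th right boundary algebras. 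I would take $n=0$.

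The main obstacle is to show that the inclusions of $\ql{B}^{(0)}_R$ and $\ql{B}^{(-l)}_R$ differ only by a $G$-equivariant tensor factor that is itself equivariantly trivial. This is the $G$-version of the stability remark after Definition \ref{def:BALGfromSTATE}. To handle it, I would first use the split property from Proposition \ref{prop:SuperalgebraSplitProperty}, applied after disentangling $\omega\alpha_-$ on the strip $[-l,-1]$ with a $G$-invariant circuit. This should give a factorization $\ql{B}^{(-l)}_R\cong\ql{B}^{(0)}_R\otimes\ql{S}$, in which $\ql{S}$ comes from a finite strip carrying an on-site $G$-action. I would then absorb $\ql{S}$ into $\Mat_G(X)$, using that $l^2(G)^{\oplus\NN}\otimes\hilb{K}\cong l^2(G)^{\oplus\NN}$ for any unitary $G$-representation $\hilb{K}$. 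If the split is not directly available, the fallback is the swindle of Proposition \ref{prop:XxZg0istrivial} on the strip, run with $G$-invariant gates. Either route yields a $G$-equivariant Brauer equivalence between $\ql{B}_R$ and $\ql{B}'_R$.
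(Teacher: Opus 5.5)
Your outline (stabilize, split $\alpha=\alpha_-\alpha_+$, note that $\alpha_-$ fixes $\Mat(X\times\ZZ_{\geq 0})$, shift the cut, let $\alpha_+$ induce an equivariant isomorphism) is the same as the paper's. The gap is in how you handle $\alpha_+$.

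At the cut $n=0$, the circuit $\alpha_+$ does not preserve $\Mat(X\times\ZZ_{\geq 0})$. Your algebra $\ql{E}_U=\pi_{\omega\alpha_-}(\alpha_+(\Mat(U\times\ZZ_{\geq 0})))''$ is therefore not sandwiched between the $0$-th and $(-l)$-th right boundary algebras. It need not contain $(\ql{B}^{(0)}_R)_{U'}$ for any nonempty $U'$: take $\alpha_+$ to be a layer of $G$-invariant swap gates between the sites $(x,-1)$ and $(x,0)$. It only contains the $l$-th one, up to spread in $X$. Even a correct sandwich would not give a Brauer equivalence unless you identify the relative piece, and you never do. So the comparison of $\ql{E}$ with $\ql{B}^{(0)}_R$ is missing.

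The observation you need is that all gates of $\alpha_+$ lie in $X\times\ZZ_{\geq -l}$. Hence $\alpha_+$ restricts to a $G$-equivariant bounded spread automorphism of $\Mat(X\times\ZZ_{\geq -l})$. It therefore directly identifies the $(-l)$-th right boundary algebras of $\omega\alpha_-$ and $\omega\alpha_-\alpha_+$. All that remains is the cut shift $\ql{B}^{(-l)}_R\cong\Mat(X\times[-l,-1])\otimes\ql{B}^{(0)}_R$, applied to both states. This is the paper's argument.

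Your proposed proof of that cut shift does not work as stated:
\begin{enumerate}
\item Proposition \ref{prop:SuperalgebraSplitProperty} needs a vector state that factorizes across the two algebras, and you do not have one.
\item A $G$-invariant circuit disentangling an SPT state on a strip need not exist; obstructing such symmetric cuts is exactly what a nontrivial anomaly does.
\item Applying such a circuit would change the state whose boundary algebra you are computing.
\item The swindle fallback is too vague to check.
\end{enumerate}

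None of this is needed. Since $U\times[-l,-1]$ is finite, $\Mat(U\times[-l,-1])$ is a type I factor. The state is locally normal (Remark \ref{rmk:InvStateIsLocNormal}), so its GNS image $\cstar{M}$ is again a type I factor. Then $B(\hilb{H}_{\omega})\cong\cstar{M}\otimes\cstar{M}'$. Since $\pi(\Mat(U\times\ZZ_{\geq 0}))''\subseteq\cstar{M}'$, the join is $\cstar{M}\otimes\pi(\Mat(U\times\ZZ_{\geq 0}))''$, compatibly in $U$ and with zero spread. By $G$-invariance, the $G$-action is implemented by unitaries on the GNS space. These preserve both factors and act on-site on $\cstar{M}$. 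The absorption $l^2(G)^{\oplus\NN}\otimes\hilb{K}\cong l^2(G)^{\oplus\NN}$ that you already invoke then lets $\Mat_G(X)$ absorb the strip factor.
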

\begin{proof}
Let $\omega$ be a $G$-invariant invertible SPT state on $\Mat_G(X \times \ZZ)$. It is manifest that $\omega$ and $\omega \otimes \omega_0$ for an unentangled $G$-invariant state $\omega_0$ on $\Mat_G(X \times \ZZ)$ have $G$-equivariant Brauer equivalent right boundary algebras.

Suppose $\alpha$ is a quantum circuit on $\Mat_G(X \times \ZZ)$ of spread $l$ with gates being invariant under $G$-action. We can represent $\alpha$ as a composition of $\alpha_- \alpha_+$ of quantum circuits $\alpha_-$, $\alpha_{+}$ with all their unitary gates being in $\Mat_G(X \times \ZZ_{<0})$, $\Mat_G(X \times \ZZ_{\geq -l})$. The states $\omega$, $\omega \alpha_-$ have the same restrictions to $\Mat_G(X \times \ZZ_{\geq 0})$, and therefore have the same right boundary algebras. The 0-th and the $(-l)$-th right boundary algebras of the state $\omega \alpha_-$ are related by tensoring with $\Mat_G(X)$, and therefore $G$-equivariant Brauer equivalent. The quantum circuit automorphism $\alpha_+$ induces a $G$-equivariant bounded spread isomorphism between the $(-l)$-th right boundary algebras of $\omega \alpha_-$ and of $\omega \alpha_- \alpha_+$. Composition of all these Brauer equivalences implies that two states in the same SPT phase has $G$-equivariant Brauer equivalent right boundary algebras.
\end{proof}

\section{Invertible quasi-local algebras over $\ZZ$ from free fermions} \label{sec:invertiblefreefermions}

In this section we give an explicit construction of bosonic and fermionic quasi-local algebras over $\ZZ$ which appear from free fermionic systems. In Section \ref{sec:cmInvariant}, we prove that that these quasi-local algebras represent non-trivial elements in $\opBrgr(\ZZ)$.

\subsection{Self-dual CAR algebra and quasi-free states}

We make heavy use of the theory of quasi-free states on CAR algebras. Some conventions and notations we use include:

\begin{itemize}
    \item By a real structure on a Hilbert space we mean an anti-unitary involution. When a Hilbert space $\hilb{K}$ is equipped with a real structure, we denote the image of $f \in \hilb{K}$ under the involution by $\overline{f} \in \hilb{K}$. For an operator $A \in B(\hilb{K})$, we define its complex conjugate $\overline{A} \in B(\hilb{K})$ via $\overline{A} \,f := \overline{A \bar{f}}$. We say that $f \in \hilb{K}$ and $A \in B(\hilb{K})$ are real if $f = \overline{f}$ and $A = \overline{A}$, respectively. We say that $A$ is skew symmetric if $\overline{A} = - A^*$. We denote the group of real unitary operators by $O(\hilb{K}) \subset U(\hilb{K})$. The Lie algebra of skew self-adjoint bounded operators is denoted by $\mathfrak{u}(\hilb{K})$, and the Lie algebra of real skew self-adjoint bounded operators is denoted by $\mathfrak{o}(\hilb{K}) \subset \mathfrak{u}(\hilb{K})$. We say that an orthogonal projection $P \in B(\hilb{K})$ is a {\it basis projection} if $P + \overline{P} = 1$.
    
    \item We assume that Hilbert spaces $L^2(X)$ for measure spaces $X$ are equipped with a standard real structure that acts by complex conjugation $f \to \overline{f}$. By $M_f: L^2(X) \to L^2(X)$, we denote the multiplication by $f \in L^{\infty}(X)$ operator.
    
    \item We use the following conventions for the Fourier transform $\CF f$ of $f \in L^1(\RR)$
    $$
    (\CF f)(k) = \int_{-\infty}^{\infty} f(x) e^{i k x} d x,
    $$
    $$
    (\CF^{-1} \tilde{f})(x) = \int_{-\infty}^{\infty} \tilde{f}(k) e^{-i k x} \frac{dk}{2 \pi}.
    $$
    We denote its extension to a bounded operator on $L^2(\RR)$ by the same symbol $\CF:L^2(\RR) \to L^2(\RR)$.
    
    \item For a function $a \in L^{\infty}(\RR)$, we let $a\l\hD\r$ be a bounded operator on $L^2(\RR)$ defined by $a\l\hD\r = \CF^{-1} M_a \CF$.

    \item By {\it a bump function} on $\RR$, we mean a function in $C^{\infty}_{c}(\RR)$, i.e. a smooth compactly supported function.

\end{itemize}

\begin{definition}
Let $\hilb{K}$ be a Hilbert space equipped with a real structure. A {\it self-dual CAR algebra} associated with $\hilb{K}$ is a $C^*$-superalgebra $\CAR(\hilb{K})$ which is a unique $C^*$-norm completion of a $*$-superalgebra generated by $1$ and $\{c_f\}_{f \in \hilb{H}}$ which depend linearly on $f$ subject to the relations 
$$
\{c_f,c_{g}\} = \lal \overline{f},g \ral,\,\,\,\,\, c^*_f = c_{\overline{f}},$$
and with the fermionic parity defined by $\fparity(c_f)=-c_f$.
\end{definition}

Below we review some standard facts about self-dual CAR algebras and quasi-free states (see e.g. \cite{araki1971quasifree, araki1988schwinger}).
\begin{definition}
Let $\hilb{K}$ be a Hilbert space equipped with a real structure, and let $U \in O(\hilb{K})$. A {\it Bogolyubov automorphism} associated with $U$ is a $*$-automorphism $\beta_U$ of $\CAR(\hilb{K})$ defined on generators by $\beta_U(c_f) = c_{U f}$.
\end{definition}

\begin{definition}
Let $\hilb{K}$ be a Hilbert space equipped with a real structure, and let $A \in B(\hilb{K})$ be skew-symmetric. A {\it Bogolyubov derivation} associated with $A$ is a densely-defined derivation $\der{D}_A$ of $\CAR(\hilb{K})$ defined on generators by $\der{D}_A(c_f) = c_{A f}$. If $A \in \mathfrak{o}(\hilb{K})$, then the derivation $\der{D}_A$ is a $*$-derivation. It defines a one-parameter family of Bogolyubov automorphisms $\{ e^{t \der{D}_A} \}_{t \in \RR}$ via $e^{t \der{D}_A} = \beta_{U_t}$ for $U_t = e^{t A}$.
\end{definition}

\begin{definition}
Let $\hilb{K}$ be a Hilbert space equipped with a real structure. A state $\omega$ on $\CAR(\hilb{K})$ satisfying the following relations is called a {\it quasi-free state}:
\beq
\omega(c_{f_1}...c_{f_{2n-1}}) = 0,
\eeq
\beq
\omega(c_{f_1}...c_{f_{2n}}) = \frac{1}{2^{n} n! } \sum_{s} \sign(s) \prod_{j=1}^{n} \omega(c_{f_{s(j)}} c_{f_{s(j+n)}}).
\eeq
\end{definition}
By \cite[Lemma 3.2-3.3]{araki1971quasifree}, there is a one-to-one correspondence between quasi-free state on $\CAR(\hilb{K})$ and bounded self-adjoint operators $S \in B(\hilb{K})$ such that $0 \leq S \leq 1$ and $S + \overline{S} = 1$. We denote the state corresponding to $S$ by $\QFS{S}$ which is fully characterized by
\beq
\QFS{S} (c_{f_1} c_{f_2}) = \lal \overline{f_1}, S f_2 \ral.
\eeq
By \cite[Theorem 1]{araki1971quasifree}, two states $\QFS{S}$, $\QFS{S'}$ are quasi-equivalent if and only if $S^{1/2}-S'^{1/2}$ is a Hilbert-Schmidt operator. By \cite[Lemma 4.3]{araki1971quasifree}, the state $\QFS{P}$ is pure if and only if $P$ is a basis projection.

Given a basis projection $P$, we let $O_P(\hilb{K})$ be the subgroup of elements $U \in O(\hilb{K})$ such that $[P,U]$ is Hilbert-Schmidt. As explained in \cite[Section 4]{araki1988schwinger}, a Bogolyubov automorphism $\beta_U$ can be implemented in the GNS-representation of $\QFS{P}$ if and only if $U \in O_P(\hilb{K})$, i.e. one can find a unitary operator $\hat{U} \in B(\hilb{H}_{\QFS{P}})$ such that
$$
\pi_{\QFS{P}}(\bog_U(a)) = \hat{U}^{-1} \pi_{\QFS{P}}(a) \hat{U}, \,\,\, a \in \CAR(\hilb{K}).
$$
The unitary operator $\hat{U}$ is unique up to an overall phase factor.

Similarly, let $\mathfrak{o}_P(\hilb{K})$ be the Lie subalgebra of elements $A \in \mathfrak{o}(\hilb{K})$ such that $[P,A]$ is Hilbert-Schmidt. The Bogolyubov automorphisms $\beta_{U_t} = e^{t \der{D}_A}$, $U_t = e^{t A}$, $t \in \RR$ are implementable if and only if $A \in \mathfrak{o}_P(\hilb{K})$. We may associate a skew self-adjoint (generally,  unbounded) operator $\hat{A}$ acting in $\hilb{H}_{\QFS{P}}$ such that $\hat{U}_t = e^{t \hat{A}}$ implements $\beta_{U_t}$. The operator $\hat{A}$ is unique up to an overall constant which we can fix by the condition $\lal \Omega_{\QFS{P}}, \hat{A} \Omega_{\QFS{P}} \ral = 0$. We say that $\hat{A}$ implements the derivation $\der{D}_A$.

\subsection{Chiral Majorana fermion}\label{subsec:ChiralMajorana}

Let $\hilb{K} = L^2(\RR) \otimes \CCC^N$ for some $N>0$ with the standard real structure. Any state $\omega$ on the algebra $\CAR(\hilb{K})$ provides a quasi-local algebra $\ql{A}$ over $\RR$ defined by
the net of von Neumann superalgebras
\beq
U \to \ql{A}_U =  \pi_{\QFS{P}}(\CAR(L^2(\interior{U}) \otimes \CCC^N))''
\eeq
for a bounded subset $U \subset \RR$\footnote{The interior $\interior{U}$ of $U$ is automatically measurable, so that $L^2(\interior{U})$ is well-defined.}. In this case, we say that $\ql{A}$ is generated by a state $\omega$ on $\CAR(\hilb{K})$.

Let $\hilb{K} = L^2(\RR)$ with the standard real structure, and let 
\beq \label{eq:LeftMovingMajProj}
P = (1+\sign(-i \p_x))/2
\eeq
be a basis projection. The quasi-free state $\QFS{P}$ of the algebra $\CAR(\hilb{K})$ is the ground state 
$$-i \frac{d}{dt}\QFS{P}(a^* \tau_t(a))\bigg|_{t=0} \geq 0$$
of the dynamics $\{\tau_t\}_{t \in \RR}$ where $\tau_t$ is the automorphism shifting all the observables by $t$ to the left: $\tau_t(c_f) = c_{T_t f}$, $(T_t f)(x) = f(x+t)$.

\begin{definition}
A quasi-local algebra of a {\it chiral left-moving Majorana fermion} $\Amaj$ over $\RR$ is the quasi-local algebra generated by a quasi-free state $\QFS{P}$ for $P$ from Eq. (\ref{eq:LeftMovingMajProj}). Similarly, a quasi-local algebra of a {\it chiral right-moving Majorana fermion} $\overline{\Amaj}$ is the quasi-local algebra generated by $\QFS{\overline{P}}$ which coincides with the complex conjugate of $\Amaj$. The discrete versions $\Amajd$, $\overline{\Amajd}$ over $\ZZ$ are defined by the pushforward quasi-local algebras with respect to the map $f:\RR \to \ZZ$, $x \to \lfloor  x + 1/2 \rfloor$.
\end{definition}

In the following, by a {\it discretization of a quasi-local algebra over} $\RR$ we mean the pushforward quasi-local algebra over $\ZZ$ with respect to the map $f:\RR \to \ZZ$, $x \to \lfloor  x + 1/2 \rfloor$.

Our goal is to prove the following theorem.

\begin{theorem} \label{thm:InvertibilityofAmaj}
There is a bounded spread isomorphism between $\Amajd \otimes \overline{\Amajd}$ and $\sMat(\ZZ)$.
\end{theorem}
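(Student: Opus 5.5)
Observe first that $\Amaj\otimes\overline{\Amaj}$ is the quasi-local algebra generated by the quasi-free state $\QFS{Q}$ on $\CAR(L^2(\RR)\otimes\CCC^2)$ with $Q=P\oplus\overline{P}$. Here $P=(1+\sign(\hD))/2$, so $Q$ is the ground-state projection of the non-chiral Majorana (Dirac) operator $D=\hD\otimes\sigma_z$. The plan is to deform $Q$, within its quasi-equivalence class, to a basis projection $Q'$ that is strictly local on a lattice of intervals. Concretely, we aim for $Q'=\bigoplus_{n\in\ZZ}Q'_n$ with respect to $L^2(\RR)\otimes\CCC^2=\bigoplus_n L^2([n-1/2,n+1/2))\otimes\CCC^2$, up to a bounded shift of the blocks. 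For such a $Q'$, the state $\QFS{Q'}$ is a product state, and its GNS algebra over each interval $[n-1/2,n+1/2)$ is a type $\rm I$ superfactor of superdimension $\infty|\infty$. This would identify the discretization with $\sMat(\ZZ)$ by a zero-spread isomorphism. In general, however, $Q$ is not quasi-equivalent to such a product projection locally, so we will instead compare the nets, not the states.

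The key construction is a real orthogonal $U\in O(\hilb{K})$ of finite propagation with three properties. First, $U$ maps $L^2(I)\otimes\CCC^2$ into $L^2(I^{+r})\otimes\CCC^2$, and the same holds for $U^{-1}$. Second, $[U,\cdot]$ is compatible with locality. Third, $UQU^{-1}-Q'$ is Hilbert--Schmidt. To build $U$, we use the mass deformation. Let $D_m=\hD\otimes\sigma_z+m\,\sigma_y$, which is real and skew in the appropriate sense, and let $Q_m=(1+\sign D_m)/2$ be its positive spectral projection. Since $D_m$ is gapped, $Q_m$ has exponentially decaying kernel, and $Q_m-Q$ is Hilbert--Schmidt locally, i.e.\ $M_\chi(Q_m-Q)$ is Hilbert--Schmidt for bounded $\chi$. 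Hence $\QFS{Q_m}$ and $\QFS{Q}$ are quasi-equivalent on every local algebra $\CAR(L^2(I)\otimes\CCC^2)$, by the Araki criterion applied to the restricted projections. It follows that both states generate the same local von Neumann superalgebras up to canonical isomorphism, i.e.\ the same quasi-local algebra, provided local quasi-equivalence holds uniformly in a form compatible with the inclusions. Next, the gapped $Q_m$ can be truncated: a finite-range (band) approximation of $Q_m$, obtained by conjugating with a finite-depth product of local real rotations (a ``fermionic circuit'' built via Kitaev-chain-type dimerization), produces an exactly dimerized basis projection $Q'$. We choose the dimer pairing so that $Q'$ pairs the right half of interval $n$ with the left half of interval $n+1$, as in the coupled-wire picture of Proposition \ref{prop:SREtoBALGinj}. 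Bogolyubov automorphisms of finite propagation that are locally implementable then give bounded spread isomorphisms between the generated nets.

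The order of steps is as follows. (i) Show that the net generated by a quasi-free state depends only on the local Hilbert--Schmidt class of its basis projection, using \cite[Theorem 1]{araki1971quasifree} on each $\CAR(L^2(\interior{U})\otimes\CCC^2)$. Normality of the resulting identifications follows from quasi-equivalence, and compatibility with inclusions follows since all of them are induced from the same $C^*$-algebra. (ii) Show $Q\sim Q_m$ in this local sense. This is an explicit Fourier-multiplier estimate: the symbol difference decays like $m/|k|$, and $\chi(x)$ times a symbol in $L^2(dk)$ is Hilbert--Schmidt. (iii) Deform $Q_m$ to the dimerized $Q'$ by a finite-propagation real orthogonal $U$, with $UQ_mU^{-1}-Q'$ locally Hilbert--Schmidt. (iv) Conclude that the discretized net equals, up to bounded spread, the net of a product state on a lattice of $\CAR$ factors, each local algebra being $B(\hilb{H})$ with $\hilb{H}$ of superdimension $\infty|\infty$, hence equal to $\sMat(\ZZ)$. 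Lemma \ref{bsinverse} gives the inverse spread.

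The main obstacle is step (iii), and in particular the exponential tails of $Q_m$. Finite-propagation unitaries only approximate the spectral projection, and tail errors must be shown to be Hilbert--Schmidt on each bounded region and to yield genuine inclusions of von Neumann algebras with a uniform spread, not merely approximate ones. My first attempt would avoid tails entirely. I would replace $\sign(D_m)$ by a smooth-cutoff construction: write $Q$ in a wavelet or local Fourier basis adapted to the intervals, with compactly supported bump functions, and pair positive-frequency modes of the left mover with those of the right mover inside each interval. The non-chiral combination has no net chirality, so the pairing can be done with a Hilbert--Schmidt defect localized near interval boundaries.
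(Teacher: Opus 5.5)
There is a genuine gap. It sits where you place the main obstacle, but the obstacle is not the one you name. Step (iii) is the entire content of the theorem and is not carried out, and the difficulty is ultraviolet, not the exponential tails of $Q_m$.

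The symbol of $Q$ tends to $\frac12(1+Z)$ as $k\to+\infty$ and to $\frac12(1-Z)$ as $k\to-\infty$. This mismatch at infinite frequency is what produces type $\rm{III}$ local algebras, and the mass does not affect it: the symbols of $Q_m$ and $Q$ differ by $\CO(m/|k|)$, and their kernels share the same principal-value singularity $\propto (x-y)^{-1}Z$. Any finite-range truncation of $Q_m$ that keeps the state locally quasi-equivalent still has type $\rm{III}$ local algebras. So tail control cannot reach a block-diagonal $Q'$; the rotation must change the high-frequency structure. A rotation that does so applies to $Q$ directly, which makes (ii) a detour.

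What you need is a real orthogonal $V$ of exactly finite propagation that sends $\frac12(1\pm Z)$ at $k\to\pm\infty$ to one and the same $k$-independent projection, up to $\CO(|k|^{-\infty})$. The paper takes $V=v(\hD)^{-1}$ with $v=\frac{1}{\sqrt2}\bigl(1+i(b_1X+b_2Y+b_3Z)\bigr)$, where $b_1^2+b_2^2+b_3^2=1$ and $v(k)=\frac{1}{\sqrt2}(1+i\sign(k)X)+\CO(|k|^{-\infty})$. Then $v\,\frac12(1+\sign(k)Z)\,v^{-1}=\frac12(1+Y)+\CO(|k|^{-\infty})$, which is on-site.

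For a translation-invariant $V$, exact finite propagation forces each $b_j$ to have compactly supported inverse Fourier transform (Paley--Wiener). The real work is achieving this while keeping $v$ unitary and real:
\begin{itemize}
\item $b_1=2h-1$ for a smooth step $h$ whose derivative has compact Fourier support;
\item $b_2+ib_3$ has modulus squared $4h(1-h)$ and compact Fourier support, and is built from the Hilbert transform of $\log(4h(1-h))$ (Lemma \ref{lma:gfunction});
\item the parities of the $b_j$ give $\overline{v(\hD)}=v(\hD)$.
\end{itemize}
Your closing heuristic, pairing positive-frequency left movers with right movers, is exactly the asymptotic form $\frac{1}{\sqrt2}(1+i\sign(k)X)$. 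It cannot be implemented with bump functions localized in intervals, because positive-frequency modes are not localizable; the Paley--Wiener construction is what makes it exactly local.

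Separately, your justification of (i)--(ii) is not valid. Araki's criterion for the restrictions to $\CAR(\hilb{K}_I)$ concerns the covariances $\Pi_IQ\Pi_I$ and $\Pi_IQ_m\Pi_I$. These are not projections (their spectra fill $[0,1]$), and the criterion requires $(\Pi_IQ\Pi_I)^{1/2}-(\Pi_IQ_m\Pi_I)^{1/2}$ to be Hilbert--Schmidt. That does not follow from $\Pi_I(Q-Q_m)\Pi_I$ being Hilbert--Schmidt, since the square root is not Lipschitz at $0$. The paper avoids this by making the local error trace class (symbol difference $\CO(|k|^{-\infty})$) and applying Powers--St{\o}rmer; your $m/|k|$ decay gives Hilbert--Schmidt but not trace class. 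A Hilbert--Schmidt local error does suffice when one of the two projections commutes with $\Pi_I$, as an on-site or block-diagonal $Q'$ would. So your criterion would be adequate at the end of (iii), but not in (ii). The conclusion of (ii) is in fact true, but the natural proof of it runs through the same finite-propagation $V$.
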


First we prove a technical lemma. 

\begin{lemma}   \label{lma:gfunction}
Let $g$ be a positive smooth function on $\RR$, such that $\tilde{g} = \CF^{-1} g \in \Bumpfns$ with  $\supp(\tilde{g}) \subset [-a,a]$. Then $f = \sqrt{g} e^{i \theta}$,
$$
\theta(x)= - \frac{a}{2} x - \frac{1}{2 \pi} \text{p.v.} \int_{-\infty}^{\infty} \l \frac{1}{t-x} - \frac{t}{1+t^2} \r \log g(t) dt
$$
is a well-defined smooth function and $\tilde{f} = \CF^{-1} f \in \Bumpfns$ with $\supp(\tilde{f}) \subset [-a/2,a/2]$.
\end{lemma}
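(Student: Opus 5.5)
The statement is a Paley--Wiener / Fej\'er--Riesz-type spectral factorization: $g$ is the Fourier transform of a compactly supported function and is positive on $\RR$, and we want to write $g=|f|^2$ with $f$ again the transform of a test function whose support is half as wide. The plan is to work with entire functions of exponential type and to identify $\sqrt{g}\,e^{i\theta}$ as the boundary value of an outer function multiplied by the exponential $e^{-iaz/2}$.

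\emph{Step 1: analytic setup.} Since $g=\CF\tilde g$ with $\tilde g\in\Bumpfns$ and $\supp\tilde g\subset[-a,a]$, the function $g$ extends to an entire function $g(z)=\int\tilde g(x)e^{izx}dx$. This extension has exponential type at most $a$, satisfies $|g(z)|\le Ce^{a|\Im z|}$, and $g|_\RR$ is Schwartz. Because $g$ is real on $\RR$, we have $g(z)=\overline{g(\bar z)}$.

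\emph{Step 2: $\theta$ is well defined and smooth.} Since $g$ is bounded on $\RR$, $\log g$ is bounded above. Functions of exponential type that are bounded on $\RR$ lie in the Cartwright class, so $\int\frac{|\log g(t)|}{1+t^2}dt<\infty$. Hence the principal value integral defining $\theta$ converges. The function $\log g$ is smooth because $g>0$, and smoothness of $\theta$ follows from the standard regularity of the (regularized) Hilbert transform of a smooth, log-integrable function. One splits $\log g$ with a cutoff into a compactly supported smooth piece, whose Hilbert transform is smooth, and a piece vanishing near the point $x$, whose contribution is smooth in $x$ by differentiating under the integral.

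\emph{Step 3: factorization.} By construction, $O(z)=\exp\bigl(\frac{1}{2\pi i}\int(\frac{1}{t-z}-\frac{t}{1+t^2})\log g(t)\,dt\bigr)$ is the outer function in the upper half-plane. Its boundary values have modulus $\sqrt{g}$ and argument equal to the conjugate-function part of $\theta$, so that on $\RR$ we get $f=O\,e^{-iax/2}$. I would then invoke Krein's factorization theorem for nonnegative Cartwright-class functions. Applied to $g$ of exponential type $a$ and indicator diagram $[-a,a]$ (i.e.\ mean type $a$ in each half-plane), it gives an entire function $h$ of exponential type $a/2$ with $g=h\,h^\ast$, where $h^\ast(z)=\overline{h(\bar z)}$ and all zeros of $h$ lie in the open lower half-plane. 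There are no real zeros because $g>0$.

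The canonical factorization of $h$ in the upper half-plane then has no Blaschke factor. It reads $h=c\,e^{-i\sigma z}O$ with $|c|=1$, where $\sigma$ is the mean type of $h$ in the upper half-plane. Krein's theory (matching the mean types of $h$ and $h^\ast$ with the mean types of $g$) gives $\sigma=a/2$. Therefore $f=\bar c\,h|_\RR$ extends to an entire function of exponential type $a/2$, and its mean types in the two half-planes are both equal to $a/2$.

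\emph{Step 4: support and smoothness of $\tilde f$.} On $\RR$ we have $|f|=\sqrt g$, so $f\in L^2$ and decays faster than any polynomial. By the Paley--Wiener theorem, $\tilde f=\CF^{-1}f$ is then supported in $[-a/2,a/2]$, with the sign convention for $\CF^{-1}$ fixed so that the indicator diagram $[-a/2,a/2]$ corresponds to this interval. For smoothness of $\tilde f$, note that $x^k f(x)\in L^2$ for all $k$ by the rapid decay, so $\tilde f\in H^\infty$ and hence $\tilde f\in\Bumpfns$.

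The main obstacle is Step 3. One must establish the factorization with the exact type $a/2$, so that no extra exponential factor shifts the support asymmetrically, and must check that the specific phase normalization $-\frac a2x$ together with the $\frac{t}{1+t^2}$ regularization matches $h$ up to a constant. If the appeal to Krein's theorem proves awkward, I would instead argue directly. Namely, I would show that $O(z)e^{-iaz/2}$ has bounded type in both half-planes with mean type $\le a/2$ by comparing with $g$, for instance via $|O|^2=|g|\cdot|\text{inner part}|^{-1}$. I would then conclude by the Krein/Paley--Wiener characterization of entire functions of exponential type.
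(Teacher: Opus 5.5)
Your proposal is correct in substance, but it takes a different route from the paper. The paper never invokes a factorization theorem. It defines $A(z)$ as the regularized Cauchy integral of $\log g$ on $\CCC\setminus\RR$. It then uses the jump $A_+-A_-=\log g$ across $\RR$ to glue $e^{A}$ on the upper half-plane and $G\,e^{A}$ on the lower half-plane into one entire function $H$ with $H|_{\RR}=f\,e^{iax/2}$.

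After that, the paper needs only one-sided upper bounds. $H$ is bounded in the upper half-plane because $\log g$ is bounded above. It has type at most $a$ in the lower half-plane because $\log|G(z)|\le a|\Im z|+P[\log g](z)$ there, where $P$ is the Poisson integral. So Paley--Wiener places $\CF^{-1}H$ in $[0,a]$, and the factor $e^{-iaz/2}$ translates this to $[-a/2,a/2]$. In effect this is a hands-on proof of the zero-free case of the Akhiezer--Krein factorization you quote.

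Both arguments rest on the same Nevanlinna--Krein representation of Cartwright-class functions. The paper uses it as an inequality and has to check the continuation across $\RR$ (boundary continuity plus Morera). You instead take $g=hh^\ast$ as a black box and identify $h$ through its canonical factorization in the upper half-plane. This avoids the gluing but requires exact mean types rather than upper bounds.

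That is where your argument needs a small repair. The hypothesis only says $\supp\tilde g\subset[-a,a]$, so $g$ may have exact type $a'<a$. Since $g$ is real, the support of $\tilde g$ is symmetric, so the type is $a'$ in both half-planes. The factorization theorem then gives $h$ of type $a'/2$. Additivity of mean types gives only $\sigma_+(h)+\sigma_-(h)=a'$, and together with the type bound this forces $\sigma=a'/2$, not $a/2$. Consequently $f=\bar c\,e^{-i(a-a')x/2}\,h|_{\RR}$ rather than $\bar c\,h|_{\RR}$. Then $\CF^{-1}f$ is a translate of $\CF^{-1}h$ with support in $[-a/2,\,a'-a/2]\subseteq[-a/2,a/2]$.

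So the lemma still follows. You should either treat the case $a'<a$ explicitly, or replace ``$\sigma=a/2$'' by the remark that the leftover phase $e^{-i(a-a')x/2}$ only moves the support inward. The paper's upper-bound argument handles this automatically.
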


\begin{proof}
By the smooth version of the Paley-Wiener theorem, the function $g$ extends to an entire function $G$ on $\CCC$ of exponential type $a$ with the following estimate: for any $N$, there exists $C_N$ such that $|G(z)| \leq C_N (1+|z|)^{-N} e^{a |\Im z|}$.

From the discussion on Page 3 of \cite{mitkovski2010polya}, it follows that the function $\log |g(t)|/(1+t^2)$ is integrable and therefore for $\Im z > 0$,
$$
A(z) = \frac{1}{2 \pi i} \int_{-\infty}^{\infty} \l \frac{1}{t-z} - \frac{t}{1+t^2} \r \log g(t) dt
$$
is well defined. We let $H(z)$ be an entire function defined by $H(z) = e^{A(z)}$ for $\Im z >0$, $H(z) = G(z) e^{A(z)}$ for $\Im z <0$, and $H(x) = f(x) e^{i a x/2}$ for $x \in \RR$. 

Since $\Re A(x + i y) = \frac{1}{2 \pi} \int_{-\infty}^{\infty} \frac{y}{(t-x)^2+y^2} \log g(t) d t$ and $\log g(t) < C$ for some constant $C$, the function $H(z)$ has essentially zero exponential growth in the upper half-plane. On the other hand, in the lower half-plane it has the same exponential growth as $G$. Moreover, $H|_{\RR}$ decays superpolynomially. Thus, by Paley-Wiener theorem, the Fourier transform of $H|_{\RR}$ is supported in $[0, a]$.

Let $F(z) = e^{-i a z/2} H(z)$. Then $f = F|_{\RR}$, and its inverse Fourier transform $\tilde{f} = \CF^{-1} f$ is a bump function supported in $[-a/2,a/2]$.
\end{proof}

\begin{proof}[Proof of Theorem \ref{thm:InvertibilityofAmaj}]
Let $\hilb{K} = L^{2}(\RR) \otimes \CCC^2$ with closed subspaces $\hilb{K}_I = L^{2}(I) \otimes \CCC^2$ for bounded open subsets $I \subset \RR$. Any state $\omega$ on $\CAR(\hilb{K})$ generates a quasi-local algebra via $I \to \pi_{\omega}(\CAR(\hilb{K}_I))''$. A quasi-free state $\QFS{P}$ for $P = (1 + \sign(\hD) \otimes Z)/2$ generates $\Amaj \otimes \overline{\Amaj}$. Below we explicitly construct a Bogolyubov automorphism $\beta_V$ satisfying $\beta_V(\CAR(\hilb{K}_I)) \subset \CAR(\hilb{K}_{I^{+ l}})$ for some $l>0$ and such that the quasi-free state $\QFS{\tilde{P}} = \QFS{P} \beta_V$ generates a quasi-local algebra whose discretization gives $\sMat(\ZZ)$. The extension of $\beta_V$ provides a desired bounded spread isomorphism $\alpha: \Amajd \otimes \overline{\Amajd} \to \sMat(\ZZ)$.

Let $\rho$ be an even non-negative smooth function on $\RR$, such that $\tilde{\rho} = \CF^{-1} \rho \in \Bumpfns$ with  $\supp(\tilde{\rho}) \subset [-l/2,l/2]$\footnote{For example, one can take $\rho = (\CF \sigma)^2$ for
$$\sigma(x) = \begin{cases}
			e^{-\frac{1}{l^2-16 x^2}}, & |x|<l/4\\
            0, & |x|\geq l/4.
		 \end{cases}
$$         
}, and let $h(x) = \int_{-\infty}^{x} \rho(x') dx' \big/ \int_{-\infty}^{\infty} \rho(x') dx'$. Using Lemma \ref{lma:gfunction} for $g = 4 h(1-h)$, we have smooth functions
\beq
b_1 = 2h -1,
\eeq
\beq
b_2 = \sqrt{g} \cos(\theta),
\eeq
\beq
b_3 = \sqrt{g} \sin(\theta),
\eeq
such that $b_1^2 + b_2^2 + b_3^2 = 1$, $b_1$ and $b_3$ are odd, $b_2$ is even, and 
\beq \label{eq:vofk}
v(k) = \frac{1}{\sqrt{2}}\l 1 + i b_1(k) X + i b_2(k) Y + i b_3(k) Z \r = \frac{1}{\sqrt{2}}(1+i \sign(k) X) + \CO(|k|^{-\infty}).
\eeq
We set $V = v(\hD)^{-1}$. By construction, the Bogolyubov automorphism $\beta_V$ satisfies $\beta_V(\CAR(\hilb{K}_I)) \subset \CAR(\hilb{K}_{I^{+l}})$. It remains to show that $\QFS{\tilde{P}}$ for $\tilde{P} = V^{-1} P V$ generates a quasi-local algebra whose discretization gives $\sMat(\ZZ)$. 

We define functions $p_0(k) = (1+Y)/2$ and $\tilde{p}(k)$ via $\tilde{P} = \tilde{p}(\hD)$, $k \in \RR$. By Eq. (\ref{eq:vofk}), we have $\tilde{p}(k) = p_0(k) + \CO(|k|^{-\infty})$. Let $\Pi_I$ be the closed orthogonal projection to $\hilb{K}_I$ and $P_0 = p_0(\hD)$. Since $\int_{-\infty}^{\infty} dk \Tr|\tilde{p}(k)-p_0(k)| < \infty$, the operator $\Pi_I (\tilde{P} - P_0) \Pi_I$ is trace-class. By Powers–Størmer inequality \cite[Lemma 4.1]{powers1970free}, the operator $(\Pi_I \tilde{P} \Pi_I)^{1/2} - (\Pi_I P_0 \Pi_I)^{1/2}$ is Hilbert-Schmidt. Thus, by \cite[Theorem 1]{araki1971quasifree}, the restrictions of $\QFS{\tilde{P}}$ and $\QFS{P_0}$ to $\CAR(\hilb{K}_I)$ are quasi-equivalent. Since the restrictions of $\QFS{P_0}$ are pure, $\QFS{\tilde{P}}$ generates a quasi-local algebra of type $\rm{I}$ superfactors. Its discretization gives a quasi-local algebra isomorphic to $\sMat(\ZZ)$.

\end{proof}

\subsubsection{$O(N)$-action}

Let $N \in \NN$, and let $\rho:O(N) \to \End(\CCC^N)$ be the fundamental representation of $O(N)$. We denote the corresponding map from the Lie algebra $\mathfrak{so}(N)$ to $\End(\CCC^N)$ by the same letter. Choose an orthonormal basis in $\RR^N$ and let $i t^a \in \mathfrak{so}(N)$, $a = 1,2,...,N(N-1)/2$ be elements generating rotations in two-planes spanned by pairs of basis vectors, which satisfy $[t^a,t^b]=i f^{abc}t^c$, $\Tr(\rho(t^a) \rho(t^b)) = 2 \delta^{ab}$, $e^{2 \pi i \rho(t^a)} = 1$ for structure constants $f^{abc}$ of $\mathfrak{so}(N)$.

The algebra $\CAR(L^2(\RR) \otimes \CCC^N)$ carries an action of $O(N)$ via Bogolyubov automorphisms $g \to \beta_{1 \otimes \rho(g)}$. Since it preserves the quasi-free state $\QFS{P}$ for $P = (1 + \sign(\hD))/2 \otimes 1$, this action extends to the action of $O(N)$ on the quasi-local algebra $(\Amaj)^{\otimes N}$. 

For $f \in C_c^{\infty}(\RR)$, let $Q^a_f = M_f \otimes \rho(t^a)$. To check that $i Q^a_f \in \mathfrak{o}_P(\hilb{K})$, it is enough to check that $P Q^a_f (1-P)$ is Hilbert-Schmidt. We have
\begin{multline}
\Tr(P Q^a_f (1-P) Q^b_f P) = 2 \delta^{ab} \int_{0}^{\infty} \frac{dk_1}{2 \pi} \int_{-\infty}^{0} \frac{dk_2}{2 \pi } \tilde{f}(k_1 - k_2) \tilde{f}(k_2 - k_1) = \\ = 2 \delta^{ab} \int_{0}^{\infty} \frac{dq}{2 \pi} \frac{q}{2 \pi} \tilde{f}(q) \tilde{f}(-q) < \infty
\end{multline}
Thus, the derivations $i \der{Q}^a_f = \der{D}_{i Q^a_f}$ can be implemented in the GNS representation of $\QFS{P}$ by (unbounded) skew self-adjoint operators $i \hat{Q}^{a}_f$.

Using Eq. (5.4-5.5) in \cite{araki1988schwinger}, we have
$$
[\hat{Q}^a_f , \hat{Q}^b_g] = i f^{abc} \hat{Q}^c_{fg} + \Tr(P[P,Q^a_f] [P,Q^b_g])
$$
and
\begin{multline}
\Tr(P[P,Q^a_f] [P,Q^b_g]) = \delta^{ab} \Tr((1-P) Q^a_f P Q^a_g (1-P) - P Q^a_f (1-P) Q^a_g P ) = \\ = \delta^{ab} \int_{0}^{\infty} \frac{dk_1}{2 \pi} \int_{-\infty}^{0} \frac{dk_2}{2 \pi } \l \tilde{f}(k_1-k_2) \tilde{g}(k_2 - k_1) - \tilde{f}(k_2-k_1) \tilde{g}(k_1 - k_2) \r = \frac{\delta^{ab}}{2 \pi i} \int dx f'(x) g(x),
\end{multline}
where $\tilde{f} = \CF f$, $\tilde{g} = \CF g$. Hence,
\beq \label{eq:schwingerterm}
[\hat{Q}^a_f , \hat{Q}^b_g] = i f^{abc} \hat{Q}^c_{fg} + \frac{\delta^{ab}}{2 \pi i} \int_{-\infty}^{\infty} f'(x) g(x) d x
\eeq
for any $f,g \in C_c^{\infty}(\RR)$. Thus, the derivations $\der{Q}^a$ together with their smooth restrictions $\der{Q}^a_f$ extend to derivations of the quasi-local algebra $(\Amaj)^{\otimes N}$ and their exponents define an action of $O(N)$.

\subsection{$E_8$-lattice chiral bosons}

We now define a bosonic quasi-local algebras via bosonization procedure that corresponds to the conformal net of chiral lattice bosons based on $E_8$-lattice.

Let $\hilb{K} =  L^2(\RR) \otimes \CCC^2 \otimes \CCC^8$, $Q = 1 \otimes Y \otimes 1$, and $\ql{A}$ be the quasi-local algebra $(\Amaj)^{\otimes 16}$ generated by the state $(\varphi_P)^{\otimes 16}$ for $P$ from Eq. (\ref{eq:LeftMovingMajProj}). The derivation $i\der{Q} = \der{D}_{iQ}$ of $\CAR(\hilb{K})$ induces a $U(1)$-action on the quasi-local algebra $\cstar{A}$. For a bounded smooth function $f$, let $i \der{Q}_f = \der{D}_{i Q_f}$ for $Q_f = M_f \otimes Y \otimes 1$. 

We say that $f \in C^{\infty}(\RR)$ is a smooth step function if $\supp(f') \subset I$ for an interval $I \subset \RR$, $f(x < I) = 0$ and $f(x > I) = 1$. For smooth step functions $f$, we let $\sigma_{f}$ be automorphisms $e^{\pi i \der{Q}_{f}}$ of $\cstar{A}$. Note that, by explicit computation, $(\sigma_{f})^2$ is an inner automorphism $\Ad_{u_{f}}$ for some $u_{f} \in (\Amaj^{\otimes 16})_I$, $I \supset \supp(f')$, and we have $\sigma_{f}(u_{f}) = u_{f}$.

Similarly, for smooth step functions $f_1,f_2$ such that $\supp(f_1') < \supp(f_2')$, the automorphism $\sigma_{f_1} \sigma^{-1}_{f_2}$ is inner $\Ad_{w_{f_1-f_2}}$ for some $w_{f_1-f_2} \in \cstar{A}$ which can be chosen to be such that $w_{f_1-f_2}^2 = u_{f_1} u_{f_2}^{-1}$. Moreover, for a triple of smooth step functions $f_1,f_2,f_3$ such that $\supp(f_1') < \supp(f_2') < \supp(f_3')$, using Eq. (\ref{eq:schwingerterm}), we have $w_{f_1-f_2} w_{f_2-f_3} = w_{f_2-f_3} w_{f_1-f_2}$\footnote{Note that the fact that we consider a multiple of 16 copies of a left-moving Majorana fermion is important for the commutativity of $w_{f_1-f_2}$, $w_{f_2-f_3}$.}.

\begin{definition}
A quasi-local algebra $\AEe$ over $\RR$ is defined in the following way. Let $\omega_0 = (\varphi_P)^{\otimes 16}$ for $P$ from Eq. (\ref{eq:LeftMovingMajProj}), $\omega_1 = (\varphi_P)^{\otimes 16} \sigma_{f_0}$ be (locally normal) states on $(\Amaj)^{\otimes 16}$ for a smooth step function $f_0$, and let $\pi = \pi_{\omega_0} \oplus \pi_{\omega_1}$. To open intervals $I \subset \RR$ we assign the von Neumann algebra $(\AEe)_I$ generated by the fixed point set $\pi\l(\Amaj^{\otimes 16})^{\ZZ/2}_I\r$ of the fermionic parity automorphism $\Theta$ and unitary elements $S_{f}$ for smooth step functions $f$, $\supp(f') \subset I$ with relations $S_{f} \pi(a) = \pi(\sigma_{f}(a)) S_{f}$ for $a \in (\Amaj^{\otimes 16})$, $S_{f}^2 = \pi(u_{f})$, $S_{f_1} S^{-1}_{f_2} = \pi(w_{f_1-f_2})$ for $\supp(f_1') < \supp(f_2')$. To a bounded subset $U$, it assigns a von Neumann algebra $(\AEe)_U$ generated by $\{(\AEe)_{I_a}\}$, where $\{I_a\}$ is the collection of disjoint open intervals such that $\bigcup_a I_a = \interior{U}$. The discrete version $\ql{A}^{\text{dis.}}_{E_8}$ is defined by $j_* \ql{A}_{E_8}$ for a map $j:\RR \to \ZZ$, $x \to \lfloor x + 1/2 \rfloor$.
\end{definition}

\begin{prop} \label{prop:AE8isinvertible}
The quasi-local algebra $\ql{A}^{\text{dis.}}_{E_8}$ is invertible.    
\end{prop}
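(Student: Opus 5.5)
The plan is to mirror the proof of Theorem \ref{thm:InvertibilityofAmaj}, using the free-fermion description of $\AEe$ as a $\ZZ/2$-orbifold plus simple-current extension of $(\Amaj)^{\otimes 16}$. The natural inverse candidate is the complex conjugate $\overline{\AEe}$, built in the same way from $(\overline{\Amaj})^{\otimes 16}$ with the conjugate step automorphisms $\overline{\sigma}_f$. I will aim to show that $\ql{A}^{\text{dis.}}_{E_8}\otimes\overline{\ql{A}^{\text{dis.}}_{E_8}}$ is bounded spread isomorphic to a product algebra, which is then absorbed into $\Mat(\ZZ)$ after stabilization.

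The first step is to realize the tensor product as a subalgebra of the nonchiral fermion theory. The algebra $\AEe\otimes\overline{\AEe}$ is generated by three pieces: the bosonic (parity-even) parts of both chiral halves, and the pairs $S_f\otimes \overline{S}_{f}$. Inside the GNS representation of $\QFS{P}^{\otimes 16}$ with $P=(1+\sign(\hD)\otimes Z)/2$ on $L^2(\RR)\otimes\CCC^2\otimes\CCC^{16}$, this should be identified with a bosonic subalgebra of $\Amaj^{\otimes 16}\otimes\overline{\Amaj}^{\otimes 16}$. That subalgebra is generated by
\begin{itemize}
\item the diagonal-parity-even elements, and
\item implementers of the diagonal step automorphisms $\sigma_f\otimes\overline{\sigma}_f=e^{\pi i(\der{Q}_f+\overline{\der{Q}}_f)}$.
\end{itemize}
The relations $S_f^2=u_f$ and $S_{f_1}S_{f_2}^{-1}=w_{f_1-f_2}$ should be checked against the Schwinger term Eq. (\ref{eq:schwingerterm}). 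The anomalous contributions of the left and right movers come with opposite signs, so they cancel, and the diagonal $U(1)$ current $\der{Q}+\overline{\der{Q}}$ becomes non-anomalous.

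Next I will conjugate by the Bogolyubov automorphism $\beta_V$ of the proof of Theorem \ref{thm:InvertibilityofAmaj}, taken with $16$ copies (that is, tensored with $1$ on $\CCC^{16}$). This sends the nonchiral fermion net to one with type $\rm{I}$ local superfactors, and discretizes to $\sMat(\ZZ)$ with bounded spread. The real-structure-compatible generator $Q=1\otimes Y\otimes 1$ should be arranged to commute with $v(k)$. If $Y$ does not commute with $v(k)$, I will instead choose the $U(1)$ generator acting in flavor space, pairing the $16$ Majoranas into $8$ complex fermions, so that $V$ commutes with it. After the conjugation, $\der{Q}_f$ is implemented by a local quasi-free generator, and $e^{\pi i\der{Q}_f}$ becomes an honest local unitary in a type $\rm{I}$ factor up to a compactly supported correction. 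The image of $\ql{A}_{E_8}\otimes\overline{\ql{A}_{E_8}}$ is then the $\ZZ/2$-fixed points of $\sMat(\ZZ)$ under an on-site fermionic parity, extended by local unitaries implementing a non-anomalous local symmetry twist.

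The final step is to show that this extended algebra is bounded spread isomorphic to a bosonic product algebra. Locally, the fixed-point algebra of a type $\rm{I}$ superfactor $B(\hilb{H})$ under $\Ad_\Gamma$ is $B(\hilb{H}^{(0)})\oplus B(\hilb{H}^{(1)})$. Adjoining the twist unitaries $S_f\otimes\overline{S}_f$, which are odd with respect to the sector decomposition, glues the two summands back into a type $\rm{I}$ factor. Concretely, I will take $\interior{U}$ to be intervals $I_n$ around each site and use Lemma \ref{lma:gekadison} to split off tensor factors. The local algebras should then be type $\rm{I}$ factors $\cstar{N}_n$ with $\ql{A}_U\subset\bigotimes_{n\in U^{+l}}\cstar{N}_n\subset\ql{A}_{U^{+2l}}$, and Lemma \ref{bsinverse} gives the isomorphism with $\Mat(\ZZ)$.

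I expect the main obstacle to be showing that the commuting family $\{w_{f_n-f_{n+1}}\}$ for a chain of step functions with disjoint supports can be chosen consistently. Together with local parity, it should generate mutually commuting type $\rm{I}$ factors. This is exactly where the condition of $16$ copies and the cancellation of Schwinger terms between the two chiralities enter. I would attempt it by writing the full diagonal twist as a quantum-circuit-like product of local unitaries and checking that the relevant $2$-cocycle (the phase in $w_{f_1-f_2}w_{f_2-f_3}$ versus the reversed order) is trivial.
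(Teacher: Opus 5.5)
There is a genuine gap at your first step: the algebra you propose to identify with $\AEe\otimes\overline{\AEe}$ is a different algebra. Since $\AEe\otimes\overline{\AEe}$ is generated by $\AEe\otimes 1$ and $1\otimes\overline{\AEe}$, it contains the chiral twist implementers $S_f\otimes 1$ and $1\otimes\overline{S}_f$ individually. Moreover, the only free-fermion observables in it are those that are even under each chiral parity separately.

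The algebra you describe (total-parity-even elements plus implementers of the diagonal twist $\sigma_f\otimes\overline{\sigma}_f$) differs in both respects. It contains mixed products $c_g\otimes c_h$ of a left- and a right-moving fermion, which are not in $\AEe\otimes\overline{\AEe}$, and it does not contain $S_f\otimes 1$. In terms of the sectors $0,v,s,c$ of $\text{Spin}(16)_1$ (the even part of $(\Amaj)^{\otimes 16}$), yours is the bosonization of the non-chiral fermion by total parity, $0\bar 0\oplus v\bar v\oplus s\bar s\oplus c\bar c$. That is the Longo--Rehren extension, whose triviality is essentially an instance of Theorem \ref{thm:latticedegreesoffreedom}. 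By contrast, $\AEe\otimes\overline{\AEe}$ is $(0\oplus s)\otimes(\bar 0\oplus\bar s)$. These are different extensions of the even--even subalgebra, so showing that the first discretizes to a product algebra says nothing about the second.

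A symptom is that the number $16$ does no work in your argument. The Schwinger terms of the diagonal current cancel between the two chiralities for any number of Majoranas. By contrast, $16$ is exactly what is needed, within a single chirality, for the $w_{f_1-f_2}$ to commute, i.e.\ for the extension defining $\AEe$ to be consistent.

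The missing idea is to bosonize the two chiralities separately, which is what the paper does. It uses the Jordan--Wigner description of $\Mat(\ZZ)$ as generated by the even part of $\sMat(\ZZ)$ together with elements $aS_n$, where $a$ is odd and $S_n$ implements the half-line parity $\Theta_n$. It then chooses the bounded spread isomorphism $\alpha:(\Amajd)^{\otimes 16}\otimes(\overline{\Amajd})^{\otimes 16}\to\sMat(\ZZ)\otimes\sMat(\ZZ)$ so that $\alpha\sigma_f\alpha^{-1}$ and $\alpha\overline{\sigma}_f\alpha^{-1}$ take the form of the half-line parities $\Theta^{(L)}_n$ and $\Theta^{(R)}_n$ of the two factors separately. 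The defining relations of $\AEe$ and $\overline{\AEe}$ then match those of the two bosonized factors, giving $\AEed\otimes\overline{\AEed}\cong\Mat(\ZZ)\otimes\Mat(\ZZ)$.

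Your $\beta_V$ cannot supply such an $\alpha$. Because $v(k)$ rotates the chirality index, $\beta_V$ respects only the diagonal charge and carries the chiral parity to a non-local Bogolyubov automorphism. In fact no Bogolyubov map can do this. One that sent the chiral parity to an on-site grading would split off the $16$ left-movers as a quasi-free subsystem with type I local algebras, making $(\Amajd)^{\otimes 16}$ alone Brauer trivial. So this step needs an interacting map; its existence reflects the vanishing of the $\ZZ/8$ anomaly of the chiral parity for $16$ Majoranas.

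Your gluing step is also off. An implementer of a half-line parity commutes with the parity of every site on either side of the cut, so it does not interchange $B(\hilb{H}^{(0)})$ and $B(\hilb{H}^{(1)})$ in a local block. In the Jordan--Wigner picture it is the odd local factor $a$ in $aS_n$ that does the gluing.
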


\begin{proof}
Let $\Theta_n$ be the restrictions of the fermionic parity automorphism $\Theta$ of $\sMat(\ZZ)$ to $\ZZ_{<n}$. We can describe the quasi-local algebra $\Mat(\ZZ)$ as the algebra generated by even elements of $\sMat(\ZZ)$ and elements of the form $a S_{n}$ for odd elements $a \in \sMat(\ZZ)$ satisfying the relations $a S_n b = a \Theta_n(b) S_n$, $a S_n b S_n = a \Theta_n(b)$ (in the appropriate representation).

We can choose a bounded spread isomorphism $\alpha: (\Amajd)^{\otimes 16} \otimes (\overline{\Amajd})^{\otimes 16} \to \sMat(\ZZ) \otimes \sMat(\ZZ)$ from the proof of Theorem \ref{thm:InvertibilityofAmaj} such that for a certain step function $f$ with $\supp(f') \subset (n-1/2+\eps,n+1/2-\eps)$, $n \in \ZZ$, $0 <\eps <1/2$ the automorphisms $\alpha \sigma_f \alpha^{-1}$, $\alpha \overline{\sigma}_f \alpha^{-1}$ have the form $\Theta^{(L)}_n$, $\Theta^{(R)}_n$, respectively, where $\Theta^{(L,R)}$ are fermionic parity automorphisms acting on the first and the second factor of $\sMat(\ZZ) \otimes \sMat(\ZZ)$, $\Theta^{(L,R)}_n$ are their restrictions to $\ZZ_{<n}$.

The relations above induce a map $\AEed \otimes \overline{\AEed} \to \Mat(\ZZ) \otimes \Mat(\ZZ)$ that is naturally a bounded spread isomorphism.
\end{proof}

\begin{prop}    \label{prop:E8is16copiesOfMaj}
We have $[\AEed] = 16 [\Amajd]$ in $\sBrgr(\ZZ)$.    
\end{prop}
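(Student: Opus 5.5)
The plan is to show directly that $\AEed \otimes \overline{\Amajd}^{\otimes 16}$ is bounded spread isomorphic to $\sMat(\ZZ)$, where $\AEed$ is regarded as a superalgebra with trivial grading. Theorem \ref{thm:InvertibilityofAmaj} gives $[\overline{\Amajd}] = -[\Amajd]$ in $\sBrgr(\ZZ)$. Combining the two yields $[\AEed] = -16[\overline{\Amajd}] = 16[\Amajd]$, which is the statement.

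For the isomorphism, I will reuse the bounded spread isomorphism $\alpha: (\Amajd)^{\otimes 16}\otimes(\overline{\Amajd})^{\otimes 16}\to \sMat(\ZZ)\otimes\sMat(\ZZ)$ from the proof of Proposition \ref{prop:AE8isinvertible}. It has the property that $\alpha\sigma_f\alpha^{-1}=\Theta^{(L)}_n$ for suitable step functions $f$ localized near $n+\tfrac12$. The algebra $\AEed\otimes\overline{\Amajd}^{\otimes16}$ is generated locally by three kinds of elements:
\begin{itemize}
\item the parity-even part of $(\Amajd)^{\otimes16}$ tensored with all of $\overline{\Amajd}^{\otimes 16}$, that is, the fixed points of $\Theta\otimes\Id$;
\item the unitaries $S_f$, which implement $\sigma_f\otimes\Id$ (the $\sigma_f$ act trivially on the second factor);
\item the relations $S_f^2=\pi(u_f)$ and $S_{f_1}S_{f_2}^{-1}=\pi(w_{f_1-f_2})$.
\end{itemize}
Transporting by $\alpha$, the fixed points of $\Theta\otimes\Id$ become the fixed points of $\Theta^{(L)}$, namely $\sMat(\ZZ)^{(0)}\otimes\sMat(\ZZ)$. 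The $S_f$ become unitaries $S_n$ implementing $\Theta^{(L)}_n$. By the bosonization description used in Proposition \ref{prop:AE8isinvertible}, the resulting quasi-local algebra is $\Mat(\ZZ)\otimes\sMat(\ZZ)$, where the first factor is the bosonization of the first $\sMat(\ZZ)$. Finally, $\Mat(\ZZ)\otimes\sMat(\ZZ)\cong\sMat(\ZZ)$ by a zero-spread isomorphism, since $B(\hilb{H})\otimes B(\hilb{H}')$ is a type $\rm I$ superfactor on a space of superdimension $\infty|\infty$.

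To make this rigorous, I would proceed in three steps. First, I would extend $\alpha$ to the representation $\pi=\pi_{\omega_0}\oplus\pi_{\omega_1}$ tensored with the GNS representation of $\overline{\Amajd}^{\otimes16}$, and check that this extension identifies the local von Neumann algebras. Second, I would check that the relations defining $(\AEe)_I$ are carried to the bosonization relations $aS_nb=a\Theta_n(b)S_n$ and $aS_nbS_n=a\Theta_n(b)$. Third, I would verify bounded spread in both directions, using Lemma \ref{bsinverse} together with the fact that $\alpha$ and $\alpha^{-1}$ have bounded spread.

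The main obstacle is the first two steps. I have to make sure that the specific choice of $u_f$ and $w_{f_1-f_2}$ (which needs 16 copies for commutativity, via Eq. (\ref{eq:schwingerterm})) matches, under $\alpha$, the canonical choice $S_n^2=1$ and $S_mS_n^{-1}=\Theta$-string on the lattice side, up to local phases. I would resolve this by redefining $S_f$ by a local unitary correction inside $\pi((\Amaj^{\otimes16})_I)$; this changes neither the generated algebra nor its localization.
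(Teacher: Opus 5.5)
Your proposal is correct and follows the paper's proof. The paper likewise pushes the relations defining $\AEed$ through the isomorphism $\alpha$ and the lattice bosonization from Proposition \ref{prop:AE8isinvertible}, obtains a bounded spread isomorphism $\AEed \otimes (\overline{\Amajd})^{\otimes 16} \to \Mat(\ZZ)\otimes\sMat(\ZZ)$, and then concludes with Theorem \ref{thm:InvertibilityofAmaj}.

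Your extra step of correcting $S_f$ by local unitaries spells out a detail the paper leaves implicit, and it is needed. Since $\sigma_f^2=\Ad_{u_f}$ is a nontrivial inner automorphism while $(\Theta^{(L)}_n)^2=\Id$, the automorphism $\alpha\sigma_f\alpha^{-1}$ can agree with $\Theta^{(L)}_n$ only up to a local inner automorphism.
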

\begin{proof}
Given a bounded spread isomorphism $\alpha: (\Amajd)^{\otimes 16} \otimes (\overline{\Amajd})^{\otimes 16} \to \sMat(\ZZ) \otimes \sMat(\ZZ)$ and a relation between $\sMat(\ZZ)$ and $\Mat(\ZZ)$ as in the proof of Proposition \ref{prop:AE8isinvertible}, we get a bounded spread isomorphism $\AEed \otimes (\overline{\Amajd})^{\otimes 16} \to \Mat(\ZZ) \otimes \sMat(\ZZ)$. Hence, by Theorem \ref{thm:InvertibilityofAmaj}, we obtain $[\AEed] = 16 [\Amajd]$.

\end{proof}

\section{Invertible quasi-local algebras over $\ZZ$ from bosonic holomorphic CFTs} \label{sec:bosonicholomorphic}

In both of the above examples, the quasi-local algebras arise from holomorphic conformal field theories (CFTs). These kinds of CFTs can be axiomatized with conformal nets on $S^{1}$. Here, we consider only CFTs with \textit{bosonic} degrees of freedom, and we will follow the definitions and hypothesis of \cite{MorinelliTanimotoWeiner2018ConformalCovarianceSplit, LongoXu2004TopologicalSectorsDichotomy, KawahigashiLongoMuger2001MultiIntervalSubfactors}.

\begin{definition} A conformal net on $S^{1}$ consists of the following data:

\begin{enumerate}
\item 
a Hilbert space $\hilb{H}$;
\item 
an assignment of a von Neumann algebra $\ql{A}_{I}\subseteq B(\hilb{H})$ to each open, connected interval $I \subset S^{1}$;
\item 
a strongly continuous unitary projective representation $\pi:\text{Diff}(S^{1})\rightarrow U(\hilb{H})$.\footnote{where we use the standard Fréchet–Lie group topology on $\text{Diff}(S^{1})$}
\end{enumerate}

\bigskip

\noindent This data is subject to the following conditions:

\bigskip

\begin{enumerate}
\item 
(Isotony.) $I\subseteq J$ implies $\ql{A}_{I}\subseteq \ql{A}_{J}$.

\medskip

\item 
(Locality.) $I\cap J=\varnothing$ implies $[\ql{A}_{I},\ql{A}_{J}]=0$.

\medskip

\item 
(Vacuum.) There exists a unique (up to phase) vector $\Omega\in H$ which is invariant under the action of the Mobius subgroup $\text{PSL}(2,\mathbb{R})\subseteq \text{Diff}(S^{1})$, and $\Omega$ is cyclic for $\vee_{I} \ql{A}_{I}$.

\medskip

\item 
(Covariance.)
For $f\in \text{Diff}(S^{1})$, $\pi(f)\ql{A}_{I}\pi(f^{-1})=\ql{A}_{f(I)}$.
\medskip

\item 
(Positive energy.) The generator $L_{0}$ of the rotation subgroup $\text{U}(1)\subseteq \text{Diff}(S^{1})$ is positive.
\end{enumerate}
\end{definition}

There are some important consequences of this axioms (see \cite{GabbianiFroehlich1993OperatorAlgebrasCFT,FredenhagenJoerss1996ConformalHaagKastlerNets,GuidoLongo1996ConformalSpinStatistics}):

\begin{itemize}
\item 
(Haag duality.)  $(\ql{A}_I)^{\prime}=\ql{A}_{I^{\prime}}$, where $I^{\prime}$ denotes the interior of the complement of $I$
\item
(Factoriality) The local algebras $\ql{A}_{I}$ are type $\rm{III}_{1}$ factors.
\item 
(Reeh–Schlieder) $\Omega$ is cyclic and separating for each $\ql{A}_{I}$
\item 
(Split property) \cite{MorinelliTanimotoWeiner2018ConformalCovarianceSplit} For any $I\subset J$ with distinct endpoints, there exists a type $\rm{I}$ factor $\cstar{M}$ with $\ql{A}_{I}\subseteq \cstar{M} \subseteq \ql{A}_{J}$.
\end{itemize}

Associated to a conformal net is the ``de-compactification", which we describe as a quasi-local algebra over $\mathbb{R}$. To define this, given a conformal net, consider the stereographic projection $\sigma$ of $S^{1}-\{1\}$ onto the imaginary axis, analytically represented by the Cayley transform $z\mapsto \frac{1+z}{1-z}$. Identifying the imaginary axis with $\mathbb{R}$ in the usual way, we can express $\sigma$ as $S^{1}\rightarrow \mathbb{R}$ in radians via $\sigma(\theta):=\text{cot}\left(\frac{\theta}{2}\right)$.

This gives a bijection between the connected, open intervals in $S^{1}$ localized away from $1$ and the collection $\mathcal{I}_{\mathbb{R}}$ of all bounded, connected open intervals in $\mathbb{R}$, which preserves inclusion of intervals and disjointness. Then for any $I\in \mathcal{I}_{\mathbb{R}}$, we define $(\ql{A}_{0})_{I}:=\ql{A}_{\sigma^{-1}({I})},$ and set $\ql{A}_0$ to be the norm completion of $\bigcup_{I \in \mathcal{I}_{\mathbb{R}}} (\ql{A}_0)_I$. For a bounded subset $U \subset \RR$, we let $(\ql{A}_0)_U$ be $\bigvee_{a} (\ql{A}_0)_{I_a}$, where $\{I_a\}$ is a collection of disjoint open intervals such that $\bigcup_a I_a = \interior{U}$. This assignment defines a quasi-local algebra $\ql{A}_0$ over $\RR$.

One of the reasons this is so useful is that the representation theory of $\ql{A}$ is encoded via $*$-endomorphisms of $\ql{A}_{0}$. Recall that the representations of $\ql{A}$ (for example, as in \cite[Appendix B]{KawahigashiLongoMuger2001MultiIntervalSubfactors}) form a braided $W^*$-category called the category of \textit{superselection sectors}. Doplicher-Haag-Roberts (DHR) theory shows that this category is equivalent as a braided $W^*$-tensor category to the localized, transportable $*$-endomorphisms on $\ql{A}_{0}$. 

\begin{definition}
A (locally normal) $*$-endomorphism $\rho:\ql{A}_{0}\rightarrow \ql{A}_{0}$ is 
\begin{itemize}
\item 
\textit{localized} in an interval $I \in \mathcal{I}_{\mathbb{R}}$ if for all $J \subset \mathbb{R}\setminus I$, $\rho|_{(\ql{A}_{0})_{J}}=\text{Id}|_{(\ql{A}_{0})_{J}}.$
\item
\textit{transportable} if for any $J \in \mathcal{I}_{\mathbb{R}}$, there is a unitary $u\in \ql{A}_{0}$ such that $\text{Ad}_u \circ \rho$ is localized in $J$.
\end{itemize}
\end{definition}

The collection of all localized transportable $*$-endomorphisms is called $\DHRcat(\ql{A}_{0})$, and is a braided $W^*$-tensor category. If $\rho\in \DHRcat(\ql{A}_{0})$, then we denote by $\rho_{I}$ a representative of $\rho$ localized in $I$. By Haag duality, $\rho_{I}$ defines a $*$-endomorphism of $(\ql{A}_{0})_{I}$, and is unique up to unitary conjugacy. We can think of an endomoprhism of $\ql{A}_{0}$ as defining an $\ql{A}_{0}-\ql{A}_{0}$ bimodule $_{\rho}\ql{A}_{0}$ with right action defined by multiplication, and left action implemented by $\rho$, composed with multiplication. A Q-system in a unitary tensor category is (essentially) a $C^*$-Frobenius algebra object \cite{BischoffKawahigashiLongoRehren2015TensorCategories, ChenHernandezPalomaresJonesPenneys2022}. In a unitary braided tensor category, a commutative Q-system is a Q-system that is invariant under pre-composition with the braiding. Any commutative Q-system $L$ in $\DHRcat(\ql{A}_{0})$ defines a new quasi-local algebra $\ql{A}_{0}\rtimes L$, with local structure $(\ql{A}_{0}\rtimes L)_{I}:=(\ql{A}_{0})'_{I^{c}}\cap \ql{A}_{0}\rtimes L\cong (\ql{A}_{0})_{I}\rtimes L_{I}$. The inclusion $\ql{A}_{0}\subseteq \ql{A}_{0}\rtimes L$ is then a local inclusion \cite{LongoRehren1995NetsOfSubfactors, KawahigashiLongoMuger2001MultiIntervalSubfactors, BischoffKawahigashiLongoRehren2015TensorCategories, evans2026operatoralgebraicapproachfusion}.

We are interested in the case of \textit{rational} conformal nets, i.e. every irreducible $*$-endomorphism is dualizable (i.e. it has finite index), and there are finitely many isomorphism classes of these. In this case, the finite-index endomorphisms form a unitary modular tensor category (UMTC). Under the conformal covariance assumption we have given by \cite{MorinelliTanimotoWeiner2018ConformalCovarianceSplit, LongoXu2004TopologicalSectorsDichotomy}, rationality implies that $\ql{A}$ (hence $\ql{A}_{0}$) are strongly additive and have finite $\mu$ index, hence are completely rational in the sense of \cite{KawahigashiLongoMuger2001MultiIntervalSubfactors}. Thus under this version of the rationality assumption, we can freely use the results of \cite{KawahigashiLongoMuger2001MultiIntervalSubfactors} which apply to completely rational nets.

Conformal nets as defined above capture $\textit{chiral}$ conformal field theories. Associated to a chiral CFT is the canonically associated diagonal full CFT. This is defined on $1+1$D Minkowski space as in \cite{bischoff2015characterization2drationallocal}.
We focus here on the time-slice net, which is a net of von Neumann algebras over $\mathbb{R}$ given by slicing the $t=0$ hypersurface with double light-cones along their diagonals to define algebras for intervals along the spatial axis. This net, together with all of the symmetry actions, in fact remembers the full structure of the CFT (indeed, this is a fairly generic feature of algebraic quantum field theory that is usually assumed axiomatically as the time-slice axiom).

To define the quasi-local algebra of time slice net, we use the commutative $Q$-system realization $$\ql{B}:=(\ql{A}_{0}\otimes \ql{A}^{op}_{0})\rtimes L,$$
where $L$ denotes the Longo-Rehren $Q$-system \cite{KawahigashiLongoMuger2001MultiIntervalSubfactors, LongoRehren1995NetsOfSubfactors} 
$$\displaystyle L:=\bigoplus_{\rho\in \text{Irr}(\DHRcat(\ql{A}_{0}))} \rho_{I}\otimes \rho^{op},$$
with canonical multiplication.

The local structure is slightly tweaked for a spatial interval. Let $\widehat{I}$ be the intersection of the diamond $I\times I$ (in light-cone coordinates) with the line $t=0$.

$$\begin{tikzpicture}[scale=0.7,>=stealth]

  \def\a{1.2}
  \def\b{3.0}
  \pgfmathsetmacro{\m}{(\a+\b)/2}
  \pgfmathsetmacro{\h}{(\b-\a)/2}

  \draw[->] (-0.6,0) -- (4.2,0) node[right] {\(x\)};
  \draw[->] (0,-2.2) -- (0,2.4) node[above] {\(t\)};
  \node[below] at (3.95,0) {\(t=0\)};

  \draw[dashed] (-0.3,-0.3) -- (3.6,3.6);
  \draw[dashed] (-0.3,0.3) -- (3.6,-3.6);

  \node[above right] at (3.5,3.1) {\(u=x+t\)};
  \node[below right] at (3.5,-3.1) {\(v=x-t\)};

  \draw[thick,gray]
    (\a,0) -- (\m,\h) -- (\b,0) -- (\m,-\h) -- cycle;

  \draw[blue,very thick] (\a,0) -- (\b,0);
  \node[above] at ({(\a+\b)/2},0) {\(\widehat{I}\)};

  \draw[red,very thick] (\a/2,\a/2) -- (\b/2,\b/2);
  \draw[red,very thick] (\a/2,-\a/2) -- (\b/2,-\b/2);

  \node[red,above left] at ({(\a+\b)/4},{(\a+\b)/4}) {\(I\)};
  \node[red,below left] at ({(\a+\b)/4},{-(\a+\b)/4}) {\(I\)};

  \fill[blue] (\a,0) circle (1.2pt);
  \fill[blue] (\b,0) circle (1.2pt);

\end{tikzpicture}$$

\noindent Then 
$$\ql{B}_{\widehat{I}}:=(\ql{A}_{0}\otimes \ql{A}^{op}_{0})_{I}\rtimes L_{I}.$$

\noindent For convenience, we will abuse notation and write $\ql{B}_{I}$ in place of $\ql{B}_{\widehat{I}}$, which simply has the effect of scaling by $\sqrt{2}$, i.e. the interval $\widehat{I}=\sqrt{2}I$.

Before proving our main result, we have the following lemma, which gives a representation theoretic characterization of bosonic quasi-local algebras which are bounded spread isomorphic to $\Mat(\ZZ)$. For a quasi-local algebra $\ql{B}$ over $\RR$, we let $\ql{B}^{\text{dis.}}$ be a quasi-local algebra over $\ZZ$ defined by $j_* \ql{B}$, where $j$ is a map $j:\RR \to \ZZ$, $x \to \lfloor x + 1/2 \rfloor$.

\begin{lemma}   \label{lma:trivialtest}
Suppose that $\ql{B}$ is a quasi-local algebra over $\mathbb{R}$ and that
$\ql{B}_{I}$ is infinite-dimensional for every nonempty bounded interval
$I\subseteq \mathbb{R}$. Then the discretization
$\ql{B}^{\mathrm{dis.}}$ is bounded-spread isomorphic to
$\Mat(\mathbb{Z})$ if and only if there exist a faithful, locally normal
representation
\begin{equation*}
\pi_{0}\colon \ql{B}\longrightarrow B(\mathcal{H}_{0})
\end{equation*}
on a separable Hilbert space and a constant $R\geq 0$ such that the
following conditions hold:

\medskip

\begin{enumerate}
\item (Weak Haag duality). For every $a<b$, $\left(
\pi_{0}\bigl(\ql{B}{(-\infty,a)}\bigr)
\vee
\pi_{0}\bigl(\ql{B}{(b,\infty)}\bigr)
\right)'
\subseteq
\pi_{0}\bigl(\ql{B}_{(a-R,b+R)}\bigr).$

\bigskip

\item (Split property at infinity). For every $n\in\mathbb{Z}$, there is a type $\rm{I}$ factor
$\mathcal{M}_{n}\subseteq B(\mathcal{H}_{0})$ such that
\begin{equation*}
\pi_{0}\bigl(\ql{B}{(-\infty,n)}\bigr)''
\subseteq
\mathcal{M}_{n}
\subseteq
\pi_{0}\bigl(\ql{B}_{(-\infty,n+R)}\bigr)''.
\end{equation*}
\end{enumerate}
\end{lemma}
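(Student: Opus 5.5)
The plan is to prove the two directions separately. The forward direction transports the obvious structure of a product representation of $\Mat(\ZZ)$ through the isomorphism. The converse builds lattice tensor factors as relative commutants of a chain of the type $\rm{I}$ factors $\cstar{M}_n$, and uses weak Haag duality to show they are local in $\ql{B}$.

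\emph{($\Rightarrow$)} Let $\varphi:\ql{B}^{\text{dis.}}\to\Mat(\ZZ)$ be a bounded spread isomorphism with $\varphi,\varphi^{-1}$ of spread $l$. Let $\pi$ be the GNS representation of $\Mat(\ZZ)$ for an unentangled product vector state, and put $\pi_0=\pi\circ\varphi$. Then $\pi_0$ is faithful (since $\Mat(\ZZ)$ is simple), locally normal, and acts on a separable space. In $\pi$ we have $\hilb{H}=\hilb{H}_{<a}\otimes\hilb{H}_{[a,b]}\otimes\hilb{H}_{>b}$, and $\pi(\Mat(\ZZ_{<a}))''=B(\hilb{H}_{<a})\otimes 1$, and similarly on the right. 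Hence
\[
\bigl(\pi(\Mat(\ZZ_{<a}))\vee\pi(\Mat(\ZZ_{>b}))\bigr)'=\pi(\Mat(\ZZ)_{[a,b]}).
\]
By the spread of $\varphi^{-1}$, $\pi_0(\ql{B}_{(-\infty,a)})\supseteq\pi(\Mat(\ZZ_{<a-l-1}))$, and similarly on the right. Taking commutants gives
\[
\bigl(\pi_0(\ql{B}_{(-\infty,a)})\vee\pi_0(\ql{B}_{(b,\infty)})\bigr)'\subseteq\pi(\Mat(\ZZ)_{[a-l-1,b+l+1]})\subseteq\pi_0(\ql{B}_{(a-R,b+R)})
\]
for $R\approx 2l+2$, using the spread of $\varphi$ and normality of the local maps. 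For the split property at infinity, take $\cstar{M}_n=\pi(\Mat(\ZZ_{<n+l+1}))''=B(\hilb{H}_{<n+l+1})\otimes 1$, which is a type $\rm{I}$ factor sandwiched between the two half-line algebras once $R\ge 2l+2$.

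\emph{($\Leftarrow$)} Fix $R'>2R+1$, let $n_k=kR'$, and write $\cstar{M}_k:=\cstar{M}_{n_k}$. Since
\[
\cstar{M}_{n}\subseteq\pi_0(\ql{B}_{(-\infty,n+R)})''\subseteq\cstar{M}_{n+R},
\]
the $\cstar{M}_k$ form an increasing chain of type $\rm{I}$ factors. Set $\cstar{N}_k=\cstar{M}_{k-1}'\cap\cstar{M}_k$. For type $\rm{I}$ factors one has $\cstar{M}_k=\cstar{M}_{k-1}\otimes\cstar{N}_k$, and therefore $\cstar{M}_j'\cap\cstar{M}_k=\cstar{N}_{j+1}\otimes\dots\otimes\cstar{N}_k$ as type $\rm{I}$ factors. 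Let $\ql{C}$ be the quasi-local algebra over $\ZZ$ with $\ql{C}_{U}=\bigvee_{k\in U}\cstar{N}_k$; it is a product algebra $\Mat(\ZZ,s)$. I would then check the two inclusions needed for Lemma \ref{bsinverse}:
\begin{enumerate}
\item $\cstar{N}_k$ commutes with $\pi_0(\ql{B}_{(-\infty,n_{k-1})})\subseteq\cstar{M}_{k-1}$. It also lies in $\cstar{M}_k\subseteq\pi_0(\ql{B}_{(-\infty,n_k+R)})''$, hence commutes with $\pi_0(\ql{B}_{(n_k+R,\infty)})$. By weak Haag duality, $\cstar{N}_k\subseteq\pi_0(\ql{B}_{(n_{k-1}-R,\,n_k+2R)})$.
\item For $x\in\pi_0(\ql{B}_{(a,b)})$, choose $j$ with $n_j+R\le a$ and $k$ with $n_k\ge b$. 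Then $x$ commutes with $\cstar{M}_j\subseteq\pi_0(\ql{B}_{(-\infty,a)})''$ and lies in $\cstar{M}_k$. Hence $x\in\cstar{M}_j'\cap\cstar{M}_k=\ql{C}_{[j+1,k]}$, a uniformly bounded window.
\end{enumerate}
Together with faithfulness of $\pi_0$, these give a bounded spread isomorphism $\ql{B}^{\text{dis.}}\cong\ql{C}$, after rescaling $\ZZ$ by $R'$, which is a coarse equivalence.

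The remaining point is that each $\cstar{N}_k$ is an infinite type $\rm{I}$ factor, so that $\ql{C}\cong\Mat(\ZZ)$ rather than some $\Mat(\ZZ,s)$ with finite $s$. This is where $R'>2R+1$ and the infinite-dimensionality hypothesis enter. By the argument in step 2, $\cstar{N}_k$ contains $\pi_0(\ql{B}_{(n_{k-1}+R,\,n_k)})$, which is faithfully represented and infinite-dimensional. Hence $\cstar{N}_k$ is infinite-dimensional, so $\cstar{N}_k\cong B(\hilb{H})$ with $\hilb{H}$ separable and infinite-dimensional.

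I expect the main obstacle to be the bookkeeping around normality and the norm closures, specifically two points. First, the local von Neumann algebras $\pi_0(\ql{B}_I)$ are weakly closed, which follows from local normality, so that the inclusions in step 2 yield genuine inclusions of von Neumann algebras. Second, the bounded spread map defined on local algebras has to extend to the $C^*$-completions as an isomorphism.
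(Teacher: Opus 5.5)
Your proposal is correct and follows essentially the same route as the paper. For ($\Rightarrow$) you transport exact Haag duality and the half-line type $\rm{I}$ factors of an infinite product representation through the isomorphism. For ($\Leftarrow$) you take relative commutants $\cstar{M}_{n_{k-1}}'\cap\cstar{M}_{n_k}$ along a sparse subsequence, get the two inclusions from locality and weak Haag duality, get infinite type $\rm{I}$ factors from the infinite-dimensionality of $\ql{B}_{(n_{k-1}+R,n_k)}$, and conclude with Lemma \ref{bsinverse}; your final ``rescaling by a coarse equivalence'' step is the paper's explicit $L$-block coarse-graining argument in different words.
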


\begin{proof}
Suppose first that there is a bounded-spread isomorphism
\begin{equation*}
\Phi\colon
\Mat(\mathbb{Z})
\longrightarrow
\ql{B}^{\mathrm{dis.}}.
\end{equation*}
Choose a pure normal state on each single-site algebra
$B(\ell^{2}(\mathbb{N}))$, and let
\begin{equation*}
\rho\colon
\Mat(\mathbb{Z})
\longrightarrow
B(\mathcal{H}_{0})
\end{equation*}
be the corresponding infinite tensor-product representation. This
representation is faithful and irreducible, and $\mathcal{H}_{0}$ is
separable. Moreover, its half-chain von Neumann algebras are type $\rm{I}$
factors and satisfy exact Haag duality.

Define $\pi_{0}:=\rho\circ\Phi^{-1}.$ Let $s-1$ be a common spread bound for $\Phi$ and $\Phi^{-1}$. Then

$$
\Phi\bigl(\Mat(\mathbb{Z})_{(-\infty,a-s)}\bigr)
\subseteq
\ql{B}_{(-\infty,a)}\ \text{and}\ 
\Phi^{-1}\bigl(\ql{B}_{(-\infty,a)}\bigr)
\subseteq
\Mat(\mathbb{Z})_{(-\infty,a+s)},
$$

\medskip

\noindent with analogous inclusions for right half-lines. 

\noindent For each $n\in\mathbb{Z}$, set
\begin{equation*}
\mathcal{M}_{n}
:=
\rho\bigl(\Mat(\mathbb{Z})_{(-\infty,n+s)}\bigr)''.
\end{equation*}
Then $\mathcal{M}_{n}$ is a type $\rm{I}$ factor and
\begin{equation*}
\pi_{0}\bigl(\ql{B}_{(-\infty,n)}\bigr)''
\subseteq
\mathcal{M}_{n}
\subseteq
\pi_{0}\bigl(\ql{B}_{(-\infty,n+2s)}\bigr)''.
\end{equation*}
Thus the one-sided split property holds with $R=2s$. Similarly, exact Haag duality in the tensor-product representation gives

\begin{align*}\left(
\pi_{0}\bigl(\ql{B}{(-\infty,a)}\bigr)
\vee
\pi_{0}\bigl(\ql{B}{(b,\infty)}\bigr)
\right)'&=
\left(
\pi_{0}\bigl(\ql{B}{(-\infty,a)}\bigr)''
\vee
\pi_{0}\bigl(\ql{B}{(b,\infty)}\bigr)''
\right)'\\
&\subseteq
\left(
\rho\bigl(\Mat(\mathbb{Z})_{(-\infty,a-s)}\bigr)
\vee
\rho\bigl(\Mat(\mathbb{Z})_{(b+s,\infty)}\bigr)
\right)'\\
&=
\rho\bigl(\Mat(\mathbb{Z})_{(a-s,b+s)}\bigr)''
\subseteq
\pi_{0}\bigl(\ql{B}_{(a-2s,b+2s)}\bigr)''.
\end{align*}
Hence weak Haag duality also holds with $R=2s$.

Conversely, suppose that $\pi_{0}$ and $R$ satisfy the two stated
conditions. We identify $\ql{B}$ with its image under $\pi_{0}$. Choose an integer $L>2R$, and define
\begin{equation*}
\mathcal{N}_{k}:=\mathcal{M}_{kL},
\qquad
k\in\mathbb{Z}.
\end{equation*}
The factors $\mathcal{N}_{k}$ are nested. Indeed,
\begin{equation*}
\mathcal{N}_{k}
\subseteq
\ql{B}_{(-\infty,kL+R)}''
\subseteq
\ql{B}_{(-\infty,(k+1)L)}''
\subseteq
\mathcal{N}_{k+1}.
\end{equation*}

\noindent For each $k\in\mathbb{Z}$, let
\begin{equation*}
\mathcal{Q}_{k}
:=
\mathcal{N}'_{k}\cap\mathcal{N}_{k+1}.
\end{equation*}
Since
$\mathcal{N}_{k}\subseteq\mathcal{N}_{k+1}$
are type $\rm{I}$ factors, $\mathcal{Q}_{k}$ is again a type $\rm{I}$ factor, and
multiplication induces a (spatial) tensor product decomposition
\begin{equation*}
\mathcal{N}_{k+1}
\cong
\mathcal{N}_{k}\otimes \mathcal{Q}_{k}.
\end{equation*}

\noindent Locality of the quasi-local algebra gives
\begin{equation*}
\ql{B}_{(kL+R,(k+1)L)}
\subseteq
\mathcal{Q}_{k}.
\end{equation*}
Indeed, the algebra on the left commutes with $\mathcal{N}_{k}
\subseteq
\ql{B}_{(-\infty,kL+R)}''$ and is contained in $\ql{B}_{(-\infty,(k+1)L)}''
\subseteq
\mathcal{N}_{k+1}$. Because $\ql{B}_{(kL+R,(k+1)L)}$ is infinite-dimensional and the
representation is faithful, $\mathcal{Q}_{k}$ is infinite-dimensional.
It follows that
\begin{equation*}
\mathcal{Q}_{k}
\cong
B(\ell^{2}(\mathbb{N})).
\end{equation*}

\noindent Define $$\mathcal{D}_{[i,j]}:= \mathcal{N}'_{i}\cap \mathcal{N}_{j+1}\cong
\mathcal{Q}_{i}\otimes
\mathcal{Q}_{i+1}
\otimes
\cdots
\otimes
\mathcal{Q}_{j}.$$

\medskip

\noindent Now notice that $\ql{B}_{(iL+R,(j+1)L)}
\subseteq
\mathcal{D}_{[i,j]}$. Indeed, the algebra on the left commutes with $\mathcal{N}_{i}$ and is
contained in $\mathcal{N}_{j+1}$.

For the reverse inclusion, an element of $\mathcal{D}_{[i,j]}$ commutes
with
\begin{equation*}
\ql{B}_{(-\infty,iL)}''
\subseteq
\mathcal{N}_{i}.
\end{equation*}
It also belongs to
\begin{equation*}
\mathcal{N}_{j+1}
\subseteq
\ql{B}_{(-\infty,(j+1)L+R)}'',
\end{equation*}
and hence, by locality, commutes with
\begin{equation*}
\ql{B}_{((j+1)L+R,\infty)}''.
\end{equation*}
Weak Haag duality therefore gives
\begin{equation*}
\mathcal{D}_{[i,j]}
\subseteq
\ql{B}_{(iL-R,(j+1)L+2R)}.
\end{equation*}
Thus
$$
\ql{B}_{(iL+R,(j+1)L)}
\subseteq
\mathcal{D}_{[i,j]}
\subseteq
\ql{B}_{(iL-R,(j+1)L+2R)}.
$$

\noindent If we set $\ql{B}^{L}$ to be the $L$-block coarse-grained quasi-local algebra on $\mathbb{Z}$ with $\ql{B}^{L}_{[i,j]}:=\ql{B}^{\text{dis.}}_{[iL,(j+1)L-1]}$, then the tensor product quasi-local algebra $\mathcal{D}=\text{colim}_{I} \mathcal{D}_{I}$ is bounded spread isomorphic to $\mathcal{B}^{L}$ by Lemma \ref{bsinverse}. Thus if we take the identification $\mathcal{D}_{[k,k]}=\mathcal{Q}_{k}\cong B(\ell^{2}(\mathbb{N}))^{\otimes L}$ we can similarly view $\ql{D}$ as an $L$-block coarse graining of $\Mat(\mathbb{Z})$. Denoting more precisely, if we define the $L$ block coarse-graining $\Mat(\mathbb{Z})^{L}_{[i,j]}=\Mat(\mathbb{Z})_{[iL, (j+1)L-1]}$, then we have a bounded spread isomorphism $\mathcal{D}\cong \Mat(\mathbb{Z})^{L}$. Note that the identity map on $C^*$-algebra gives a map of quasi-local algebra $\Mat(\mathbb{Z})^{L}\cong \Mat(\mathbb{Z})$ which is not bounded spread, but scales the spread by $L$. We call this map the coarse-graining map. Composing the coarse-graining map $\Mat(\mathbb{Z})\cong \Mat(\mathbb{Z})^{L}\cong \mathcal{D}$ with the bounded spread isomorphism $\mathcal{D}\cong \mathcal{B}^{L}$ and applying the inverse coarse-graining map $\ql{B}^{L}\cong \ql{B}^{\text{dis.}}$, we obtain a bounded spread isomorphism $\Mat(\mathbb{Z})\cong \ql{B}^{\text{dis.}}$, as desired.

\end{proof}

\begin{theorem}\label{thm:latticedegreesoffreedom} Let $\ql{A}$ be a rational conformal net, and $\ql{B}$ the quasi-local algebra of the time-slice net of the full diagonal CFT (with conventions as above). Then the discretization $\ql{B}^{\text{dis.}}$ is bounded spread isomorphic to $\Mat(\mathbb{Z})$.
\end{theorem}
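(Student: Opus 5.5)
We will apply Lemma \ref{lma:trivialtest} to $\ql{B}=(\ql{A}_0\otimes\ql{A}_0^{op})\rtimes L$. The local algebras $\ql{B}_I$ contain $(\ql{A}_0)_I\otimes(\ql{A}_0^{op})_I$, which is a type $\rm{III}_1$ factor, so they are infinite-dimensional. It therefore suffices to exhibit a faithful, locally normal representation $\pi_0$ on a separable Hilbert space satisfying weak Haag duality and the split property at infinity for some $R$. We take $\pi_0$ to be the vacuum representation of the full diagonal CFT restricted to the time slice. Equivalently, this is the Longo--Rehren representation on $\hilb{H}_{\ql{B}}=\bigoplus_{\rho}\hilb{H}_\rho\otimes\overline{\hilb{H}_\rho}$, in which $\ql{B}$ is realized as a local irreducible extension of $\ql{A}_0\otimes\ql{A}_0^{op}$ (this is the two-dimensional net of \cite{LongoRehren1995NetsOfSubfactors,KawahigashiLongoMuger2001MultiIntervalSubfactors}). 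Faithfulness and local normality follow because $\ql{B}$ is a quasi-local $C^*$-algebra whose local algebras are factors, and separability is clear since there are finitely many sectors.

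For the split property at infinity, we first use the split property of the chiral net \cite{MorinelliTanimotoWeiner2018ConformalCovarianceSplit}. This gives type $\rm{I}$ factors between $(\ql{A}_0)_{(-\infty,n)}''$ and $(\ql{A}_0)_{(-\infty,n+R)}''$ in each irreducible sector. The reason is that, through the Cayley transform, the half-line $(-\infty,n)$ corresponds to an interval of $S^1$ with endpoint at $1$, and the inclusion of the two half-lines corresponds to an inclusion of intervals that share the endpoint $1$. The split property for such an inclusion is not automatic, so we instead use the complementary picture. The commutant of the half-line algebra is the algebra of the complementary half-line (Haag duality in the vacuum). A split inclusion $\ql{A}(I)\subset\ql{A}(J)$ with $J\supset\overline{I}$ on the circle, with $I=(n,\infty)$ and $J=(n-R,\infty)\cup\{\infty\}\cup\ldots$ enlarged across the point at infinity, is not available either.

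The cleaner route is to use finite index. Complete rationality gives $\mu$-index $\sum_\rho d_\rho^2<\infty$ \cite{KawahigashiLongoMuger2001MultiIntervalSubfactors}. For $\ql{B}$ in the Longo--Rehren representation, the two-interval algebra $\ql{B}_{E}$ for $E=(-\infty,a)\cup(b,\infty)$, viewed on $S^1$ as the two-interval set with the point at infinity adjoined, can be identified with $\ql{B}$ of a single interval through the point at infinity. This is the key fact that the Longo--Rehren net is strongly additive and has trivial superselection structure, $\mu_{\ql{B}}=1$ \cite{KawahigashiLongoMuger2001MultiIntervalSubfactors}. Triviality of the $\mu$-index means that $\ql{B}$ satisfies Haag duality for two-interval (disconnected) regions. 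Hence $(\ql{B}_{(-\infty,a)}\vee\ql{B}_{(b,\infty)})'=\ql{B}_{(a,b)}''$, which is weak Haag duality with $R=0$.

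For the split property at infinity, we view $(-\infty,n)$ and $(-\infty,n+R)$ in light-cone coordinates as arcs of $S^1\times S^1$ through the point at infinity. We then apply the split property of the chiral net to the complementary arcs $(n+R,\infty)\subset(n,\infty)$, but in the \emph{compactified} picture, where after a Möbius transformation these become intervals with distinct endpoints except at $\infty$. To avoid the shared endpoint, we first use Haag duality to pass to commutants, and then use the split property for the bounded configuration. The property we need is equivalent to the existence of a type $\rm{I}$ factor $\mathcal{N}$ with $\ql{B}_{(-\infty,n)}''\subset\mathcal{N}\subset\ql{B}_{(-\infty,n+R)}''$. Taking commutants and using two-interval Haag duality, this is a split inclusion between algebras of half-lines on the right, which is again a shared-endpoint situation.

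The main obstacle is exactly this shared point at infinity. Our plan is to exploit that the diagonal theory is non-chiral. The vacuum of $\ql{B}$ restricted to left and right half-lines should be split via the Longo--Rehren $Q$-system: conjugation by the canonical charge-transporting isometries identifies $\ql{B}_{(-\infty,n)}''\vee\ql{B}_{(n+R,\infty)}''$ with a finite-index extension of $\ql{A}_{0,(-\infty,n)}\otimes\ql{A}^{op}_{0,(-\infty,n)}\otimes\cdots$. We then argue that the vacuum of the full CFT has finite entanglement across a cut with a buffer, exactly because $c_L=c_R$. Concretely, we would use that the full diagonal net in its vacuum is unitarily equivalent, after a left--right fold at the cut, to the Longo--Rehren ``doubled'' chiral theory on a half-line. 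In that picture the half-line algebra corresponds to a bounded interval of $S^1$ with distinct endpoints, so the chiral split property of \cite{MorinelliTanimotoWeiner2018ConformalCovarianceSplit} applies. Establishing this folding equivalence rigorously, and checking that the type $\rm{I}$ factor fits inside a buffer of uniform size $R$ independent of $n$ (by translation covariance), is the step we expect to be hardest. Once both conditions hold, Lemma \ref{lma:trivialtest} gives the bounded spread isomorphism $\ql{B}^{\text{dis.}}\cong\Mat(\ZZ)$.
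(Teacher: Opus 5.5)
Your skeleton is the paper's (verify the two hypotheses of Lemma \ref{lma:trivialtest} in the vacuum representation of $\ql{B}$). However, hypothesis (2), the split property at infinity, is never established, and that is the whole content of the proof.

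You diagnose the obstruction correctly. Unfolded, $\pi_0(\ql{B}_{(-\infty,n)})''\subset\pi_0(\ql{B}_{(-\infty,n+R)})''$ involves algebras sharing the endpoint at infinity, where no split property is available. For a single holomorphic chiral net with $c_-\not\equiv 0 \bmod 3$, such half-line inclusions genuinely cannot be split: chiral Haag duality plus Lemma \ref{lma:trivialtest} would then trivialize $\ql{A}_0^{\text{dis.}}$, contradicting Theorem \ref{thm:NontrivialityOfHolomorphicCFTs}. Your last paragraph guesses the right outcome, that after folding the half-line algebra should be an arc of $S^1$ with distinct endpoints. But it leaves this as an unproved plan, supported by a heuristic ($c_L=c_R$, finite entanglement) and by a fold ``at the cut'', which is not what works.

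The missing construction is a single global fold, independent of $n$, gluing the two chiral copies of $\RR$ at both ends $\pm\infty$ (the points $\pm 1$ of $S^1$). The paper builds it as follows.
\begin{itemize}
\item There is an isomorphism $\gamma$ of $\ql{A}_0$ onto the upper half-circle algebra $\ql{A}^{+}$, sending the algebra of an arc $I\subset S^1\setminus\{1\}$ to $\ql{A}_{\sqrt{I}}$. No single diffeomorphism does this, so $\gamma$ is an inductive limit of $\Ad_{U_{\Phi_n}}$ for diffeomorphisms $\Phi_n$ acting as $\theta\mapsto\theta/2$ on an exhausting sequence of arcs; these are compatible by local implementability.
\item The opposite copy goes to the lower half-circle via $b^{op}\mapsto J_{I^+}\gamma(b)^*J_{I^+}$.
\item The split property makes the resulting representation of $\ql{A}_0\otimes\ql{A}_0^{op}$ on the chiral vacuum space a locally normal injection.
\end{itemize}
In this representation, $(-\infty,a)$ becomes one arc $I_{a,-}$ through $-1$ with distinct endpoints $\beta(a)$ and $\overline{\beta(a)}$, where $\beta:\RR\to I^+$ is the induced homeomorphism. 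Likewise $(b,\infty)$ becomes an arc $I_{b,+}$ through $1$. By Kawahigashi--Longo--M\"uger, this restricted chiral vacuum is the Longo--Rehren representation. So $\pi_0$ extends to $\ql{B}$ with $\pi_0(\ql{B}_{(a,b)})=(\ql{A}_{I_{a,-}}\vee\ql{A}_{I_{b,+}})'$, and strong additivity gives $\pi_0(\ql{B}_{(-\infty,a)})''=\ql{A}_{I_{a,-}}$.

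With this in hand, both hypotheses become purely chiral facts.
\begin{itemize}
\item Hypothesis (2) is the chiral split property for the nested arcs $I_{n,-}\Subset I_{n+1,-}$. Any $R>0$ works, and no uniformity or translation-covariance argument is needed.
\item Hypothesis (1) holds exactly, with $R=0$, by chiral Haag duality.
\end{itemize}

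Your weak Haag duality paragraph reaches this correct conclusion, but the argument given does not support it. $\mu_{\ql{B}}=1$ says $\ql{B}(E)=\ql{B}(E')'$ for $E$ a union of two disjoint \emph{bounded} regions. What you need is duality between the single interval $(a,b)$ and the two half-lines. That is one-interval Haag duality on the compactified time slice, together with strong additivity across the point at infinity. The folded representation supplies exactly this.
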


\begin{proof}
Let $I^{+}\subseteq S^{1}$ denote the interval (in radians) $(0,\pi)$. Define $\ql{A}^{+} := C^{*}(\{\ql{A}_{I}\ :\ I\Subset I^{+}\})$. Now, let us reconsider $\ql{A}_{0}$ as a subnet of $\ql{A}$ again. Then $\ql{A}_{0}$ is the $C^*$-algebra generated by $\ql{A}_{I}$, where $I$ is separated from $\{1\}$. We claim that there is an isomorphism of quasi-local $C^*$-algebras $\gamma: \ql{A}_{0}\cong \ql{A}^{+}$ such that, for intervals in $S^{1}$ localized away from $1$, $\gamma((\ql{A}_{0})_{I})=\ql{A}^{+}_{\sqrt{I}}$, where $\sqrt{I}$ is literally the image of the interval under the principal branch of the complex square root function (which is well-defined since intervals are localized away from $1$). Note that there is no single diffeomorphism we can use to implement this covariantly. Nevertheless, we can construct such a $\gamma$ as follows.

Choose a sequence $\omega_{n}$ increasing (in radians) to $2\pi$, for example set $\omega_{n}:=\frac{n }{n+1}$. Then choose $\phi_{n}:[0,1]\rightarrow [0,1]$ to be a smooth functions such that $\phi_{n}|_{[0,\omega_{n}]}(\theta)=\frac{\theta}{2}$, and $\phi_{n}(1)=0$. Then $\Phi_{n}(e^{2 \pi i \theta}):=e^{2\pi i \phi_{n}(\theta)}$ is a diffeomorphism of $S^{1}$, and by covariance we have $U_{\Phi_{n}} (\ql{A}_{(0,\omega_n)})U^{\dagger}_{\Phi_n}=\ql{A}_{(0,\sqrt{\omega_{n}})}$.

Thus $\gamma_{n}:=\text{Ad}(U_{\Phi_n})$ gives an isomorphism of quasi-local algebra $$\gamma_{n}:\ql{A}_{(0,\omega_n)}\cong \ql{A}_{(0,\sqrt{\omega_n})}$$

Furthermore $\gamma_{n+m}|_{\ql{A}_{(0,\omega_n)}}=\gamma_{n}$ since the diffeomorphism group is locally implemented.
But $\displaystyle \ql{A}_{0}=C^{*}\{ \ql{A}_{0,\omega_n} : n\in \mathbb{N}\}=\text{colim}_{n} \ \ql{A}_{(0,\omega_n)}$, where colimit is taken in the category of $C^*$-algebras. Thus the $\gamma_{n}$ define an inductive system, hence a $C^*$-algebra homomorphism $\gamma:\ql{A}_{0}\cong \ql{A}_{(0,\pi)}=\ql{A}^{+}$.

Now, let $(\pi_{0},\mathcal{H}_{0})$ denote the vacuum representation of the chiral net $\ql{A}$, and pick an identification $\mathcal{H}_{0}\cong L^{2}(\ql{A}_{I^{+}})$, so that for intervals $I\subseteq I^{+}$, modular conjugation gives an identification $J_{I^{+}}\ql{A}_{I}J_{I^{+}}= \ql{A}_{\overline{I}}$, where $\overline{I}\subseteq I^{-}$ is the reflection about the real axis in $S^{1}$.

$$\begin{tikzpicture}[scale=1.6,>=stealth]

  \draw[->] (-1.35,0) -- (1.35,0) node[right] {\(\Re\)};
  \draw[->] (0,-1.25) -- (0,1.25) node[above] {\(\Im\)};

  \draw[gray!60,thick] (0,0) circle (1);

  \draw[blue,very thick] ({cos(35)},{sin(35)}) arc[start angle=35,end angle=125,radius=1];
  \node[blue] at (-0.3,1.2) {\(I\)};

  \draw[red,very thick] ({cos(35)},{-sin(35)}) arc[start angle=-35,end angle=-125,radius=1];
  \node[red] at (-0.3,-1.2) {\(\overline{I}\)};

\end{tikzpicture}$$

\noindent Then the map $a\rightarrow J_{I^{+}}\gamma(a)^{*}J_{I^{+}}$ gives an identification $\ql{A}^{op}_{0}\cong \ql{A}_{I^{-}}$, which sends the algebra $(\ql{A}^{op}_{0})_{K}$ assigned to a bounded interval $K\subseteq \mathbb{R}$ to $\ql{A}_{\sigma^{-1}(K)^{\frac{1}{2}}}$.

By the split property, since any bounded interval $K\subseteq \mathbb{R}$ has $\sigma^{-1}(K)$ positive distance from $\{-1,1\}$, the map 

$\widetilde{\pi}_{0}:\ql{A}_{0}\otimes \ql{A}^{op}_{0}\rightarrow B(\mathcal{H}_{0})$, defined by

$$\widetilde{\pi}_{0}(a\otimes b^{op}):= \pi_{0}(\gamma(a)J_{I^{+}}\gamma(b)^{*}J_{I^{+}})$$

\noindent give a locally normal injection. Let $\beta: \mathbb{R}\rightarrow I^{+}$ be the homeomorphism $\sqrt{\sigma^{-1}(\cdot)}$. Then $\widetilde{\pi_{0}}$ and restricts on local subalgebras to isomorphisms

$$\ql{A}_{K}\overline{\otimes}\ql{A}^{op}_{K}\cong(\ql{A}_{\beta(K)}\vee \ql{A}_{\overline{\beta(K)}})^{\prime \prime}.$$

\noindent Now, for every negatively unbounded interval $(-\infty,a)\subseteq \mathbb{R}$, consider the corresponding bounded interval in $S^{1}$ parameterized in radians by

$$I_{a,-}:=\beta(-\infty,a)\cup \overline{\beta(-\infty,a)}\subseteq S^{1}$$

$$\begin{tikzpicture}[scale=1.2,>=stealth]

  \def\a{0.9}
  \pgfmathsetmacro{\theta}{atan(1/\a)} 

  \begin{scope}[xshift=0cm]
    \draw[->] (-3.2,0) -- (2.2,0) node[right] {\(\mathbb{R}\)};

    \fill (\a,0) circle (1.2pt);
    \node[below] at (\a,0) {\(a\)};

    \draw[blue,very thick] (-3.0,0) -- (\a,0);
    \node[blue,above] at (-1.1,0) {\((-\infty,a)\)};
  \end{scope}

  \draw[<->,thick] (2.8,0) -- (4.2,0);

  \begin{scope}[xshift=7.0cm]
    \draw[->] (-1.45,0) -- (1.45,0) node[right] {\(\Re\)};
    \draw[->] (0,-1.25) -- (0,1.25) node[above] {\(\Im\)};

    \draw[gray!60,thick] (0,0) circle (1);

    \coordinate (P) at ({cos(\theta)},{sin(\theta)});
    \coordinate (Q) at ({cos(\theta)},{-sin(\theta)});

    \fill (P) circle (1.2pt);
    \fill (Q) circle (1.2pt);

    \draw[blue,very thick]
      (P) arc[start angle=\theta,end angle=360-\theta,radius=1];

    \node[blue] at (-1.3,0.4) {\(I_{a,-}\)};

    \node[above right] at (P) {$\beta(a)$};
    \node[below right] at (Q) {\(\overline{\beta(a)}\)};
  \end{scope}

\end{tikzpicture}
$$

Define $I_{a,+}$ corresponding to $(a,\infty)$ similarly, i.e.

\begin{tikzpicture}[scale=1.2,>=stealth]

  \def\a{0.9}
  \pgfmathsetmacro{\theta}{atan(1/\a)} 

  \begin{scope}[xshift=0cm]
    \draw[->] (-3.2,0) -- (2.8,0) node[right] {\(\mathbb{R}\)};

    \fill (\a,0) circle (1.2pt);
    \node[below] at (\a,0) {\(a\)};

    \draw[red,very thick] (\a,0) -- (2.6,0);
    \node[red,above] at (1.8,0) {\((a,\infty)\)};
  \end{scope}

  \draw[<->,thick] (3.8,0) -- (4.8,0);

  \begin{scope}[xshift=8.0cm]
    \draw[->] (-1.45,0) -- (1.45,0) node[right] {\(\Re\)};
    \draw[->] (0,-1.25) -- (0,1.25) node[above] {\(\Im\)};

    \draw[gray!60,thick] (0,0) circle (1);

    \coordinate (P) at ({cos(\theta)},{sin(\theta)});
    \coordinate (Q) at ({cos(\theta)},{-sin(\theta)});

    \fill (P) circle (1.2pt);
    \fill (Q) circle (1.2pt);

    \draw[red,very thick]
      (Q) arc[start angle={-\theta},end angle=\theta,radius=1];
    \node[red] at (0.6,0.3) {\(I_{a,+}\)};

    \node[above right] at (P) {\(\beta(a)\)};
    \node[below right] at (Q) {\(\overline{\beta^{-1}(a)}\)};
  \end{scope}

\end{tikzpicture}

Now, we have the following facts that follow from our definition of conformal net and the results of \cite{KawahigashiLongoMuger2001MultiIntervalSubfactors}:

\begin{enumerate}
\item 
$\pi_{0}\left(\ql{A}_{0}\otimes \ql{A}^{op}_{0}\right)_{(-\infty,a)}^{\prime \prime} =\ql{A}_{I_{a,-}}$, which follows from strong additivity.

\item 
$\pi_{0}\left(\ql{A}_{0}\otimes \ql{A}^{op}_{0}\right)_{(b,\infty)}^{\prime \prime} =\ql{A}_{I_{b,+}}$, which follows from strong additivity.
\item 
\cite{KawahigashiLongoMuger2001MultiIntervalSubfactors} There is a canonical extension of $\pi_{0}$ to $\ql{B}=\ql{A}_{0}\otimes \ql{A}^{op}_{0}\rtimes L$ such that

$\pi_{0}(\ql{B}_{(a,b)})=(\ql{A}_{I_{a,-}}\vee  \ql{A}_{I_{b,+}})^{\prime}$
\end{enumerate}

\noindent By Haag duality for $\ql{A}$, it follows immediately that

\begin{enumerate}[resume]
\item 
$\left(\pi_{0}(\ql{B}_{(-\infty, a)})\vee \pi_{0}(\ql{B}_{(b,\infty)})\right)^{\prime}=\pi_{0}(\ql{B}_{(a,b)})$
\item
$\pi(\ql{B}_{(-\infty,a)})^{\prime \prime}=\ql{A}_{I_{a,-}}$
\end{enumerate}

\[
\ql{B}_I \cong \bigoplus_{\rho\in \text{Irr}(\mathcal{C})} (\ql{A}_{I})^{\rho}\ \overline{\otimes}\ (\ql{A}^{op}_{I})^{\rho^{op}}:= \bigoplus_{\rho\in \mathrm{Irr}(\mathcal C)}
\qquad
\begin{tikzpicture}[scale=1.6,>=stealth,baseline={(current bounding box.center)}]
  \draw[->] (-1.35,0) -- (1.35,0) node[right] {\(\Re\)};
  \draw[->] (0,-1.25) -- (0,1.25) node[above] {\(\Im\)};
  \draw[gray!60,thick] (0,0) circle (1);
  \draw[red,very thick]
    ({cos(35)},{sin(35)}) arc[start angle=35,end angle=145,radius=1];
  \node[red] at (0,0.98) {\(I\)};
  \draw[red,very thick]
    ({cos(35)},{-sin(35)}) arc[start angle=-35,end angle=-145,radius=1];
  \node[red] at (0,-0.98) {\(\overline{I}\)};
  \draw[blue,very thick] (0,0.92) -- (0,-0.92);
  \node[blue,right] at (0,0) {\(\rho\)};
\end{tikzpicture}
\]

Now, for each integer $n\in \mathbb{Z}$, we have $I_{\beta(n),-}\subseteq S^{1}$, and the associated inclusions $\ql{A}_{I_{\beta(n)}}\subseteq \ql{A}_{I_{\beta(n+1)}}$ are split by hypothesis, hence there are intermediate type $\rm{I}$ factors

$$\pi(\ql{B}_{(-\infty,n)})^{\prime \prime}=\ql{A}_{I_{\beta(n)}}\subseteq B_{n}\subseteq \ql{A}_{I_{\beta(n+1)}}=\pi(\ql{B}_{(-\infty,n+1)})^{\prime \prime}.$$

\medskip

\noindent Note that the $\ql{B}_{I}$ contain the type $\rm{III}$ factors $\ql{A}_{I}\otimes \ql{A}^{op}_{I}$ hence are infinite dimensional, thus $\pi_{0}:\ql{B}\rightarrow B(\mathcal{H}_{0})$ satisfies the hypotheses of Lemma \ref{lma:trivialtest}, hence $\ql{B}^{\text{dis}}\cong \Mat(\mathbb{Z})$.

\end{proof}

Now recall a rational conformal net $\ql{A}$ is \textit{holomorphic} if the superselection category is trivial, i.e. $\DHRcat(\ql{A}_{0})=\Hilbcat$. The $E_{8}$ chiral lattice bosons are the canonical example.

\begin{corollary}
If $\ql{A}$ is any holomorphic conformal net, the discretization $\ql{A}^{\text{dis.}}_{0}$ is invertible as a net of algebras over $\mathbb{Z}$. 
\end{corollary}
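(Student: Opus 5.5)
The plan is to deduce the corollary directly from Theorem \ref{thm:latticedegreesoffreedom}, using the opposite net as the candidate inverse. Let $\ql{A}$ be holomorphic, so $\DHRcat(\ql{A}_{0})=\Hilbcat$ and $\text{Irr}(\DHRcat(\ql{A}_0))$ consists only of the identity sector. Then the Longo-Rehren $Q$-system is $L=\mathrm{id}\otimes\mathrm{id}^{op}$, which is the trivial $Q$-system. The crossed product is therefore trivial, and we get
$$\ql{B}=(\ql{A}_{0}\otimes\ql{A}^{op}_{0})\rtimes L=\ql{A}_{0}\otimes\ql{A}^{op}_{0}$$
as quasi-local algebras over $\RR$. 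The rescaling convention $\widehat{I}=\sqrt{2}I$ for the time-slice net only changes the metric by a constant factor, so it does not affect bounded spread statements after discretization.

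The first step is to check that this identification holds at the level of local algebras. From the description of $\ql{B}_I$ as $\bigoplus_\rho(\ql{A}_I)^\rho\overline{\otimes}(\ql{A}^{op}_I)^{\rho^{op}}$, a single summand with $\rho=\mathrm{id}$ gives
$$\ql{B}_I=\ql{A}_I\overline{\otimes}\ql{A}^{op}_I=(\ql{A}_0\otimes\ql{A}_0^{op})_I.$$
For a general bounded $U$, both sides are generated by the algebras on the connected components of $\interior{U}$, so the identity map is a zero-spread isomorphism.

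The second step is to move to the lattice. Discretization is the pushforward $j_*$, and pushforward is symmetric monoidal by the proposition on pushforward functors. Hence
$$\ql{B}^{\text{dis.}}\cong \ql{A}^{\text{dis.}}_{0}\otimes(\ql{A}^{op}_{0})^{\text{dis.}}$$
by a zero-spread isomorphism. Theorem \ref{thm:latticedegreesoffreedom} supplies a bounded spread isomorphism from this algebra to $\Mat(\ZZ)$. Taking $\tilde{\ql{A}}=(\ql{A}_0^{op})^{\text{dis.}}$, which is a bosonic quasi-local algebra over $\ZZ$, Definition \ref{def:Brgrp} then gives invertibility of $\ql{A}^{\text{dis.}}_0$.

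The only step needing care is whether the theorem, stated for a rational net, applies in the holomorphic case. It does: a holomorphic net has exactly one sector, which has finite index, so it is rational. The hypotheses invoked in the proof of Theorem \ref{thm:latticedegreesoffreedom}, namely complete rationality via \cite{KawahigashiLongoMuger2001MultiIntervalSubfactors}, the split property and strong additivity, hold under the standing conformal covariance assumptions. One should also confirm that the canonical extension $\pi_0(\ql{B}_{(a,b)})=(\ql{A}_{I_{a,-}}\vee\ql{A}_{I_{b,+}})'$ is consistent with trivial $L$. With trivial $L$ this reduces to Haag duality for two-interval complements, since the $\mu$-index equals $\dim(\DHRcat)^2=1$. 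I expect this two-interval Haag duality to be the main point to verify, but it follows directly from \cite{KawahigashiLongoMuger2001MultiIntervalSubfactors}, because $\mu$-index $1$ is equivalent to two-interval Haag duality.
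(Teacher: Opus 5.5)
Your proposal is correct and follows the paper's own argument. The paper's proof simply applies Theorem \ref{thm:latticedegreesoffreedom} and uses that the Longo--Rehren $Q$-system is trivial for a holomorphic net, so that $\ql{B}=\ql{A}_0\otimes\ql{A}_0^{op}$ and $(\ql{A}_0^{op})^{\text{dis.}}$ serves as the inverse. Your additional checks only make explicit what the paper leaves implicit: monoidality of the pushforward, harmlessness of the $\sqrt{2}$ rescaling, and two-interval Haag duality from $\mu_{\ql{A}}=\sum_\rho d_\rho^2=1$. Note that $\mu_{\ql{A}}$ is $\sum_\rho d_\rho^2$ itself, not its square as you wrote; this makes no difference here since it equals $1$.
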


\begin{proof}
This follows immediately from the previous result, together with the fact that the Longo-Rehren Q-system is trivial for holomorphic conformal nets. 
\end{proof}

This corollary gives another proof of the fact that the $E_{8}$ net $\ql{A}_{E_8}$ considered in the previous section is invertible.

\section{An invariant of invertible quasi-local algebras over $\ZZ$}    \label{sec:cmInvariant}

In this section, we prove that invertible quasi-local algebra obtained by a discretization of a holomorphic conformal net can be Brauer non-trivial.

We explicitly define the anomaly index of the symmetric group action on $N$ copies of bosonic invertible quasi-local algebras. The definition coincides with the usual anomaly index of conformal net when the quasi-local algebra is obtained by a discretization of a holomorphic conformal net. We use the anomaly index to show non-triviality of bosonic and fermionic quasi-local algebras constructed in the previous sections.

\subsection{Anomaly index}

Let $\ql{B}$ be a stable bosonic invertible quasi-local algebra over $\ZZ$ equipped with an action of a finite group $G$ via bounded spread automorphisms $\rho^{(g)}$, $g \in G$. 
\begin{lemma}
For any $g \in G$ and a large enough interval $I \subset \ZZ$, there exists a $*$-automorphism $\sigma^{(g)}$ of $\ql{B}$ such that
$$
\sigma^{(g)}(x) = x, \text{ for }x\in \ql{A}(J),\, J < I,
$$
$$
\sigma^{(g)}(x) = \rho^{(g)}(x), \text{ for }x\in \ql{A}(J),\, I < J.
$$
Moreover, for any two such automorphisms $\sigma^{(g)}, \sigma'^{(g)}$, we have $\sigma^{(g)} = \sigma'^{(g)} \Ad_{u^{(g)}}$, $u \in \ql{B}_I$. 
\end{lemma}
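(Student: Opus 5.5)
The plan is to build $\sigma^{(g)}$ by hand from tensor-factor decompositions of $\ql{B}$ adapted to the interval $I$, rather than by truncating a circuit. Truncating a circuit for $\rho^{(g)}\otimes \Id$ on $\ql{B}\otimes\tilde{\ql{B}}$ (trivial in $\QCA(\ZZ)=0$ by Proposition \ref{prop:classificationof1dQCAs}) would give an automorphism of $\ql{B}\otimes\tilde{\ql{B}}$ that need not preserve $\ql{B}\otimes 1$. Fix $g$ and let $l$ bound the spreads of $\rho^{(g)}$, of $(\rho^{(g)})^{-1}$, and of an invertibility isomorphism $\varphi:\ql{B}\otimes\tilde{\ql{B}}\to\Mat(\ZZ)$. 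Take $I=[a,b]\cap\ZZ$ with $b-a$ large compared to $l$.

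First I will define two quasi-local subalgebras of $\ql{B}$:
\[ \ql{D}_0:=\ql{B}_{(-\infty,a)}\vee\ql{B}_{(b,\infty)}, \qquad \ql{D}:=\ql{B}_{(-\infty,a)}\vee\rho^{(g)}\bigl(\ql{B}_{(b,\infty)}\bigr). \]
Since $\rho^{(g)}(\ql{B}_{(b,\infty)})\subset\ql{B}_{(b-l,\infty)}$, the two pieces of $\ql{D}$ are separated once $b-a>l$, so they commute.

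The key claim is that both $\ql{D}_0$ and $\ql{D}$ are tensor factors of $\ql{B}$. Their commutants $\ql{C}_0=\opComm(\ql{D}_0,\ql{B})$ and $\ql{C}=\opComm(\ql{D},\ql{B})$ should be supported in a bounded neighbourhood of $I$. For $\ql{D}_0$ I would transport the half-line decompositions of $\Mat(\ZZ)$ through $\varphi$, using Lemma \ref{lma:Lud228}, to get a bounded-spread isomorphism
\[ \ql{B}\otimes\tilde{\ql{B}}\cong \ql{E}_-\otimes\ql{E}_0\otimes\ql{E}_+ . \]
Here $\ql{E}_\pm$ are tensor factors supported near the two half-lines. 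Then I would use Lemma \ref{lma:Lud230} to descend from $\ql{B}\otimes\tilde{\ql{B}}$ to $\ql{B}$. This is the same manoeuvre as in the proof of Theorem \ref{thm:LoopSpectrum}, with the weak split property (Proposition \ref{prop:SplitPropertyForQLAlgebras}) and additivity (Proposition \ref{prop:AdditivityForQLAlgebras}) supplying the containment conditions in Lemma \ref{lem:alternatetensorfactor}. The case of $\ql{D}$ follows by the same argument, applied to the image under $\rho^{(g)}$ of the right half-line factor, again via Lemma \ref{lma:Lud228}.

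Next I need $\ql{C}_0$ and $\ql{C}$ to be isomorphic. Both are von Neumann algebras localized near $I$, and both are Brauer-invertible over a point, since they are complements of invertible tensor factors. Hence by Proposition \ref{prop:0dBrauerGroup} they are type I factors. Stability of $\ql{B}$ ensures they are infinite, so both are isomorphic to $B(\hilb{H})$. Fix an isomorphism $\theta:\ql{C}_0\to\ql{C}$ and define
\[ \sigma^{(g)}:=\mu_{\ql{D}\otimes\ql{C}}\circ\bigl((\Id\vee\rho^{(g)})\otimes\theta\bigr)\circ\mu_{\ql{D}_0\otimes\ql{C}_0}^{-1}. \]
This acts as the identity on $\ql{B}_{(-\infty,a)}$ and as $\rho^{(g)}$ on $\ql{B}_{(b,\infty)}$, as required.

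For uniqueness, suppose $\sigma^{(g)}$ and $\sigma'^{(g)}$ are two such automorphisms. Then $\beta=(\sigma'^{(g)})^{-1}\sigma^{(g)}$ fixes $\ql{D}_0$ pointwise, so it preserves $\ql{C}_0=\opComm(\ql{D}_0,\ql{B})$. Its restriction to $\ql{C}_0\cong B(\hilb{H})$ is inner, say $\Ad_u$ with $u\in\ql{C}_0$. Then $\beta\,\Ad_u^{-1}$ is the identity on both $\ql{D}_0$ and $\ql{C}_0$, hence on $\ql{B}$. Enlarging $I$ by the finite amount needed to absorb the support of $\ql{C}_0$ gives $u\in\ql{B}_I$.

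I expect the main obstacle to be the tensor-factor claim for $\ql{D}$ and $\ql{D}_0$, in particular showing that the commutant is localized in a bounded region. For that I would first try Lemma \ref{lma:Lud230} applied to $\ql{D}_0\subset\ql{B}\subset\ql{B}\otimes\tilde{\ql{B}}$. This requires first establishing that $\ql{D}_0$ is a tensor factor of $\ql{B}\otimes\tilde{\ql{B}}$, which comes from its image under $\varphi$ sitting between half-line factors of $\Mat(\ZZ)$, combined with Lemma \ref{lma:gekadison}.
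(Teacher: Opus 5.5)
Your proposal has a genuine gap. It is at the step you flagged, and it cannot be closed.

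\textbf{The key claim is false.} You claim that $\ql{D}_0=\ql{B}_{(-\infty,a)}\vee\ql{B}_{(b,\infty)}$ is a tensor factor of $\ql{B}$ whose commutant $\ql{C}_0=\opComm(\ql{D}_0,\ql{B})$ is supported near $I$. This fails whenever $[\ql{B}]\neq 0$.

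If it held, then $\ql{B}\cong\ql{L}\otimes\ql{R}$, with $\ql{L}=\ql{B}_{(-\infty,a)}\otimes\ql{C}_0$ supported on a left half-line and $\ql{R}=\ql{B}_{(b,\infty)}$ supported on a right half-line. Transport $\ql{L}\otimes 1$ and $\ql{R}\otimes 1$ through $\varphi$ and apply Lemmas \ref{lma:Lud228} and \ref{lma:Lud230} against the half-line factors of $\Mat(\ZZ)$. This shows each of $\ql{L}$, $\ql{R}$ is invertible over its own half-line. By the swindle behind Lemma \ref{lma:Kspacecontractibility}, each is then Brauer trivial, so $[\ql{B}]=0$.

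So your key step would prove $\Brgr(\ZZ)=0$, contradicting Theorem \ref{thm:NontrivialityOfAmaj}. It fails exactly for algebras like $\AEed\otimes\Mat(\ZZ)$, whose anomaly index this lemma exists to define. There the failure is visible locally. $\ql{D}_0\supseteq\ql{B}_{[c,a-1]}$ and $\ql{C}_0\supseteq\ql{B}_{[a,b]}$ contain the $E_8$-net algebras of $(c-\tfrac12,a-\tfrac12)$ and $(a-\tfrac12,b+\tfrac12)$. These intervals share an endpoint, and the split property fails for such intervals. So the multiplication map cannot be a locally normal isomorphism, and $\ql{C}_0$ is not a type $\mathrm{I}$ factor.

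\textbf{Your proposed repair does not help.}
\begin{itemize}
\item The commutant of $\ql{D}_0\otimes 1$ in $\ql{B}\otimes\tilde{\ql{B}}$ is $\ql{C}_0\otimes\tilde{\ql{B}}$. So being a tensor factor there is the same claim, and Lemma \ref{lma:Lud230} gives nothing new.
\item Lemma \ref{lma:gekadison} needs a type $\mathrm{I}$ factor of $\Mat(\ZZ)$ inside $\varphi(\ql{D}_0\otimes 1)$. The half-line pieces of $\Mat(\ZZ)$ sit only inside $\varphi(\ql{B}_{(-\infty,a)}\otimes\tilde{\ql{B}}_{(-\infty,a)})$.
\item The factors $\ql{E}_\pm$ you would transport mix $\ql{B}$ and $\tilde{\ql{B}}$. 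They fail to lie in $\ql{B}\otimes 1$ for the very reason you gave for rejecting circuit truncation.
\end{itemize}
Your uniqueness argument rests on $\ql{B}=\ql{D}_0\otimes\ql{C}_0$ with $\ql{C}_0\cong B(\hilb{H})$, so it falls with this claim.

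\textbf{The missing idea.} The cut can only be made after tensoring with $\tilde{\ql{B}}$, where it is possible because $\ql{B}\otimes\tilde{\ql{B}}\cong\Mat(\ZZ)$. This is the paper's route:
\begin{itemize}
\item $\varphi(\rho^{(g)}\otimes\Id)\varphi^{-1}$ is a QCA on $\Mat(\ZZ)$, hence a circuit, since $\QCA(\ZZ)=0$ (Proposition \ref{prop:classificationof1dQCAs}).
\item Truncating the circuit and conjugating back gives an automorphism of $\ql{B}\otimes\tilde{\ql{B}}$. It is $\Id$ to the left of $I$ and $\rho^{(g)}\otimes\Id$ to the right.
\item Stability, $\ql{B}\cong\ql{B}\otimes\Mat(\ZZ)\cong\ql{B}\otimes\tilde{\ql{B}}\otimes\ql{B}$, is then invoked to transfer this to $\ql{B}$.
\end{itemize}
Uniqueness is also proved in $\ql{B}\otimes\tilde{\ql{B}}$. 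With $\eta=\sigma^{-1}\sigma'$, the automorphism $\eta\otimes\Id$ is trivial away from $I$ on a copy of $\Mat(\ZZ)$. Hence it is inner, and so is $\eta$.

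Your concern that a truncated circuit need not preserve $\ql{B}\otimes 1$ is a fair one. The paper states that final transfer only briefly, so it is a fair point to press there. It must be resolved in that step via stability. By the argument above, it cannot be sidestepped by factorizing $\ql{B}$ itself along $I$.
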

\begin{proof}
There exists a bounded spread isomorphism $\alpha: \ql{B} \otimes \tilde{\ql{B}} \to \Mat(\ZZ)$ of spread $l$ for some $l>0$ and a quasi-local algebra $\tilde{\ql{B}}$. Let $\ql{A} = \ql{B} \otimes \tilde{\ql{B}}$ with a $G$-action by $\rho^{(g)} \otimes \Id$. Since $\QCA(\ZZ) = 0$, $\beta:=\alpha (\rho^{(g)} \otimes \Id) \alpha^{-1}$ is a quantum circuit automorphism of $\Mat(\ZZ)$, and thus for any large enough interval $I$, has a truncation $\beta^{(g)}$, with $\beta^{(g)}|_{\ql{A}>I}=\alpha (\rho^{(g)}\otimes \Id) \alpha^{-1}|_{\ql{A}>I}$ and $\beta^{(g)}|_{\ql{A}<I}=\Id_{\ql{A}<I}$. Therefore for large enough interval $I$, there exists a bounded spread automorphism $\ups^{(g)}=\alpha^{-1}\beta^{(g)}\alpha$ such that
$$
\ups^{(g)}(x) = x, \text{ for }x\in (\ql{B} \otimes \tilde{\ql{B}})(J),\, J < I,
$$
$$
\ups^{(g)}(x) = (\rho^{(g)} \otimes \Id)(x), \text{ for }x\in (\ql{B} \otimes \tilde{\ql{B}})(J),\, I < J.
$$
Since $\ql{B}$ is stable $\ql{B} \cong \ql{B} \otimes \Mat(\ZZ)$ and since $\Mat(\ZZ)$ is bounded spread isomorphic to $\tilde{\ql{B}} \otimes \ql{B}$, existence of $\ups^{(g)}$ implies an existence of $\sigma^{(g)}$.

Let $\eta = \sigma^{-1} \sigma'$. Then $\eta \otimes \Id$ is a $*$-automorphism that acts trivially on $(\ql{B} \otimes \tilde{\ql{B}})_{I^c}$. Since $\ql{B} \otimes \tilde{\ql{B}}$ is bounded spread isomorphic to $\Mat(\ZZ)$, it follows that $\eta \otimes \Id$, and hence $\eta$, is inner.
\end{proof}
We call $*$-automorphisms from lemma above {\it twist automorphisms} associated with $I$.

Let $I \subset \ZZ$ be an interval of large enough length. Choose twist automorphisms $\sigma^{(g)}$ associated with $I$ for each $g \in G$ and define unitary elements $w^{(g,h)} \in \ql{B}_I$ such that $\sigma^{(g)} \sigma^{(h)} = \Ad_{w^{(g,h)}} \sigma^{(gh)}$. The anomaly index is defined by the class $[\omega] \in H^3(G,\RR/\ZZ)$ of the 3-cocycle $\omega^{(g,h,k)}$ defined by $w^{(g,h)} w^{(gh,k)} = \omega^{(g,h,k)} \sigma^{(g)}(w^{(h,k)}) w^{(g,hk)}$. 

\begin{prop}
The anomaly index $[\omega] \in H^3(G,\RR/\ZZ)$ does not depend on the choice of an interval $I$, twist automorphisms $\sigma^{(g)}$, and unitaries $w^{(g,h)}$. Moreover, if quasi-local algebras with $G$-action are related by a $G$-equivariant bounded spread isomorphism, then their anomaly indices are the same.
\end{prop}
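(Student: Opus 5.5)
The argument is the standard cohomological bookkeeping for Else--Nayak type anomaly indices, and I would organize it in three layers of choices, from innermost to outermost.

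\emph{Step 1: fixed $I$ and fixed $\sigma^{(g)}$.} The unitaries $w^{(g,h)}$ are determined up to phases. Indeed, if $\Ad_{w}=\Ad_{w'}$ on $\ql{B}$, then $w^*w'$ is central in the $C^*$-algebra, hence a scalar. Here I use that $\ql{B}$ is stable invertible, so that $\ql{B}\otimes\tilde{\ql{B}}\cong\Mat(\ZZ)$ has trivial center, together with $\ql{B}_I$ being a factor. Replacing $w^{(g,h)}$ by $\lambda(g,h)w^{(g,h)}$ changes $\omega$ by the coboundary $\delta\lambda$. The same remark shows that the defining relation produces a scalar $\omega^{(g,h,k)}$. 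This works because both $w^{(g,h)}w^{(gh,k)}$ and $\sigma^{(g)}(w^{(h,k)})w^{(g,hk)}$ implement $\sigma^{(g)}\sigma^{(h)}\sigma^{(k)}(\sigma^{(ghk)})^{-1}$. The cocycle identity then follows from the usual pentagon computation comparing the two bracketings of $\sigma^{(g)}\sigma^{(h)}\sigma^{(k)}\sigma^{(m)}$.

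\emph{Step 2: changing the twists.} By the preceding Lemma, any other choice has the form $\sigma'^{(g)}=\sigma^{(g)}\Ad_{u^{(g)}}$ with $u^{(g)}\in\ql{B}_I$. One can then take the new unitaries to be
\[
w'^{(g,h)}=w^{(g,h)}\,\sigma^{(gh)}(\cdots)
\]
suitably dressed by $\sigma^{(g)}(u^{(h)})$ and $u^{(g)}$, for instance $w'^{(g,h)}=\sigma^{(g)}(\sigma^{(h)}(\cdot))$-conjugates arranged so that $\sigma'^{(g)}\sigma'^{(h)}=\Ad_{w'^{(g,h)}}\sigma'^{(gh)}$. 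A direct substitution should show that $\omega'=\omega$ exactly, since the $u$'s cancel telescopically. Combined with Step 1, this shows the class is independent of the twists and of the $w$'s.

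\emph{Step 3: changing $I$ and equivariant isomorphisms.} If $I\subset I'$, then twists associated with $I$ are also twists associated with $I'$, so the indices agree. For two arbitrary intervals, compare both with a common larger interval. For a $G$-equivariant bounded spread isomorphism $\phi:\ql{B}\to\ql{B}'$ of spread $l$, transport the data: $\phi\sigma^{(g)}\phi^{-1}$ is a twist for $\ql{B}'$ associated with $I^{+l}$, because $\phi$ intertwines $\rho$ and $\rho'$ and maps left and right half-lines into $l$-enlargements. The unitaries $\phi(w^{(g,h)})$ then satisfy the same relations with the same scalars $\omega$.

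The main obstacle I expect is the scalar argument in Step 1. It requires that an element implementing a trivial automorphism be a scalar, and that the twists be genuinely localized so that all $w$'s lie in $\ql{B}_I$ (up to enlargement). I would handle this by tensoring with $\tilde{\ql{B}}$ and working inside $\Mat(\ZZ)$, where irreducibility of its infinite tensor product representation gives trivial commutants.
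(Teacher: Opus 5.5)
Your proposal is correct and follows the paper's proof essentially step for step: phase ambiguities in $w^{(g,h)}$ shift $\omega$ by a coboundary, twists differing by $\Ad_{u^{(g)}}$ give the same class by direct computation, the choice of $I$ is absorbed by passing to a larger interval, and an equivariant bounded spread isomorphism transports twists and unitaries. Your scalarity argument, which the paper leaves implicit, needs only triviality of the center of $\ql{B}$, not that $\ql{B}_I$ be a factor. To make your Step 2 concrete, the choice $w'^{(g,h)}=\sigma^{(g)}(u^{(g)})\,\sigma^{(g)}\sigma^{(h)}(u^{(h)})\,w^{(g,h)}\,\sigma^{(gh)}(u^{(gh)})^{*}$ satisfies $\sigma'^{(g)}\sigma'^{(h)}=\Ad_{w'^{(g,h)}}\sigma'^{(gh)}$ and yields $\omega'=\omega$ exactly, as you anticipated.
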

\begin{proof}
Different choices of $w^{(g,h)}$ are related by a phase factor and modify $\omega(g,h,k)$ by a coboundary term. Different choices of $\sigma^{(g)}$ are related by $\Ad_{u^{(g)}}$ for unitary elements $u^{(g)} \in \ql{A}_I$. By explicit computation, the corresponding $\omega(g,h,k)$ are also related by a coboundary term. An ambiguity in the choice of $I$ can be absorbed into an ambiguity in the choice if $\sigma^{(g)}$.

Suppose $\varphi:\ql{A} \to \ql{B}$ is a $G$-equivariant bounded spread isomorphism of quasi-local algebras equipped with $G$-action. Given a choice of twist automorphisms $\sigma^{(g)}$ for $\ql{A}$, the automorphisms $\varphi \sigma^{(g)} \varphi^{-1}$ provide a choice of twist automorphisms for $\ql{B}$ that gives rise to the same $\omega^{(g,h,k)}$.
\end{proof}

\subsection{Non-triviality of $\AEed$ and $\Amajd$}

When $\ql{B}$ is a holomorphic conformal net with a $G$-action, the anomaly index of the quasi-local algebra obtained by discretization coincides with the $H^3(G,\RR/\ZZ)$ class defined in \cite{M_ger_2005} by construction. 

Consider $N$ copies of a given holomorphic conformal net with the natural $S_N$ symmetric group action. It was noted in \cite{johnson2019moonshine,Bischoff2020AnomaliesCyclicPermutationOrbifolds,evans2022reconstruction} that the associated class $[\omega_N] \in H^3(S_N,\RR/\ZZ)$ is non-trivial when $c_- \bmod 3 \neq 0$. That allows us to show that quasi-local algebras obtained by discretizations of such conformal nets are Brauer non-trivial. More precisely, \cite[Theorem 2]{evans2022reconstruction} implies

\begin{theorem} \label{thm:NontrivialityOfHolomorphicCFTs}
Let $\ql{A}$ be a quasi-local algebra of the discretization of a (bosonic) holomorphic conformal net with central charge $c_- \bmod 3 \neq 0$. Then the corresponding anomaly index $[\omega_N] \in H^3(S_N,\RR/\ZZ)$ for the $S_N$ symmetric group action on $\ql{A}^{\otimes N}$ is non-trivial. In particular, $\ql{A}$ is Brauer non-trivial.
\end{theorem}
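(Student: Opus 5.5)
The theorem has two parts: identifying the lattice anomaly index with Müger's index, and deducing Brauer non-triviality from it. Throughout, $\ql{A}$ is the discretization of a holomorphic net with $c_- \not\equiv 0 \bmod 3$, and $S_N$ acts on $\ql{A}^{\otimes N}$ by permuting tensor factors. This action has zero spread, so it is an action by bounded spread automorphisms and the construction of the previous subsection applies.

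\emph{Step 1: identification of the index.} First I would stabilize, replacing $\ql{A}^{\otimes N}$ by $\ql{A}^{\otimes N}\otimes\Mat(\ZZ)$ with trivial action on the second factor. By the preceding proposition, $G$-equivariant bounded spread isomorphisms preserve the index, so this changes nothing. Next I would construct the twist automorphisms $\sigma^{(g)}$ directly from the conformal net. In the holomorphic case the soliton (half-line twisted) endomorphisms of the $N$-fold product net, localized on a half-line and acting as $g$ to the right of an interval $I$, are automorphisms of $\ql{A}_0^{\otimes N}$. They act trivially to the left of $I$ and as $g$ to the right. After discretization they are exactly twist automorphisms in the sense of the lemma above.

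The unitaries $w^{(g,h)}$ are then the intertwiners in $(\ql{A}_0^{\otimes N})_I$, which exist by Haag duality because $\sigma^{(g)}\sigma^{(h)}\sigma^{(gh)-1}$ acts trivially outside $I$. The $3$-cocycle produced by our formula is then literally the cocycle of the $G$-crossed construction in \cite{M_ger_2005}. This gives $[\omega_N]$ equal to the Müger class, and \cite[Theorem 2]{evans2022reconstruction} says this class is non-trivial when $c_- \bmod 3 \neq 0$.

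\emph{Step 2: deducing Brauer non-triviality.} Suppose for contradiction that $\ql{A}$ is Brauer trivial. Then there is a bounded spread isomorphism $\ql{A}\otimes\Mat(\ZZ)\cong\Mat(\ZZ)$, and taking $N$ tensor powers gives an $S_N$-equivariant bounded spread isomorphism
\[
(\ql{A}\otimes\Mat(\ZZ))^{\otimes N}\cong\Mat(\ZZ)^{\otimes N}.
\]
Here $S_N$ permutes the $N$ factors on both sides. By invariance of the index, this reduces the problem to showing that the permutation action on $\Mat(\ZZ)^{\otimes N}\cong\Mat(\ZZ)$ has trivial index. I would show this in two moves.

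First, the permutation action on $\Mat(\ZZ)^{\otimes N}$ is on-site: at each site $x$ it is implemented by the permutation unitaries on $\hilb{H}_x^{\otimes N}$. Second, for an on-site action one can take the twist $\sigma^{(g)}$ to be conjugation by $\prod_{x>I}u^{(g)}_x$. With this choice the $w^{(g,h)}$ can be taken to be $1$, so the cocycle is trivial.

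Finally, the class for $(\ql{A}\otimes\Mat(\ZZ))^{\otimes N}$ equals the class for $\ql{A}^{\otimes N}$. To see this, tensor the twists of $\ql{A}^{\otimes N}$ with the on-site twists of $\Mat(\ZZ)^{\otimes N}$; the cocycles multiply, and the on-site factor is trivial. Combining the two moves gives $[\omega_N]=0$, contradicting Step 1.

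The main obstacle is Step 1. There one must check carefully that the net-theoretic soliton automorphisms really satisfy the twist conditions after discretization, and that the sign and ordering conventions of the cocycle match those of \cite{M_ger_2005}. In particular, the localization of the intertwiners $w^{(g,h)}$ in the discretized interval requires Haag duality together with the split property to pass from type $\rm{III}$ interval algebras to the stable setting.
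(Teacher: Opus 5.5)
Your proposal is correct and follows the paper's route. The paper asserts that the lattice anomaly index coincides with M\"uger's class by construction, invokes \cite[Theorem 2]{evans2022reconstruction}, and leaves the ``in particular'' implicit; your Step 2 fills that in correctly, via equivariant invariance of the index, its triviality for the on-site permutation action on $\Mat(\ZZ)^{\otimes N}$, and multiplicativity of the cocycle under tensor products. One small correction: the intertwiners $w^{(g,h)}$ exist because $\ql{A}^{\otimes N}$ is holomorphic, so every localized automorphism is inner, whereas Haag duality only places them in the interval algebra.
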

\begin{remark}
It can be shown that any holomorphic conformal net with non-zero $c_-$ defines a non-trivial element in the Brauer group of quasi-local algebras using a method similar to the one developed in \cite{sopenko2025reflection}. The details will be discussed elsewhere. 
\end{remark}

As a corollary, we obtain

\begin{theorem} \label{thm:NontrivialityOfAmaj}
The quasi-local algebras $\AEed$ and $\Amajd$ are Brauer non-trivial.    
\end{theorem}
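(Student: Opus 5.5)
The plan is to treat $\AEed$ first and then transfer the conclusion to $\Amajd$ via Proposition \ref{prop:E8is16copiesOfMaj}.

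\emph{The bosonic case.} By construction, $\AEe$ is the quasi-local algebra over $\RR$ of the chiral $(E_8)_1$ lattice net. This net is a holomorphic conformal net with $c_-=8$, and $8 \bmod 3 = 2\neq 0$. Its discretization $\AEed$ is invertible by Proposition \ref{prop:AE8isinvertible}; the corollary to Theorem \ref{thm:latticedegreesoffreedom} gives the same conclusion. Theorem \ref{thm:NontrivialityOfHolomorphicCFTs} then applies directly and shows that $\AEed$ is Brauer non-trivial in $\Brgr(\ZZ)$.

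The one point to check is that the abstract anomaly index defined above agrees with the M\"uger/Evans--Gannon $S_N$ permutation class for the net. The paper asserts this ``by construction'': twist automorphisms of $\ql{A}^{\otimes N}$ restrict to the DHR-type solitonic endomorphisms of the net, and the cocycle computation is literally the same. I would simply invoke this. The key structural fact is that the anomaly index is invariant under $G$-equivariant bounded spread isomorphism, and that it vanishes for $\Mat(\ZZ)^{\otimes N}$ with the permutation action.

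The vanishing holds for two reasons. On $\Mat(\ZZ)^{\otimes N}$ the permutation action is on-site, so twist automorphisms can be chosen as genuine on-site truncations. These truncations form an honest group action, so $w=1$ and $\omega=1$. Moreover, if $\AEed$ were Brauer trivial, stabilization would give an $S_N$-equivariant bounded spread isomorphism $(\AEed\otimes\Mat(\ZZ))^{\otimes N}\cong \Mat(\ZZ)^{\otimes N}$. This is compatible because the permutation action of a tensor power is natural in the factor. Tensoring with $\Mat(\ZZ)^{\otimes N}$ carrying its trivial-index permutation action does not change the index, so this would contradict the non-triviality of $[\omega_N]$.

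\emph{The fermionic case.} By Proposition \ref{prop:E8is16copiesOfMaj}, $[\AEed]=16[\Amajd]$ in $\sBrgr(\ZZ)$, where $\AEed$ is regarded as a superalgebra with trivial grading. If $[\Amajd]=0$, then $[\AEed]=0$ in $\sBrgr(\ZZ)$. That would give a graded bounded spread isomorphism $\AEed\otimes\sMat(\ZZ)\cong \sMat(\ZZ)$.

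I expect the main obstacle to be passing from this super-triviality back to bosonic triviality, since the forgetful map $\Brgr(\ZZ)\to\sBrgr(\ZZ)$ need not be injective a priori. I would not try to prove injectivity. Instead I would run the $S_N$ anomaly argument directly in the super setting. The plan is to take $N$ copies with the permutation action on the $\AEed$ factors only, and to check that the twist-automorphism construction and the 3-cocycle make sense for $\sMat(\ZZ)$. The required input is $\sQCA(\ZZ)=\ZZ/2$ in place of $\QCA(\ZZ)=0$, using that the permutation action on $\AEed^{\otimes N}$ is even.

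Concretely, I would compose the isomorphism with $\sMat(\ZZ)\otimes\sMat(\ZZ)\cong\sMat(\ZZ)$ so that the $S_N$ action becomes an even QCA with trivial $\ZZ/2$ index. That QCA is then truncatable. The resulting cocycle coincides with $[\omega_N]$ for the $E_8$ net, because the twists can be chosen to act only on the bosonic factor. On the other side, the on-site action on $\sMat(\ZZ)^{\otimes N}$ gives the trivial class. This contradiction shows $[\Amajd]\neq 0$.
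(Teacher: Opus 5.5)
Your treatment of $\AEed$ matches the paper's: it appeals directly to Theorem \ref{thm:NontrivialityOfHolomorphicCFTs}, and you correctly justify this via the simultaneous permutation action on $(\AEed\otimes\Mat(\ZZ))^{\otimes N}\cong\Mat(\ZZ)^{\otimes N}$.

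For $\Amajd$ the paper goes the other way from you. It uses the bosonization relation between $\sMat(\ZZ)$ and $\Mat(\ZZ)$ from the proof of Proposition \ref{prop:AE8isinvertible}. This turns the graded trivialization of $\AEed\otimes\sMat(\ZZ)$ into a bosonic trivialization of $\AEed\otimes\Mat(\ZZ)$. In other words, it establishes, for this class, exactly the injectivity you chose to bypass, and then applies the bosonic result.

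Your graded-anomaly route is viable in principle, but as written it does not close. The first problem is that the action is set up inconsistently. If $S_N$ permutes only the $\AEed$ factors, the trivializing isomorphism transports it to some $S_N$-action by even QCAs on $\sMat(\ZZ)^{\otimes N}$ (or $\sMat(\ZZ)$). That action has no reason to be the on-site permutation action, so triviality of the on-site class says nothing about it. You need, exactly as in your bosonic paragraph, the simultaneous permutation of the factors of $(\AEed\otimes\sMat(\ZZ))^{\otimes N}$. For $\psi:\AEed\otimes\sMat(\ZZ)\to\sMat(\ZZ)$, the map $\psi^{\otimes N}$ intertwines this action with the on-site graded permutation. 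Its twists can be taken to be the $E_8$ twists tensored with on-site truncations, which yields $\omega_N$.

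The second problem is more substantive. You never show that a graded version of the index is independent of the choice of twist automorphisms. That independence is the only thing that turns ``two twist systems for the same action give $\omega_N$ and $1$'' into $[\omega_N]=0$.

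The paper's uniqueness lemma, which says two twists differ by $\Ad_u$ with $u\in\ql{B}_I$, is bosonic. In the graded setting, an even automorphism acting trivially outside $I$ has the form $x\mapsto u\,\Theta^{|u|}(x)\,u^*$ with $u$ homogeneous and possibly odd. Since $H^1(S_N,\ZZ/2)\neq 0$, the comparison unitaries could a priori be odd on odd permutations. This is precisely where a fermionic anomaly could fail to detect a bosonic class.

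Invoking $\sQCA(\ZZ)=\ZZ/2$ does not address this. It only concerns the existence of truncations, and you don't need it once twists are transported.

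The missing step is a short computation. Suppose both twist families have even $w^{(g,h)}$. Then the parities of the comparison unitaries $u_g$ form a homomorphism $S_N\to\ZZ/2$. Now replace $\sigma^{(g)}$ by $u_g\,\Theta^{|u_g|}(\sigma^{(g)}(\cdot))\,u_g^*$ and $w^{(g,h)}$ by $u_g\sigma^{(g)}(u_h)w^{(g,h)}u_{gh}^*$. This leaves $\omega$ exactly unchanged.

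With these two repairs your argument goes through. It avoids pushing a graded isomorphism through the bosonization, which requires controlling how it conjugates the truncated parities $\Theta_n$. The price is developing a graded anomaly index, which the paper never needs.
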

\begin{proof}
Non-triviality of the Brauer class of $\AEed$ follows from Theorem \ref{thm:NontrivialityOfHolomorphicCFTs}. 

Suppose $\Amajd$ is Brauer trivial. By Proposition \ref{prop:E8is16copiesOfMaj}, the quasi-local algebras $\AEed \otimes \sMat(\ZZ)$ and $(\Amajd)^{\otimes 16}$ are bounded spread isomorphic. Hence, $\AEed \otimes \sMat(\ZZ)$ is Brauer trivial in $\sBrgr(\ZZ)$. Therefore, by a relation between $\sMat(\ZZ)$ and $\Mat(\ZZ)$ as in the proof of Proposition \ref{prop:AE8isinvertible}, $\AEed \otimes \Mat(\ZZ)$ is Brauer trivial in $\Brgr(\ZZ)$, that contradicts non-triviality of $\AEed$. Thus, $\Amajd$ is Brauer non-trivial as well.
\end{proof}

\printbibliography

\end{document}